\DeclareMathOperator{\Tr}{tr}
\numberwithin{equation}{section}
\newcommand{\ket}[1]{\left| #1 \right\rangle}
\newcommand{\bra}[1]{\left\langle #1 \right |}
\def\({\left(}
\def\){\right)}
\def\[{\left[}
\def\]{\right]}
\newcommand{\be}{\begin{equation}}
\newcommand{\ee}{\end{equation}}
\newcommand{\Eq}[1]{Eq.~(\ref{#1})}
\newcommand{\mrm}[1]{\mathrm{#1}}
\newcommand{\Hil}{\mathcal{H}}
\newcommand{\Ham}{H}
\newcommand{\eye}{I}
\newcommand{\lcb}{\left[}
\newcommand{\rcb}{\right]}
\newcommand{\Dim}{\textrm{dim\,}}
\newtheorem{thm}{Theorem}[section]
\newtheorem{prop}[thm]{Proposition}
\newtheorem{lem}[thm]{Lemma}
\newtheorem{defn}[thm]{Definition}
\newtheorem{example}[thm]{Example}
\newcommand{\braket}[2]{\langle #1 \vphantom{#2}|
   #2 \vphantom{#1} \rangle}
\newcommand{\ketbra}[2]{|#1\rangle \langle #2 |}
\begin{document}

\interfootnotelinepenalty=10000

\hfill CALT-TH-2019-36
\hfill

\vspace{2cm}
\thispagestyle{empty}
\begin{center}
{\LARGE \bf
Quantum State Reduction: Generalized Bipartitions from Algebras of Observables
}\\
\bigskip\vspace{1cm}{
{\large Oleg Kabernik,${}^a$ Jason Pollack,${}^a$ and Ashmeet Singh${}^b$}
} \\[7mm]
 {\it ${}^a$Department of Physics and Astronomy\\University of British Columbia, Vancouver, BC, V6T 1Z1, Canada\\[1.5mm]
${}^b$Walter Burke Institute for Theoretical Physics \\
    California Institute of Technology, Pasadena, CA 91125, USA} \let\thefootnote\relax\footnote{\noindent e-mail:\\\url{olegk@phas.ubc.ca}, \url{jpollack@phas.ubc.ca}, \url{ashmeet@caltech.edu}} \\
 \end{center}
\bigskip
\centerline{\large\bf Abstract}

\begin{quote} \small

Reduced density matrices are a powerful tool in the analysis of entanglement structure, approximate or coarse-grained dynamics, decoherence, and the emergence of classicality. It is straightforward to produce a reduced density matrix with the partial-trace map by ``tracing out'' part of the quantum state, but in many natural situations this reduction may not be achievable. We investigate the general problem of identifying how the quantum state reduces given a restriction on the observables. For example, in an experimental setting, the set of observables that can actually be measured is usually modest (compared to the set of all possible observables) and their resolution is limited. In such situations, the appropriate state-reduction map can be defined via a \emph{generalized bipartition}, which is associated with the structure of irreducible representations of the algebra generated by the restricted set of observables. One of our main technical results is a general, not inherently numeric, algorithm for finding irreducible representations of matrix algebras. We demonstrate the viability of this approach with two examples of limited--resolution observables.  
The definition of quantum state reductions can also be extended beyond algebras of observables. To accomplish this task we introduce a more flexible notion of bipartition, the \textit{partial bipartition}, which describes coarse-grainings preserving information about a limited set (not necessarily algebra) of observables. We describe a variational method to choose the coarse-grainings most compatible with a specified Hamiltonian, which exhibit emergent classicality in the reduced state space. We apply this construction to the concrete example of the 1-D Ising model. 
Our results have relevance for quantum information, bulk reconstruction in holography, and quantum gravity.

\end{quote} 

\setcounter{footnote}{0}

\newpage
\tableofcontents
\newpage

\section{Introduction and Motivation}

How do we describe the state of a system about which we have only limited information? 
In the most general form, this is a question for probabilists: the best that can be done in the Bayesian approach, for example, is to make our best guess in the form of a distribution over the possible states of the system compatible with what is already known and update this guess as we learn new information.
In physical applications, however, we typically encounter situations in which  we can only make certain types of measurements on a system.
For example, we might only be able to measure extensive, macroscopic quantities of a gas; or we might be able to probe only classical observables of a quantum system.

In classical statistical mechanics one usually proceeds by enumerating the possible ``microstates'' of the underlying microphysical system (for example, a gas of $N$ point particles in a finite volume with positions, momenta, and possible interactions). Then we partition the microstates into ``macrostates'' by collecting together the states with approximately the same values of some coarse-grained extensive property which probes the average behavior of the particles (for example, temperature, or some hydrodynamic quantity like viscosity). 
In other words, we choose a particular statistical ensemble, write the appropriate partition function, and use it as the generating functional for macroscopic observables. 
When certain assumptions are valid, it is then valid to track the values of the macroscopic quantities without reference to the underlying microscopic physics.
These assumptions have to do with compatibility between the macroscopic observables and the microscopic dynamics of the theory. We want the values of the macroscopic variables to evolve continuously in time, which requires that the time evolution of a macrostate to itself be a macrostate to some approximation; that is, if two microstates are in the same macrostate at one time, there should exist another set of macrostates for the system at a later time such that the time-evolved microstates will usuallly be in the same new macrostate.
Of course, this picture can be generalized in various ways by relaxing some of the assumptions, or by working with probability distributions over microstates instead of partititions \cite{Jarzynski:1996oqb,1998JSP....90.1481C,Bartolotta:2015toa}. 

In quantum mechanics, the story is usually told differently. 
Given a Hilbert space $\Hil$ we can work with pure states $\ket{\psi} \in \Hil$ or mixed states $\rho \in \mathcal{L}(\Hil)$, which can be thought of as classical statistical mixtures of the states $\{\ket{\psi}_i\}$ in the basis $\{\ket{\psi}_i\bra{\psi}_i\}$ in which $\rho$ is diagonal.
When the Hilbert space has a tensor-product structure, $\Hil \cong \Hil_A \otimes \Hil_{\bar A}$, there is a natural state-reduction map, the partial-trace map $\Tr_{\bar A}$, which maps mixed states in $\mathcal{L}(\Hil)$ to mixed states in $\mathcal{L}(\Hil_A)$ via $\rho \mapsto \rho_A \equiv  \Tr_{\bar A} \rho$.
Then the reduced state $\rho_A$ preserves information about operators acting only on $\Hil_A$, in the sense that the expectation value of ${O}_A \otimes  I_{\bar A}$, with $ I_{\bar A}$ the identity operator on $\Hil_{\bar A}$, acting on $\rho$ is the same as the expectation value of ${O}_A$ acting on $\rho_A$, for all states $\rho$ and linear operators ${O}_A$.

So far this picture seems quite different from the classical one summarized above. 
Certainly, if we have a $3N$ or $6N$-dimensional configuration space or phase space, we can consider the reduced spaces generated by projection onto some lower-dimensional subspace.
We can then ask the question of what the reduced dynamics in this subspace look like. 
In particular, we might find that the new dynamics is dissipative, if the particles traced out act as a heat bath for the ones kept in the description, or, in the opposite extreme, that the kept particles only act amongst themselves and can be described without reference to the remainder.
If we didn't actually know which coordinates in the phase space corresponded to the positions or momenta of individual particles, we might hope to identify them by looking for subspace projections with particularly simple reduced dynamics.

The quantum analog of this process is known as the \textit{decoherence program} \cite{Zeh:1970fop,Zurek:1981xq,Griffiths:1984rx,Joos:1984uk,Schlosshauer:2003zy}.
In this program, one is given, or looks for, decompositions of $\Hil$ into a system and environment, $\Hil \cong \Hil_S \otimes \Hil_E$. 
This induces a decomposition of the Hamiltonian $\Ham = \Ham_S \otimes  I_E + \Ham_{\rm int} +  I_S \otimes \Ham_E$.
For certain choices of the Hamiltonian and sets of initial states---for example, interaction-dominated Hamiltonians and initial product states---the action of the environment, to a good approximation, is to take an initial state of the system to a superposition of system states, in some basis, which evolve without interfering. 
When this happens, we say that the initial state has \textit{branched}, and the set of system states whose evolution is preserved by the environment are the \textit{classical states} of the system.
(We will review the decoherence program in more detail below.)

However, it is easy to see that most coarse-grainings cannot be described in the decoherence picture. 
Most observables do not take the simple form of acting on a single tensor factor, even when such a factorization of the Hilbert space exists.
In particular, the sorts of \textit{collective} observables which correspond to the averaged, macroscopic properties featured in statistical mechanics do not take this form.
That is, we do \textit{not} expect, even approximately, a factorization of the form $\Hil \cong \Hil_\mathrm{collective} \otimes \Hil_\mathrm{other}$ for the sorts of collective observables we might measure in a laboratory. 

A similar situation arises in field theories, in which we often wish to construct some notion of a state restricted to a finite spatial region. 
It is well known \cite{Reeh1961,Witten:2018lha} that even in the simplest field theories we cannot simply apply the naive partial-trace map to construct the reduced state as discussed above.
There is, nevertheless, a good notion of algebras of observables restricted to a spatial region, which is provided by modular theory (e.g.\ \cite{Araki:1976zv,Haag:1992hx,Lashkari:2018nsl}), and in many cases we can pass to a (finite-dimensional) latticization, for example a tensor network, in which these issues do not arise.
When the theory has a gauge symmetry, however, the physical Hilbert space is restricted to states which obey global constraints like a Gauss law, and we cannot consistently restrict to subregions in a gauge-invariant way. 
The approach of the edge modes program \cite{Donnelly:2014gva,Donnelly:2014fua,Donnelly:2015hta,Lin:2018bud} is to embed the physical Hilbert space into a larger, ``ungauged'' Hilbert space in which the constraints have been removed and subregions are well-defined.

Given that many natural coarse-grainings of quantum systems cannot be captured by the partial-trace map, it is natural to consider more general state-reduction maps.
It is only when such a map can be constructed from a physically-motivated coarse-graining that we are furnished with a true reduced density matrix to which we can apply the well-developed machinery of decoherence, von Neumann entropy, etc.
The main goal of this paper is to provide such an interpretation for a large class of general quantum coarse-grainings.
 
We will provide an algorithm which takes a (finite) set of observables on a (finite-dimensional) Hilbert space and outputs a decomposition of the Hilbert space into irreducible representations of the algebra generated by the observables. Such a decomposition will be called a \emph{generalized bipartition}. The state-reduction map is then defined by tracing out tensor factors of subspaces that appear in this decomposition which is not equivalent to a partial-trace of any single tensor factor of the original Hilbert space. 
However, like the usual partial-trace map, such state reductions preserve the expectation values of all observables in this algebra. Furthermore,  unitary dynamics on $\Hil$ will induce some (typically) non-unitary dynamics on the reduced state so, as with the usual partial-trace reductions, we can perform a decoherence analysis to determine what observables behave classically.

There are many cases in which a coarse-graining is operationally well-described as having access to all elements in a subalgebra of observables. 
In some cases, however, it is more appropriate to consider only a restricted \emph{set} of observables which need not comprise an algebra. 
Classical experimenters, for example, though they might be able to devise setups to measure the (coarse-grained) position and momentum of some system in a lab, would have trouble implementing arbitary superpositions of these operators.
We are thus motivated to define \emph{partial bipartitions}, which implement more general state-reduction maps.
Partial bipartitions are best-suited to a \emph{variational} approach, in which one scans over possible coarse-grainings with the goal of determing which restricted set of observables is ``most classical'' \cite{Pollack:2018yum,mereology}. 

\subsection{Summary of Results}

Because of the very general nature of our subject we have chosen to make this paper as self-contained as possible, often at the expense of brevity. 
In this subsection, we summarize the explicit results of the paper for the benefit of the busy reader. 

\begin{itemize}
\item A \emph{generalized bipartition} \eqref{eq:def generalized bipartition} is a direct-sum decomposition of a Hilbert space $\Hil$ into a sum of bipartite blocks $\Hil_q \cong \mathcal{H}_{A_{q}}\otimes\mathcal{H}_{B_{q}}$:
\be
\mathcal{H}\cong\bigoplus_{q}\Hil_q\cong\bigoplus_{q}\mathcal{H}_{A_{q}}\otimes\mathcal{H}_{B_{q}},
\ee
where each sector $\Hil_q$ is spanned by a set of basis elements $\{\ket{e^q_{ik}}\}$ and the isometry between $\Hil_q$ and $\mathcal{H}_{A_{q}}\otimes\mathcal{H}_{B_{q}}$ maps the basis element $\ket{e^q_{ik}}$ to the product state $\ket{a^q_{i}}\ket{b^q_k}$, with $\{a^q_{i}\}$ and $\{b^q_{k}\}$ respectively bases for $\mathcal{H}_{A_{q}}$, $\mathcal{H}_{B_{q}}$.
The index structure of the $\ket{e^q_{ik}}$ can be conveniently represented as a block-diagonal table, which we refer to as a \emph{bipartition table} \eqref{eq:generic block diag BPT}: 

\noindent %
\noindent\begin{minipage}[c]{1\columnwidth}%
\begin{tabular}{>{\centering}m{0.9\columnwidth}>{\raggedright}m{0.1\columnwidth}}
 & \tabularnewline
\centering{}%
\begin{tabular}{|c|c|c|ccc}
\cline{1-3} \cline{2-3} \cline{3-3} 
$e_{11}^{1}$ & $e_{12}^{1}$ & $\cdots$ &  &  & \tabularnewline
\cline{1-3} \cline{2-3} \cline{3-3} 
$e_{21}^{1}$ & $e_{22}^{1}$ & $\cdots$ &  &  & \tabularnewline
\cline{1-3} \cline{2-3} \cline{3-3} 
$\vdots$ & $\vdots$ & $\ddots$ &  &  & \tabularnewline
\cline{1-5} \cline{2-5} \cline{3-5} \cline{4-5} \cline{5-5} 
\multicolumn{1}{c}{} & \multicolumn{1}{c}{} &  & \multicolumn{1}{c|}{$e_{11}^{2}$} & \multicolumn{1}{c|}{$\cdots$} & \tabularnewline
\cline{4-5} \cline{5-5} 
\multicolumn{1}{c}{} & \multicolumn{1}{c}{} &  & \multicolumn{1}{c|}{$\vdots$} & \multicolumn{1}{c|}{$\ddots$} & \tabularnewline
\cline{4-5} \cline{5-5} 
\multicolumn{1}{c}{} & \multicolumn{1}{c}{} & \multicolumn{1}{c}{} &  &  & $\ddots$\tabularnewline
\end{tabular} & %
\begin{minipage}[c]{0.04\columnwidth}%
\begin{center}
\begin{equation}
\end{equation}
\par\end{center}
~%
\end{minipage}\tabularnewline
 & \tabularnewline
\end{tabular}%
\end{minipage}
The upper index of $\ket{e^q_{ik}}$ labels the block, and the lower indices label position within the block.

\item Generalized bipartitions are interesting for (at least) two reasons.
First, they provide a natural way of talking about ``the degrees of freedom in B'' and ``the set of measurements which can be peformed on B.''
In particular, consider the \emph{bipartition operators} \eqref{eq:BPO def}:
\begin{equation}
S_{kl}^{q}:=I_{A_{q}}\otimes\ket{b_{k}^{q}}\bra{b_{l}^{q}}=\sum_{i}\ket{e_{ik}^{q}}\bra{e_{il}^{q}}.
\end{equation}
The linear combinations of the $S_{kl}^{q}$ comprise the space of linear operators that act on a Hilbert space (isomorphic to) $\Hil_B := \bigoplus_q \Hil_{B_q}$. 
The bipartition operators can therefore be used to define a state-reduction map $tr_{(A)}$, distinct from the standard partial-trace map $tr_A$, from $\Hil$ to $\Hil_B$ \eqref{eq:tr_(A)  action with S_kl}:
\begin{align}
\rho_B = tr_{\left(A\right)}\left(\rho\right) & :=\sum_q\sum_{k,l}tr\left(S_{kl}^q\rho\right)\left|b^q_{l}\right\rangle \left\langle b^q_{k}\right| = \sum_q tr_{A_q}\(\rho_q\) \in \mathcal L(\Hil_B),\label{eq:intro_state_red}
\end{align}
where $\rho_q$ is the projection of the state $\rho$ onto the sector $\Hil_q$.
(There is an analogous state-reduction map onto $\Hil_A$ produced from the dual generalized bipartition, which represents the isometry $\mathcal{H}\cong\bigoplus_{q}\mathcal{H}_{B_{q}}\otimes\mathcal{H}_{A_{q}}$; its bipartition table is constructed by taking the transpose of each block in the original table.)

\item Second, generalized bipartitions are interesting because they appear in the foundational result of the representation theory of operator algebras, the Wedderburn decomposition theorem (Theorem \ref{thm:Wedderburn decomposition}). 
In our language, subject to technical details which we discuss in the main presentation of the theorem below, the decomposition theorem says that any subalgebra $\mathcal{A}$ of $\mathcal{L}(\Hil)$ induces a generalized bipartition of $\Hil$, such that the subalgebra is identical to the set of operators which are linear combinations of the bipartition operators $S^q_{kl}$ which act on $\Hil_B$ alone.
The generalized bipartition thus provides a decomposition of $\Hil$ into irreducible representations of $\mathcal{A}$.
That is, any subalgebra furnishes a generalized bipartition, and any identification of degrees of freedom given by a bipartition table defines a subalgebra.
We emphasize that the decomposition theorem is not constructive: it says only that given a subalgebra such a decomposition must exist.

\item The main technical accomplishment of the paper is to provide an explicit construction of the generalized bipartition (that is, the irrep decomposition) of the (finitely generated) algebra $\mathcal{A}$. This is accomplished by Algorithm \ref{alg:Irrepd-decomposition}, whose correctness is established in Theorem \ref{full theorem} via a number of intermediate lemmas.
We refer the reader to Section \ref{sec:Irrep-Decompositions-of} for details. 
The main idea of the algorithm is based on the fact that projections whose rank cannot be reduced within the algebra are the fundamental building blocks of the algebra. Such minimal projections can be distilled from the initial spectral projections of the generators by breaking them into projections of smaller rank with an operation we call \textit{scattering}:
\begin{equation}
\begin{array}{c}
\Pi_{1}\\
\\
\Pi_{2}
\end{array}
\Diagram{fdA &  & fuA\\
	& f\\
	fuA &  & fdA
}
\begin{array}{c}
\Pi_{1}^{\left(\lambda_{1}\right)}+\Pi_{1}^{\left(\lambda_{2}\right)}+...+\Pi_{1}^{\left(0\right)}\\
\\
\Pi_{2}^{\left(\lambda_{1}\right)}+\Pi_{2}^{\left(\lambda_{2}\right)}+...+\Pi_{2}^{\left(0\right)}.
\end{array}
\end{equation}
The result on the right-hand side of this operation is given by the
spectral decomposition of the operator $\Pi_{1}\Pi_{2}\Pi_{1}$.
Once all projections have been scattered into minimal projections, we consider a graph, which we call a \textit{reflection network}, that consists of the minimal projections as vertices with edges defined by their orthogonality relations. 
Under certain conditions, such a reflection network naturally corresponds to a bipartition table. We  leverage this correspondence to identify the irrep decomposition with this bipartition table. 

\item The main application of the algorithm that we will focus on is the
idea that operational constraints lead to state reductions. The prototypical
example of that is the system-environment split in the context of
the decoherence program. There, the operational constraints are defined
by the observer's inability to control or measure the environment
which leads to the state-reduction map implemented by tracing out
the environment. In Section \ref{subsec: State Reduction from Operational Constraints}
we formalize the idea that any operational constraints given by some
restricted set of observables, lead to a state-reduction map; this
is what we call \emph{operational approach to decoherence}. The correspondence
between operational constraints and state reductions is obtained by
constructing the generalized bipartition associated with the algebra
of restricted observables.
\item In the context of the operational approach, we will
study two, relatively straightforward, examples of state reductions.
One of the examples is concerned with the operational constraints
of an observer unable to distinguish spin and orbital angular
momentum components; this leads to superselection of the total
angular momentum sectors. This example is interesting not because
of the conclusion -- it can be deduced from the standard formalism
of angular momentum addition -- but because we can reach this conclusion
independently by analytically applying our algorithm. Remarkably,
even the correct Clebsch-Gordan coefficients come out as byproducts
of this construction. The second example finds the state reduction
map corresponding to an observer's inability to resolve a bound pair
of particles on a lattice. This example also results in superselection
but in this case the two sectors are the symmetric and the anti-symmetric
configurations of the pair.

\item The machinery of bipartition tables can be applied more generally than matrix algebras or generalized bipartitions.
In particular, the state-reduction map $tr_{(A)}$ still produces a valid reduced state in $\Hil_B$ if some of the entries in the bipartition table are removed. 
The resulting bipartition table, which defines a \emph{partial bipartition} \eqref{eq:pb}, is still block-diagonal but not all of the blocks are rectangular:
\begin{center}
\begin{tabular}{ccc}
\begin{tabular}{|c|c|c}
\hline 
$e_{1;1,1}$ & $e_{1;1,2}$ & \multicolumn{1}{c|}{$...$}\tabularnewline
\hline 
$e_{1;2,1}$ & $\ddots$ & \tabularnewline
\cline{1-2} 
$\vdots$ & \multicolumn{1}{c}{} & \tabularnewline
\cline{1-1} 
\end{tabular} &  & \tabularnewline
 & %
\begin{tabular}{|c|c|c}
\hline 
$e_{2;1,1}$ & $e_{2;1,2}$ & \multicolumn{1}{c|}{$...$}\tabularnewline
\hline 
$e_{2;2,1}$ & $\ddots$ & \tabularnewline
\cline{1-2} 
$\vdots$ & \multicolumn{1}{c}{} & \tabularnewline
\cline{1-1} 
\end{tabular} & \tabularnewline
 &  & $\ddots$\tabularnewline
\end{tabular}\be\ee\
\par\end{center} 
The bipartition operators still correspond to the spanning set of all linear operators in this reduced space, but, in general they no longer span an algebra.
In particular, the last equality in \eqref{eq:intro_state_red} does not hold for a non-rectangular block.
Hence the state-reduction map is not related to the usual partial-trace, since $\Hil_{B_q}$ need not be a tensor factor of $\Hil_q$; we instead say that $\Hil_{B_q}$ is a \emph{partial subsystem} of $\Hil_q$ and write
\be
\Hil_q \cong \Hil_{A_q} \oslash \Hil_{B_q}.
\ee
The same relation holds\footnote{Formally, we can embed $\Hil$ into the larger Hilbert space 
\be
\Hil_A\otimes\Hil_B:=\(\bigoplus_q \Hil_{A_q}\)\otimes\(\bigoplus_q \Hil_{B_q}\)=\bigoplus_{q,q^\prime}\Hil_{A_q}\otimes\Hil_{B_{q^\prime}},
\ee
so that $\Hil$ comprises the diagonal entries $q=q^\prime$, and then the partial-trace map $tr_A$ on this bipartite Hilbert space does indeed map those states in $\Hil_A\otimes\Hil_B$ supported on $\Hil$ to states on $\Hil_B$. Hence we can obtain the reduced density matrix $\rho_B$ by tracing out degrees of freedom, at the cost of working with a larger, auxilliary Hilbert space. As we will discuss below, this procedure is closely related to passing from the physical to the ``ungauged'' Hilbert space when computing the entropy of subregions of states in theories with gauge symmetries.\label{fn:embedding}} between the collection of all the degrees of freedom in $B$ and the full Hilbert space: $\Hil \cong \Hil_{A} \oslash \Hil_{B}$.

\item{Using the machinery of partial bipartitions, we can capture very general coarse-grainings of Hilbert space, since in most cases, the coarse-grained space which will preserve some relevant information will not correspond to a factor of Hilbert space. For example, it may be specified by a restricted set of observables which do not necessarily form an algebra. A particular interesting case which we consider in detail in this paper is to look for coarse-graining of a collection of $N$ underlying degrees of freedom (such as $N$ particles) based on a \emph{collective} or average feature of these degrees of freedom while tracing out the internal features. We focus on obtaining such a partial bipartition, $\Hil \cong \mathcal{S}_{\mrm{collective}} \oslash \mathcal{S}_{\mrm{internal}}$, where $\mathcal{S}_{\mrm{collective}}$ is the partial subsystem representing the coarse-graining which exhibits classical behavior under evolution by the Hamiltonian. This is a variational approach where we iterate over all possible bipartitions which define the split---that is, rearrangements of the elements inside the blocks of the bipartition table---and preferentially choose the one(s) which is(are) most compatible with the Hamiltonian and demonstrates quasi-classical features. Classicality is marked by the existence of macroscopic pointer states compatible with the Hamiltonian, superposition of which exhibit fast dynamical decoherence.}

\item{To define the coarse-graining $\mathcal{S}_{\mrm{collective}}$, we search for the collective observable $M_{c}$, of the form
\begin{equation}
M_{c} = \sum_{\mu = 1}^{N} M_{\mu} \: ,
\end{equation}
where each $M_{\mu}$ acts only on the $\mu$-th particle, most compatible or stationary with respect to the Hamiltonian, by minimizing the norm of $\lcb \Ham, M_{c} \rcb$ as in Eq.\ (\ref{Mc_compatibility}). Similar to the notion of predictability sieve\cite{Zurek:1994zq} in the decoherence literature, eigenstates of $M_{c}$ will define robust, pointer states of the system since they are most compatible with the Hamiltonian. Given the underlying $N$ degrees of freedom, the eigenstates of $M_{c}$ furnish a factorizable basis for Hilbert space, and eigenstates with distinct eigenvalues will label macroscopically distinct pointer states. These can be used to label and construct different columns of the bipartition table which specify the coarse-graining. Pointer states identified in this manner are special low-entropy states which stay robust to entanglement production under evolution. This is a telltale sign of a classical variable which does not arbitrarily entangle with all other degrees of freedom on short timescales. In this sense, eigenstates of the collective observable chosen by the compatibility condition of Eq.\ (\ref{Mc_compatibility}) are classical, macroscopic pointer states which capture an average, collective property of the underlying degrees of freedom which is as robust under evolution as possible.}

\item{Based on the transition structure of the Hamiltonian written in the factorized $M_c$ basis, we can split our Hilbert space into superselection sectors which never interact and hence  form disjoint blocks of our bipartition table. To fix the remaining freedom within each block of the bipartition table, we need to fix the alignment of the rows for which we return to the question of quasi-classicality. A defining feature of our coarse-graining should be that dynamics in the reduced space constructed from the state-reduction map defined by the bipartition table will reflect features of classicality. After identifying the column structure of the bipartition table based on compatibility of a collective observable $M_c$ with the Hamiltonian, we focus on effective dynamical decoherence by the Hamiltonian. Hence, we expect the row alignment of the bipartition table to be such that Hamiltonian evolution decoheres superpositions of macroscopic pointer states by ``interaction'' with $\mathcal{S}_{\mrm{internal}}$. We quantify the entanglement production of a pure state $\rho(t) = \ket{\psi(t)}\bra{\psi(t)} \in \mathcal{L}(\Hil)$ evolving under evolution by the Hamiltonian using linear entanglement entropy,
\begin{equation}
S_{\mrm{lin}}(t) = 1 - \Tr(\rho^{2}_{c}(t)) \: .
\end{equation}
where
\begin{equation}
\rho_c(t) \: \equiv \: \Tr_{(\mathcal{S}_{\mrm{internal}})}\rho(t) \: ,
\end{equation}
is the reduced state which $\rho(t)$ gets mapped to by the state-reduction map $\Tr_{(\mathcal{S}_{\mrm{internal}})}$. 
We iterate over all finite, discrete permutations of row alignments to select (the class of) bipartition table(s) which maximize entanglement production. This is done for a set of candidate classical states which are taken to be natural extensions of the unentangled, initial ready states in the decoherence literature.} 

\item{Using this algorithm to obtain the classical coarse-graining of an underlying $N$ degrees of freedom based on a collective feature compatible with the Hamiltonian, we analyze the Ising model in 1-D. We see the emergence of different coarse-grainings depending on the whether the nearest neighbor spin interaction or the external magnetic field dominates the Hamiltonian, a phenomenon akin to a phase transition. Depending on the preferentially selected collective compatible observable, either the total spin-z or total spin-x of the Ising chain, the coarse-graining may or may not exhibit superselection sectors. In both cases, the dimension of the coarse-grained space is $\sim \mathcal{O}(N)$ compared to the original Hilbert space, which has dimension $\sim \mathcal{O}(2^{N})$. The classical coarse-grainings  picked out exhibit fast dynamical decoherence between eigenstates of the compatible macroscopic variable and lead to emergent quasi-classicality. We exhibit numerical results for the case of $N = 3$ and $N=4$ spins, where the results are simple. Often a class of such quasi-classical bipartition tables (and hence, coarse-grainings) will get selected which reflects a symmetry between different underlying degrees of freedom from the point of view of the Hamiltonian. This setup can be generalized to other physical systems to study classical coarse-grainings determined by the Hamiltonian itself.}

\end{itemize}

\subsection{Previous Work}

Because of the general nature of our subject there is a vast body of interesting related work.
Here we will only briefly mention some of the previous work directly related to the core problem of state reduction based on observables.

As discussed above, one of our major results is an algorithm for directly computing the irrep decomposition of a Hilbert space with respect to a subalgebra $\mathcal A$. 
We mention two complementary approaches to the same problem.
First, a quite different numerical algorithm for a related matrix-algebra problem was previously given by Murota \emph{et al} \cite{Murota2010} (see \cite{wang2013numerical} for its adaptation in the physics literature). 
Their approach, presented in the context of semidefinite programming, proceeds quite differently.
A key step in their algorithm involves sampling for a random matrix in the algebra, which is inherently numeric and requires the ability to span the operator space of the algebra. Our approach does not require sampling from the algebra and it has no prerequisite of being able to span the algebra. 

Second, in a more physical context, Holbrook \emph{et al}  \cite{holbrook2003noiseless} have proposed an algorithm for computing the noise commutant of an error algebra associated with a noisy channel. Similarly to our approach, they also propose an inherently non-numeric algorithm that relies on minimal projections as the fundamental building blocks of the algebra. However, their algorithm also requires the ability to span the operator space of the algebra, a prerequisite that is not easy to satisfy without numerics. 

Beyond the specific algorithm, we are concerned with the general phenomenon wherein we can assign definite classical dynamics to a set of observables, along the lines of the decoherence program but without a bipartite Hilbert space.
Several previous sets of authors have similarly investigated under what circumstances sets of observables can act as classical systems in their own right. Zanardi \cite{Zanardi:2001zz} identified a notion of ``virtual quantum subsystems'' induced by the operational choice of a set of measurements, and with collaborators \cite{Zanardi:2004zz} argued that in the general case the partition of a Hilbert space into subsystems could be identified from the set of operationally accessible observables (for related recent approaches to the subsystem problem see \cite{Cotler:2017abq,mereology}). Kofler and Brukner \cite{2007PhRvL..99r0403K} considered the emergence of classical physics from the coarse-graining inherent in measurements which can be performed with only finite precision. Duarte \emph{et al} \cite{duarte2017emerging} have constructed a state-reduction map for a blurred and saturated detector and analyzed the reduced dynamics. 

Along the same lines, in a series of papers (e.g.\ \cite{2004SHPMP..35...73C,Castagnino:2008zz,10.1007/978-94-007-2404-4_15,2014BrJPh..44..138F}; see also \cite{2005PhRvA..72a2109S}) Castagnino, Lombardi, and collaborators have developed the self-induced decoherence (SID) program, which conceptualizes decoherence as a dynamical process which identifies the classical variables by inspection of the Hamiltonian, without the need to explicitly identify a set of environment degrees of freedom.
The variational approach we sketch in Section \ref{sec:variational} is similarly concerned with the dynamical selection of a preferred set of observables.

\subsection{Organization of the Paper}

Because this paper is aimed at a broad audience, and mostly uses the tools of fundamental quantum mechanics along with linear algebra and representation theory, we have attempted to keep it self-contained and pedagogical to the extent possible.
In Section \ref{sec:prelim} we accordingly review the technical and conceptual tools we will use in the remainder of the paper.
In particular we review the concept of generalized bipartitions and bipartition tables introduced by one of us in \cite{2018PhRvA..97e2130K}, as well as results from the mathematical literature on representations of matrix algebras.

The remainder of the paper is concerned with the application of these tools to physical situations.
We will mostly be concerned with an operational approach, in which we assume a lab-like setup in which a set of accessible observables has been specified, and investigate the decoherences of the resulting states.
In Section \ref{sec:decoherence} we set up this general operational problem and its relation to the decoherence program, which we review. 
In Section \ref{sec:Irrep-Decompositions-of} we then then present the general algorithm for passing from an operator algebra to a bipartition.
Given this mechanism for producing a reduced state containing the desired coarse-grained information, we can use the tools of the decoherence program to investigate the dynamics and classicality of the reduced states.
Having specified the general algorithm, we specialize in Section \ref{sec:Examples-of-State} to physically relevant examples.
In particular we focus on the common case where the experimentalist only has access to coarse-grained, collective observables, where the generalized bipartition table takes a particularly simple form and superselection sectors are induced by the operator algebra.

In Section \ref{PartialBPT} we return to the general problem of coarse-graining from observables and discuss the state-reduction maps which arise when the set of observables need not form an algebra.
In Section \ref{sec:variational} we use the tools of the previous section and ideas from the decoherence program to initiate a more abstract, variational approach in which the goal is to determine the ``most classical'' set of observables given only a Hilbert space with a specified Hamiltonian.
To build intuition for the general case, we focus in Section \ref{sec:ising} on the Ising Model, where numerical calculations are tractable.
In Section \ref{sec:discussion} we conclude by sketching some of the potential applications of our work for quantum information, holography, and quantum gravity.

\section{Preliminaries}\label{sec:prelim}

\subsection{Setup and Notation}

Unless stated otherwise, all Hilbert spaces will be complex and finite-dimensional and the notions of \emph{linear operator }and \emph{matrix}
will be used interchangeably. We will denote with $\mathcal{L}\left(\mathcal{H}\right)$
the space of linear operators on the Hilbert space $\mathcal{H}$.
Isometric Hilbert spaces will be identified by the relation $\mathcal{H}_{1}\cong\mathcal{H}_{2}$
associated with some isometry $V$ between the spaces (most of the
time, isometric Hilbert spaces will arise when we relabel or reinterpret
the basis elements).

An\emph{ orthogonal projection} $\Pi\in\mathcal{L}\left(\mathcal{H}\right)$
is defined by the property $\Pi=\Pi^{\dagger}=\Pi^{2}$. In the following
we will refer to such an operator simply as a \emph{projection}, implying an
\emph{orthogonal projection }as defined here. This should not be confused
with the notion of \emph{pairwise }orthogonal projections\emph{ }which
refers to a set of projections $\left\{ \Pi_{k}\right\} $ such that
$\Pi_{k}\Pi_{k'}=\delta_{kk'}\Pi_{k}$ (we will sometimes omit \emph{pairwise}
when referring to such sets). The \emph{eigenspace }of a projection
$\Pi$ is the subspace of $\mathcal{H}$ on which $\Pi$ acts as the
identity. Similarly, an \emph{eigenbasis} of $\Pi$ refers to a set of orthonormal
vectors that span the eigenspace of $\Pi$. The\emph{ rank }of a projection\emph{
}is also the dimension of its eigenspace; we will often use this relation
implicitly.

A\emph{ partial isometry }$S\in\mathcal{L}\left(\mathcal{H}\right)$
is defined by the properties $SS^{\dagger}=\Pi_{\mathrm{fin}}$ and $S^{\dagger}S=\Pi_{\mathrm{in}}$
where $\Pi_{\mathrm{in}}$ and $\Pi_{\mathrm{fin}}$ are projections. A partial isometry
$S$ acts as an isometry on the eigenspace of $\Pi_{\mathrm{in}}$, mapping
it to the eigenspace of $\Pi_{\mathrm{fin}}$ (both projections have the same
rank), and it annihilates vectors that are orthogonal to the eigenspace
of $\Pi_{\mathrm{in}}$ (the kernel of $S$ is the kernel of $\Pi_{\mathrm{in}}$).
Every projection $\Pi$ is also a partial isometry ($\Pi_{\mathrm{in}}=\Pi_{\mathrm{fin}}=\Pi$),
so we will say that $S$ is a \emph{proper }partial isometry
if it is a partial isometry but it is not a projection.

A \emph{graph $G:=\left\{ V,E\right\} $ }is defined by a set of vertices
$V:=\left\{ v_{i}\right\} $ and a set of edges $E:=\left\{ \left(v_{i},v_{j}\right)\right\} $.
A \emph{path} $p$ on the graph is an ordered set of vertices $p=\left(v_{i_{1}},v_{i_{2}},...\right)$
such that every consecutive pair is connected by an edge $\left(v_{i_{k}},v_{i_{k+1}}\right)\in E$.
The path $p$ is called \emph{simple }if every vertex appears at most
once in $p$. We will say that a pair of vertices $v_{1},v_{2}\in V$
is \emph{connected by a path} if there is a path $p$ such that $v_{1}$
is its first vertex and $v_{2}$ is its last. A \emph{connected component}
is a subset of vertices $C\subseteq V$ such that every pair $v_{1},v_{2}\in C$
is connected by a path, and every pair $v_{1}\in C$, $v_{2}\in V\backslash C$
is not connected by a path.

\subsection{Generalized Bipartitions and Bipartition Tables\label{sec:Prelim - Bipartition Tables}}

A \emph{bipartite system} is a system that consists of two distinct subsystems
$A$ and $B$. A {bipartition} of a system is an explicit specification of
these subsystems. When the system is bipartite by construction---the system of two qubits, for example---it comes with a natural bipartite
structure $\mathcal{H}\cong\mathcal{H}_{A}\otimes\mathcal{H}_{B}$. The
Hilbert space of the whole system is constructed from the tensor product
of two Hilbert spaces and the bases are naturally constructed from
products of local bases. Such a construction, however, is not necessary
and we can always impose a bipartition after the fact by selecting
a bipartite tensor product structure in any (non-prime dimensional)
Hilbert space. Different bipartitions of the Hilbert space identify
different subsystems that are not necessarily physical in the usual
sense but are associated with distinct degrees of freedom that define
a virtual subsystem \cite{Zanardi:2001zz}.

Formally, given a $d$-dimensional Hilbert space $\mathcal{H}$ such
that $d=d_{A}d_{B}$, we can introduce an auxiliary bipartite Hilbert
space $\mathcal{H}_{A}\otimes\mathcal{H}_{B}$ with dimensions $\Dim \mathcal{H}_{A} = d_{A}$ and $\Dim \Hil_{B} = d_{B}$.
By isometrically mapping the original Hilbert space $\mathcal{H}$
into $\mathcal{H}_{A}\otimes\mathcal{H}_{B}$ we impose a tensor product
structure that might not have been explictly present beforehand. Different choices of the isometry
$V:\text{\ensuremath{\mathcal{H}}}\longrightarrow\mathcal{H}_{A}\otimes\mathcal{H}_{B}$
specify different choices of bipartition, and the isometry $V$ itself
is fully described by some orthonormal basis $\ket{e_{ik}}$ in $\mathcal{H}$
where $i=1...d_{A}$ and $k=1...d_{B}$ such that $V\ket{e_{ik}}=\ket{a_{i}}\ket{b_{k}}$, where the elements $\ket{e_{ik}}$ and $\ket{a_{i}}\ket{b_{k}}$ are pairs
of right and left singular vectors of $V$. The choice of bipartition
is therefore conveniently summarized by choosing the
elements $\ket{e_{ik}}$ and arranging them into a rectangular table such that the $i,k$
indices correspond to the row and column of the element, respectively:

\noindent\begin{minipage}[c]{1\columnwidth}%
\begin{tabular}{>{\centering}m{0.9\columnwidth}>{\raggedright}m{0.1\columnwidth}}
 & \tabularnewline
\centering{}%
\begin{tabular}{c|c|c|c|c|c}
\multicolumn{1}{c}{} & \multicolumn{1}{c}{$b_{1}$} & \multicolumn{1}{c}{$b_{2}$} & \multicolumn{1}{c}{$\cdots$} & \multicolumn{1}{c}{$b_{d_{B}}$} & \tabularnewline
\cline{2-5} \cline{3-5} \cline{4-5} \cline{5-5} 
$a_{1}$ & $e_{11}$ & $e_{12}$ & $\cdots$ & $e_{1d_{B}}$ & \tabularnewline
\cline{2-5} \cline{3-5} \cline{4-5} \cline{5-5} 
$a_{2}$ & $e_{21}$ & $e_{22}$ & $\cdots$ & $e_{2d_{B}}$ & \tabularnewline
\cline{2-5} \cline{3-5} \cline{4-5} \cline{5-5} 
$\vdots$ & $\vdots$ & $\vdots$ & $\ddots$ & $\vdots$ & \tabularnewline
\cline{2-5} \cline{3-5} \cline{4-5} \cline{5-5} 
$a_{d_{a}}$ & $e_{d_{A}1}$ & $e_{d_{A}2}$ & $\cdots$ & $e_{d_{A}d_{B}}$ & \tabularnewline
\cline{2-5} \cline{3-5} \cline{4-5} \cline{5-5} 
\end{tabular} & %
\begin{minipage}[c]{0.04\columnwidth}%
\begin{center}
\begin{equation}
\end{equation}
\par\end{center}
~%
\end{minipage}\tabularnewline
 & \tabularnewline
\end{tabular}%
\end{minipage}

\noindent The rows of this table are associated with the degree of
freedom of subsystem $A$ and the columns are associated with the
degree of freedom of subsystem $B$. We will refer to such tables, which one of us first introduced in \cite{2018PhRvA..97e2130K}, as \emph{bipartition tables} (BPTs).
It should be clear that for each bipartition table there is another, trivially related one derived by swapping the row and column indices, which simply swaps the first and second systems in the bipartition. 

As a simple example, consider a system of two qubits and the product
basis $\{ \ket{00}$,$\ket{01}$,$\ket{10}$,$\ket{11} \}$. The BPT

\noindent\begin{minipage}[c]{1\columnwidth}%
\begin{tabular}{>{\centering}m{0.9\columnwidth}>{\raggedright}m{0.1\columnwidth}}
 & \tabularnewline
\centering{}%
\begin{tabular}{c|c|c|c}
\multicolumn{1}{c}{} & \multicolumn{1}{c}{$0_{B}$} & \multicolumn{1}{c}{$1_{B}$} & \tabularnewline
\cline{2-3} \cline{3-3} 
$0_{A}$ & $00$ & $01$ & \tabularnewline
\cline{2-3} \cline{3-3} 
$1_{A}$ & $10$ & $11$ & \tabularnewline
\cline{2-3} \cline{3-3} 
\end{tabular} & %
\begin{minipage}[c]{0.04\columnwidth}%
\begin{center}
\begin{equation}
\end{equation}
\par\end{center}
~%
\end{minipage}\tabularnewline
 & \tabularnewline
\end{tabular}%
\end{minipage}

\noindent represents the natural tensor product structure given by
construction, with each of the elements placed at the row and column
that corresponds to the values of the qubits. The subsystems $A$
and $B$ in this case are the qubits themselves.

A minor rearrangement of the two qubit BPT

\noindent\begin{minipage}[c]{1\columnwidth}%
\begin{tabular}{>{\centering}m{0.9\columnwidth}>{\raggedright}m{0.1\columnwidth}}
 & \tabularnewline
\centering{}%
\begin{tabular}{c|c|c|c}
\multicolumn{1}{c}{} & \multicolumn{1}{c}{$\mrm{even}_{B}$} & \multicolumn{1}{c}{$\mrm{odd}_{B}$} & \tabularnewline
\cline{2-3} \cline{3-3} 
$0_{A}$ & $00$ & $01$ & \tabularnewline
\cline{2-3} \cline{3-3} 
$1_{A}$ & $11$ & $10$ & \tabularnewline
\cline{2-3} \cline{3-3} 
\end{tabular} & %
\begin{minipage}[c]{0.04\columnwidth}%
\begin{center}
\begin{equation}
\end{equation}
\par\end{center}
~%
\end{minipage}\tabularnewline
 & \tabularnewline
\end{tabular}%
\end{minipage}

\noindent results in a new tensor product structure where we relabeled
the columns to better match their new meaning. Here the value of the
left qubit still varies with rows but what now varies with columns
is the overall parity of the two qubits, so subsystem $A$
is still interpreted as the left qubit but subsystem $B$ is now associated
with the parity degree of freedom. The isometry defined by this BPT
is
\begin{equation}
V=\ket{0_{A}}\ket{\mrm{even}_{B}}\bra{00}+\ket{0_{A}}\ket{\mrm{odd}_{B}}\bra{01}+\ket{1_{A}}\ket{\mrm{odd}_{B}}\bra{10}+\ket{1_{A}}\ket{\mrm{even}_{B}}\bra{11},
\end{equation}
so with respect to this bipartition, the entangled Bell state $\ket{00}+\ket{11}$
maps to $\ket{0_{A}}\ket{\mrm{even}_{B}}+\ket{1_{A}}\ket{\mrm{even}_{B}}$, which
is not entangled. From now on we will not explicitly label the rows
and columns on BPTs, but we will implicitly use the fact that the rows
and columns represent the individual degrees of freedom of the two
subsystems in the bipartition.

The visual representation of BPTs can also be extended to capture
direct-sum decompositions of Hilbert spaces. By arranging basis
elements into a block-diagonal table,

\noindent %
\noindent\begin{minipage}[c]{1\columnwidth}%
\begin{tabular}{>{\centering}m{0.9\columnwidth}>{\raggedright}m{0.1\columnwidth}}
 & \tabularnewline
\centering{}%
\begin{tabular}{|c|c|c|ccc}
\cline{1-3} \cline{2-3} \cline{3-3} 
$e_{11}^{1}$ & $e_{12}^{1}$ & $\cdots$ &  &  & \tabularnewline
\cline{1-3} \cline{2-3} \cline{3-3} 
$e_{21}^{1}$ & $e_{22}^{1}$ & $\cdots$ &  &  & \tabularnewline
\cline{1-3} \cline{2-3} \cline{3-3} 
$\vdots$ & $\vdots$ & $\ddots$ &  &  & \tabularnewline
\cline{1-5} \cline{2-5} \cline{3-5} \cline{4-5} \cline{5-5} 
\multicolumn{1}{c}{} & \multicolumn{1}{c}{} &  & \multicolumn{1}{c|}{$e_{11}^{2}$} & \multicolumn{1}{c|}{$\cdots$} & \tabularnewline
\cline{4-5} \cline{5-5} 
\multicolumn{1}{c}{} & \multicolumn{1}{c}{} &  & \multicolumn{1}{c|}{$\vdots$} & \multicolumn{1}{c|}{$\ddots$} & \tabularnewline
\cline{4-5} \cline{5-5} 
\multicolumn{1}{c}{} & \multicolumn{1}{c}{} & \multicolumn{1}{c}{} &  &  & $\ddots$\tabularnewline
\end{tabular} & %
\begin{minipage}[c]{0.04\columnwidth}%
\begin{center}
\begin{equation}
\label{eq:generic block diag BPT}
\end{equation}
\par\end{center}
~%
\end{minipage}\tabularnewline
 & \tabularnewline
\end{tabular}%
\end{minipage}

\noindent we can specify Hilbert-space decompositions of the form
\begin{equation}
\mathcal{H}\cong\bigoplus_{q}\mathcal{H}_{A_{q}}\otimes\mathcal{H}_{B_{q}},\label{eq:def generalized bipartition}
\end{equation}
where the sector $q$ is spanned by the basis elements $\ket{e_{ik}^{q}}$
of the block $q$ and each sector is further decomposed into a tensor product
of two subsystems according to the arrangement of elements inside
the block. We will refer to decompositions of the form (\ref{eq:def generalized bipartition})
as \emph{generalized bipartitions, }and by BPT we will imply the generalized
form (\ref{eq:generic block diag BPT}). In Sections \ref{PartialBPT}-\ref{sec:ising} we will further generalize this idea to non-rectangular BPTs that capture the notion of \textit{partial bipartitions,} associated with decompositions that cannot be expressed as in Eq.\ (\ref{eq:def generalized bipartition}).

\noindent 

As an example, consider the 3 spin-$\frac{1}{2}$ system decomposed
into total spin sectors:
\begin{equation}
\mathcal{H}=\frac{1}{2}\otimes\frac{1}{2}\otimes\frac{1}{2}\cong\frac{3}{2}\oplus\frac{1}{2}\oplus\frac{1}{2}.
\end{equation}
The bases that correspond to each total spin sector are $\ket{\frac{3}{2},m}$,
$\ket{\frac{1}{2},m,1}$, $\ket{\frac{1}{2},m,2}$ where $m$ varies
from $\frac{3}{2}$ to $-\frac{3}{2}$ in integer steps and $1,2$ label
the two distinct sectors of total spin $\frac{1}{2}$.The BPT

\noindent\begin{minipage}[c]{1\columnwidth}%
\begin{tabular}{>{\centering}m{0.9\columnwidth}>{\raggedright}m{0.1\columnwidth}}
 & \tabularnewline
\centering{}%
\begin{tabular}{cccc|cc|cc}
\cline{1-4} \cline{2-4} \cline{3-4} \cline{4-4} 
\multicolumn{1}{|c|}{$\frac{3}{2},\frac{3}{2}$} & \multicolumn{1}{c|}{$\frac{3}{2},\frac{1}{2}$} & \multicolumn{1}{c|}{$\frac{3}{2},-\frac{1}{2}$} & $\frac{3}{2},-\frac{3}{2}$ &  & \multicolumn{1}{c}{} &  & %
\begin{tabular}{c}
\tabularnewline
\tabularnewline
\end{tabular}\tabularnewline
\cline{1-6} \cline{2-6} \cline{3-6} \cline{4-6} \cline{5-6} \cline{6-6} 
 &  &  &  & \multicolumn{1}{c|}{$\frac{1}{2},\frac{1}{2},1$} & $\frac{1}{2},-\frac{1}{2},1$ &  & %
\begin{tabular}{c}
\tabularnewline
\tabularnewline
\end{tabular}\tabularnewline
\cline{5-8} \cline{6-8} \cline{7-8} \cline{8-8} 
\begin{tabular}{c}
\tabularnewline
\tabularnewline
\end{tabular} &  &  & \multicolumn{1}{c}{} &  &  & \multicolumn{1}{c|}{$\frac{1}{2},\frac{1}{2},2$} & \multicolumn{1}{c|}{$\frac{1}{2},-\frac{1}{2},2$}\tabularnewline
\cline{7-8} \cline{8-8} 
\end{tabular} & %
\begin{minipage}[c]{0.04\columnwidth}%
\begin{center}
\begin{equation}
\end{equation}
\par\end{center}
~%
\end{minipage}\tabularnewline
 & \tabularnewline
\end{tabular}%
\end{minipage}

\noindent represents the direct-sum decomposition of the Hilbert space
into total spin sectors. By stacking the two rows of total spin $\frac{1}{2}$
into a single block,

\noindent %
\noindent\begin{minipage}[c]{1\columnwidth}%
\begin{tabular}{>{\centering}m{0.9\columnwidth}>{\raggedright}m{0.1\columnwidth}}
 & \tabularnewline
\centering{}%
\begin{tabular}{cccc|c|c|}
\cline{1-4} \cline{2-4} \cline{3-4} \cline{4-4} 
\multicolumn{1}{|c|}{$\frac{3}{2},\frac{3}{2}$} & \multicolumn{1}{c|}{$\frac{3}{2},\frac{1}{2}$} & \multicolumn{1}{c|}{$\frac{3}{2},-\frac{1}{2}$} & $\frac{3}{2},-\frac{3}{2}$ & \multicolumn{1}{c}{%
\begin{tabular}{c}
\tabularnewline
\tabularnewline
\end{tabular}} & \multicolumn{1}{c}{}\tabularnewline
\hline 
\begin{tabular}{c}
\tabularnewline
\tabularnewline
\end{tabular} &  &  &  & $\frac{1}{2},\frac{1}{2},1$ & $\frac{1}{2},-\frac{1}{2},1$\tabularnewline
\cline{5-6} \cline{6-6} 
\begin{tabular}{c}
\tabularnewline
\tabularnewline
\end{tabular} &  &  &  & $\frac{1}{2},\frac{1}{2},2$ & $\frac{1}{2},-\frac{1}{2},2$\tabularnewline
\cline{5-6} \cline{6-6} 
\end{tabular} & %
\begin{minipage}[c]{0.04\columnwidth}%
\begin{center}
\begin{equation}
\end{equation}
\par\end{center}
~%
\end{minipage}\tabularnewline
 & \tabularnewline
\end{tabular}%
\end{minipage}

\noindent we specify a different, more subtle decomposition of the Hilbert space.
We now have two sectors, one associated with total spin $\frac{3}{2}$
and the other with total spin $\frac{1}{2}$ where the $\frac{1}{2}$ sector is further decomposed
into a tensor product
\begin{equation}
\mathcal{H}\cong\frac{3}{2}\oplus\left(\mathcal{N}_{\frac{1}{2}}\otimes\frac{1}{2}\right).\label{eq:3 spin H space decomp}
\end{equation}
The virtual subsystem $\mathcal{N}_{\frac{1}{2}}$ is usually referred
to as the \emph{multiplicity} subsystem while $\frac{1}{2}$ still
represents the total spin-$\frac{1}{2}$ magnetization degree of freedom. The multiplicity subsystem $\mathcal{N}_{\frac{1}{2}}$ is also well known as the prototypical example of a \textit{noiseless subsystem} \cite{lidar2014review}, which encodes information in the relational degrees of freedom that are invariant under collective rotations. 
In general, such bipartitions naturally arise from the structure of irreducible representations of symmetry groups, as we will see below.

\subsection{Matrix Algebras and Their Representation\label{subsec:Matrix-Algebras-and their representation}}

We will now summarize the relevant results of the representation theory
of finite-dimensional operator algebras and relate them to the BPT
picture of the previous subsection. Our exposition will emphasize the
structural details of the representation theory at the expense of
mathematical rigor. The mathematically inclined reader is referred to
\cite{beny2015algebraic} or \cite{farenick2012algebras}.

Let us first define what we mean by a matrix algebra\footnote{In the literature, matrix algebras are often referred to as von Neumann
algebras or $C^{*}$-algebras, even when only finite-dimensional spaces
are involved. We prefer the term ``matrix algebra'' to emphasize
the fact that we are dealing with a simpler, finite-dimensional, case
where we need not be concerned with the subtleties of infinite-dimensional
spaces.}
\begin{defn}
A \emph{matrix algebra }is a subset $\mathcal{A}\subseteq\mathcal{L}\left(\mathcal{H}\right)$
such that for any $M_{1},M_{2}\in\mathcal{A}$ and $c\in\mathbb{C}$:

(1) $M_{1}+M_{2}\in\mathcal{A}$

(2) $M_{1}M_{2}\in\mathcal{A}$

(3) $cM_{1}\in\mathcal{A}$

(4) $M_{1}^{\dagger}\in\mathcal{A}$
\end{defn}

\noindent For example, the set $\mathcal{L}\left(\mathcal{H}\right)$
is a \emph{full }matrix algebra on $\mathcal{H}$. From here on we
will use the term \emph{algebra }to mean \emph{matrix algebra }as
defined above.

Any finite (or infinite) set of matrices $\mathcal{M}:=\left\{ M_{1},M_{2},...M_{n}\right\} $
can generate the algebra $\mathcal{A}:=\left\langle M_{1},M_{2},...M_{n}\right\rangle $
(which the angled brackets denote) by taking the closure of $\mathcal{M}$
with respect to operations in the above definition. It should be clear
then that the algebra $\left\langle M_{1},M_{2},...M_{n}\right\rangle $
is spanned by linear combinations of products of elements $\left\{ M_{1},M_{2},...M_{n}\right\} \cup\left\{ M_{1}^{\dagger},M_{2}^{\dagger},...M_{n}^{\dagger}\right\} $.

The central result of representation theory of matrix algebras is
known as Wedderburn decomposition \cite{wedderburn1934lectures}, and it can be stated
in the following way:
\begin{thm}
\label{thm:Wedderburn decomposition} (Wedderburn Decomposition) For
every algebra $\mathcal{A}\subseteq\mathcal{L}\left(\mathcal{H}\right)$,
the Hilbert space $\mathcal{H}$ decomposes into 
\begin{equation}
\mathcal{H}\cong\left[\bigoplus_{q}\mathcal{H}_{A_{q}}\otimes\mathcal{H}_{B_{q}}\right]\oplus\mathcal{H}_{0}\label{eq:Wedder. decomposition H space}
\end{equation}
 such that every element $M\in\mathcal{A}$ is of the form
\begin{equation}
M=\left[\bigoplus_{q}I_{A_{q}}\otimes M_{B_{q}}\right]\oplus0,\label{eq:Wedder decomposition matrix}
\end{equation}
where $I_{A_{q}}$ is the identity on $\mathcal{H}_{A_{q}}$ and $M_{B_{q}}$
is any matrix on $\mathcal{H}_{B_{q}}$, and all matrices of this
form are elements of $\mathcal{A}$.

~

\noindent (for a contemporary exposition of the proof see Section
2.7 in \cite{beny2015algebraic} or Appendix A of \cite{Harlow:2016vwg})
\end{thm}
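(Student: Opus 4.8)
The plan is to reduce the theorem to the single statement that a finite-dimensional $*$-algebra with trivial center is, after a suitable isometric relabeling of the basis, a full matrix algebra acting on one tensor factor, and then to peel that structure off block by block. First I would dispose of $\mathcal{H}_0$ and arrange that $\mathcal{A}$ is unital. Fix a finite spanning set $\{M_i\}$ of $\mathcal{A}$ and put $T:=\sum_i M_iM_i^\dagger\in\mathcal{A}$. Since $\mathcal{A}$ is $*$-closed it contains every polynomial in its self-adjoint elements, hence every spectral projection of $T$; in particular it contains the projection $P$ onto $\mathrm{ran}\,T$, which one checks equals the span of $\mathcal{A}\mathcal{H}$. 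Then $PM=MP=M$ for every $M\in\mathcal{A}$ (because $M$ and $M^\dagger$ both have range in $P\mathcal{H}$), so $\mathcal{H}_0:=(I-P)\mathcal{H}$ is annihilated by $\mathcal{A}$ and $P$ is the identity of $\mathcal{A}$ on $\mathcal{H}':=P\mathcal{H}$. Working on $\mathcal{H}'$, let $\mathcal{Z}:=\mathcal{A}\cap\mathcal{A}'$ be the center: a commutative unital $*$-algebra, hence spanned by a complete family of mutually orthogonal minimal projections $\{Z_q\}$ with $\sum_q Z_q=P$, each central in $\mathcal{A}$. Setting $\mathcal{H}_q:=Z_q\mathcal{H}'$ and $\mathcal{A}_q:=Z_q\mathcal{A}$ gives $\mathcal{H}'=\bigoplus_q\mathcal{H}_q$ and $\mathcal{A}=\bigoplus_q\mathcal{A}_q$, with each $\mathcal{A}_q$ a unital $*$-algebra on $\mathcal{H}_q$ whose center is just $\mathbb{C}Z_q$.

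The core step is to show that such a ``factor'' $\mathcal{A}_q$ equals $I_{A_q}\otimes\mathcal{L}(\mathcal{H}_{B_q})$ under an isometry $\mathcal{H}_q\cong\mathcal{H}_{A_q}\otimes\mathcal{H}_{B_q}$. Choose a projection $e_1\in\mathcal{A}_q$ of minimal nonzero rank. Splitting a general element of the corner into self-adjoint parts and using spectral-projection closure, a self-adjoint corner element with two eigenvalues would split $e_1$, so $e_1\mathcal{A}_qe_1=\mathbb{C}e_1$. The two-sided ideal $\mathcal{A}_qe_1\mathcal{A}_q$ is itself a $*$-algebra, so it has a support projection, and the ideal property forces that projection to be central in $\mathcal{A}_q$, hence equal to $Z_q$; therefore $\mathcal{A}_qe_1\mathcal{A}_q=\mathcal{A}_q$, i.e. $\mathcal{A}_q$ is simple. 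Simplicity gives, for any minimal projection $f\in\mathcal{A}_q$, an $x$ with $v:=fxe_1\neq0$; then $v^\dagger v\in e_1\mathcal{A}_qe_1=\mathbb{C}e_1$ is positive and nonzero, so after rescaling $w:=v/\|v\|$ satisfies $w^\dagger w=e_1$ and $ww^\dagger=f$ (minimality of $f$). Thus all minimal projections of $\mathcal{A}_q$ are Murray--von Neumann equivalent inside $\mathcal{A}_q$; repeatedly compressing by $Z_q-\sum e_k$ and splitting off an orthogonal minimal copy produces mutually orthogonal minimal projections $e_1,\dots,e_{d_B}$ with $\sum_ke_k=Z_q$ together with partial isometries $w_k$ ($w_1=e_1$, $w_k^\dagger w_k=e_1$, $w_kw_k^\dagger=e_k$). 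The operators $E_{kl}:=w_kw_l^\dagger$ then obey $E_{kl}E_{mn}=\delta_{lm}E_{kn}$, $E_{kl}^\dagger=E_{lk}$, $\sum_kE_{kk}=Z_q$, so they form a full $d_B\times d_B$ system of matrix units in $\mathcal{A}_q$; writing $M=\sum_{kl}E_{kk}ME_{ll}$ and observing $E_{1k}ME_{l1}\in e_1\mathcal{A}_qe_1=\mathbb{C}e_1$ shows $\mathcal{A}_q=\mathrm{span}\{E_{kl}\}$ exactly. Finally, with $\mathcal{H}_{A_q}:=E_{11}\mathcal{H}_q$ and $\mathcal{H}_{B_q}:=\mathbb{C}^{d_B}$ carrying a basis $\{|b_k\rangle\}$, the map $|\xi\rangle\otimes|b_k\rangle\mapsto E_{k1}|\xi\rangle$ is a unitary $\mathcal{H}_{A_q}\otimes\mathcal{H}_{B_q}\to\mathcal{H}_q$ under which $E_{kl}\mapsto I_{A_q}\otimes|b_k\rangle\langle b_l|$, so $\mathcal{A}_q\cong I_{A_q}\otimes\mathcal{L}(\mathcal{H}_{B_q})$.

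Assembling the blocks gives $\mathcal{H}\cong\bigl[\bigoplus_q\mathcal{H}_{A_q}\otimes\mathcal{H}_{B_q}\bigr]\oplus\mathcal{H}_0$ and $M\in\mathcal{A}$ iff $M=\bigl[\bigoplus_q I_{A_q}\otimes M_{B_q}\bigr]\oplus0$, the ``only if'' being the decomposition just built and the ``if'' being immediate because $\mathcal{A}_q=\mathrm{span}\{E_{kl}\}$ is all of $I_{A_q}\otimes\mathcal{L}(\mathcal{H}_{B_q})$. (Alternatively one could invoke Artin--Wedderburn: a finite-dimensional $*$-algebra has zero Jacobson radical, since $x^\dagger x$ would be a positive nilpotent whenever $x$ is a nilpotent element of the radical, hence the algebra is a product of simple algebras, and over $\mathbb{C}$ every finite-dimensional simple $*$-algebra is a full matrix algebra; but the minimal-projection route above stays closer to the constructive spirit of the rest of the paper.) The step I expect to be the real work is the core one: establishing that the corner $e_1\mathcal{A}_qe_1$ is one-dimensional and that $\mathcal{A}_q$ is simple, so that all minimal projections are equivalent and can be organized into a complete matrix-unit system; once that is in hand, the identification of the tensor factors and of $\mathcal{H}_0$ is bookkeeping.
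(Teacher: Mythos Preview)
Your proof is correct. Note, however, that the paper does not actually prove this theorem: it only states it and refers the reader to Section~2.7 of B\'eny--Richter and Appendix~A of Harlow for a proof. Your argument---split off the null space via the support projection of $\sum_i M_iM_i^\dagger$, decompose by minimal central projections, and in each factor build a full matrix-unit system from a minimal projection---is essentially the same route taken in Harlow's appendix, and it meshes nicely with the paper's later constructive machinery (Sections~2.3 and~4), which also organizes everything around minimal projections and the partial isometries connecting them. A couple of steps you left terse (that the support projection of a two-sided $*$-ideal is central, and that minimal-rank projections in the successive corners $R\mathcal{A}_qR$ remain minimal in $\mathcal{A}_q$) are straightforward to fill in and do not hide any difficulty.
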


In the language of representation theory, Eq.\ (\ref{eq:Wedder. decomposition H space})
is the decomposition of $\mathcal{H}$ into irreducible representations
(irreps) of the algebra $\mathcal{A}$. The tensor factors $\mathcal{H}_{B_{q}}$
in the bipartition are associated with distinct irreps of $\mathcal{A}$
while the tensor factors $\mathcal{H}_{A_{q}}$ are associated with
the multiplicity of distinct irreps. It is important to note the significance
of the fact that not only all $M\in\mathcal{A}$ are of the form (\ref{eq:Wedder decomposition matrix}),
but that \emph{any} matrix of this form is necessarily an element
of $\mathcal{A}$. Therefore, the decomposition (\ref{eq:Wedder. decomposition H space})
is the defining structure of an algebra that selects the elements
of the algebra to be \emph{all} the matrices that act nontrivially
only on the tensor factors $\mathcal{H}_{B_{q}}$ in the decomposition.
The null space $\mathcal{H}_{0}$ is the space where the algebra is
not supported and its elements act on $\mathcal{H}_{0}$ as the null
matrix. From now on we will ignore the null space in the decomposition
and assume the Hilbert space $\mathcal{H}$ to exclude $\mathcal{H}_{0}$.
\footnote{In the cases that we will consider, $\mathcal{H}_{0}$ does not appear
in the decomposition. Even when $\mathcal{H}_{0}$ does appear, it
simply means that that part of the Hilbert space is irrelevant for
operators of the algebra.}

As was discussed in Sec.\ \ref{sec:Prelim - Bipartition Tables}, decompositions
such as (\ref{eq:Wedder. decomposition H space}) are generalized
bipartitions that correspond to a BPT of the form (\ref{eq:generic block diag BPT}).
This correspondence and the result of Theorem \ref{thm:Wedderburn decomposition}
suggest that the defining structure of an algebra is explicitly captured
by a BPT. We can therefore explicitly specify algebras with BPTs and
vice versa via this correspondence.

In order to see what the BPT tells us about the structure of an algebra
we consider the basis $\{\ket{e_{ik}^{q}}\}$ that corresponds to the decomposition
(\ref{eq:Wedder. decomposition H space}) in the sense that for every
sector $q$ there are product bases $\{\ket{a_{i}^{q}}\ket{b_{k}^{q}}\}$
of $\mathcal{H}_{A_{q}}\otimes\mathcal{H}_{B_{q}}$ such that $\ket{e_{ik}^{q}}=\ket{a_{i}^{q}}\ket{b_{k}^{q}}$
(note that this definition is not unique and any choice of local basis
$\ket{a_{i}^{q}}$ and $\ket{b_{k}^{q}}$ can work). According to
Eq.\ (\ref{eq:Wedder decomposition matrix}), all matrices in the algebra
can be constructed from linear combinations of the operators 
\begin{equation}
S_{kl}^{q}:=I_{A_{q}}\otimes\ket{b_{k}^{q}}\bra{b_{l}^{q}}=\sum_{i}\ket{e_{ik}^{q}}\bra{e_{il}^{q}}.\label{eq:BPO def}
\end{equation}
These operators, which we will call \emph{bipartition operators} (BPOs),
are partial isometries, and they form an (unnormalized) operator basis
for the algebra.

Now consider the BPT constructed with the basis $\ket{e_{ik}^{q}}$:

\noindent\begin{minipage}[c]{1\columnwidth}%
\begin{tabular}{>{\centering}m{0.9\columnwidth}>{\raggedright}m{0.1\columnwidth}}
 & \tabularnewline
\centering{}%
\begin{tabular}{|c|c|c|ccc}
\cline{1-3} \cline{2-3} \cline{3-3} 
$e_{11}^{1}$ & $e_{12}^{1}$ & $\cdots$ &  &  & \tabularnewline
\cline{1-3} \cline{2-3} \cline{3-3} 
$e_{21}^{1}$ & $e_{22}^{1}$ & $\cdots$ &  &  & \tabularnewline
\cline{1-3} \cline{2-3} \cline{3-3} 
$\vdots$ & $\vdots$ & $\ddots$ &  &  & \tabularnewline
\cline{1-5} \cline{2-5} \cline{3-5} \cline{4-5} \cline{5-5} 
\multicolumn{1}{c}{} & \multicolumn{1}{c}{} &  & \multicolumn{1}{c|}{$e_{11}^{2}$} & \multicolumn{1}{c|}{$\cdots$} & \tabularnewline
\cline{4-5} \cline{5-5} 
\multicolumn{1}{c}{} & \multicolumn{1}{c}{} &  & \multicolumn{1}{c|}{$\vdots$} & \multicolumn{1}{c|}{$\ddots$} & \tabularnewline
\cline{4-5} \cline{5-5} 
\multicolumn{1}{c}{} & \multicolumn{1}{c}{} & \multicolumn{1}{c}{} &  &  & $\ddots$\tabularnewline
\end{tabular} & %
\begin{minipage}[c]{0.04\columnwidth}%
\begin{center}
\begin{equation}
\end{equation}
\par\end{center}
~%
\end{minipage}\tabularnewline
 & \tabularnewline
\end{tabular}%
\end{minipage}

\noindent and the subspaces selected by the basis elements of the distinct
rows and columns. The BPO $S_{kl}^{q}$ acts by mapping the basis element
in column $l$ of block $q$ to the parallel element in column $k$
of the same block; this is a partial isometry between subspaces of
the columns. Since the basis elements inside each row are mapped to
themselves by the BPOs, and since the BPOs span the algebra, distinct rows of
the BPT define invariant subspaces of the algebra. The row subspaces
are \emph{minimal} invariant subspaces (they do not contain smaller
invariant subspaces) because BPOs act on these subspaces as the full
matrix algebra which is irreducible \cite{farenick2012algebras}.

Column subspaces are also a meaningful part of the matrix algebra
structure. The projection operator on the subspace of column $k$
in block $q$ is just a special case of a BPO (projections are the
trivial partial isometries from subspaces to themselves):
\begin{equation}
S_{kk}^{q}=\sum_{i}\ket{e_{ik}^{q}}\bra{e_{ik}^{q}}.
\end{equation}
The adjoint action of the projection $S_{kk}^{q}$ on any other BPO
results in 
\begin{equation}
S_{kk}^{q}S_{k'l'}^{q'}S_{kk}^{q}=\delta_{qq'}\delta_{kk'}\delta_{kl'}S_{kk}^{q}.
\end{equation}
 Since every element of the algebra is a linear combination of BPOs,
the adjoint action of $S_{kk}^{q}$ on any $M\in\mathcal{A}$ must
result in 
\begin{equation}
S_{kk}^{q}MS_{kk}^{q}\propto S_{kk}^{q}.\label{eq:min proj def}
\end{equation}
Projections in the algebra for which Eq.\ (\ref{eq:min proj def})
holds for all elements $M\in\mathcal{A}$ are the key building blocks
of the algebra:
\begin{defn}
	\label{def:minimal projection}A projection $\Pi\in\mathcal{A}$ is
	called a \emph{minimal projection} if for every $M\in\mathcal{A}$ we
	have $\Pi M\Pi\propto\Pi$. \footnote{This property is equivalent to a different, more common, defining
		property: $\Pi_{\mathrm{min}}$ is minimal if for all projections $\Pi\in\mathcal{A}$
		such that $\Pi \Pi_{\mathrm{min}}=\Pi$ it implies that either $\Pi=0$ or $\Pi=\Pi_{\mathrm{min}}$.
		We prefer to define it the other way because this is the only property
		of minimal projections that we will use.}
\end{defn}

\noindent Not only are all $S_{kk}^{q}$'s minimal projections, they
are also the maximal set of such projections.
\begin{defn}
\label{def:MSMP}A set of projections $\left\{ \Pi_{k}\right\} \subseteq\mathcal{A}$
is called a \emph{maximal set of} \emph{minimal projections (MSMP)
}if every $\Pi_{k}$ is minimal and all $\Pi_{k}$ are pairwise orthogonal
and sum to the identity element $I_{\mathcal{A}}:=\sum_{k}\Pi_{k}$
of the algebra.
\end{defn}

\noindent The columns of a BPT are therefore a concise summary of
a particular choice of MSMP given by the BPOs $\left\{ S_{kk}^{q}\right\} $
(the non-uniqueness of this choice traces back to the freedom to choose
the local basis $\ket{b_{k}^{q}}$ ).

The commutant $\mathcal{A}'$ of an algebra $\mathcal{A}$ is the
set of all matrices that commute with every element of $\mathcal{A}$
\begin{equation}
\mathcal{A}':=\left\{ M'\in\mathcal{L}\left(\mathcal{H}\right)|\left[M',M\right]=0,\,\forall M\in\mathcal{A}\right\} ,
\end{equation}
and is itself also an algebra. The irrep decomposition for $\mathcal{A}'$
is essentially the same as for $\mathcal{A}$ with the roles of the
tensor factors $\mathcal{H}_{A_{q}}$ and $\mathcal{H}_{B_{q}}$ reversed.
That is, if
\begin{equation}
\mathcal{H}\cong\bigoplus_{q}\mathcal{H}_{A_{q}}\otimes\mathcal{H}_{B_{q}}
\end{equation}
is the irrep decomposition for $\mathcal{A}$, then all $M'\in\mathcal{A}'$
are of the form
\begin{equation}
M'=\bigoplus_{q}M'_{A_{q}}\otimes I_{B_{q}}.\label{eq:Wedder decomposition commutant matrix}
\end{equation}
For the BPTs this implies a reversal of roles between rows and columns.
Given the BPT of $\mathcal{A}$ we can get the BPT of $\mathcal{A}'$
by rotating rows into columns; we will call this transformation a
\emph{transpose}. Consequently, BPOs constructed from a transposed
BPT span the commutant of the algebra.

A simple example of an algebra is the full matrix algebra $\mathcal{L}\left(\mathcal{H}\right)$.
The BPT of this algebra is just a single row of all basis elements
$\ket{e_{k}}$ (the choice of basis is arbitrary)

\noindent\begin{minipage}[c]{1\columnwidth}%
\begin{tabular}{>{\centering}m{0.9\columnwidth}>{\raggedright}m{0.1\columnwidth}}
\centering{}%
\begin{tabular}{|c|c|c|c|}
\hline 
$e_{1}$ & $e_{2}$ & $\cdots$ & $e_{d}$\tabularnewline
\hline 
\end{tabular} & %
\begin{minipage}[c]{0.04\columnwidth}%
\begin{center}
\begin{equation}
\end{equation}
\par\end{center}
~%
\end{minipage}\tabularnewline
\end{tabular}%
\end{minipage}

\noindent The BPOs defined by this table are just the matrix units
\begin{equation}
S_{kl}=\ket{e_{k}}\bra{e_{l}}
\end{equation}
that span all the matrices in the algebra. The transpose of this BPT
results in a single column that corresponds to a single BPO that is
the identity matrix $I$. This means that the commutant of the full
matrix algebra $\mathcal{L}\left(\mathcal{H}\right)$ consists of
the span of $I$, as expected.

Another important example of an algebra is the algebra $\left\langle M\right\rangle $
generated by a single self-adjoint matrix $M$. By definition, $\left\langle M\right\rangle $
is the set of all matrices spanned by $M^{n}$ for all natural $n$.
The key fact about this algebra is that it contains, and therefore
can be spanned by, the spectral projections of $M$:
\begin{prop}
\label{prop:MA of single operator }Let $M$ be a self-adjoint matrix
with the spectral decomposition 
\begin{equation}
M=\sum_{k}\lambda_{k}\Pi_{k}
\end{equation}
where $\lambda_{k}$ are distinct (non-zero) eigenvalues and $\Pi_{k}$
are projections on eigenspaces. Then 
\begin{equation}
\left\langle M\right\rangle =\mathsf{span}\left\{ \Pi_{k}\right\} .
\end{equation}
\end{prop}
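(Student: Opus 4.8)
The plan is to prove the two inclusions $\left\langle M\right\rangle \subseteq \mathsf{span}\left\{\Pi_k\right\}$ and $\mathsf{span}\left\{\Pi_k\right\} \subseteq \left\langle M\right\rangle$ separately. The first inclusion is the easy direction: since $M = \sum_k \lambda_k \Pi_k$ with the $\Pi_k$ pairwise orthogonal, any power $M^n = \sum_k \lambda_k^n \Pi_k$ is manifestly a linear combination of the $\Pi_k$, and $\left\langle M\right\rangle$ is by definition (using self-adjointness of $M$, so that $M^\dagger = M$ contributes nothing new) the span of the $\{M^n\}$. Hence every element of $\left\langle M\right\rangle$ lies in $\mathsf{span}\left\{\Pi_k\right\}$. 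One should note the harmless subtlety that the constant polynomial, i.e.\ the identity $I$, is not literally of the form $M^n$ for $n \geq 1$ if $M$ has a zero eigenvalue; since the proposition's hypothesis restricts to nonzero $\lambda_k$ and the algebra generated by $M$ need not contain $I$ in that case, I would simply take $\left\langle M\right\rangle = \mathsf{span}\{M^n : n \geq 1\}$ and check the counting works out, or remark that $\Pi_k$ for nonzero $\lambda_k$ are exactly the spectral projections appearing, with $\Pi_0$ (if any) excluded on both sides.

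For the reverse inclusion I would show each individual spectral projection $\Pi_j$ lies in $\left\langle M\right\rangle$, which suffices since $\left\langle M\right\rangle$ is closed under linear combinations. The standard tool is Lagrange interpolation: because the $\lambda_k$ are distinct, I can form the polynomial
\begin{equation}
p_j(x) = \prod_{k \neq j} \frac{x - \lambda_k}{\lambda_j - \lambda_k},
\end{equation}
which satisfies $p_j(\lambda_k) = \delta_{jk}$. Applying the functional calculus, i.e.\ substituting $M$ and using $M^n = \sum_k \lambda_k^n \Pi_k$ together with pairwise orthogonality $\Pi_k \Pi_{k'} = \delta_{kk'}\Pi_k$, gives $p_j(M) = \sum_k p_j(\lambda_k) \Pi_k = \Pi_j$. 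Since $p_j(M)$ is a polynomial in $M$ it lies in $\left\langle M\right\rangle$, so $\Pi_j \in \left\langle M\right\rangle$ for every $j$, and therefore $\mathsf{span}\left\{\Pi_k\right\} \subseteq \left\langle M\right\rangle$. Combining the two inclusions gives the claimed equality.

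The only mild wrinkle — and the closest thing to a ``main obstacle'' — is the bookkeeping around whether the constant term $1$ in the interpolating polynomial $p_j$ forces $I$ into the picture when $0$ is an eigenvalue of $M$. In that case $\left\langle M\right\rangle$ as the closure of $\{M^n\}_{n\geq 1}$ still contains $p_j(M) - (\text{its value on the zero eigenspace})\,\Pi_0 = \Pi_j$ for $\lambda_j \neq 0$, because one can instead use the interpolating polynomial with no constant term that maps $\lambda_j \mapsto 1$ and every other eigenvalue, including $0$, to $0$ (such a polynomial exists precisely because $0$ is among the values being interpolated, so $x$ divides it). I would state the proposition's convention — that $\Pi_k$ ranges over projections onto eigenspaces of nonzero eigenvalues, matching $\mathcal{H}$ with $\mathcal{H}_0$ excluded as stipulated earlier in the paper — and then the argument goes through cleanly. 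Everything else is routine linear algebra: the spectral theorem for self-adjoint matrices, orthogonality of spectral projections, and the elementary theory of polynomial interpolation.
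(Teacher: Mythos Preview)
Your proof is correct and follows essentially the same route as the paper: both directions are handled the same way, and the key step of recovering each $\Pi_j$ via the Lagrange interpolating polynomial $\prod_{k\neq j}\frac{M-\lambda_k}{\lambda_j-\lambda_k}$ is identical. The only substantive difference is in the bookkeeping around the possible zero eigenvalue: the paper first constructs the algebra's identity element $I_{\langle M\rangle}=\sum_k\Pi_k$ explicitly from the minimal polynomial of $M$ (writing $p(x)=xf(x)$ with $f(0)\neq 0$ and setting $I_{\langle M\rangle}=\frac{f(M)-If(0)}{-f(0)}$), and then uses $I_{\langle M\rangle}$ in place of $I$ in the interpolation formula, whereas you bypass this by observing directly that the interpolating polynomial can be chosen divisible by $x$. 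Your shortcut is slightly more economical; the paper's version has the side benefit of exhibiting $I_{\langle M\rangle}$ explicitly, which it uses later.
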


This fact can be shown by first identifying the identity element $I_{\left\langle M\right\rangle }$
in this algebra (it does not have to be the full identity matrix).
The identity element is constructed using the minimal polynomial $p\left(x\right)$
of $M$ (that is, the smallest degree polynomial for which $p\left(M\right)=0$)
and the fact that for self-adjoint matrices the minimal polynomial
is of the form $p\left(x\right)=f\left(x\right)$ or $p\left(x\right)=xf\left(x\right)$
where $f$ is such that $f\left(0\right)\neq0$. Then
\begin{equation}
I_{\left\langle M\right\rangle }:=\frac{f\left(M\right)-I f\left(0\right)}{-f\left(0\right)}\in\left\langle M\right\rangle 
\end{equation}
acts as the identity on $M$, and uniqueness of the identity implies
that 
\begin{equation}
I_{\left\langle M\right\rangle }=\sum_{k}\Pi_{k}.
\end{equation}
With the identity, we can re-express the spectral projections as 
\begin{equation}
\Pi_{k}=\prod_{l\neq k}\frac{M-\lambda_{l}I_{\left\langle M\right\rangle }}{\lambda_{k}-\lambda_{l}}\in\left\langle M\right\rangle .
\end{equation}
Since every natural power of $M$ is in the span of spectral projections,
$\Pi_{k}$'s span the whole algebra $\left\langle M\right\rangle $.

The projections $\Pi_{k}$ are in fact the MSMP of $\left\langle M\right\rangle$,
since for all powers $n$ we have 
\begin{equation}
\Pi_{k}M^{n}\Pi_{k}=\left(\lambda_{k}\right)^{n}\Pi_{k}
\end{equation}
and clearly they are pairwise orthogonal and sum to the identity.
Since the MSMP $\left\{ \Pi_{k}\right\} $ spans $\left\langle M\right\rangle$,
these are the only BPOs in this algebra. From a complete set of BPOs
it is easy to build a BPT. In general we have seen that each minimal projection
defines a column and columns in the same block are related to each
other by a proper partial isometry. In this case there are no proper
partial isometries so each column is its own block.

\noindent\begin{minipage}[c]{1\columnwidth}%
\begin{tabular}{>{\centering}m{0.9\columnwidth}>{\raggedright}m{0.1\columnwidth}}
 & \tabularnewline
\centering{}%
\begin{tabular}{c|ccc}
\cline{1-1} 
\multicolumn{1}{|c|}{} &  &  & \tabularnewline
\multicolumn{1}{|c|}{$\Pi_{1}$} &  &  & \tabularnewline
\multicolumn{1}{|c|}{} &  &  & \tabularnewline
\cline{1-2} \cline{2-2} 
 & \multicolumn{1}{c|}{} &  & \tabularnewline
 & \multicolumn{1}{c|}{$\Pi_{2}$} &  & \tabularnewline
 & \multicolumn{1}{c|}{} &  & \tabularnewline
\cline{2-2} 
\multicolumn{1}{c}{} &  & $\ddots$ & \tabularnewline
\end{tabular} & %
\begin{minipage}[c]{0.04\columnwidth}%
\begin{center}
\begin{equation}
\end{equation}
\par\end{center}
~%
\end{minipage}\tabularnewline
 & \tabularnewline
\end{tabular}%
\end{minipage}

\noindent The height of each column is the rank of the projection
and the arrangement of basis elements inside the columns is not important in
this case. The irrep decomposition implied by this BPT decomposes
the Hilbert space into sectors of distinct eigenspaces of $M$ 
\begin{equation}
\mathcal{H}\cong\bigoplus_{k}\mathcal{H}_{A_{k}}\otimes h_{B_{k}}\label{eq:Wedderburn decomposition for <M>}
\end{equation}
where the tensor factors $B$ (associated with the columns in each
block) are one dimensional and the tensor factors $A$ (associated
with the rows in each block) are of dimension equal to the rank of
$\Pi_{k}$. Eq.\ (\ref{eq:Wedder decomposition matrix}) is then the
statement that all elements of $\left\langle M\right\rangle $ are
given by the span of $\Pi_{k}$. Under transpose, each block of the
BPT becomes a row specifying the full matrix algebra on that eigenspace
of $M$. The commutant is then the direct sum of full matrix algebras
on the eigenspaces of $M$, which is also what Eq.\ (\ref{eq:Wedder decomposition commutant matrix})
implies.

As we have seen, the structure of the algebra generated by a single
self-adjoint matrix $M$ is fully characterized by the spectral decomposition
of $M$. Our derivation of the irrep decomposition by constructing a BPT
from BPOs ended up being a roundabout way of decomposing the Hilbert
space into eigenspaces of $M$. We will see in Section \ref{sec:Irrep-Decompositions-of}
that this approach generalizes to algebras generated by multiple elements
$\left\langle M_{1},M_{2}...\right\rangle $. In that case, spectral
projections of generators are not sufficient to characterize the structure
of the algebra, but they can be used to produce a complete set of BPOs
that will specify a BPT and so the irrep decomposition.

The last special case of an algebra that is very useful is the \emph{group
algebra}. A group algebra is an algebra generated by matrices that form
a group. The same matrices that generate the group generate the group
algebra, however, the term ``generate'' in the context of matrix
algebras means that we also include linear combinations of the group
elements. That is, if $\mathcal{G}$ is a (finite or Lie) group generated
by $L_{1},L_{2}...$ then the group algebra $\mathbb{C}\mathcal{G}$
is the span of elements of $\mathcal{G}$
\begin{equation}
\mathbb{C}\mathcal{G}:=\left\langle L_{1},L_{2}...\right\rangle =\mathsf{span}\left\{ \mathcal{G}\right\} .
\end{equation}
An important fact about group algebras is that their irrep decomposition
is the same as the irrep decomposition for the group.
\begin{prop}
\label{prop:group algebra irreps}Let $\mathcal{G}$ be a finite or
Lie unitary group generated by $L_{1},L_{2}...$ acting on the Hilbert
space $\mathcal{H}$. If
\begin{equation}
\mathcal{H}\cong\bigoplus_{q}\mathcal{H}_{A_{q}}\otimes\mathcal{H}_{B_{q}}\label{eq:H irrep dec for group}
\end{equation}
is the irrep decomposition of $\mathcal{H}$ such that all elements
$U\left(g\right)\in\mathcal{G}$ are of the form
\begin{equation}
U\left(g\right)=\bigoplus_{q}I_{A_{q}}\otimes U_{q}\left(g\right)
\end{equation}
where $U_{q}\left(g\right)$ are irreducible, then (\ref{eq:H irrep dec for group})
is the irrep decomposition for the group algebra
\begin{equation}
\mathbb{C}\mathcal{G}=\left\langle L_{1},L_{2}...\right\rangle .
\end{equation}
\end{prop}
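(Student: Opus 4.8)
The plan is to show that the two algebras $\mathbb{C}\mathcal{G} = \langle L_1, L_2, \ldots\rangle$ and the algebra $\mathcal{A}$ whose irrep decomposition is \eqref{eq:H irrep dec for group} coincide, and then invoke the uniqueness of the Wedderburn decomposition (Theorem \ref{thm:Wedderburn decomposition}) to conclude that \eqref{eq:H irrep dec for group} is also the irrep decomposition of $\mathbb{C}\mathcal{G}$. So I would first make precise what ``$\mathcal{A}$'' is: by hypothesis every $U(g)$ has the block form $\bigoplus_q I_{A_q}\otimes U_q(g)$ with $U_q$ irreducible, so $\mathbb{C}\mathcal{G}\subseteq\mathcal{A}$, where $\mathcal{A}$ denotes the algebra of all matrices of the form $\bigoplus_q I_{A_q}\otimes M_{B_q}$. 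It remains to prove the reverse inclusion $\mathcal{A}\subseteq\mathbb{C}\mathcal{G}$.

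The key step — and the main obstacle — is to show that the span of the $U(g)$ already exhausts every block, i.e.\ that for each fixed $q$ the operators $\{U_q(g)\}_{g\in\mathcal{G}}$ span all of $\mathcal{L}(\mathcal{H}_{B_q})$ and, moreover, that one can achieve arbitrary operators in one block while acting as zero in the others. The first part is the classical statement that a matrix representation that is irreducible as a $*$-closed set linearly spans the full matrix algebra; concretely, the commutant of $\{U_q(g)\}$ is $\mathbb{C}I$ by Schur's lemma (irreducibility), hence by the double-commutant/Wedderburn structure the algebra generated by $\{U_q(g)\}$ — which for a group is just their span, since products and adjoints of group elements are again group elements — is all of $\mathcal{L}(\mathcal{H}_{B_q})$. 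The more delicate ``block separation'' part follows because the representations $U_q$ in distinct blocks are inequivalent (the decomposition \eqref{eq:H irrep dec for group} groups equivalent irreps together into a single block with a multiplicity factor $\mathcal{H}_{A_q}$), so Schur's lemma again forbids any nonzero intertwiner between $\mathcal{H}_{B_q}$ and $\mathcal{H}_{B_{q'}}$ for $q\neq q'$; combined with the first part this lets us, for any target block $q_0$ and any $M_{B_{q_0}}\in\mathcal{L}(\mathcal{H}_{B_{q_0}})$, produce an element of $\mathbb{C}\mathcal{G}$ equal to $I_{A_{q_0}}\otimes M_{B_{q_0}}$ in block $q_0$ and zero elsewhere. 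Taking sums over $q_0$ then gives all of $\mathcal{A}$.

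With both inclusions in hand, $\mathbb{C}\mathcal{G}=\mathcal{A}$ as subalgebras of $\mathcal{L}(\mathcal{H})$, and since \eqref{eq:H irrep dec for group} is by construction the Wedderburn decomposition of $\mathcal{A}$ (it realizes the form \eqref{eq:Wedder decomposition matrix} with the $U_q$ irreducible, hence the $\mathcal{H}_{B_q}$ genuinely carry inequivalent irreps), it is the irrep decomposition of $\mathbb{C}\mathcal{G}$. The one subtlety I would be careful about is the Lie-group case: there the span of $\{U(g)\}$ is literally a finite-dimensional subspace of $\mathcal{L}(\mathcal{H})$ (being a subspace of a finite-dimensional space), and it is closed under multiplication because $U(g)U(h)=U(gh)$ and under adjoint because $U(g)^\dagger = U(g^{-1})$ for unitary representations, so ``generated by $L_1,L_2,\ldots$'' in the matrix-algebra sense genuinely equals $\mathsf{span}\{\mathcal{G}\}$ with no closure issues — exactly the content already asserted in the definition of $\mathbb{C}\mathcal{G}$ preceding the proposition. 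Everything else is an application of Schur's lemma plus the uniqueness half of Theorem \ref{thm:Wedderburn decomposition}.
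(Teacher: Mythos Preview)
Your proposal is correct. The approach, however, differs from the paper's: the paper gives only a short conceptual argument, noting that a subspace invariant under the group is automatically invariant under linear combinations of group elements (and likewise for equivalences between invariant subspaces), so the group and its group algebra share the same minimal invariant subspaces with the same equivalence pattern, hence the same irrep decomposition. You instead prove the stronger statement that $\mathbb{C}\mathcal{G}$ equals, as a set, the full block algebra $\mathcal{A}=\{\bigoplus_q I_{A_q}\otimes M_{B_q}\}$, by combining Burnside/Schur (each irreducible $U_q$ spans $\mathcal{L}(\mathcal{H}_{B_q})$) with the inequivalence of distinct $U_q$'s to separate blocks, and then appeal to the uniqueness of the Wedderburn decomposition. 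Your route is more explicit and yields more information (it shows directly that \emph{every} matrix of the block form lies in $\mathbb{C}\mathcal{G}$, not merely that the irrep structures match), at the cost of invoking the block-separation/density argument that the paper's invariant-subspace reasoning sidesteps. Both arguments ultimately rest on Schur's lemma, but the paper applies it to subspaces while you apply it to intertwiners between blocks.
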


This fact follows from the observation that if a subspace is invariant
under the action of the group then it is invariant under the action
of the group algebra, since linear combinations of group elements
preserve the same subspaces as the elements themselves. The same reasoning
establishes that invariant subspaces that are equivalent representations
for group elements are also equivalent for linear combinations of
group elements. This leads to the conclusion that groups and their
algebras have the same minimal invariant subspaces with the same equivalences,
which means that they have the same irrep structure, hence the same
irrep decomposition.

Proposition \ref{prop:group algebra irreps} will allow us to construct
the irrep decomposition for group algebras using the known irreps
of groups. For example, going back to the 3 spin-$\frac{1}{2}$ case,
Eq.\ (\ref{eq:3 spin H space decomp}) is the irrep decomposition associated
with the $SU\left(2\right)$ group of collective rotations on the
spins. It is constructed by recognizing the total spin basis (via
the Clebsch-Gordan coefficients) that identify the minimal invariant
subspaces of total spin $\frac{3}{2}$ and $\frac{1}{2}$ that decompose
the Hilbert space into irreps of $SU\left(2\right)$. Since the group
of total rotations is generated by the total spin operators $J_{x}$,
$J_{y}$ ,$J_{z}$, we can conclude that the irrep decomposition of
the algebra generated by $J_{x},J_{y},J_{z}$ is given by the irrep
decomposition (\ref{eq:3 spin H space decomp}).

\section{Operational Approach to Decoherence}\label{sec:decoherence}

\subsection{The Decoherence Program}

In this subsection we review some basic aspects of the decoherence program, which we will apply below to the reduced states produced by our generalized state-reduction maps. 
The decoherence program is a well-established field with an extensive literature and our treatment here will be terse. 
The reader already familiar with its details is invited to proceed to the next subsection.
Conversely, more details can be found, for example, in the review \cite{Schlosshauer:2003zy} or the textbook \cite{2007dqct.book.....S}.
Several formulations of the decoherence program exist; here we discuss only the ``Zurekian'' framework.

The (Zurekian) decoherence program is a formalism for describing the circumstances under which a system can be classically measured.
Recall that the Born rule states that the possible results of measuring an observable ${\mathcal{O}}$ in a state $\ket{\psi}$ of a system represented by the Hilbert space $\Hil_S$ are the eigenstates ${\ket{o_i}}$ of the observables, with a probability $|\braket{o_i}{\psi}|^2$ of obtaining each individual outcome.
From the point of view of the system alone, the (projective) 
measurement process is non-unitary; for example, if the ${\ket{o_i}}$ are not eigenstates of the system's Hamiltonian so that ${\braket{o_i(t)}{o_j(t)}\ne \delta_{ij}}$, time evolution will act differently on the initial state and the post-measurement state.
In particular, interference terms will be suppressed in the post-measurement state, which no longer evolves coherently.

The Zurekian decoherence program implements this ``de-coherence'' process, which is effectively non-unitary for the system alone, as a unitary process on a larger Hilbert space consisting of the tensor product\footnote{Although some careful treatments require a tripartite system-apparatus-environment split (e.g.\ \cite{Elby:1994,Schlosshauer:2003zy,Boddy:2014eba}), here we will only split out the system from the environment; when such distinctions are important we have in mind that the system is small and quantum so that a (large, classical) apparatus is a subsystem of the environment.} $\Hil \simeq \Hil_S \otimes \Hil_E$ of the original system and an environment $\Hil_E$. 
If the Hamiltonian contains interaction terms between the system and environment degrees of freedom, then an initial product state can evolve into an entangled state of the system and environment:
\be
\ket{\Psi\(t=0\)}=\ket{\psi}_S \ket{e_0}_E \rightarrow \ket{\Psi\(t\)}=\sum_i c_i\(t\) \ket{s_i\(t\)}_S \ket{e_i\(t\)}_E.
\ee
In fact for any choice of initial state and time evolution this decomposition can be performed exactly at any moment in time for a particular choice of orthonormal bases for the system and environment (the Schmidt decomposition). 
However, as the time dependence indicates, decompositions at different times are generically unrelated; in particular, the state $\ket{s_i\(\tau\)}_S \ket{e_i\(\tau\)}_E$ is not the Hamiltonian evolution of $\ket{s_i\(t\)}_S \ket{e_i\(t\)}_E$.

For the particular states and interactions that admit decoherence, however, there exists, at least approximately, a decomposition of the entangled state into ``branches'' which evolve independently of each other; that is, a choice of bases in which ${U}_{\tau - t} \ket{s_i\(t\)}_S \ket{e_i\(t\)}_E \approx  \ket{s_i\(\tau\)}_S \ket{e_i\(\tau\)}_E \:\forall i$, so that the $c_i$ are constant\footnote{If the Hamiltonian is time-dependent, as it is, for example, if the interaction only occurs in a specific period of time, then we should interpret this condition as holding in some finite time interval. Intuitively this condition says that, after the measurement-causing interaction occurs, the environment should, at least temporarily, record the state of the system\cite{Halliwell:1999xh,2012NJPh...14h3010J}.}:
\be
\ket{\Psi\(t\)} \approx \sum_i c_i \ket{s_i\(t\)}_S \ket{e_i\(t\)}_E. \label{eq:branching}
\ee
Hence there is a one-to-one association of system states $\ket{s_i}_S$ and environment states $\ket{e_i}_E$ in the ``pointer basis'' given by the decomposition. 
It immediately follows\footnote{Actually \Eq{eq:branching} is more general: it describes a situation in which the system-environment product kets are orthonormal but in which the system or environment kets need not individually be orthonormal. In particular it is easy to imagine situations (for example, a measurement apparatus that can record the state of a spin in multiple different bases) in which the system states $\ket{s_i}_S$ need not be orthonormal, or even where the sum is over a larger number of terms than the dimensionality of the system. In this case we should not expect the reduced density matrix to be a good record of the actual branches. See Sec.\ 2 of \cite{Boddy:2014eba} for further discussion of this point. In practice we expect that we can deal with such cases by moving degrees of freedom from the environment into the system (in the above example, the choice of which basis to measure in) until the system states are themselves orthogonal.} from \Eq{eq:branching} that the reduced density matrix describing the state of the system is
\be
\rho_S = \Tr_E \ketbra{\Psi}{\Psi} \approx |c_i|^2 \ketbra{s_i}{s_i}, \label{eq:decohered}
\ee
so the system can be described to good approximation as a statistical mixture of the states $\ket{s_i}_S$, in agreement with the action of the Born rule on the system alone.
If we only have access to the information in the system, we can check for the presence of decoherence by looking for a choice of basis in which the reduced density matrix becomes, and remains, approximately diagonal.

We comment briefly on the physical significance of the conditions (\ref{eq:branching}),(\ref{eq:decohered}). 
The branch label $i$ picks out a distinct state of the system and, crucially, a state of the environment $\ket{e_i}_E$ which is one-one correlated with the state of the system.
Because each of the environment states $\ket{e_i}_E$ has zero (or very small) overlap with the environment state associated with other system states, we say that the environment is \textit{monitoring} the state of the system, or keeping a record of it.
Again, this is a dynamical process: the environment starts in a particular initial ready state which is not entangled with the initial state of the system, but interactions between the system and environment cause the environment to \textit{record} the state of the system.
It is often convenient to decompose the Hamiltonian generating time evolution on the total Hilbert space into pieces denoting evolution in the system and environment alone, as well as an interaction Hamiltonian connecting the two factors:
\be
\Ham = \Ham_S + \Ham_E + \Ham_{\mathrm{int}}.
\ee
In general a decoherence analysis requires conditions on all of these components, but when decoherence occurs in the limit that the interaction strength is much larger than the other two terms (for a suitable choice of norm), the branches are simply given by the eigenstates of the interaction Hamiltonian.

Because the overall state starts as a product state but ends as an entangled superposition of branches, we see that decoherence is associated with entropy production, visible as the Shannon entropy of the classical probability distribution $|c_i|^2$ over system states.
In fact the connection between decoherence, entropy growth, and the production of records in the environment can be made more precise \cite{2012NJPh...14h3010J}.
In laboratory settings, for example when the environment includes photons and air molecules bouncing off an experimental apparatus, we expect that the environment in fact contains very many highly redundant records of the system state \cite{Halliwell:1999xh}.

We emphasize that in most setups the situations which lead to decoherence are non-generic.
The decoherence program requires, in particular, an initial (low-entropy) product state between the system and the environment, a special initial ``ready'' state of the environment which will subsequently be able to record the state of the system, and dynamics which allow the system to interact with the environment while still admitting effective non-dissipative evolution in the system alone after branching has occurred.
If, instead of analyzing a particular measurement apparatus, we want to use the decoherence formalism to determine which states are classical, we need to vary over some of these initial specifications.
In particular, if we don't start a preferred identification of the system, but instead, like in cosmology, wish to pick out the natural classical degrees of freedom, we need to vary over possible system-environment decompositions \cite{mereology}.

\subsection{State Reduction from Operational Constraints\label{subsec: State Reduction from Operational Constraints}}

In the study of decoherence we usually start by postulating the system-environment
split $\mathcal{H}=\mathcal{H}_{E}\otimes\mathcal{H}_{S}$. The state-reduction map $tr_{E}:\mathcal{H}\longrightarrow\mathcal{H}_{S}$
is then characterized by demanding 
\begin{equation}
tr\left(I_{E}\otimes O_{S}\,\rho\right)=tr\left(O_{S}\,tr_{E}\left(\rho\right)\right)\label{eq:partial trace map definition}
\end{equation}
for all $\rho$ and $O_{S}$, which leads to the definition of the
partial-trace map. The reduced state $tr_{E}\left(\rho\right)$ is
understood as the state of the subsystem $\mathcal{H}_{S}$ and unitary
evolution of $\rho$ (usually) results in a loss of coherence for $tr_{E}\left(\rho\right)$.

The operational justification for the system-environment split $\mathcal{H}_{E}\otimes\mathcal{H}_{S}$
comes from an assertion that only measurements of the form $I_{E}\otimes O_{S}$
are allowed. In the language of matrix algebras (see Section \ref{subsec:Matrix-Algebras-and their representation})
we can say that the allowed measurements $I_{E}\otimes O_{S}$ form
an algebra and the system-environment split $\mathcal{H}_{E}\otimes\mathcal{H}_{S}$
comes from the irrep decomposition of this algebra. By taking this
perspective we do not have to postulate the system-environment split;
instead we derive it as the irrep decomposition of the algebra of
allowed observables. This suggests a strictly operational approach
to decoherence where the algebra of allowed observables is the primary
object from which the Hilbert-space bipartition and the state-reduction
map are derived.

In this operational approach, we start with a Hilbert space $\mathcal{H}$
and an algebra $\mathcal{A}\subseteq\mathcal{L}\left(\mathcal{H}\right)$
that reflects our operational constraints. The assumption is that
in principle, all observables $O\in\mathcal{A}$ can be measured,
but nothing else. This is the generalization of the earlier assumption that only
observables of the form $I_{E}\otimes O_{S}$ are allowed. This of
course may be an overstatement of the practical reality, in which not all
$O\in\mathcal{A}$ are in fact measurable, but it is still a useful
assumption that outlines what definitely cannot be measured. (Similarly, when
we make the usual system-environment split we do not actually consider
all $I_{E}\otimes O_{S}$ to be measurable, but it is still a useful
assumption that outlines the boundary of the inaccessible environment). In Sections \ref{PartialBPT}-\ref{sec:ising} below we will introduce a more flexible notion of bipartition that captures restrictions to observables that do not have to form an algebra.

With the algebra $\mathcal{A}\subseteq\mathcal{L}\left(\mathcal{H}\right)$
the Hilbert space decomposes into the generalized bipartition (see
Theorem \ref{thm:Wedderburn decomposition})
\begin{equation}
\mathcal{H}\cong\bigoplus_{q}\mathcal{H}_{E_{q}}\otimes\mathcal{H}_{S_{q}}
\end{equation}
where only subsystems $\mathcal{H}_{S_{q}}$ are accessible with observables
restricted to $O\in\mathcal{A}$. This decomposition generalizes the
usual system-environment split in that it can identify multiple superselection
sectors each of which is split into system and environment. The superselection
sectors are manifestations of the fact that superpositions between state
vectors in different sectors are unobservable and unpreparable with
the given operational constraints. The reduced Hilbert space is therefore
given by
\begin{equation}
\mathcal{H}_{\left\{ S_{q}\right\} }:=\bigoplus_{q}\mathcal{H}_{S_{q}}
\end{equation}
where the observables $\bigoplus_{q}I_{E_{q}}\otimes O_{S_{q}}\in\mathcal{A}$
reduce to $\bigoplus_{q}O_{S_{q}}$. Now the state-reduction map $tr_{\left\{ E_{q}\right\} }$
can be defined in two steps: first, impose the superselection rules; second, discard
the environments:
\begin{equation}
tr_{\left\{ E_{q}\right\} }:\rho\longmapsto\bigoplus_{q}\Pi_{q}\rho\Pi_{q}\longmapsto\bigoplus_{q}tr_{E_{q}}\left(\Pi_{q}\rho\Pi_{q}\right)\label{eq:state reduction for MAs}
\end{equation}
where $\Pi_{q}$ are projections on the superselection sectors. Finally,
the analog of Eq.\ (\ref{eq:partial trace map definition}) 
\begin{equation}
tr\left(\left(\bigoplus_{q}I_{E_{q}}\otimes O_{S_{q}}\right)\,\rho\right)=tr\left(\left(\bigoplus_{q}O_{S_{q}}\right)\,tr_{\left\{ E_{q}\right\} }\left(\rho\right)\right)
\end{equation}
can be shown to hold by considering the trace on each sector $q$
separately and applying Eq.\ (\ref{eq:partial trace map definition}).

We can now see that restriction of observables to an algebra manifests
itself in two ways: superselection and system-environment split. Superselection
is responsible for eliminating some of the reduced state's coherence
terms by fiat; since no observable that could detect such
coherences is measurable in principle. The system-environment split,
on the other hand, is responsible for eliminating the coherence terms
dynamically. That is, even if some superpositions could be detected
in principle, they become entangled with the environment so rapidly that
we cannot actually see them; this is the idea of environment-induced
superselection\emph{ }or\emph{ einselection} \cite{zurek2003decoherence}. In general,
both superselection and einselection can play a role in the appearance
of classical reality.

A very simple case of classicality from superselection comes up when
we restrict the measurements to a single observable $O$. The algebra
generated by $O$ is spanned by the spectral projections $\Pi_{k}$
(see Proposition \ref{prop:MA of single operator }) associated with
the distinct measurement outcomes. The irrep decomposition is then
the decomposition of $\mathcal{H}$ into the eigenspaces of $O$
\begin{equation}
\mathcal{H}\cong\bigoplus_{k}\mathcal{H}_{E_{k}}\otimes h_{S_{k}}\cong\bigoplus_{k}\mathcal{H}_{E_{k}}
\end{equation}
where the system parts $h_{S_{k}}$ are one-dimensional and can be
absorbed into $\mathcal{H}_{E_{k}}$. The state-reduction map (\ref{eq:state reduction for MAs})
then becomes 
\begin{equation}
\rho\longmapsto\bigoplus_{k}tr\left(\Pi_{k}\rho\right)
\end{equation}
which is the reduction of $\rho$ into a classical probability distribution
over the outcomes $k$. Therefore, when only one observable can be
measured all quantum states are operationally equivalent to classical
probability distributions, and no coherence effects
can be observed.

The more interesting cases involve more than one observable. For example,
in a laboratory settings it is common to have a single readout (measurement)
operation $O$ supplemented by a set of control operations $\left\{ U_{\alpha}\right\} $.
Then the allowed measurements consist of the set $\left\{ O_{\alpha}:=U_{\alpha}^{\dagger}OU_{\alpha}\right\} $
for all $\alpha$. Such sets can be as simple as position and momentum
$\left\{ X,P\right\} $ or the angular momentum operators $\left\{ J_{x},J_{y},J_{z}\right\}$. When the underlying system consists of many particles for
which we can only measure the collective version of these observables,
or when there is a single particle but the observables have limited
resolution, we can expect non-trivial manifestations of superselection
and einselection effects.

This leads us to the main technical difficulty of the operational
approach: finding the irrep decomposition of algebras generated by
$\left\{ O_{\alpha}\right\} $. In the cases where $\left\{ O_{\alpha}\right\} $
forms a group with a known representation structure, the irrep decomposition
is given by the group's irreps (see Proposition \ref{prop:group algebra irreps}).
In other cases, however, we need a systematic way of constructing
the irrep decomposition from the generating set of observables $\left\{ O_{\alpha}\right\} $.
The solution of this problem is the subject of the next section, and one of the main technical results of this paper.

\section{Irrep Decomposition of Matrix Algebras by Scattering of Projections\label{sec:Irrep-Decompositions-of}}

\noindent The problem that we will address here is the following:
\begin{quote}
	Given a finite set of self-adjoint matrices $\left\{ M_{1},M_{2},...,M_{n}\right\} $
	that generate the algebra $\mathcal{A}$, find the irrep decomposition
	of $\mathcal{A}$ as in Theorem \ref{thm:Wedderburn decomposition}.
\end{quote}
As was discussed in Section \ref{subsec:Matrix-Algebras-and their representation},
the explicit specification of an irrep structure can be given by a
choice of basis arranged into a bipartition table (BPT), with the columns
specifying a maximal set of minimal projections (MSMP) and the alignment
of rows specifying the partial isometries that map between the columns.
Conversely, given an MSMP and partial isometries that map between them,
we can construct a BPT by following our definitions of the rows and columns.
This suggests that in order to find the irrep structure of an algebra,
we need to find an MSMP and the partial isometries that map between
them.

According to its definition (Definition \ref{def:MSMP}) an MSMP
is called maximal because it resolves the identity element of the algebra, but this
does not mean that it alone can generate the whole algebra. In the
BPT picture, the elements of the MSMP determine the columns
but are oblivious to how the columns are aligned with each other.
In order to construct the BPT, we will only need to supplement
the MSMP with additional minimal projections that will allow it
to generate the algebra. These additional projections define the partial
isometries that map between the elements of the MSMP, which determines
the alignment of columns in the BPT. The main task of the irrep decomposition
algorithm is then to find a set of minimal projections that generates
the algebra and contains an MSMP.

Before we go into specifics, let us outline the 4 main steps of the
algorithm that we develop in this section:
\begin{enumerate}
	\item Construct the initial set of projections from the spectral projections
	of the generators $\left\{ M_{1},M_{2},...,M_{n}\right\} $.
	\item Keep applying the rank-reducing operation called \emph{scattering
	}on the set of projections until no further reduction is possible;
	this produces the final set of projections.
	\item Verify that the final set of projections (which generates the algebra
	by construction) consists of minimal projections and contains an MSMP.
	\item Use the final set of projections to construct the BPT.
\end{enumerate}
Step 1 is a conversion of the input from self-adjoint operators to
their spectral projections. Step 2, the heart of the algorithm,
uses the scattering operation that we will define in Section \ref{subsec:Scattering-of-Projections}.
Step 3 is necessary because Step 2 is not guaranteed to produce minimal
projections (although this is what happens in practice); we will
explain how to deal with this in Section \ref{subsec:Minimality-and-Completnes}.
Step 4 is the construction of the basis elements that populate the
rows and columns of the BPT, which we will define in Section \ref{subsec:Construction-of-Bipartition}.
The formal definition of the algorithm and the proof of its correctness
are deferred to Section \ref{subsec:The-Algorithm}.

In the following, it will be beneficial to have a concrete example
to consider as we go over the details of the algorithm. For this purpose,
we will now introduce a toy example that will be used throughout this
section to illustrate the steps of the algorithm.

\subsubsection*{Toy Example}

The example that we will consider here is a quantum system described
by an eight-dimensional Hilbert space. The system itself and the measurements
that we will consider are not motivated by physical considerations,
but by their simplicity and ability to illustrate the key aspects
of the algorithm. More physically-motivated examples will be considered
in Section \ref{sec:Examples-of-State} below.

The toy example consists of the Hilbert space $\mathcal{H}$, spanned
by the eight basis elements $\left\{ \ket i\right\} _{i=1,...,8}$,
and two incompatible projective measurements given by the self-adjoint
operators $Z$ and $X$. (This choice of names is only meant to be suggestive of
their non-commutativity; we will remain agnostic to the physical nature
of this system.) The problem is to find the irreps of the algebra
$\left\langle Z,X\right\rangle $ which will allow us to simultaneously
block-diagonalize the two non-commuting observables. Once we have
this structure, it will be apparent what information encoded in the
quantum states is accessible with the measurements $Z,X$, and what
is not.

The observables $Z$ and $X$ have two outcomes associated with the
spectral projections $\left\{ \Pi_{Z;1},\Pi_{Z;2}\right\} $ and $\left\{ \Pi_{X;1},\Pi_{X;2}\right\} $\footnote{We do not need to know their eigenvalues but we will assume that they
	are nonzero. We can always shift all eigenvalues of the observable,
	without changing any physical predictions, so none of them are zero.} that sum to the identity. The spectral projections are defined as
follows:
\begin{align}
\Pi_{Z;1} & :=\ket 1\bra 1+\ket 2\bra 2+\ket 3\bra 3+\ket 4\bra 4\\
\Pi_{X;1} & :=\ket{_{-}^{+37}}\bra{_{-}^{+37}}+\ket{_{-}^{+1256}}\bra{_{-}^{+1256}},
\end{align}
where we have used the shorthand notation 
\begin{equation}
\ket{_{-j_{1},j_{2},...}^{+i_{1},i_{2},...}}:=\frac{1}{\sqrt{N}}\left(\ket{i_{1}}+\ket{i_{2}}+...-\ket{j_{1}}-\ket{j_{2}}-...\right)
\end{equation}
($\sqrt{N}$ is the normalization) so
\begin{align}
\ket{_{-}^{+37}} & :=\frac{1}{\sqrt{2}}\left(\ket 3+\ket 7\right)\\
\ket{_{-}^{+1256}} & :=\frac{1}{2}\left(\ket 1+\ket 2+\ket 5+\ket 6\right).
\end{align}
Their complementary projections are given by $\Pi_{Z;2}:=I-\Pi_{Z;1}$
, $\Pi_{X;2}:=I-\Pi_{X;1}$.

As was discussed in Proposition \ref{prop:MA of single operator },
the spectral projections of each self-adjoint operator are part of
the algebra that it generates, and the algebra $\left\langle Z,X\right\rangle $
is also generated by $\left\langle \Pi_{Z;1},\Pi_{Z;2},\Pi_{X;1},\Pi_{X;2}\right\rangle $.
This replacement of generators from self-adjoint matrices to their
spectral projections is Step 1 of the algorithm. We will continue
this example after we define and prove some facts about the scattering algorithm.

\subsection{Scattering of Projections \label{subsec:Scattering-of-Projections}}

\emph{Scattering} is the basic operation that we will use to break down the
spectral projections of the generators into smaller rank projections.
\begin{defn}
	\emph{\label{def:Scattering }Scattering} is an operation on a pair
	of projections $\Pi_{1}$, $\Pi_{2}$ that produces a pair of sets
	of projections $\left\{ \Pi_{1}^{\left(\lambda\right)}\right\} $,
	$\left\{ \Pi_{2}^{\left(\lambda\right)}\right\} $. The elements in
	each set come from the spectral decompositions
	\begin{align}
	\Pi_{1}\Pi_{2}\Pi_{1} & =\sum_{\lambda\neq0}\lambda\Pi_{1}^{\left(\lambda\right)}\label{eq:def of scattering 1}\\
	\Pi_{2}\Pi_{1}\Pi_{2} & =\sum_{\lambda\neq0}\lambda\Pi_{2}^{\left(\lambda\right)}\label{eq:def of scattering 2}
	\end{align}
	 (the sums are over unique non-zero eigenvalues $\lambda$) with the addition of null projections defined by\footnote{Null projections should not be confused with projections on the kernel
		of $\Pi_{i}\Pi_{j}\Pi_{i}$. The kernel projections are given by $I-\sum_{\lambda}\Pi_{i}^{\left(\lambda\right)}$
		which is not the same as Eq.\ (\ref{eq:def of null projections in scattering}).} 
	\begin{equation}
	\Pi_{i=1,2}^{\left(0\right)}:=\Pi_{i}-\sum_{\lambda\neq0}\Pi_{i}^{\left(\lambda\right)}.\label{eq:def of null projections in scattering}
	\end{equation}
\end{defn}

\noindent It will be very convenient to consider the null projections
$\Pi_{i}^{\left(0\right)}$ as just the $\lambda=0$ elements of the
set of spectral projections $\left\{ \Pi_{i}^{\left(\lambda\right)}\right\} $,
even when $\Pi_{i}^{\left(0\right)}=0$ in Eq.\ (\ref{eq:def of null projections in scattering}).
Also note that, although the definition does not say so explicitly,
the spectrum $\lambda$ in both Eq.\ (\ref{eq:def of scattering 1})
and (\ref{eq:def of scattering 2}) is the same (we will prove this in Lemma \ref{lem: scattering of projections } below).

From this definition, we see that all the projections in the set $\left\{ \Pi_{i}^{\left(\lambda\right)}\right\} $
are pairwise orthogonal and sum to their predecessor 
\begin{equation}
\Pi_{i}=\Pi_{i}^{\left(\lambda_{1}\right)}+\Pi_{i}^{\left(\lambda_{2}\right)}+...+\Pi_{i}^{\left(0\right)},\label{eq:projection sum from scattered projections}
\end{equation}
so they are of lower rank than their predecessor $\Pi_{i}$. Thus,
in analogy with the scattering of particles, the scattering of projections
``breaks'' them into smaller constituents (the ``interaction'' in this
analogy is the adjoint action of Eq.\ (\ref{eq:def of scattering 1}),(\ref{eq:def of scattering 2}))

\begin{equation}
\begin{array}{c}
\Pi_{1}\\
\\
\Pi_{2}
\end{array}
\Diagram{fdA &  & fuA\\
	& f\\
	fuA &  & fdA
}
\begin{array}{c}
\Pi_{1}^{\left(\lambda_{1}\right)}+\Pi_{1}^{\left(\lambda_{2}\right)}+...+\Pi_{1}^{\left(0\right)}\\
\\
\Pi_{2}^{\left(\lambda_{1}\right)}+\Pi_{2}^{\left(\lambda_{2}\right)}+...+\Pi_{2}^{\left(0\right)}
\end{array}
\end{equation}

This defines scattering in the general case. There is also a special
case that is important enough to have its own definition:
\begin{defn}
	A pair of projections $\Pi_{1}$, $\Pi_{2}$ is called \emph{reflecting
	}if both projections remain unbroken by scattering, that is 
	\begin{align}
	\Pi_{1}\Pi_{2}\Pi_{1} & =\lambda\Pi_{1}\label{eq:def of reflecting 1}\\
	\Pi_{2}\Pi_{1}\Pi_{2} & =\lambda\Pi_{2}\label{eq:def of reflecting 2}
	\end{align}
	where the coefficient $\lambda$ is called the \emph{reflection} \emph{coefficient}.
	We will say that $\Pi_{1}$, $\Pi_{2}$ are \emph{properly} \emph{reflecting
	}if the reflection coefficient is not $0$ (i.e. they are not orthogonal,
	$\Pi_{1}\Pi_{2}\neq0$)
\end{defn}

It should be clear that rank 1 projections are always reflecting (however, reflecting projections can be of any rank). 
Another couple of useful facts about reflecting projections are given
by the following proposition:
\begin{prop}
	\label{prop:reflecting proj}Let $\Pi_{1}$, $\Pi_{2}$ be a pair
	of properly reflecting projections with the reflection coefficient
	$\lambda\neq0$, then:
	
	(1) $\Pi_{1}$ and $\Pi_{2}$ have the same rank.
	
	(2) $\Pi_{1}=\Pi_{2}$ if $\lambda=1$.
\end{prop}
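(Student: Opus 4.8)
The plan is to work directly from the defining relations \eqref{eq:def of reflecting 1}--\eqref{eq:def of reflecting 2}. For part (1), the key observation is that $\Pi_1\Pi_2\Pi_1 = \lambda \Pi_1$ with $\lambda \neq 0$ lets us reconstruct $\Pi_1$ from the operator $\Pi_1\Pi_2$, and dually for $\Pi_2$. Concretely, I would consider the operator $S := \frac{1}{\lambda}\Pi_1\Pi_2\Pi_1\cdots$ — more precisely, I would show that $T := \Pi_2 \Pi_1$ restricts to an invertible map from the eigenspace of $\Pi_1$ to the eigenspace of $\Pi_2$. First observe that $\Pi_2\Pi_1$ maps $\mathrm{ran}\,\Pi_1$ into $\mathrm{ran}\,\Pi_2$ (it is $\Pi_2$ times something), and $\Pi_1\Pi_2$ maps $\mathrm{ran}\,\Pi_2$ into $\mathrm{ran}\,\Pi_1$. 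Then from \eqref{eq:def of reflecting 1}, for any $v\in\mathrm{ran}\,\Pi_1$ we have $(\Pi_1\Pi_2)(\Pi_2\Pi_1)v = \Pi_1\Pi_2\Pi_1 v = \lambda\Pi_1 v = \lambda v$, so $\frac{1}{\lambda}\Pi_1\Pi_2$ is a left inverse of $\Pi_2\Pi_1$ on $\mathrm{ran}\,\Pi_1$; symmetrically, using \eqref{eq:def of reflecting 2}, $\frac{1}{\lambda}\Pi_2\Pi_1$ is a left inverse of $\Pi_1\Pi_2$ on $\mathrm{ran}\,\Pi_2$. Hence $\Pi_2\Pi_1\colon\mathrm{ran}\,\Pi_1\to\mathrm{ran}\,\Pi_2$ is injective with an injective one-sided inverse in the other direction, which forces the two ranges to have equal dimension, i.e.\ $\mathrm{rank}\,\Pi_1 = \mathrm{rank}\,\Pi_2$.

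For part (2), suppose $\lambda = 1$. Then \eqref{eq:def of reflecting 1} reads $\Pi_1\Pi_2\Pi_1 = \Pi_1$. I would compute $\|\Pi_2\Pi_1 - \Pi_1\|^2$ using the trace inner product: expanding $\Tr\big((\Pi_2\Pi_1 - \Pi_1)^\dagger(\Pi_2\Pi_1 - \Pi_1)\big)$ and using $\Pi_i=\Pi_i^\dagger=\Pi_i^2$ together with $\Pi_1\Pi_2\Pi_1=\Pi_1$, every term collapses to $\Tr\Pi_1$ with alternating signs, giving $0$. Hence $\Pi_2\Pi_1 = \Pi_1$, and taking adjoints $\Pi_1\Pi_2 = \Pi_1$. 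Similarly, starting from \eqref{eq:def of reflecting 2} with $\lambda=1$ gives $\Pi_1\Pi_2 = \Pi_2$. Combining, $\Pi_1 = \Pi_1\Pi_2 = \Pi_2$. (Alternatively, one can phrase this via the operator norm: $\Pi_1\Pi_2\Pi_1=\Pi_1$ means $\|\Pi_2\Pi_1 v\| = \|v\|$ for all $v\in\mathrm{ran}\,\Pi_1$, so $\Pi_2$ acts as an isometry on the unit vector $\Pi_1 v$ lying in its own range—forcing $\Pi_2\Pi_1 v = \Pi_1 v$—but the trace computation is cleaner and avoids case analysis.)

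The only mild subtlety — the step I expect to need the most care — is making the duality in part (1) airtight: one must check that $\Pi_2\Pi_1$ really is injective on $\mathrm{ran}\,\Pi_1$ (not merely that it has a one-sided inverse after composing in one order), which follows immediately once the left inverse $\frac1\lambda\Pi_1\Pi_2$ is exhibited, and symmetrically that $\Pi_1\Pi_2$ is injective on $\mathrm{ran}\,\Pi_2$; then a dimension count on the linear maps $\mathrm{ran}\,\Pi_1 \rightleftarrows \mathrm{ran}\,\Pi_2$ closes it. Everything else is a short, direct manipulation of the projector identities, and no machinery beyond the definitions of \emph{projection} and \emph{reflecting pair} is needed.
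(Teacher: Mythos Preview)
Your proof is correct. Part~(2) is essentially the paper's argument: the paper factors $\Pi_1 - \Pi_1\Pi_2\Pi_1 = (\Pi_1 - \Pi_1\Pi_2)(\Pi_1 - \Pi_1\Pi_2)^\dagger$ and uses $AA^\dagger = 0 \Rightarrow A = 0$, which is exactly your Frobenius-norm computation in disguise.

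Part~(1) differs in route. You exhibit explicit one-sided inverses to show $\Pi_2\Pi_1\colon \mathrm{ran}\,\Pi_1 \to \mathrm{ran}\,\Pi_2$ and $\Pi_1\Pi_2\colon \mathrm{ran}\,\Pi_2 \to \mathrm{ran}\,\Pi_1$ are each injective, then count dimensions. The paper instead just takes traces: applying $\Tr$ to both defining relations and using cyclicity gives $\Tr(\Pi_1\Pi_2) = \lambda\,\Tr(\Pi_1)$ and $\Tr(\Pi_1\Pi_2) = \lambda\,\Tr(\Pi_2)$, so $\Tr(\Pi_1) = \Tr(\Pi_2)$. The paper's argument is a one-liner; your argument is a bit longer but has the advantage of actually producing the isomorphism between the two eigenspaces, which is morally the partial isometry $S = \frac{1}{\sqrt{\lambda}}\Pi_2\Pi_1$ that the paper later builds in \eqref{eq:def of path isometry}. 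So your approach previews structure the paper needs anyway, at the cost of a little extra work here.
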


\begin{proof}
	We take the trace on both sides of Eq.\ (\ref{eq:def of reflecting 1}),
	(\ref{eq:def of reflecting 2}) and use the cyclic property of the
	trace to get 
	\begin{align}
	tr\left(\Pi_{1}\Pi_{2}\right)= & \lambda tr\left(\Pi_{1}\right)\\
	tr\left(\Pi_{1}\Pi_{2}\right)= & \lambda tr\left(\Pi_{2}\right)
	\end{align}
	Since the $\lambda$'s are the same (this will be proven in general in
	Lemma \ref{lem: scattering of projections }) then $tr\left(\Pi_{1}\right)=\frac{tr\left(\Pi_{1}\Pi_{2}\right)}{\lambda}=tr\left(\Pi_{2}\right)$
	so they must have the same rank.
	
	If, in addition, $\lambda=1$ then
	\begin{align}
	0= & \Pi_{1}-\Pi_{1}\Pi_{2}\Pi_{1}=\left(\Pi_{1}-\Pi_{1}\Pi_{2}\right)\left(\Pi_{1}-\Pi_{1}\Pi_{2}\right)^{\dagger}\\
	0= & \Pi_{2}-\Pi_{2}\Pi_{1}\Pi_{2}=\left(\Pi_{2}-\Pi_{2}\Pi_{1}\right)\left(\Pi_{2}-\Pi_{2}\Pi_{1}\right)^{\dagger}
	\end{align}
	so 
	\begin{align}
	0 & =\Pi_{1}-\Pi_{1}\Pi_{2}\\
	0 & =\left(\Pi_{2}-\Pi_{2}\Pi_{1}\right)^{\dagger}.
	\end{align}
	Thus, $\Pi_{1}=\Pi_{1}\Pi_{2}=\Pi_{2}$.
\end{proof}
The importance of reflecting projections is that they do not break
under scattering (this choice of terminology is a continuation of our
commitment to the analogy with particles). In Step 2 of the algorithm,
we will apply the scattering operation on pairs of projections until
no further reduction is possible. The impossibility of reduction is
then the case of all projections being pairwise reflecting. We are
guaranteed to reach entirely reflecting projections because scattering produces
projections of smaller rank (unless it reflects) and projections
of rank 1 are always reflecting.

The most important fact about scattering is that regardless of what
the initial projections $\Pi_{1}$, $\Pi_{2}$ are, the resulting
projections are a series of reflecting pairs $\left\{ \Pi_{1}^{\left(\lambda\right)},\Pi_{2}^{\left(\lambda\right)}\right\} $
with reflection coefficients $\lambda$, and every pair $\left\{ \Pi_{1}^{\left(\lambda\right)},\Pi_{2}^{\left(\lambda\right)}\right\} $ is orthogonal
to any other pair $\left\{ \Pi_{1}^{\left(\lambda^\prime\right)},\Pi_{2}^{\left(\lambda^\prime\right)}\right\} $. 
\begin{lem}
	\label{lem: scattering of projections }Let $\Pi_{1}$, $\Pi_{2}$
	be the initial projections and $\left\{ \Pi_{1}^{\left(\lambda\right)}\right\} $,
	$\left\{ \Pi_{2}^{\left(\lambda\right)}\right\} $ be the sets of
	post scattering projections given by Definition \ref{def:Scattering },
	then:
	
	\noindent (1) The spectrum of eigenvalues $\lambda$ is the same in
	both sets.
	
	\noindent (2) For all $\lambda\neq\lambda'$ the pairs of projections
	$\Pi_{1}^{\left(\lambda\right)}$, $\Pi_{2}^{\left(\lambda'\right)}$
	are orthogonal.
	
	\noindent (3) For all $\lambda$ the pairs of projections $\Pi_{1}^{\left(\lambda\right)}$,
	$\Pi_{2}^{\left(\lambda\right)}$ are reflecting with reflection coefficient
	$\lambda$.
\end{lem}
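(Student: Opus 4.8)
The plan is to work directly from the definitions of scattering via the operators $\Pi_1 \Pi_2 \Pi_1$ and $\Pi_2 \Pi_1 \Pi_2$, and to exploit the partial isometry $S := \Pi_1 \Pi_2$ as the intertwining object. The key observation is that $S S^\dagger = \Pi_1 \Pi_2 \Pi_1$ and $S^\dagger S = \Pi_2 \Pi_1 \Pi_2$, so the two operators of Eqs.\ (\ref{eq:def of scattering 1}),(\ref{eq:def of scattering 2}) are of the form $S S^\dagger$ and $S^\dagger S$. It is a standard fact that for any matrix $S$, the nonzero spectra of $S S^\dagger$ and $S^\dagger S$ coincide (with multiplicities), which immediately gives claim (1). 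Moreover, if $\Pi_1^{(\lambda)}$ is the spectral projection of $SS^\dagger$ at eigenvalue $\lambda \ne 0$, then $\tfrac{1}{\sqrt{\lambda}} S^\dagger$ restricted to the range of $\Pi_1^{(\lambda)}$ is an isometry onto the range of $\Pi_2^{(\lambda)}$, the spectral projection of $S^\dagger S$ at the same $\lambda$; concretely, one checks $S^\dagger \Pi_1^{(\lambda)} = \Pi_2^{(\lambda)} S^\dagger$ using the fact that spectral projections of $SS^\dagger$ are polynomials in $SS^\dagger$ and $S^\dagger (SS^\dagger) = (S^\dagger S) S^\dagger$.

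For claim (2), I would argue as follows. The spectral projections $\Pi_1^{(\lambda)}$ for distinct $\lambda$ are pairwise orthogonal by construction (they come from one spectral decomposition), and likewise the $\Pi_2^{(\lambda')}$. It remains to show $\Pi_1^{(\lambda)} \Pi_2^{(\lambda')} = 0$ when $\lambda \ne \lambda'$. For $\lambda, \lambda'$ both nonzero: since $\Pi_1^{(\lambda)}$ is a spectral projection of $SS^\dagger$ it lies in the range of $\Pi_1$, so $\Pi_1^{(\lambda)} = \Pi_1^{(\lambda)} \Pi_1$; then $\Pi_1^{(\lambda)} \Pi_2^{(\lambda')} = \Pi_1^{(\lambda)} \Pi_1 \Pi_2 \Pi_2^{(\lambda')} = \tfrac{1}{\lambda}\Pi_1^{(\lambda)} S S^\dagger S \,\Pi_2^{(\lambda')}/\lambda'$-type manipulations reduce it to $\tfrac{1}{\lambda \lambda'}\Pi_1^{(\lambda)}(SS^\dagger) S \Pi_2^{(\lambda')}$; using $\Pi_1^{(\lambda)} SS^\dagger = \lambda \Pi_1^{(\lambda)}$ and $S\,\Pi_2^{(\lambda')} \cdot$ (pulling the other spectral projection through) this collapses to a multiple of $\Pi_1^{(\lambda)} S \Pi_2^{(\lambda')}$ which one shows equals $\lambda'$ times itself and also, via the other side, forces consistency only if it vanishes. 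The cleanest route is: $S \Pi_2^{(\lambda')} = \Pi_1^{(\lambda')} S$ (the intertwining identity, dual to the one above), hence $\Pi_1^{(\lambda)} S \Pi_2^{(\lambda')} = \Pi_1^{(\lambda)}\Pi_1^{(\lambda')} S = 0$ since $\lambda \ne \lambda'$; and then $\Pi_1^{(\lambda)}\Pi_2^{(\lambda')} = \tfrac{1}{\lambda'}\Pi_1^{(\lambda)} S S^\dagger \Pi_2^{(\lambda')} \cdot (\ldots)$ — more directly, $\lambda' \Pi_1^{(\lambda)}\Pi_2^{(\lambda')} = \Pi_1^{(\lambda)} \Pi_2\Pi_1\Pi_2 \Pi_2^{(\lambda')} = \Pi_1^{(\lambda)} S^\dagger{}' \ldots$; I will instead write $\Pi_1^{(\lambda)}\Pi_2^{(\lambda')} = \Pi_1^{(\lambda)}\Pi_1\Pi_2\Pi_2^{(\lambda')}$ and note $\Pi_1\Pi_2\Pi_2^{(\lambda')} = S\Pi_2^{(\lambda')} = \Pi_1^{(\lambda')}S$, giving $\Pi_1^{(\lambda)}\Pi_1^{(\lambda')}S = 0$. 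The cases where one of $\lambda,\lambda'$ is $0$ (the null projections) follow because, by Eq.\ (\ref{eq:def of null projections in scattering}), $\Pi_i^{(0)}$ is the part of $\Pi_i$ orthogonal to all the $\Pi_i^{(\mu)}$ with $\mu\neq 0$, hence in particular orthogonal to the range of $S$ (resp.\ $S^\dagger$), and the same intertwining bookkeeping applies.

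For claim (3), fix $\lambda \ne 0$ and compute $\Pi_1^{(\lambda)}\Pi_2^{(\lambda)}\Pi_1^{(\lambda)}$. Write $\Pi_2^{(\lambda)} = \sum_{\mu} \Pi_2^{(\mu)} - (\text{rest})$; using claim (2), only the $\mu = \lambda$ term survives when sandwiched, so $\Pi_1^{(\lambda)}\Pi_2^{(\lambda)}\Pi_1^{(\lambda)} = \Pi_1^{(\lambda)}\Pi_2 \Pi_1^{(\lambda)} = \Pi_1^{(\lambda)}(\Pi_1\Pi_2\Pi_1)\Pi_1^{(\lambda)}$ (since $\Pi_1^{(\lambda)} = \Pi_1\Pi_1^{(\lambda)} = \Pi_1^{(\lambda)}\Pi_1$) $= \Pi_1^{(\lambda)}\big(\sum_\mu \mu\, \Pi_1^{(\mu)}\big)\Pi_1^{(\lambda)} = \lambda \Pi_1^{(\lambda)}$, which is Eq.\ (\ref{eq:def of reflecting 1}) for the pair; the symmetric computation gives Eq.\ (\ref{eq:def of reflecting 2}), so the pair is reflecting with reflection coefficient $\lambda$. (For $\lambda = 0$ the pair $\Pi_1^{(0)},\Pi_2^{(0)}$ is orthogonal, i.e.\ reflecting with coefficient $0$, which one can note in passing though it is not strictly asserted.)

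The only genuinely delicate point is keeping straight the two dual intertwining identities $S^\dagger \Pi_1^{(\lambda)} = \Pi_2^{(\lambda)} S^\dagger$ and $S\,\Pi_2^{(\lambda)} = \Pi_1^{(\lambda)} S$ and handling the null ($\lambda=0$) projections separately, since they are defined by subtraction in Eq.\ (\ref{eq:def of null projections in scattering}) rather than as honest spectral projections. Everything else is the standard $SS^\dagger$ versus $S^\dagger S$ polar-decomposition bookkeeping; I expect the intertwining step to be the crux, and once it is established claims (1)–(3) drop out essentially by substitution.
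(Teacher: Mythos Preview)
Your approach is correct and takes a genuinely different route from the paper. You organize the argument around $S:=\Pi_1\Pi_2$ and the standard $SS^\dagger$/$S^\dagger S$ singular-value bookkeeping, deriving the intertwining identities $S\,\Pi_2^{(\lambda)}=\Pi_1^{(\lambda)}S$ (for $\lambda\neq 0$) from the fact that the relevant spectral projections are polynomials with no constant term, and then reading off (1)--(3). The paper instead works bare-handed: it starts from the spectral decomposition of $\Pi_1\Pi_2\Pi_1$ only, derives the key identity $\Pi_1^{(\lambda)}\Pi_2\Pi_1^{(\lambda')}=\delta_{\lambda\lambda'}\lambda\,\Pi_1^{(\lambda)}$ by sandwiching, and then \emph{constructs} the candidate $\tilde\Pi_2^{(\lambda)}:=\tfrac{1}{\lambda}\Pi_2\Pi_1^{(\lambda)}\Pi_2$, checking directly that these are pairwise-orthogonal projections and hence give the spectral decomposition of $\Pi_2\Pi_1\Pi_2$. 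What your approach buys is conceptual economy: once the intertwiners are in hand, (2) is one line and (3) follows from (2). What the paper's approach buys is the explicit formula $\Pi_2^{(\lambda)}=\tfrac{1}{\lambda}\Pi_2\Pi_1^{(\lambda)}\Pi_2$ as an immediate byproduct, which it then uses computationally in the examples (so that one only needs to diagonalize $\Pi_1\Pi_2\Pi_1$, not both products). Your write-up of claim (2) meanders before landing on the clean line $\Pi_1^{(\lambda)}\Pi_2^{(\lambda')}=\Pi_1^{(\lambda)}S\,\Pi_2^{(\lambda')}=\Pi_1^{(\lambda)}\Pi_1^{(\lambda')}S=0$; in a final version you should cut the false starts and state the null-projection cases ($\Pi_2\Pi_1^{(0)}=0$ from $\Pi_1^{(0)}\Pi_1\Pi_2\Pi_1\Pi_1^{(0)}=0$, and symmetrically) cleanly rather than gesturing at them.
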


\begin{proof}
	We begin by taking $\lambda$ and $\Pi_{1}^{\left(\lambda\right)}$
	to be the eigenvalues and the spectral projections in the decomposition
	of $\Pi_{1}\Pi_{2}\Pi_{1}$ and assume nothing about the spectral
	decomposition of $\Pi_{2}\Pi_{1}\Pi_{2}$.
	
	First, note that $\Pi_{1}^{\left(\lambda\right)}\Pi_{1}=\Pi_{1}\Pi_{1}^{\left(\lambda\right)}=\Pi_{1}^{\left(\lambda\right)}$
	for all $\lambda$ as can be seen from Eq.\ (\ref{eq:projection sum from scattered projections})
	and the fact that all spectral projections (including $\Pi_{1}^{\left(0\right)}$)
	are pairwise orthogonal. Then, if we act on both sides of Eq.\ (\ref{eq:def of scattering 1})
	by the adjoint with $\Pi_{1}^{\left(\lambda\right)}$ and $\Pi_{1}^{\left(\lambda'\right)}$
	we get
	\begin{equation}
	\Pi_{1}^{\left(\lambda\right)}\Pi_{2}\Pi_{1}^{\left(\lambda'\right)}=\delta_{\lambda\lambda'}\lambda\Pi_{1}^{\left(\lambda\right)}.\label{eq:scat of proj proof identity 1}
	\end{equation}
	This equation holds for all $\lambda$ including $\lambda=0$, and
	it does not matter whether $\Pi_{1}^{\left(0\right)}$ vanishes ($\Pi_{1}^{\left(0\right)}=0$)
	or not. In particular $\Pi_{1}^{\left(0\right)}\Pi_{2}\Pi_{1}^{\left(0\right)}=0$
	so $\Pi_{1}^{\left(0\right)}\Pi_{2}=0$ because otherwise we would reach a contradiction,
	\begin{equation}
	0\neq\left(\Pi_{1}^{\left(0\right)}\Pi_{2}\right)\left(\Pi_{1}^{\left(0\right)}\Pi_{2}\right)^{\dagger}=\left(\Pi_{1}^{\left(0\right)}\Pi_{2}\right)\left(\Pi_{2}\Pi_{1}^{\left(0\right)}\right)=\Pi_{1}^{\left(0\right)}\Pi_{2}\Pi_{1}^{\left(0\right)}=0.
	\end{equation}
	Therefore, $\Pi_{1}^{\left(0\right)}\Pi_{2}=\Pi_{2}\Pi_{1}^{\left(0\right)}=0$.
	This allows us to write 
	\begin{equation}
	\Pi_{2}\Pi_{1}\Pi_{2}=\Pi_{2}\left(\Pi_{1}-\Pi_{1}^{\left(0\right)}\right)\Pi_{2}=\Pi_{2}\left(\sum_{\lambda\neq0}\Pi_{1}^{\left(\lambda\right)}\right)\Pi_{2}=\sum_{\lambda\neq0}\lambda\left(\frac{1}{\lambda}\Pi_{2}\Pi_{1}^{\left(\lambda\right)}\Pi_{2}\right).\label{eq: scat of proj proof identity 2}
	\end{equation}
	The last step suggests the definition
	\begin{equation}
	\tilde{\Pi}_{2}^{\left(\lambda\right)}:=\frac{1}{\lambda}\Pi_{2}\Pi_{1}^{\left(\lambda\right)}\Pi_{2}.\label{eq:scat of proj proof identity 3}
	\end{equation}
	These operators are clearly self-adjoint and, using Eq.\ (\ref{eq:scat of proj proof identity 1}),
	we can see that 
	\begin{equation}
	\tilde{\Pi}_{2}^{\left(\lambda\right)}\tilde{\Pi}_{2}^{\left(\lambda'\right)}=\frac{1}{\lambda\lambda'}\Pi_{2}\Pi_{1}^{\left(\lambda\right)}\Pi_{2}\Pi_{1}^{\left(\lambda'\right)}\Pi_{2}=\delta_{\lambda\lambda'}\frac{1}{\lambda}\Pi_{2}\Pi_{1}^{\left(\lambda\right)}\Pi_{2}=\delta_{\lambda\lambda'}\tilde{\Pi}_{2}^{\left(\lambda\right)},
	\end{equation}
	so they are pairwise orthogonal projections. Since the $\lambda$'s are
	distinct and $\tilde{\Pi}_{2}^{\left(\lambda\right)}$ are pairwise
	orthogonal projections, Eq.\ (\ref{eq: scat of proj proof identity 2})
	must be the spectral decomposition of $\Pi_{2}\Pi_{1}\Pi_{2}$. Thus,
	$\tilde{\Pi}_{2}^{\left(\lambda\right)}=\Pi_{2}^{\left(\lambda\right)}$,
	and the spectrum is the same for both $\Pi_{1}\Pi_{2}\Pi_{1}$ and
	$\Pi_{2}\Pi_{1}\Pi_{2}$. This proves claim 1.
	
	Now, if we use Eq.\ (\ref{eq:scat of proj proof identity 3}) as the
	definition of $\Pi_{2}^{\left(\lambda'\right)}$ and simplify with
	Eq.\ (\ref{eq:scat of proj proof identity 1}) we get the identity
	\begin{equation}
	\Pi_{1}^{\left(\lambda\right)}\Pi_{2}^{\left(\lambda'\right)}=\Pi_{1}^{\left(\lambda\right)}\Pi_{2}\Pi_{1}^{\left(\lambda'\right)}\Pi_{2}\frac{1}{\lambda'}=\delta_{\lambda\lambda'}\Pi_{1}^{\left(\lambda\right)}\Pi_{2}.
	\end{equation}
	This proves claim 2. In particular, for $\lambda=\lambda'$, if we
	multiply this identity with its own adjoint on both sides and again
	use Eq.\ (\ref{eq:scat of proj proof identity 1}) and (\ref{eq:scat of proj proof identity 3}),
	we get
	\begin{align}
	\Pi_{1}^{\left(\lambda\right)}\Pi_{2}^{\left(\lambda\right)}\Pi_{1}^{\left(\lambda\right)} & =\Pi_{1}^{\left(\lambda\right)}\Pi_{2}\Pi_{1}^{\left(\lambda\right)}=\lambda\Pi_{1}^{\left(\lambda\right)}\\
	\Pi_{2}^{\left(\lambda\right)}\Pi_{1}^{\left(\lambda\right)}\Pi_{2}^{\left(\lambda\right)} & =\Pi_{2}\Pi_{1}^{\left(\lambda\right)}\Pi_{2}=\lambda\Pi_{2}^{\left(\lambda\right)}
	\end{align}
	which proves claim 3.
\end{proof}
Lemma \ref{lem: scattering of projections } tells us that almost
all projections that come out of scattering are pairwise orthogonal.
In particular, each of the null projections $\Pi_{1}^{\left(0\right)}$,
$\Pi_{2}^{\left(0\right)}$ is orthogonal to all other projections
and only the pairs $\Pi_{1}^{\left(\lambda\right)}$, $\Pi_{2}^{\left(\lambda\right)}$
for $\lambda\neq0$ are not orthogonal but properly reflecting. It
is also interesting to note that if there is $\lambda=1$ in the spectrum
then $\Pi_{1}^{\left(1\right)}=\Pi_{2}^{\left(1\right)}$ (see Proposition
\ref{prop:reflecting proj}), which occurs if the initial projections
project onto intersecting subspaces so $\Pi_{i}^{\left(1\right)}$ is
the projection on their intersection. We can avoid scattering these projections twice in future iterations of the algorithm by eliminating such duplicates. Lastly, note that Eq.\ (\ref{eq:scat of proj proof identity 3}) tells
us how to get the post-scattering projections $\Pi_{2}^{\left(\lambda\neq0\right)}$
from the post-scattering projections $\Pi_{1}^{\left(\lambda\neq0\right)}$
(for $\Pi_{i}^{\left(\lambda=0\right)}$ we use Eq.\ (\ref{eq:def of null projections in scattering}))
so we only need to calculate the spectral decomposition once for $\Pi_{1}\Pi_{2}\Pi_{1}$.

We now define a graph structure for a set of projections:
\begin{defn}
	A (proper)\emph{ reflection network }associated with the set of reflecting
	projections $\left\{ \Pi_{v}\right\} $ is the graph $G=\left\{ V,E\right\} $
	where the vertices are the projections $V:=\left\{ \Pi_{v}\right\} $
	and every properly reflecting pair is connected by an edge $E:=\left\{ \left(\Pi_{v},\Pi_{u}\right)\,|\,\Pi_{v}\Pi_{u}\neq0\right\} $
	(only orthogonal reflecting projections do not share an edge). An
	\emph{improper} \emph{reflection network }is the generalization of
	the above where not all projections are known to be reflecting. In
	that case, there are two kinds of edges: one kind for properly reflecting
	pairs (black solid edge) and one for unknowns (red dashed edge).
\end{defn}

\noindent In general, reflection networks may have multiple connected
components formed by subsets of projections that are orthogonal to
every projection outside the subset. It does not mean, however, that
projections in the same connected component cannot be orthogonal;
as long as there is a sequence of proper reflection (or unknown) relations
connecting the projections, they will be in the same component. Also
note that, according to Proposition \ref{prop:reflecting proj}, all
projections in the same connected component of a proper reflection
network must be of the same rank.

We will now consider how the scattering operation affects the reflection
network by focusing on a pair of projections in the network. According
to Lemma \ref{lem: scattering of projections }, in general a pair
of projections $\left\{ \Pi_{1},\Pi_{2}\right\} $ with unknown relations
(red edge) scatters into a series of pairs of reflecting projections $\{\Pi_{1}^{\left(\lambda\right)}$, $\Pi_{2}^{\left(\lambda\right)}\}$
(black edges unless $\lambda=0$ then no edge), and each pair in the
series is orthogonal to all other pairs (no edges); see Fig \ref{fig:scattering update rule}(a).
Fig \ref{fig:scattering update rule}(b) illustrates the special case
where $\Pi_{1}$ did not break under scattering so $\Pi_{1}\equiv\Pi_{1}^{\left(\lambda_{1}\right)}$.
The case where both $\left\{ \Pi_{1},\Pi_{2}\right\} $ do not break
(not shown) implies that they are reflecting and the red edge between
them is set to black or omitted, depending on whether $\lambda=0$.
\begin{figure}[t]
	\fbox{\begin{minipage}[b]{0.56\columnwidth}%
			\begin{flushleft}
				(a)
				\par\end{flushleft}
			\begin{center}
				\includegraphics[height=0.11\paperheight]{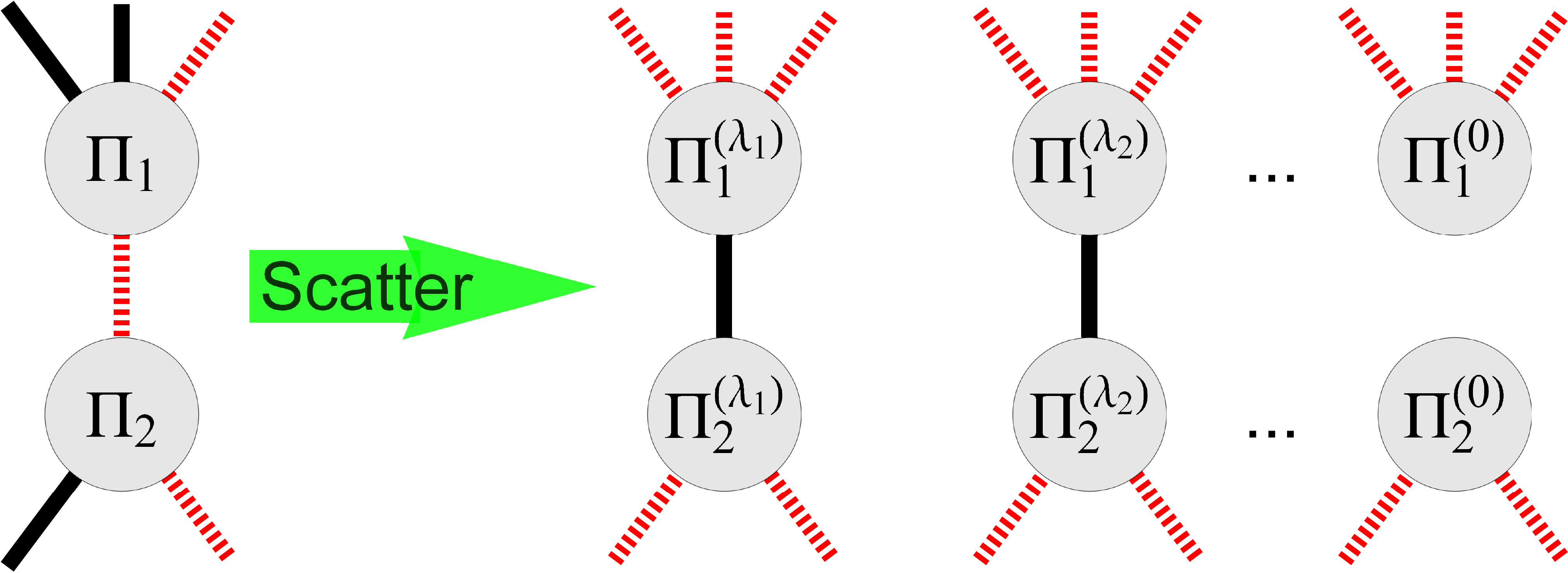}
				\par\end{center}%
	\end{minipage}} %
	\fbox{\begin{minipage}[b]{0.4\columnwidth}%
			\begin{flushleft}
				(b)
				\par\end{flushleft}
			\begin{center}
				\includegraphics[height=0.11\paperheight]{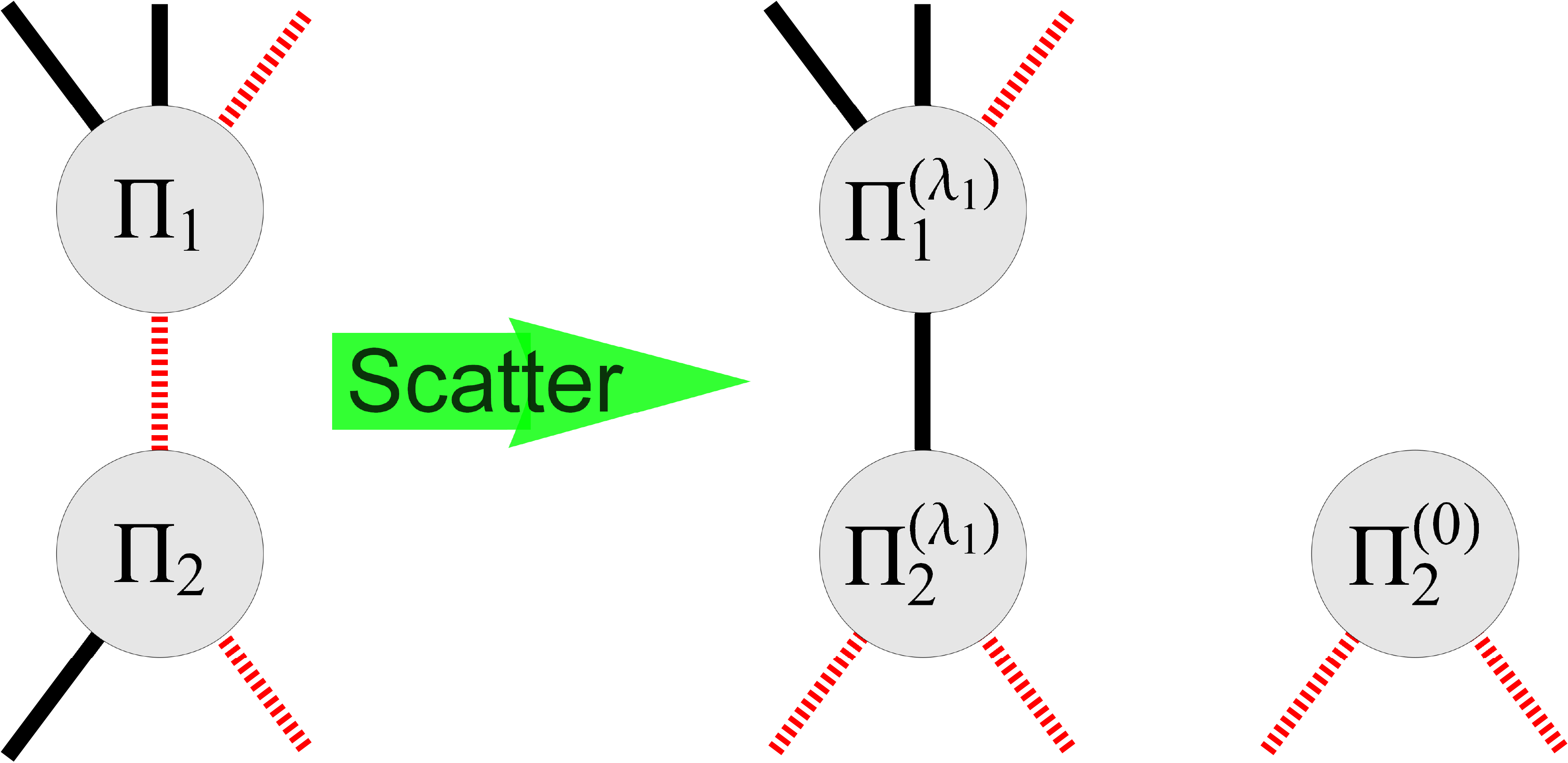}
				\par\end{center}%
	\end{minipage}}\caption{\label{fig:scattering update rule}Update rules for reflection relations
		after scattering. The red (dashed) edges represent unknown reflection relations,
		black (solid) edges represent properly-reflecting pairs, absent edges represent
		orthogonal pairs. One-sided edges stand for the reflection relations
		with other projections in the rest of the network. (a) In the generic
		case where each $\Pi_{i=1,2}$ breaks down into $\left\{ \Pi_{i}^{\left(\lambda_{k}\right)}\right\}$, the result is a series of properly reflecting pairs (for $\lambda=0$
		the pair is orthogonal) as described in Lemma \ref{lem: scattering of projections }.
		All the external edges are inherited by $\left\{ \Pi_{i}^{\left(\lambda_{k}\right)}\right\} $
		from $\Pi_{i}$ with the black (solid) edges being reset to red (dashed). (b) In the
		special case where $\Pi_{1}$ did not break down under scattering,
		we know $\Pi_{1}\equiv\Pi_{1}^{\left(\lambda_{1}\right)}$. In this case,
		$\Pi_{2}$ may break down to at most two projections (if it also did
		not break down then $\Pi_{1}$, $\Pi_{2}$ should just be relabeled as reflecting)
		such that $\Pi_{1}^{\left(\lambda_{1}\right)}$, $\Pi_{2}^{\left(\lambda_{1}\right)}$
		are properly reflecting and $\Pi_{2}^{\left(0\right)}$ is orthogonal
		to both. In (b) the update rule of external edges differs from the
		generic case (a) in that for unbroken projection $\Pi_{1}^{\left(\lambda_{1}\right)}$,
		the black (solid) edges are not reset to red (dashed).}
\end{figure}

Since both projections $\left\{ \Pi_{1},\Pi_{2}\right\} $ are part
of a larger network, we also have to specify how the resulting projections
$\left\{ \Pi_{i}^{\left(\lambda_{k}\right)}\right\} $ inherit the
relations with the rest of the elements in the network. First, we
note that orthogonality with other projections is preserved under
scattering so we do not need to add new edges that we did not already
have. Red edges also do not need to be updated since every unknown
relation that $\Pi_{i}$ had is still unknown for $\Pi_{i}^{\left(\lambda_{k}\right)}$.
Proper reflection relations, however, do not survive when one of the
projections is broken down into smaller rank projections, because
properly reflecting projections must have the same rank (see Proposition
\ref{prop:reflecting proj}). Therefore, black edges that $\Pi_{i}$
had before scattering should be reset to red when inherited by $\Pi_{i}^{\left(\lambda_{k}\right)}$,
unless the projection did not break, like in Fig \ref{fig:scattering update rule}(b),
in which case the black edges remain intact.

\subsubsection*{Procedure}

As we mentioned before, Step 1 of the algorithm produces the spectral
projections of the generators. Formally, we will refer to this step
of the algorithm as the procedure \textproc{GetAllSpectralProjections} but we will
not explicitly define it as it is self-evident.

We now have the definitions and the facts to define the procedure
of Step 2 of the algorithm:

\begin{algorithmic}[1] 
	
	\Procedure{ScatterAllProjections}{$SpecProjs$}
	\State $Projs \gets SpecProjs$
	\State $Relations \gets$ \Call{InitializeReflectionRelations}{$Projs$}
	\State $ReflectNet \gets \left\{Projs,  Relations \right\} $
	
	\While{\Call{IsEverythingReflecting}{$ReflectNet$} is  $\bold{false}$} 
	\State $Pair \gets$ \Call{PickNonReflectPair}{$ReflectNet$}
	\State $PostScatPair \gets$ \Call{ScatterProjectionsPair}{$Pair$}
	\State $ReflectNet \gets$ \Call{UpdateReflectionNetwork}{$ReflectNet, Pair, PostScatPair$}
	\EndWhile
	
	\State \Return $ReflectNet$
	\EndProcedure
	
\end{algorithmic}

The procedure starts by constructing the improper reflection network from the
initial spectral projections and initializing all edges to red except
the ones that are known to be reflecting (like rank 1 or orthogonal
projections). It then proceeds to iterations where it picks a pair
of projections connected by a red edge,\footnote{\noindent For better efficiency we should prioritize projections of
	lowest rank. Such projections are less likely to break down under scattering,
	which will reduce the number of resets of proper reflection relations
	that happen when we update the network after scattering.} scatters it,\footnote{As was discussed after Lemma \ref{lem: scattering of projections },
	the projections in \emph{Pair }may intersect on a subspace and the
	projection on this subspace will appear twice in \emph{PostScatPair.
	}Eliminating such duplicate projections is not necessary for the success
	of the algorithm but it will improve efficiency.} and updates the relations in the network according to the rules given
in Fig. \ref{fig:scattering update rule}. The procedure \textproc{ScatterAllProjections}
finishes when the reflection network is proper: that is, when all
projections are reflecting (all edges are black). This procedure is
guaranteed to terminate because every scattering iteration either identifies
a previously unknown reflecting pair or scatters a pair into a
series of reflecting pairs of lower rank. Eventually, all projections
will either be reflecting, or they will be reduced to rank 1 and thus again must be reflecting.

\subsubsection*{Toy Example (continued)}

\begin{figure}[t]
	\begin{centering}
		\fbox{\begin{minipage}[b]{0.6\columnwidth}%
				\begin{flushleft}
					(a)\qquad{}\includegraphics[height=0.1\paperheight]{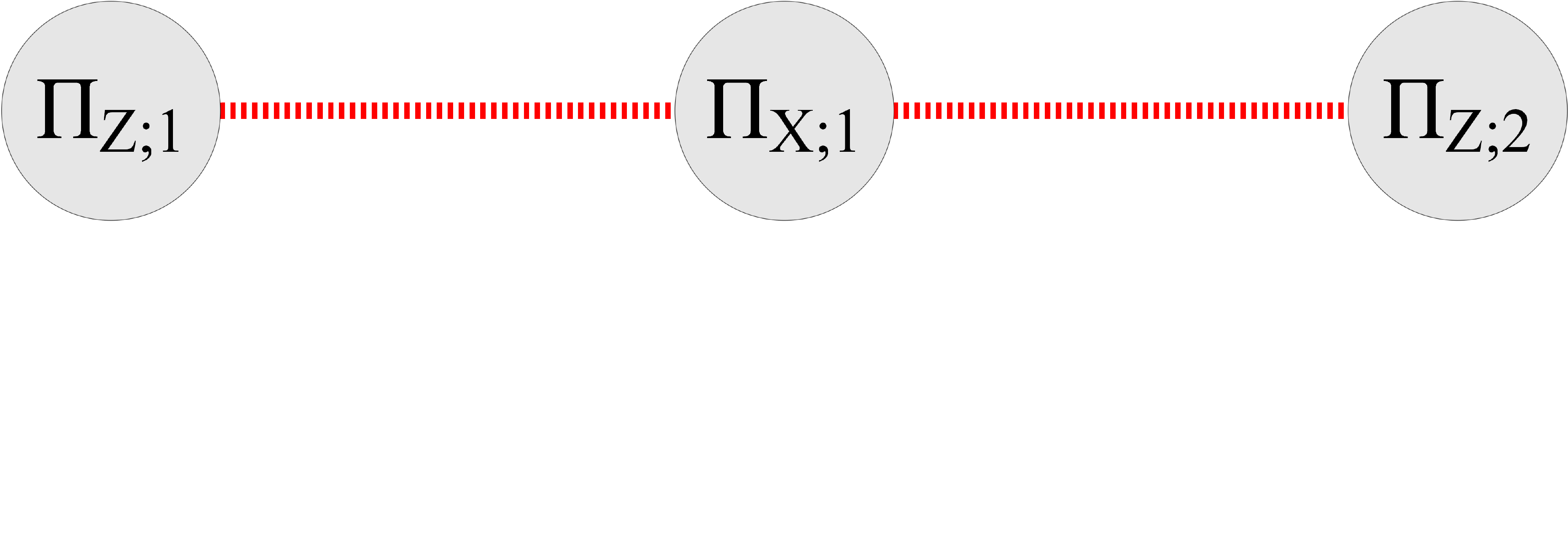}
					\par\end{flushleft}%
		\end{minipage}}
		\par\end{centering}
	\begin{centering}
		\fbox{\begin{minipage}[b]{0.6\columnwidth}%
				\begin{flushleft}
					(b)\qquad{}\includegraphics[height=0.1\paperheight]{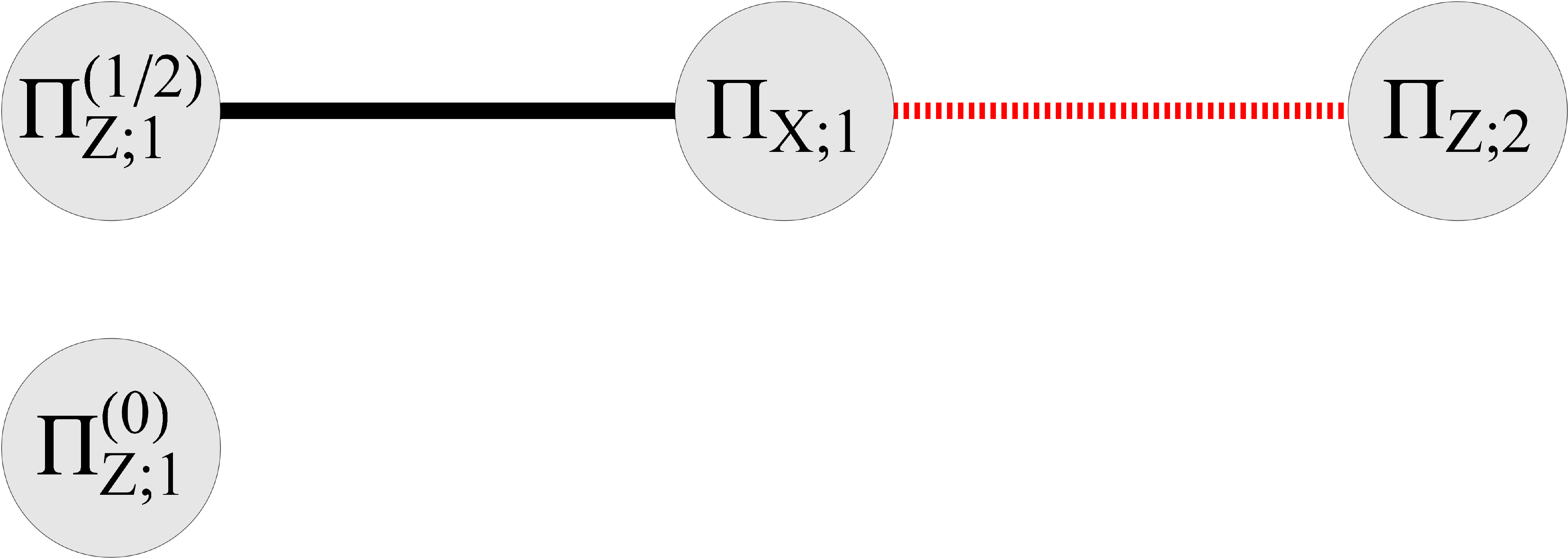}
					\par\end{flushleft}%
		\end{minipage}}
		\par\end{centering}
	\centering{}%
	\fbox{\begin{minipage}[b]{0.6\columnwidth}%
			\begin{flushleft}
				(c)\qquad{}\includegraphics[height=0.1\paperheight]{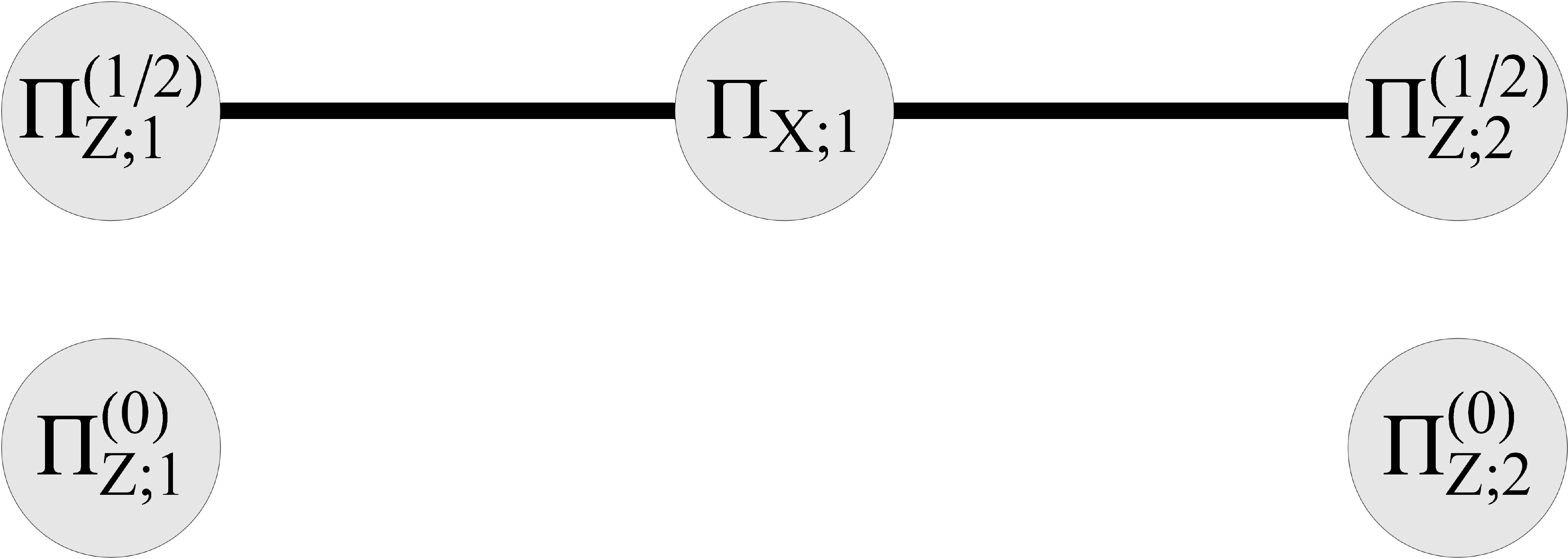}
				\par\end{flushleft}%
	\end{minipage}}\caption{\label{fig:toy example refnet} Evolution of the reflection network
		of the toy model during two scattering iterations. The red (dashed) edges represent unknown reflection relations,
		black (solid) edges represent properly reflecting pairs, absent edges represent
		orthogonal pairs. (a) Is the initial
		improper reflection network. (b) Is the intermediate network after
		one scattering iteration. (c) Is the final proper reflection network
		after two scatterings.}
\end{figure}

Before we consider the initial reflection network we note that one
of the four initial spectral projections $\left\{ \Pi_{Z;1},\Pi_{Z;2},\Pi_{X;1},\Pi_{X;2}\right\} $
is redundant in generating the algebra. That is because $\Pi_{X;2}=I-\Pi_{X;1}=\Pi_{Z;1}+\Pi_{Z;2}-\Pi_{X;1}$
so the algebra generated just by $\left\{ \Pi_{Z;1},\Pi_{Z;2},\Pi_{X;1}\right\} $
is the same as before.

The initial improper reflection network is show in Fig \ref{fig:toy example refnet}(a).
We begin by scattering the pair $\left\{ \Pi_{Z;1},\Pi_{X;1}\right\} $
\begin{align}
\Pi_{Z;1}\Pi_{X;1}\Pi_{Z;1}= & \Pi_{Z;1}\ket{_{-}^{+37}}\bra{_{-}^{+37}}\Pi_{Z;1}+\Pi_{Z;1}\ket{_{-}^{+1256}}\bra{_{-}^{+1256}}\Pi_{Z;1}\\
= & \frac{1}{2}\ket 3\bra 3+\frac{1}{2}\ket{_{-}^{+12}}\bra{_{-}^{+12}}.
\end{align}
There is only one eigenvalue $\lambda=\frac{1}{2}$ here which identifies
a single spectral projection 
\begin{equation}
\Pi_{Z;1}^{\left(1/2\right)}=\ket 3\bra 3+\ket{_{-}^{+12}}\bra{_{-}^{+12}}.
\end{equation}
Thus, the rank 4 projection $\Pi_{Z;1}$ breaks into two rank 2 projections
$\Pi_{Z;1}=\Pi_{Z;1}^{\left(1/2\right)}+\Pi_{Z;1}^{\left(0\right)}$
(see Definition \ref{def:Scattering }), where 
\begin{equation}
\Pi_{Z;1}^{\left(0\right)}=\ket 4\bra 4+\ket{_{-2}^{+1}}\bra{_{-2}^{+1}}.
\end{equation}

In principle, the breaking of the second projection $\Pi_{X;1}$ in
the scattering is calculated using Eq.\ (\ref{eq:scat of proj proof identity 3})
resulting in
\begin{equation}
\Pi_{X;1}^{\left(1/2\right)}=\frac{1}{1/2}\Pi_{X;1}\Pi_{Z;1}^{\left(1/2\right)}\Pi_{X;1}=\ket{_{-}^{+37}}\bra{_{-}^{+37}}+\ket{_{-}^{+1256}}\bra{_{-}^{+1256}}=\Pi_{X;1},
\end{equation}
which tells us that $\Pi_{X;1}$ did not break (when the scattering
has only one non-zero eigenvalue, as is this case, we already know at least one of
the projections does not break). The reflection network after the
first scattering is shown in Fig \ref{fig:toy example refnet}(b).

Repeating the same for the scattering of $\Pi_{Z;2}$ with $\Pi_{X;1}$
we get 
\begin{equation}
\Pi_{Z;2}\Pi_{X;1}\Pi_{Z;2}=\frac{1}{2}\ket 7\bra 7+\frac{1}{2}\ket{_{-}^{+56}}\bra{_{-}^{+56}}
\end{equation}
so $\Pi_{Z;2}=\Pi_{Z;2}^{\left(1/2\right)}+\Pi_{Z;2}^{\left(0\right)}$
and
\begin{align}
\Pi_{Z;2}^{\left(1/2\right)} & =\ket 7\bra 7+\ket{_{-}^{+56}}\bra{_{-}^{+56}}\\
\Pi_{Z;2}^{\left(0\right)} & =\ket 8\bra 8+\ket{_{-6}^{+5}}\bra{_{-6}^{+5}}.
\end{align}
As before, $\Pi_{X;1}$ does not break in the scattering.

The final proper reflection network is shown in Fig \ref{fig:toy example refnet}(c).

\subsection{Minimality and Completeness of Reflecting Projections \label{subsec:Minimality-and-Completnes}}

Now we will examine the properties of the set of projections that
comes out of Step 2 of the algorithm. As we discussed above, this step finishes when all projections are pairwise reflecting. In order
to construct the BPT, we will need at least one MSMP and any additional
minimal projections required to generate the whole algebra. Thus,
we will have to establish whether the final set of reflecting projections
meets the following criteria:
\begin{enumerate}
	\item \emph{Minimality}: All projections in the final set are minimal (Definition
	\ref{def:minimal projection}).
	\item \emph{Completeness}: The final set contains at least one MSMP (Definition
	\ref{def:MSMP}).
\end{enumerate}
We will now introduce correction procedures for when these criteria are not met. 

\subsubsection{Minimality}

Minimality of the reflecting projections can be established by considering
the paths in the reflection network. Each path is given by a sequence
of vertices $\boldsymbol{v}=\left(v_{1},v_{2},...,v_{n}\right)$ that
specify the projections along the path. By taking the product of all
projections along the path and normalizing with reflection coefficients\footnote{In practice we do not need to remember the reflection coefficients
	in order to construct these operators, since at each step the normalization is given
	by the non-zero singular value (which is unique, since all projections
	are reflecting) of $\Pi_{v_{1}}\Pi_{v_{2}}...\Pi_{v_{n}}$.}, we define the operator 
\begin{equation}
S_{\boldsymbol{v}}:=\frac{\Pi_{v_{1}}\Pi_{v_{2}}...\Pi_{v_{n}}}{\sqrt{\lambda_{v_{1}v_{2}}\lambda_{v_{2}v_{3}}...\lambda_{v_{n-1}v_{n}}}}.\label{eq:def of path isometry}
\end{equation}
Such an operator will be referred to as a \emph{path isometry} since it
is a partial isometry from the eigenspace of $\Pi_{v_{n}}$ to the
eigenspace of $\Pi_{v_{1}}$ along the path $\boldsymbol{v}$ (the
operator $S_{\boldsymbol{v}}^{\dagger}$ is a path isometry in the
opposite direction). In order to see that this is the case, consider
the path between two neighboring projections $\Pi_{v}$,$\Pi_{u}$
such that $S_{\left(v,u\right)}=$$\frac{1}{\sqrt{\lambda_{vu}}}$$\Pi_{v}\Pi_{u}$.
This is a partial isometry because
\begin{align}
S_{\left(v,u\right)}S_{\left(v,u\right)}^{\dagger} & =\frac{1}{\lambda_{vu}}\Pi_{v}\Pi_{u}\Pi_{v}=\Pi_{v}\\
S_{\left(v,u\right)}^{\dagger}S_{\left(v,u\right)} & =\frac{1}{\lambda_{vu}}\Pi_{u}\Pi_{v}\Pi_{u}=\Pi_{u}.
\end{align}
The general case follows in the same way by considering $S_{\boldsymbol{v}}S_{\boldsymbol{v}}^{\dagger}$,
$S_{\boldsymbol{v}}^{\dagger}S_{\boldsymbol{v}}$ and reducing the
products of projections by applying the reflection relations.

The minimality of reflecting projections can then be established with
the help of the following lemma
\begin{lem}
	\label{lem:Minimality of reflecting projections }Let $\left\{ \Pi_{v}\right\} $
	be a set of projections forming a proper reflection network and let
	$\left\{ S_{\boldsymbol{v}}\right\} $ be the set of all path isometries
	in the network as defined by Eq.\ (\ref{eq:def of path isometry}).
	Then, the following statements are equivalent:
	
	(1) Every $\Pi_{v}$ is a minimal projection in the algebra $\mathcal{A}:=\left\langle \left\{ \Pi_{v}\right\} \right\rangle $.
	
	(2) $S_{\boldsymbol{v}}\propto S_{\boldsymbol{u}}$ for all paths
	$\boldsymbol{v}$, $\boldsymbol{u}$ that share the same initial and
	final vertices.
\end{lem}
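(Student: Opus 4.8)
The plan is to prove the equivalence $(1)\Leftrightarrow(2)$ in two steps, both of which rest on one structural observation that I would establish first: the algebra $\mathcal{A}=\langle\{\Pi_v\}\rangle$ is spanned by path isometries. Indeed, since every $\Pi_v$ is self-adjoint, $\mathcal{A}$ is the linear span of the nonempty products $\Pi_{v_1}\Pi_{v_2}\cdots\Pi_{v_k}$ of generators. Because the reflection network is \emph{proper}, each consecutive pair $(\Pi_{v_j},\Pi_{v_{j+1}})$ is either orthogonal --- in which case the entire product vanishes --- or properly reflecting; in the latter case, after collapsing immediate repetitions via $\Pi_{v}^{2}=\Pi_{v}$ and repeatedly applying the reflection relation $\Pi_a\Pi_b\Pi_a=\lambda_{ab}\Pi_a$, the product equals $\sqrt{\lambda_{v_1v_2}\cdots\lambda_{v_{k-1}v_k}}\,S_{(v_1,\dots,v_k)}$ for the path $(v_1,\dots,v_k)$ of Eq.~(\ref{eq:def of path isometry}). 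So every $M\in\mathcal{A}$ is a linear combination of path isometries, and it suffices to test statements (1) and (2) on individual path isometries. Alongside this I would record the elementary identities, immediate from the definition and the reflection relations, that $\Pi_{v_1}S_{\boldsymbol v}=S_{\boldsymbol v}=S_{\boldsymbol v}\Pi_{v_n}$, $S_{\boldsymbol v}S_{\boldsymbol v}^{\dagger}=\Pi_{v_1}$, $S_{\boldsymbol v}^{\dagger}S_{\boldsymbol v}=\Pi_{v_n}$, and that $S_{\boldsymbol u}^{\dagger}S_{\boldsymbol v}$ is again a scalar multiple of a path isometry whenever $\boldsymbol u$ and $\boldsymbol v$ begin at the same vertex.

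For $(2)\Rightarrow(1)$, I would fix a vertex $v$ and, using the spanning observation, reduce the minimality condition to showing $\Pi_v M\Pi_v\propto\Pi_v$ for $M$ a single product $\Pi_{w_1}\cdots\Pi_{w_k}$. Then $\Pi_v M\Pi_v$ is itself such a product, so it is either $0$ or a scalar multiple of $S_{\boldsymbol w}$ where $\boldsymbol w$ is a path from $v$ to $v$. The trivial path $(v)$ also runs from $v$ to $v$ and has $S_{(v)}=\Pi_v$, so hypothesis (2) gives $S_{\boldsymbol w}\propto\Pi_v$, hence $\Pi_v M\Pi_v\propto\Pi_v$; extending by linearity over the products making up a general $M$ shows that $\Pi_v$ is minimal (Definition~\ref{def:minimal projection}).

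For $(1)\Rightarrow(2)$, let $\boldsymbol v$ and $\boldsymbol u$ be two paths with common initial vertex $a$ and common final vertex $b$. The key step is to examine $S_{\boldsymbol u}^{\dagger}S_{\boldsymbol v}$: this operator lies in $\mathcal{A}$, and the identities above give $S_{\boldsymbol u}^{\dagger}S_{\boldsymbol v}=\Pi_b\,(S_{\boldsymbol u}^{\dagger}S_{\boldsymbol v})\,\Pi_b$, so minimality of $\Pi_b$ forces $S_{\boldsymbol u}^{\dagger}S_{\boldsymbol v}=c\,\Pi_b$ for some scalar $c$. Multiplying on the left by $S_{\boldsymbol u}$ and using $S_{\boldsymbol u}S_{\boldsymbol u}^{\dagger}=\Pi_a$, $\Pi_a S_{\boldsymbol v}=S_{\boldsymbol v}$ and $S_{\boldsymbol u}\Pi_b=S_{\boldsymbol u}$ yields $S_{\boldsymbol v}=c\,S_{\boldsymbol u}$; since $S_{\boldsymbol v}$ is a nonzero partial isometry we have $c\neq0$, and in particular $S_{\boldsymbol v}\propto S_{\boldsymbol u}$.

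I expect the only genuine work to be in the spanning observation --- carefully tracking which products of generators vanish (orthogonal consecutive pairs), how equal consecutive vertices collapse, and how the reflection coefficients accumulate into the normalization of the resulting path isometry --- together with the bookkeeping that all intermediate expressions remain inside $\mathcal{A}$. Once that is in place, both implications are short: $(2)\Rightarrow(1)$ is an immediate application of the hypothesis to a closed path, and $(1)\Rightarrow(2)$ is a one-line manipulation of partial isometries using minimality of the common endpoint projection.
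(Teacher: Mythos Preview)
Your proposal is correct and follows essentially the same approach as the paper: both establish that $\mathcal{A}$ is spanned by path isometries, then use closed-loop path isometries to reduce minimality to the path-independence statement. Your $(1)\Rightarrow(2)$ argument is organized slightly more directly than the paper's (you work with $S_{\boldsymbol u}^{\dagger}S_{\boldsymbol v}$ at the endpoint $b$ and multiply through, whereas the paper considers $S_{\boldsymbol v}S_{\boldsymbol u}^{\dagger}$ at the initial vertex and argues by contradiction), but the underlying idea is identical.
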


\begin{proof}
	Every element $M\in\mathcal{A}$ is a linear combination of products
	of $\left\{ \Pi_{v}\right\} $ so $\mathcal{A}=\mathsf{span}\left\{ S_{\boldsymbol{v}}\right\} $.
	Then, by Definition \ref{def:minimal projection} and linearity, the
	projections $\left\{ \Pi_{v}\right\} $ are minimal if and only if
	$\Pi_{v}S_{\boldsymbol{v}}\Pi_{v}\propto\Pi_{v}$ for all $v$ and
	$\boldsymbol{v}$. When $\Pi_{v}S_{\boldsymbol{v}}=0$ or $S_{\boldsymbol{v}}\Pi_{v}=0$,
	the relation $\Pi_{v}S_{\boldsymbol{v}}\Pi_{v}=0\propto\Pi_{v}$ holds
	trivially. Let us then consider $\Pi_{v}S_{\boldsymbol{v}}\Pi_{v}\neq0$
	for some $\boldsymbol{v}:=\left(v_{1},v_{2},...,v_{n}\right)$, which
	implies that $\boldsymbol{v}':=\left(v,v_{1},v_{2},...,v_{n},v\right)$
	is a circular path from $\Pi_{v}$ to itself. Recalling the definition
	in Eq.\ (\ref{eq:def of path isometry}) we can use both $\Pi_{v}S_{\boldsymbol{v}}\Pi_{v}\propto S_{\boldsymbol{v}'}$
	and $S_{\boldsymbol{v}'}\propto\Pi_{v}S_{\boldsymbol{v}}\Pi_{v}$,
	since the proportionality factor is not $0$. Thus, if statement 2
	holds then $S_{\boldsymbol{v}'}\propto S_{\left(v,v\right)}$ and
	\begin{equation}
	\Pi_{v}S_{\boldsymbol{v}}\Pi_{v}\propto S_{\boldsymbol{v}'}\propto S_{\left(v,v\right)}=\Pi_{v}.
	\end{equation}
	This proves 2$\Rightarrow$1.
	
	If statement 1 holds then $\Pi_{v}S_{\boldsymbol{v}}\Pi_{v}\propto\Pi_{v}$
	and 
	\begin{equation}
	S_{\boldsymbol{v}'}\propto\Pi_{v}S_{\boldsymbol{v}}\Pi_{v}\propto\Pi_{v}.\label{eq:proof of minimality 1}
	\end{equation}
	This proves statement 2 for all circular paths $v'$ since all $S_{\boldsymbol{v}'}$
	are (properly) proportional to the same initial projection $\Pi_{v}$
	and thus to each other. For non circular paths $\boldsymbol{v}=\left(v_{1},...,v_{n}\right)$,
	$\boldsymbol{u}=\left(u_{1},...,u_{m}\right)$ with $v_{1}=u_{1}$
	and $v_{n}=u_{m}$ let us assume that $S_{\boldsymbol{v}}\not\propto S_{\boldsymbol{u}}$
	so $S_{\boldsymbol{v}}S_{\boldsymbol{u}}^{\dagger}\not\propto S_{\boldsymbol{u}}S_{\boldsymbol{u}}^{\dagger}$.
	Then the path isometry $S_{\boldsymbol{v}'}=S_{\boldsymbol{v}}S_{\boldsymbol{u}}^{\dagger}$
	defined by the circular path $v':=\left(v_{1},...,v_{n}=u_{m},...,u_{1}\right)$
	is proportional to its initial projection $S_{\boldsymbol{v}}S_{\boldsymbol{u}}^{\dagger}\propto\Pi_{v_{1}}=\Pi_{u_{1}}=S_{\boldsymbol{u}}S_{\boldsymbol{u}}^{\dagger}$,
	in contradiction to $S_{\boldsymbol{v}}S_{\boldsymbol{u}}^{\dagger}\not\propto S_{\boldsymbol{u}}S_{\boldsymbol{u}}^{\dagger}$.
	Therefore, $S_{\boldsymbol{v}}\propto S_{\boldsymbol{u}}$ proving
	1$\Rightarrow$2.
\end{proof}
Thus, by checking whether the path isometries in a reflection network
depend only on the initial and final vertices and are independent
of the paths taken, we can verify that the projections are minimal.
In practice, it is not necessary to check all paths as there is usually
a lot of order in the reflection network and path independence can
be established based on this order. Things are even simpler when the projections
are of rank 1 (recall all projections in the same connected component of
a reflection network must have the same rank), then there is nothing
to check since rank 1 projections are always minimal.

In addition to providing a testable criterion for minimality, Lemma
\ref{lem:Minimality of reflecting projections } also implies a correction
for the case where the reflecting projections are not minimal.
\begin{prop}
	\label{prop:reduction of reflection network}In the setting of Lemma
	\ref{lem:Minimality of reflecting projections }, let $\boldsymbol{v}$,
	$\boldsymbol{u}$ be two paths that share the same initial $v_{1}=u_{1}$
	and final $v_{n}=u_{m}$ vertices but $S_{\boldsymbol{v}}\not\propto S_{\boldsymbol{u}}$.
	Then, the spectral projections $\left\{ \Pi^{\left(\omega\right)}\right\} $
	of $U:=S_{\boldsymbol{v}}S_{\boldsymbol{u}}^{\dagger}$ have the following
	properties:
	
	(1) Each $\Pi^{\left(\omega\right)}$ is in the algebra $\mathcal{A}:=\left\langle \left\{ \Pi_{v}\right\} \right\rangle $.
	
	(2) Each $\Pi^{\left(\omega\right)}$ is not reflecting with $\Pi_{v_{1}}$.
\end{prop}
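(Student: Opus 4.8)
The plan is to identify $U:=S_{\boldsymbol{v}}S_{\boldsymbol{u}}^{\dagger}$ as a normal operator that acts as a non-scalar unitary on the eigenspace of $\Pi_{v_{1}}$ and annihilates its orthogonal complement; claim (1) then follows from the Lagrange-interpolation formula for spectral projections used in Proposition \ref{prop:MA of single operator }, and claim (2) from a rank count against Proposition \ref{prop:reflecting proj}.

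First I would pin down the structure of $U$. Since the paths $\boldsymbol{v}$ and $\boldsymbol{u}$ share the initial vertex $v_{1}=u_{1}$ and the final vertex $v_{n}=u_{m}$, the path isometries obey $S_{\boldsymbol{v}}S_{\boldsymbol{v}}^{\dagger}=S_{\boldsymbol{u}}S_{\boldsymbol{u}}^{\dagger}=\Pi_{v_{1}}$ and $S_{\boldsymbol{v}}^{\dagger}S_{\boldsymbol{v}}=S_{\boldsymbol{u}}^{\dagger}S_{\boldsymbol{u}}=\Pi_{v_{n}}$, together with the partial-isometry identities $\Pi_{v_{1}}S_{\boldsymbol{v}}=S_{\boldsymbol{v}}$ and $S_{\boldsymbol{v}}\Pi_{v_{n}}=S_{\boldsymbol{v}}$ (and the analogous relations for $S_{\boldsymbol{u}}$). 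A one-line computation then gives $UU^{\dagger}=S_{\boldsymbol{v}}\Pi_{v_{n}}S_{\boldsymbol{v}}^{\dagger}=\Pi_{v_{1}}$ and $U^{\dagger}U=S_{\boldsymbol{u}}\Pi_{v_{n}}S_{\boldsymbol{u}}^{\dagger}=\Pi_{v_{1}}$, so $U$ is a partial isometry whose initial and final projections both equal $\Pi_{v_{1}}$. In particular $U$ is normal, hence diagonalizable with mutually orthogonal eigenspaces; it annihilates the orthogonal complement of the eigenspace of $\Pi_{v_{1}}$ and restricts to a unitary on that eigenspace. Writing its spectral decomposition over the distinct nonzero eigenvalues as $U=\sum_{\omega}\omega\,\Pi^{(\omega)}$, one has $\sum_{\omega}\Pi^{(\omega)}=\Pi_{v_{1}}$ and $\Pi^{(\omega)}\Pi_{v_{1}}=\Pi^{(\omega)}$ for every $\omega$. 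Finally, the hypothesis $S_{\boldsymbol{v}}\not\propto S_{\boldsymbol{u}}$ is precisely what makes $U$ non-scalar there: were $U=c\,\Pi_{v_{1}}$, then $S_{\boldsymbol{v}}=US_{\boldsymbol{u}}=c\,S_{\boldsymbol{u}}$, a contradiction. Hence $U$ has at least two distinct nonzero eigenvalues, so each $\Pi^{(\omega)}$ has rank strictly smaller than that of $\Pi_{v_{1}}$.

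Claim (1) is then immediate: $S_{\boldsymbol{v}}$ and $S_{\boldsymbol{u}}^{\dagger}$ are scalar multiples of products of the generators $\{\Pi_{v}\}$, so $U\in\mathcal{A}$, and $\Pi_{v_{1}}\in\mathcal{A}$ since it is one of the generators. Because $U$ is diagonalizable with finitely many distinct nonzero eigenvalues, the same Lagrange-interpolation identity used in Proposition \ref{prop:MA of single operator } gives $\Pi^{(\omega)}=\prod_{\omega'\neq\omega}\frac{U-\omega'\Pi_{v_{1}}}{\omega-\omega'}$, a polynomial in $U$ and $\Pi_{v_{1}}$, hence an element of $\mathcal{A}$. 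For claim (2), suppose $\Pi^{(\omega)}$ and $\Pi_{v_{1}}$ were reflecting. From $\Pi^{(\omega)}\Pi_{v_{1}}=\Pi^{(\omega)}$ we get $\Pi^{(\omega)}\Pi_{v_{1}}\Pi^{(\omega)}=\Pi^{(\omega)}\neq0$, so the pair is not orthogonal and the reflection would be proper; Proposition \ref{prop:reflecting proj}(1) would then force $\Pi^{(\omega)}$ and $\Pi_{v_{1}}$ to have the same rank, contradicting the strict rank drop established above. Therefore $\Pi^{(\omega)}$ is not reflecting with $\Pi_{v_{1}}$.

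The only mildly delicate part is the first step: checking that $U$ is a partial isometry whose initial and final projections both equal $\Pi_{v_{1}}$, and that $S_{\boldsymbol{v}}\not\propto S_{\boldsymbol{u}}$ genuinely produces a rank reduction — this is the whole point, since it guarantees that the spectral projections of $U$ are strictly smaller than $\Pi_{v_{1}}$ and can be fed back into the algorithm to refine the reflection network. Once $U$ is recognized as a partial isometry acting as a non-scalar unitary on the eigenspace of $\Pi_{v_{1}}$, both claims are direct consequences of results already established.
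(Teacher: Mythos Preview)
Your proof is correct and follows essentially the same approach as the paper: both establish that $U$ is normal with $UU^{\dagger}=U^{\dagger}U=\Pi_{v_{1}}$, deduce (1) from the spectral projections of a normal element lying in the algebra, and deduce (2) from the fact that $U\not\propto\Pi_{v_{1}}$ forces more than one spectral projection. The only cosmetic differences are that you spell out the Lagrange interpolation explicitly (the paper cites the general fact for normal operators), and for (2) you invoke the equal-rank criterion of Proposition~\ref{prop:reflecting proj} whereas the paper directly observes $\Pi_{v_{1}}\Pi^{(\omega)}\Pi_{v_{1}}=\Pi^{(\omega)}\not\propto\Pi_{v_{1}}$.
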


\begin{proof}
	First note that the operator $U$ is a unitary on the eigenspace of
	$\Pi_{v_{1}}$ since $UU^{\dagger}=U^{\dagger}U=\Pi_{v_{1}}$. Since
	$U$ is in $\mathcal{A}$, and it is a normal operator, all its spectral
	projections are also in $\mathcal{A}$ \cite{farenick2012algebras}. This proves statement
	1.
	
	From $S_{\boldsymbol{v}}\not\propto S_{\boldsymbol{u}}$ we have $U=S_{\boldsymbol{v}}S_{\boldsymbol{u}}^{\dagger}\not\propto S_{\boldsymbol{u}}S_{\boldsymbol{u}}^{\dagger}=\Pi_{v_{1}}$.
	Since $U=\sum_{\omega}\omega\Pi^{\left(\omega\right)}$ and $\Pi_{v_{1}}=\sum_{\omega}\Pi^{\left(\omega\right)}$
	but $U\not\propto\Pi_{v_{1}}$ then there must be more then one spectral
	projection $\Pi^{\left(\omega\right)}$. Therefore, $\Pi_{v_{1}}\Pi^{\left(\omega\right)}\Pi_{v_{1}}=\Pi^{\left(\omega\right)}\not\propto\Pi_{v_{1}}$,
	proving statement 2.
\end{proof}
So, if the minimality condition of statement 2 in Lemma \ref{lem:Minimality of reflecting projections }
does not hold, we can take $U:=S_{\boldsymbol{v}}S_{\boldsymbol{u}}^{\dagger}$
for the two paths that violate it and use its spectral projections
$\left\{ \Pi^{\left(\omega\right)}\right\} $ to scatter $\Pi_{v_{1}}$.
By scattering the projections in the connected component of $\Pi_{v_{1}}$
with the spectral projections $\left\{ \Pi^{\left(\omega\right)}\right\} $
until everything is reflecting again, we will break down the connected
component into a reflection network of smaller rank projections. Then
we can check the condition of minimality again and repeat until it
is satisfied.

\subsubsection*{Procedure}

As was shown in Lemma \ref{lem:Minimality of reflecting projections },
in principle minimality of a reflection network can be established
by checking all path isometries connecting every pair of projections
and verifying that they are proportional to one another. We will formally
refer to this procedure as \textproc{EstablishMinimality}. This,
of course, is not a computationally tractable solution because of the exponentially
large number of paths in all but the most degenerate networks. Nonetheless,
when dealing with concrete examples, path invariance of path isometries
can be shown based on the specifics of the problem; this is what we
mean in practice when referring to the \textproc{EstablishMinimality}
procedure.

Even though we have not encountered non-minimal reflection networks
following the scattering procedure, we do know that such networks
exist.\footnote{Non-minimal reflection networks can be constructed directly by carefully
	choosing the reflecting projections. It is an open question whether there are conditions
	that guarantee that the reflecting projections that come out from
	the scattering procedure are always minimal.
	If that is not the case, a tractable procedure that establishes minimality
	of the reflection network without relying on the specifics of the
	problem would be desirable.} Therefore, for the sake of completeness, we have mentioned that even
in such cases there is a way to proceed, given by Proposition
\ref{prop:reduction of reflection network}.

\subsubsection*{Toy Example (continued)}

In the network shown in Fig \ref{fig:toy example refnet}(c) we have
three connected components but only the component $\left\{ \Pi_{Z;1}^{\left(1/2\right)},\Pi_{X;1},\Pi_{Z;2}^{\left(1/2\right)}\right\} $
has any paths. For every pair of projections in this component there
is only one simple path (that is a path that has no repeating vertices)
between them, so paths invariance trivially holds for simple paths.

Every non-simple path is of the form $\left(\left(\cdot\right),\Pi_{X;1},\left(\cdot\right),\Pi_{X;1},...,\left(\cdot\right),\Pi_{X;1},\left(\cdot\right),\Pi_{X;1},\left(\cdot\right)\right)$
where $\left(\cdot\right)$ is a placeholder for any of the other
two projections in the component, and it may be empty at the boundaries.
Any path isometry of such path is proportional to the path isometry
of the simple path $\left(\left(\cdot\right),\Pi_{X;1},\left(\cdot\right)\right)$,
because of the reflection relation $\Pi_{X;1}\left(\cdot\right)\Pi_{X;1}\propto\Pi_{X;1}$.
Therefore, in this network all path isometries between every pair
are proportional to one another.

\subsubsection{Completeness}

Completeness requires the reflection network to contain at least
one MSMP as defined in Definition \ref{def:MSMP}. That is,
assuming that the minimality of projections has been established,
we must identify a set of pairwise orthogonal projections that
sum to the identity of the algebra. Since the initial projections
in step 1 of the algorithm are the spectral projections of observables,
they must resolve the identity (otherwise the probabilities of outcomes
will not sum to $1$). The scattering at step 2 breaks them down
into smaller ranks, but they continue to resolve the identity. This means
that completeness is a given if the initial projections resolve the
identity to begin with. However, in more general applications
of the algorithm where we do not assume the inputs to consist of identity-resolving projections, it turns out that we can still reconstruct
an MSMP.

Given the reflection network of projections $\left\{ \Pi_{v}\right\} $
and the algebra $\mathcal{A}:=\left\langle \left\{ \Pi_{v}\right\} \right\rangle$, 
we will assume that all $\Pi_{v}$ are minimal in $\mathcal{A}$ (that is, minimality
has to be established before checking completeness). Consider the
largest subset of pairwise orthogonal projections $\left\{ \Pi_{v_{k}}\right\} \subseteq\left\{ \Pi_{v}\right\}$, with $\Pi_{v_{k}}\Pi_{v_{l}}=\delta_{kl}\Pi_{v_{k}}$, which is a
maximal independent set of vertices in the network (this set does not have
to be unique). The subset $\left\{ \Pi_{v_{k}}\right\} $ is an MSMP
if the operator
\begin{equation}
I_{\mathcal{A}}:=\sum_{k}\Pi_{v_{k}}\label{eq:def of identity in reflect. net.}
\end{equation}
is such that $I_{\mathcal{A}}\Pi_{v}=\Pi_{v}$ for all $v$. If it
is not, we can use the result of the following lemma to complete the subset into an MSMP.
\begin{lem}
	\label{lem:Completness}Let $\left\{ \Pi_{v_{k}}\right\} $ be the
	largest subset of pairwise orthogonal projections in the reflection
	network of $\left\{ \Pi_{v}\right\},$ where all $\Pi_{v}$ are minimal
	in the algebra $\mathcal{A}:=\left\langle \left\{ \Pi_{v}\right\} \right\rangle $.
	If there is a $v$ such that $I_{\mathcal{A}}\Pi_{v}\neq\Pi_{v}$,
	then, with the appropriate normalization factor $c$, the operator
	(here $I$ is the full identity matrix and $I_{\mathcal{A}}$ is given
	by Eq.\ (\ref{eq:def of identity in reflect. net.})) 
	\begin{equation}
	\tilde{\Pi}_{v}:=\frac{1}{c}\left(I-I_{\mathcal{A}}\right)\Pi_{v}\left(I-I_{\mathcal{A}}\right),\label{eq:def of complementary projection}
	\end{equation}
	has all of the following properties:
	
	(1) $\tilde{\Pi}_{v}$ is a minimal projection in $\mathcal{A}$.
	
	(2) $\tilde{\Pi}_{v}$ is orthogonal to all $\left\{ \Pi_{v_{k}}\right\} $.
	
	(3) The operator $\tilde{I}_{\mathcal{A}}:=I_{\mathcal{A}}+\tilde{\Pi}_{v}$
	is such that $\tilde{I}_{\mathcal{A}}\Pi_{v}=\Pi_{v}$.
\end{lem}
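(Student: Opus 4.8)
The plan is to introduce $P:=I-I_{\mathcal{A}}$ and reduce everything to one identity. Since the $\Pi_{v_k}$ are pairwise orthogonal projections, $I_{\mathcal{A}}=\sum_k\Pi_{v_k}$ is itself a projection lying in $\mathcal{A}$, hence $P$ is a projection as well (although $P$ need not lie in $\mathcal{A}$, because the ambient identity $I$ need not). Because $I_{\mathcal{A}}\in\mathcal{A}$ and $\Pi_v$ is minimal, $\Pi_v I_{\mathcal{A}}\Pi_v=\alpha\Pi_v$ for a scalar $\alpha$, so $\Pi_v P\Pi_v=(1-\alpha)\Pi_v$. The first thing I would check is $\alpha\neq1$: writing $\Pi_v P\Pi_v=\Pi_v PP\Pi_v=(P\Pi_v)^{\dagger}(P\Pi_v)$ shows that $\Pi_v P\Pi_v=0$ is equivalent to $P\Pi_v=0$, i.e.\ to $I_{\mathcal{A}}\Pi_v=\Pi_v$, which is the case excluded by hypothesis; and $0\le\Pi_v P\Pi_v\le\Pi_v$ forces $0\le\alpha\le1$, so in fact $\alpha<1$. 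I then fix the normalization $c:=1-\alpha\neq0$, so that $\Pi_v P\Pi_v=c\,\Pi_v$ and $\tilde{\Pi}_v=\tfrac1c P\Pi_v P$.

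With $c$ pinned down, the rest is bookkeeping organized around the reduction $\Pi_v P\Pi_v=c\,\Pi_v$. Self-adjointness of $\tilde{\Pi}_v$ is immediate, and idempotence follows from $P\Pi_v P\cdot P\Pi_v P=P(\Pi_v P\Pi_v)P=c\,P\Pi_v P$. Membership $\tilde{\Pi}_v\in\mathcal{A}$ follows by expanding $P\Pi_v P=\Pi_v-I_{\mathcal{A}}\Pi_v-\Pi_v I_{\mathcal{A}}+I_{\mathcal{A}}\Pi_v I_{\mathcal{A}}$, each term a product of elements of $\mathcal{A}$ (note $\tilde{\Pi}_v\neq0$ since $P\Pi_v\neq0$). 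For minimality (completing claim 1), take any $M\in\mathcal{A}$; then $PMP=M-I_{\mathcal{A}}M-MI_{\mathcal{A}}+I_{\mathcal{A}}MI_{\mathcal{A}}\in\mathcal{A}$, so minimality of $\Pi_v$ gives $\Pi_v(PMP)\Pi_v=\beta\Pi_v$ for some scalar $\beta$, and $\tilde{\Pi}_v M\tilde{\Pi}_v=\tfrac1{c^2}P\Pi_v(PMP)\Pi_v P=\tfrac{\beta}{c^2}P\Pi_v P=\tfrac{\beta}{c}\tilde{\Pi}_v\propto\tilde{\Pi}_v$.

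Claim 2 is the quickest: $P\Pi_{v_k}=\Pi_{v_k}-\sum_l\Pi_{v_l}\Pi_{v_k}=0$ by pairwise orthogonality, so $\tilde{\Pi}_v\Pi_{v_k}=\tfrac1c P\Pi_v P\Pi_{v_k}=0$, and taking adjoints gives $\Pi_{v_k}\tilde{\Pi}_v=0$ too. For claim 3 I would compute $\tilde{\Pi}_v\Pi_v=\tfrac1c P\Pi_v(P\Pi_v)=\tfrac1c P(\Pi_v P\Pi_v)=\tfrac1c P(c\Pi_v)=P\Pi_v=\Pi_v-I_{\mathcal{A}}\Pi_v$, whence $\tilde{I}_{\mathcal{A}}\Pi_v=I_{\mathcal{A}}\Pi_v+\tilde{\Pi}_v\Pi_v=\Pi_v$. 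I do not anticipate a genuine obstacle; the only points demanding care are that $P$ and $I$ live outside $\mathcal{A}$, so every algebra-membership assertion must be reduced to products of $\Pi_v$, $I_{\mathcal{A}}$ and $M$, and that the normalization constant is nonzero — which is precisely what the triggering hypothesis $I_{\mathcal{A}}\Pi_v\neq\Pi_v$ guarantees.
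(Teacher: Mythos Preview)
Your proof is correct and follows essentially the same approach as the paper's: both set $c=1-\alpha$ where $\Pi_v I_{\mathcal{A}}\Pi_v=\alpha\Pi_v$, verify idempotence via $\Pi_v P\Pi_v=c\Pi_v$, establish minimality by noting $PMP\in\mathcal{A}$, and derive claims 2 and 3 from $P\Pi_{v_k}=0$ and $\tilde{\Pi}_v\Pi_v=P\Pi_v$ respectively. Your version is in fact slightly more careful than the paper's, since you explicitly justify $\alpha\neq1$ via the $(P\Pi_v)^\dagger(P\Pi_v)=0\Rightarrow P\Pi_v=0$ argument, whereas the paper simply asserts the contradiction.
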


\begin{proof}
	If we distribute the terms in Eq.\ (\ref{eq:def of complementary projection})
	we will get $c\tilde{\Pi}_{v}=\Pi_{v}-I_{\mathcal{A}}\Pi_{v}-\Pi_{v}I_{\mathcal{A}}+I_{\mathcal{A}}\Pi_{v}I_{\mathcal{A}}$
	so $\tilde{\Pi}_{v}$ is an operator in $\mathcal{A}$. It is clearly
	self-adjoint and it squares to
	\begin{equation}
	\tilde{\Pi}_{v}\tilde{\Pi}_{v}=\frac{1}{c^{2}}\left(I-I_{\mathcal{A}}\right)\Pi_{v}\left(I-I_{\mathcal{A}}\right)\Pi_{v}\left(I-I_{\mathcal{A}}\right).\label{eq:completeness lemma 1}
	\end{equation}
	Since all $\Pi_{v}$ are minimal, $\Pi_{v}\left(I-I_{\mathcal{A}}\right)\Pi_{v}=\Pi_{v}-\Pi_{v}I_{\mathcal{A}}\Pi_{v}=\left(1-\alpha\right)\Pi_{v}$,
	where $\alpha$ is the proportionality factor in the minimality relation
	$\Pi_{v}I_{\mathcal{A}}\Pi_{v}\propto\Pi_{v}$, and $\alpha$ is not
	$1$ because that would contradict $I_{\mathcal{A}}\Pi_{v}\neq\Pi_{v}$.
	Thus, for $c=1-\alpha$, Eq.\ (\ref{eq:completeness lemma 1}) is equal
	to $\tilde{\Pi}_{v}$, so $\tilde{\Pi}_{v}$ is a projection. It is
	minimal because for any matrix $M\in\mathcal{A}$, 
	\begin{equation}
	\tilde{\Pi}_{v}M\tilde{\Pi}_{v}=\frac{1}{c^{2}}\left(I-I_{\mathcal{A}}\right)\Pi_{v}\tilde{M}\Pi_{v}\left(I-I_{\mathcal{A}}\right)
	\end{equation}
	where $\tilde{M}:=\left(I-I_{\mathcal{A}}\right)M\left(I-I_{\mathcal{A}}\right)$
	is also in $\mathcal{A}$, so $\Pi_{v}\tilde{M}\Pi_{v}\propto\Pi_{v}$
	and $\tilde{\Pi}_{v}M\tilde{\Pi}_{v}\propto\tilde{\Pi}_{v}$. This
	proves statement 1. Statement 2 follows from $\left(I-I_{\mathcal{A}}\right)\Pi_{v_{k}}=\Pi_{v_{k}}-\Pi_{v_{k}}=0$
	so $\tilde{\Pi}_{v}\Pi_{v_{k}}=0$. Lastly, using the minimality of
	$\Pi_{v}$ and $c=1-\alpha$ once again, we get 
	\begin{equation}
	\tilde{\Pi}_{v}\Pi_{v}=\frac{1}{c}\left(I-I_{\mathcal{A}}\right)\Pi_{v}\left(I-I_{\mathcal{A}}\right)\Pi_{v}=\left(I-I_{\mathcal{A}}\right)\Pi_{v}
	\end{equation}
	so $\tilde{I}_{\mathcal{A}}\Pi_{v}=I_{\mathcal{A}}\Pi_{v}+\left(I-I_{\mathcal{A}}\right)\Pi_{v}=\Pi_{v}$.
	This proves statement 3.
\end{proof}

\subsubsection*{Procedure}

The procedure to establish completeness is only necessary if the initial
projections are not known to resolve the identity.

\noindent 
\begin{algorithmic}[1] 
	
	\Procedure{EstablishCompleteness}{$ReflectNet$}
	\State $MSMP \gets$ \Call{PickMaxIndependentSet}{$ReflectNet$}
	
	\ForAll{$\Pi \in$ ($ReflectNet$ excluding MSMP)}  
	\If{\Call{SumAll}{$MSMP$}$\Pi\neq\Pi$}
	\State $\tilde{\Pi} \gets$ \Call{ConstructComplementaryProj}{$MSMP, \Pi$}
	\State $MSMP \gets$ \Call{AddProj}{$MSMP, \tilde{\Pi}$}
	\State $ReflectNet \gets$ \Call{AddProj}{$ReflectNet, \tilde{\Pi}$}
	\EndIf
	\EndFor
	
	\State \Return $ReflectNet$ 
	\EndProcedure
	
\end{algorithmic}

\noindent Completeness is achieved by choosing a maximal independent
set of orthogonal projections in the network $\left\{ \Pi_{v_{k}}\right\} $,\footnote{\noindent Actually, any subset of pairwise orthogonal projections
	will do, but a maximal independent set is what we end up constructing
	anyway.} and testing whether $I_{\mathcal{A}}=\sum_{k}\Pi_{v_{k}}$ acts as
the identity on all projections in the network. If it does, then $\left\{ \Pi_{v_{k}}\right\} $
is the MSMP. If it does not, then for each projection such that $I_{\mathcal{A}}\Pi_{v}\neq\Pi_{v}$
we construct the complementary projection $\tilde{\Pi}_{v}$ as defined
in Eq.\ (\ref{eq:def of complementary projection}) and add it to the
network\footnote{\noindent When adding a new minimal projection to the network we need
	to establish its reflection relations with all existing elements.
	The minimality of projections ensures that it will not trigger new
	scattering and breakdowns of , but we do need to know which
	existing projections are orthogonal to the new element and which are
	not.} and the independent set of orthogonal projections $\left\{ \Pi_{v_{k}}\right\} $.
Lemma \ref{lem:Completness} ensures that the final set $\left\{ \Pi_{v_{k}}\right\} $
always consists of pairwise orthogonal minimal projections in the
algebra that sum to the identity, i.e. an MSMP.

\subsubsection*{Toy Example (continued)}

\begin{figure}[t]
	\fbox{\begin{minipage}[b]{0.48\columnwidth}%
			\begin{flushleft}
				(a)\qquad{}
				\par\end{flushleft}
			\begin{center}
				\includegraphics[width=0.95\columnwidth]{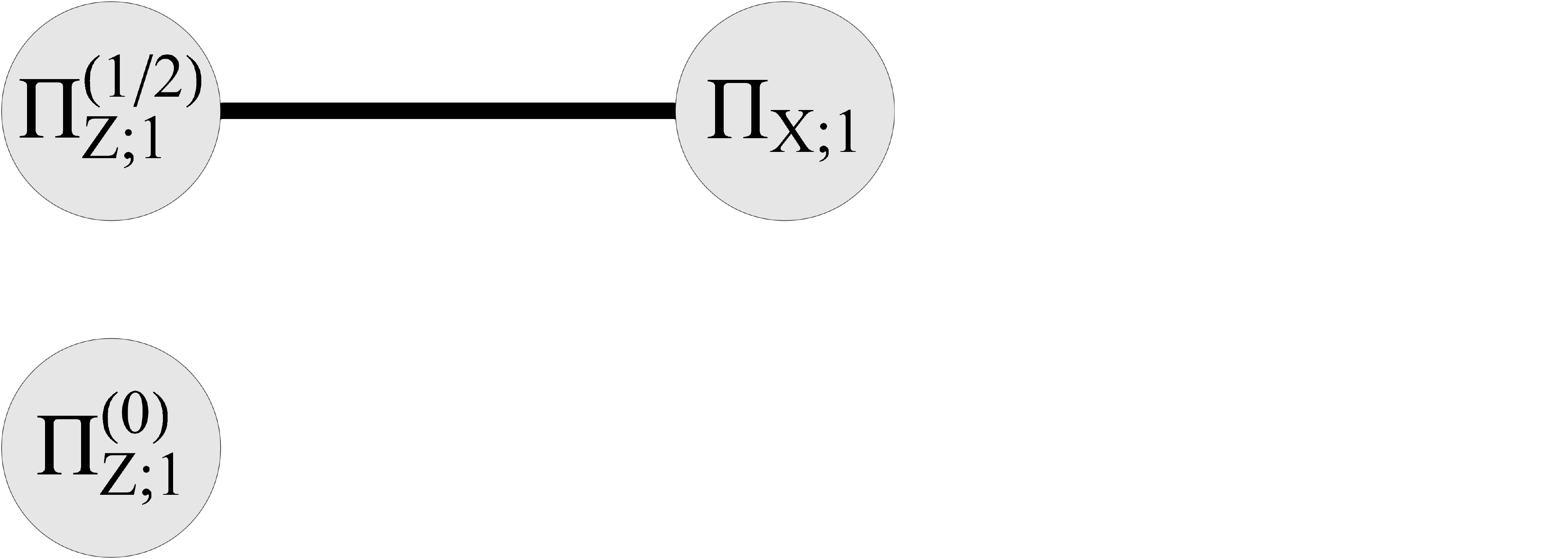}\\
				~\\
				\par\end{center}%
	\end{minipage}} %
	\fbox{\begin{minipage}[b]{0.48\columnwidth}%
			\begin{flushleft}
				(b)\qquad{}
				\par\end{flushleft}
			\begin{center}
				\includegraphics[width=0.95\columnwidth]{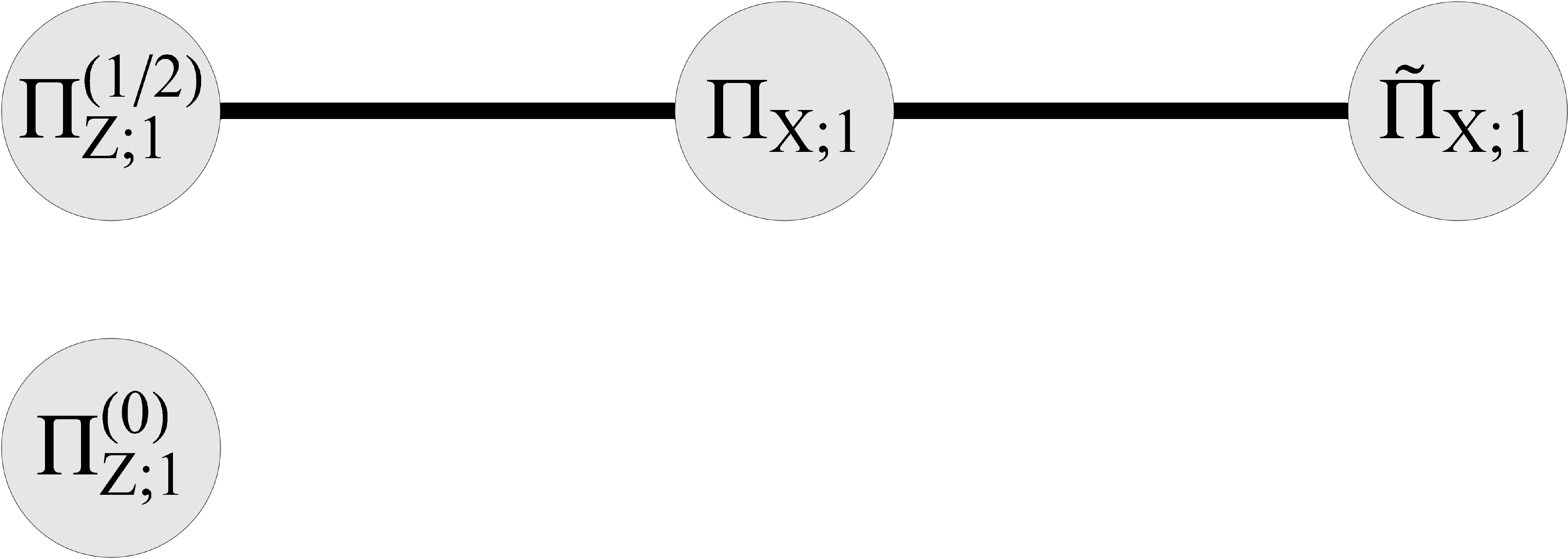}\\
				~\\
				\par\end{center}%
	\end{minipage}}\caption{\label{fig:toy model refnet completeness}Completion of a reflection
		network lacking an MSMP (a), to the one that has an MSMP (b).}
\end{figure}

In our example, the initial projections resolve the identity $\Pi_{Z;1}+\Pi_{Z;2}=I$,
so, as expected, the maximal independent set in the reflection network
of Fig. \ref{fig:toy example refnet}(c) is the MSMP since $\Pi_{Z;1}^{\left(1/2\right)}+\Pi_{Z;1}^{\left(0\right)}+\Pi_{Z;2}^{\left(1/2\right)}+\Pi_{Z;2}^{\left(0\right)}=I$.

In order to demonstrate how the MSMP can be constructed even if it
initially is missing, we will drop $\Pi_{Z;2}$ and consider the algebra
generated by $\left\langle \Pi_{Z;1},\Pi_{X;1}\right\rangle $. As
we calculated before, after scattering we have a proper reflection
network consisting of the reflecting pair $\left\{ \Pi_{Z;1}^{\left(1/2\right)},\Pi_{X;1}\right\} $
and the projection $\Pi_{Z;1}^{\left(0\right)}$ orthogonal
to both; see Fig \ref{fig:toy model refnet completeness}(a). The
maximal independent set consists of $\left\{ \Pi_{Z;1}^{\left(1/2\right)},\Pi_{Z;1}^{\left(0\right)}\right\} $
but $I_{\mathcal{A}}:=\Pi_{Z;1}^{\left(1/2\right)}+\Pi_{Z;1}^{\left(0\right)}=\Pi_{Z;1}$
does not act as the identity on $\Pi_{X;1}$. Using Eq.\ (\ref{eq:def of complementary projection})
we construct the complementary projection in this algebra
\begin{equation}
\tilde{\Pi}_{X;1}:=\frac{1}{c}\left(I-\Pi_{Z;1}\right)\Pi_{X;1}\left(I-\Pi_{Z;1}\right)=\frac{1}{c}\Pi_{Z;2}\Pi_{X;1}\Pi_{Z;2}=\frac{1}{c}\left(\frac{1}{2}\ket 7\bra 7+\frac{1}{2}\ket{_{-}^{+56}}\bra{_{-}^{+56}}\right),
\end{equation}
choosing $c=\frac{1}{2}$ for proper normalization, and add it to
the network. The new projection $\tilde{\Pi}_{X;1}$ is orthogonal
to both $\left\{ \Pi_{Z;1}^{\left(1/2\right)},\Pi_{Z;1}^{\left(0\right)}\right\} $,
and is reflecting with $\Pi_{X;1}$, which results in the reflection
network shown in Fig \ref{fig:toy model refnet completeness}(b).
Now the maximal independent set sums to $I_{\mathcal{A}}:=\Pi_{Z;1}^{\left(1/2\right)}+\Pi_{Z;1}^{\left(0\right)}+\tilde{\Pi}_{X;1}$
and we can check that it acts as the identity on $\Pi_{X;1}$, so $\left\{ \Pi_{Z;1}^{\left(1/2\right)},\Pi_{Z;1}^{\left(0\right)},\tilde{\Pi}_{X;1}\right\} $
is our MSMP.

\subsection{Construction of Bipartition Tables from Minimal Projections\label{subsec:Construction-of-Bipartition}}

As discussed at the start of the section, the structure captured by a reflection network that meets the criteria
of minimality and completeness can be translated into a bipartition
table in the following way: The elements of an MSMP correspond to columns
of the BPT. The isometries between the columns are given by the path isometries
between the elements of the MSMP (minimality ensures that the particular choice
of path is inconsequential). Elements of the MSMP that are not connected
by any path in the network are not related by an isometry, so
cannot be in the same block of the BPT. That is, distinct connected
components of the reflection network correspond to distinct blocks
of the BPT.

The formal construction of the BPT relies on the proof of the following
lemma:
\begin{lem}
	\label{lem:BPT construction from refnet}Let $\left\{ \Pi_{v}\right\} $
	be projections of a reflection network for which minimality and completeness
	holds, and let $\left\{ \Pi_{v_{k}}\right\} \subseteq\left\{ \Pi_{v}\right\} $
	be an MSMP. Then, there is a BPT with the BPOs $\left\{ S_{kl}^{q}\right\} $
	such that every $S_{kl}^{q}$ is a path isometry in the network and
	the set $\left\{ S_{kl}^{q}\right\} $ spans the algebra $\mathcal{A}:=\left\langle \left\{ \Pi_{v}\right\} \right\rangle $.
\end{lem}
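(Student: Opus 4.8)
The plan is to build the bipartition table explicitly, block by block, directly from the combinatorial data of the reflection network, and then verify the two claims: that each resulting $S_{kl}^q$ is a path isometry, and that the $\{S_{kl}^q\}$ span $\mathcal{A}$. First I would decompose the reflection network into its connected components; by the discussion preceding the lemma, each connected component will become one block $q$ of the BPT. Within a component $q$, let $\{\Pi^q_{v_k}\}$ be the minimal projections of the MSMP that lie in that component; these index the columns of block $q$. Since all projections in a connected component have the same rank (Proposition \ref{prop:reflecting proj}), and that common rank is $\dim \Hil_{A_q}$, every column of the block has the same height — so the block is genuinely rectangular, as a BPT block must be. Fix a reference column, say $k=1$, and pick an orthonormal eigenbasis $\{\ket{e^q_{i1}}\}_{i=1}^{\dim \Hil_{A_q}}$ of $\Pi^q_{v_1}$. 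For every other column $k$, choose (minimality guarantees path-independence up to a phase, which we fix by the positivity of the nonzero singular value as in Eq.\ \eqref{eq:def of path isometry}) a path isometry $S_{k1}^q$ from the eigenspace of $\Pi^q_{v_1}$ to that of $\Pi^q_{v_k}$, and set $\ket{e^q_{ik}} := S_{k1}^q \ket{e^q_{i1}}$. Arranging the $\ket{e^q_{ik}}$ with $i$ the row index and $k$ the column index defines the BPT.

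Next I would check that this is a legitimate BPT and identify the BPOs. The vectors $\{\ket{e^q_{ik}}\}$ are orthonormal: within a fixed column this is inherited from the reference basis through the isometry property $S_{k1}^{q\dagger}S_{k1}^q = \Pi^q_{v_1}$; across distinct columns $k\ne l$ in the same block, $\braket{e^q_{ik}}{e^q_{jl}} = \bra{e^q_{i1}} S_{k1}^{q\dagger} S_{l1}^q \ket{e^q_{j1}}$, and $S_{k1}^{q\dagger}S_{l1}^q$ is (proportional to) a path isometry from column $l$ to column $k$ whose image lies in the eigenspace of $\Pi^q_{v_k}$ while its source lies in that of $\Pi^q_{v_l}$ — one reduces the product of projections using the reflection relations exactly as in the proof that path isometries are partial isometries; across distinct blocks orthogonality is automatic because the connected components are mutually orthogonal. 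Having the orthonormal table, the BPOs are defined as in Eq.\ \eqref{eq:BPO def}, $S_{kl}^q := \sum_i \ket{e^q_{ik}}\bra{e^q_{il}}$, and a short computation shows $S_{kl}^q = S_{k1}^q S_{1l}^q = S_{k1}^q (S_{l1}^q)^\dagger$, which is a composition of path isometries along the circular-free path through $\Pi^q_{v_1}$, hence (up to normalization) itself a path isometry in the network. This establishes the first claim.

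For the spanning claim, recall from the proof of Lemma \ref{lem:Minimality of reflecting projections} that $\mathcal{A} = \mathsf{span}\{S_{\boldsymbol v}\}$, the span of all path isometries. So it suffices to show every path isometry $S_{\boldsymbol v}$ is a linear combination — in fact a scalar multiple — of some $S_{kl}^q$. A path $\boldsymbol v = (v_1,\dots,v_n)$ stays within a single connected component $q$ (a path cannot cross between orthogonal components), so its endpoints $\Pi_{v_1}, \Pi_{v_n}$ are minimal projections in component $q$. If they happen to be MSMP elements, then by minimality (path-independence, Lemma \ref{lem:Minimality of reflecting projections}) $S_{\boldsymbol v} \propto S^q_{kl}$ where $\Pi_{v_1} = \Pi^q_{v_k}$, $\Pi_{v_n} = \Pi^q_{v_l}$. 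If an endpoint is a minimal projection of the network that is \emph{not} in the chosen MSMP, I would use completeness: $I_{\mathcal A} = \sum_m \Pi^q_{v_m}$ acts as the identity on it, so $\Pi_{v_1} = \sum_m \Pi^q_{v_m}\Pi_{v_1}$, and each term $\Pi^q_{v_m}\Pi_{v_1}$ is (proportional to) a path isometry from $\Pi_{v_1}$ to $\Pi^q_{v_m}$ by minimality — this rewrites $S_{\boldsymbol v}$ as a linear combination of path isometries whose endpoints are MSMP elements, reducing to the previous case. Thus $\{S_{kl}^q\}$ spans $\mathcal A$.

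The main obstacle I anticipate is the careful bookkeeping in the spanning argument when the network contains minimal projections outside the MSMP, and in particular making sure the "rewriting" via $I_{\mathcal A}$ genuinely terminates and stays inside $\mathsf{span}\{S^q_{kl}\}$ rather than generating ever-longer products; here the key leverage is that once minimality is in hand, \emph{any} product of network projections collapses to a scalar times a single path isometry depending only on its endpoints, so there is no combinatorial blow-up. A secondary subtlety is the phase/normalization convention that makes the $S^q_{k1}$ mutually compatible so that $S^q_{kl} = S^q_{k1}(S^q_{l1})^\dagger$ holds on the nose; fixing this by the positive-singular-value normalization of Eq.\ \eqref{eq:def of path isometry} and invoking path-independence from Lemma \ref{lem:Minimality of reflecting projections} handles it cleanly.
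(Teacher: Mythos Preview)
Your proposal is correct and follows essentially the same route as the paper: decompose into connected components, fix a reference MSMP projection $\Pi_{v_1}^q$ in each, populate the other columns via chosen path isometries $S_{k1}^q$, set $S_{kl}^q = S_{k1}^q(S_{l1}^q)^\dagger$, and then show spanning by reducing every path isometry $S_{\boldsymbol v}$ to a combination of the $S_{kl}^q$ using completeness and path-independence. The only notable difference is in the spanning step: the paper sandwiches directly, $S_{\boldsymbol v} = I_{\mathcal A}\,S_{\boldsymbol v}\,I_{\mathcal A} = \sum_{k,l}\Pi_{v_k^q}S_{\boldsymbol v}\Pi_{v_l^q}$, which handles both endpoints simultaneously and avoids your case split (and the small imprecision where your one-sided rewriting $\Pi_{v_1} = \sum_m \Pi_{v_m}^q\Pi_{v_1}$ only fixes one endpoint at a time---you would need to apply it again on the right to actually land on MSMP-to-MSMP path isometries).
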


\begin{proof}
	In order to construct the aforementioned BPT, we will first select
	a subset of path isometries in the network to be the BPOs.\footnote{It should be noted that the selection of BPOs is not unique and depends
		on the arbitrary selection of paths between the elements of the MSMP.
		This freedom, however, only changes the individual BPOs by a constant
		factor, which does not affect the generalized bipartition structure
		captured by the BPT.} Let $\left\{ \Pi_{v_{k}^{q}}\right\} $ be all the elements of the MSMP
	that belong to the connected component $q$ and let $\Pi_{v_{1}^{q}}$
	be a single, arbitrarily chosen, element. We first select the path isometries
	$\left\{ S_{k1}^{q}\right\} $ by arbitrarily choosing a path from
	$\Pi_{v_{1}^{q}}$ to each $\Pi_{v_{k}^{q}}$ for $k>1$ and $S_{11}^{q}:=\Pi_{v_{1}^{q}}$.
	We then define the BPOs for all $k,l\geq1$ to be $S_{kl}^{q}:=S_{k1}^{q}S_{l1}^{q\dagger}$
	which are just path isometries from $\Pi_{v_{l}^{q}}$ to $\Pi_{v_{k}^{q}}$
	that go through $\Pi_{v_{1}^{q}}$.
	
	For each connected component $q$, we now construct the corresponding block of
	the BPT. First, we choose an orthonormal basis $\left\{ \ket{e_{i1}^{q}}\right\} _{i=1..r_{q}}$
	for the eigenspace of $\Pi_{v_{1}^{q}}$, where $r_{q}$ is the rank
	of projections in the $q$the component. Then, we populate the first column of the
	block with $\ket{e_{i1}^{q}}$, such that $i$ is the row index, and
	each subsequent column $k>1$ is populated by the basis $\ket{e_{ik}^{q}}:=S_{k1}^{q}\ket{e_{i1}^{q}}$.
	As a result,
	\begin{equation}
	S_{kl}^{q}=S_{k1}^{q}S_{l1}^{q\dagger}=S_{k1}^{q}\Pi_{v_{1}^{q}}S_{l1}^{q\dagger}=\sum_{i=1..r_{q}}S_{k1}^{q}\ket{e_{i1}^{q}}\bra{e_{i1}^{q}}S_{l1}^{q\dagger}=\sum_{i=1..r_{q}}\ket{e_{ik}^{q}}\bra{e_{il}^{q}},
	\end{equation}
	so $\left\{ S_{kl}^{q}\right\} $ are indeed the BPOs of this block of the BPT.
	
	Since $\mathcal{A}$ is spanned by products of $\left\{ \Pi_{v}\right\} $
	which are proportional to path isometries $\left\{ S_{\boldsymbol{v}}\right\} $,
	it suffices to show that every $S_{\boldsymbol{v}}$ is spanned
	by $\left\{ S_{kl}^{q}\right\} $ in order to show that $\left\{ S_{kl}^{q}\right\} $
	spans $\mathcal{A}$. If $\left\{ \Pi_{v_{k}^{q}}\right\} $ is a
	MSMP, then by definition $I_{\mathcal{A}}=\sum_{q,k}\Pi_{v_{k}^{q}}$
	is the identity of the algebra and
	\begin{equation}
	S_{\boldsymbol{v}}=I_{\mathcal{A}}S_{\boldsymbol{v}}I_{\mathcal{A}}=\sum_{kl}\Pi_{v_{k}^{q}}S_{\boldsymbol{v}}\Pi_{v_{l}^{q}},\label{eq:BPT const from refnet eq 1}
	\end{equation}
	where $q$ is the connected component that contains the path $\boldsymbol{v}$.
	Every non-vanishing term $\Pi_{v_{k}^{q}}S_{\boldsymbol{v}}\Pi_{v_{l}^{q}}$
	is proportional to the path isometry $S_{\left(v_{k}^{q},\boldsymbol{v},v_{l}^{q}\right)}$
	from $\Pi_{v_{l}^{q}}$ to $\Pi_{v_{k}^{q}}$ along the path $\boldsymbol{v}$.
	Furthermore, if minimality holds, then according to Lemma \ref{lem:Minimality of reflecting projections }
	path isometries are path-independent, so $S_{\left(v_{k}^{q},\boldsymbol{v},v_{l}^{q}\right)}\propto S_{kl}^{q}$.
	Therefore, either $\Pi_{v_{k}^{q}}S_{\boldsymbol{v}}\Pi_{v_{l}^{q}}=0$
	or $\Pi_{v_{k}^{q}}S_{\boldsymbol{v}}\Pi_{v_{l}^{q}}\propto S_{kl}^{q},$
	so Eq.\ (\ref{eq:BPT const from refnet eq 1}) implies that $S_{\boldsymbol{v}}$
	is in the span of $\left\{ S_{kl}^{q}\right\} $.
\end{proof}
The practical takeaway from this lemma is that in order to construct
the BPT of an algebra generated by a reflection network we need to
(arbitrarily) pick a basis $\left\{ \ket{e_{i1}^{q}}\right\} _{i=1..r_{q}}$
for the eigenspace of a single MSMP element $\Pi_{v_{1}^{q}}$ in
each connected component $q$, and map those basis elements to the
eigenspaces of the rest of MSMP $\left\{ \Pi_{v_{k}^{q}}\right\} $
in $q$ using (arbitrarily chosen) path isometries $\left\{ S_{k1}^{q}\right\} $.
The resulting set $\left\{ \ket{e_{ik}^{q}}\right\} $ are the basis
elements that reside in block $q$, row $i$, column $k$ of the BPT.

\subsubsection*{Procedure}

The procedure for constructing the BPT is essentially what we did
in the proof of Lemma \ref{lem:BPT construction from refnet}

\begin{algorithmic}[1] 
	\Procedure{ConstructIrrepBasis}{$ReflectNet$}
	\State $BPT \gets \left\{ \right\} $
	
	\ForAll{$ConnComp \subseteq ReflectNet$} 
	\State $Block \gets \left\{ \right\} $
	\State $MaxIndepSet \gets$\Call{PickMaxIndependentSet}{$ConnComp$} 
	\State $\Pi_1 \gets$\Call{PickAnyElement}{$MaxIndepSet$} 
	\State $FirstColumnBasis \gets$  \Call{ConstructEigenBasis}{$\Pi_1$}
	\State $Block \gets$ \Call{AddColumn}{$Block, FirstColumnBasis$} 
	
	\ForAll{$\Pi_{k\neq 1} \in MaxIndepSet$} 
	\State $S_{k1} \gets$\Call{ConstructPathIsometry}{$ConnComp, \Pi_1, \Pi_k$} 
	\State $NewColumnBasis \gets$ \Call{MapBasis}{$S_{k1}, FirstColumnBasis$} 
	\State $Block \gets$ \Call{AddColumn}{$Block, NewColumnBasis$} 
	\EndFor
	
	\State $BPT \gets$ \Call{AddBlock}{$BPT, Block$}
	
	\EndFor
	
	\State \Return  $BPT$
	\EndProcedure
\end{algorithmic}

\noindent For each connected component, the procedure chooses a maximal
independent set of orthogonal projections, which is the subset of
the MSMP in the component, and uses it to construct the columns of
a single block of the BPT. In order to construct the block, it arbitrarily
picks a single projection $\Pi_{1}$ in the MSMP and arbitrarily constructs
the basis that span its eigenspace; these basis become the first column
of the block. The rest of the columns are constructed by picking each
of $\Pi_{k\neq1}$ in the MSMP and constructing a path isometry $S_{k1}$
from the eigenspace of $\Pi_{1}$ to $\Pi_{k}$. The path isometry
$S_{k1}$ is then used to map the elements of the\emph{ }first\emph{
}column to the elements of the $k$th column\emph{. }Once each block
is constructed, it is added to the BPT.

\subsubsection*{Toy Example (continued)}

In the reflection network of Fig. \ref{fig:toy example refnet}(c)
there are three connected components, but two of them, $\left\{ \Pi_{Z;1}^{\left(0\right)}\right\} $
and $\left\{ \Pi_{Z;2}^{\left(0\right)}\right\} $, consist of a single
projection which correspond to blocks with a single column. Arbitrarily choosing to use
the same basis as we have used before,
these single column blocks are %
\begin{tabular}{|c|}
	\hline 
	$4$\tabularnewline
	\hline 
	$_{-2}^{+1}$\tabularnewline
	\hline 
\end{tabular} and %
\begin{tabular}{|c|}
	\hline 
	$8$\tabularnewline
	\hline 
	$_{-6}^{+5}$\tabularnewline
	\hline 
\end{tabular}.

For the remaining block, we identify $\left\{ \Pi_{Z;1}^{\left(1/2\right)},\Pi_{Z;2}^{\left(1/2\right)}\right\} $
to be the block's maximal independent set. We pick $\Pi_{Z;1}^{\left(1/2\right)}$
to be the projection associated with the first column and we pick
its basis to be %
\begin{tabular}{|c|}
	\hline 
	$3$\tabularnewline
	\hline 
	$_{-}^{+12}$\tabularnewline
	\hline 
\end{tabular}. The path isometry that maps the first column to the second column
associated with $\Pi_{Z;2}^{\left(1/2\right)}$ is constructed by
taking the only simple path between them (we have fixed the normalization
after the fact)
\begin{equation}
S_{21}\propto\Pi_{Z;2}^{\left(1/2\right)}\Pi_{X;1}\Pi_{Z;1}^{\left(1/2\right)}=\frac{1}{2}\ket 7\bra 3+\frac{1}{2}\ket{_{-}^{+56}}\bra{_{-}^{+12}}.
\end{equation}
Then, by mapping the first column using this isometry, we get the
second column %
\begin{tabular}{|c|}
	\hline 
	$7$\tabularnewline
	\hline 
	$_{-}^{+56}$\tabularnewline
	\hline 
\end{tabular}.\footnote{Although we already identified this basis when we first wrote
	the projection $\Pi_{Z;2}^{\left(1/2\right)}$, we could not know,
	\emph{a priori}, how the path isometry would map the eigenbasis between
	projections. It is only due to the simplicity of this toy example
	that the basis we used to express the projections after
	scattering ended up as the basis in the BPT.} Combining all the columns into blocks completes the construction
of the BPT

\noindent %
\noindent\begin{minipage}[c]{1\columnwidth}%
	\begin{tabular}{>{\centering}m{0.9\columnwidth}>{\raggedright}m{0.1\columnwidth}}
		& \tabularnewline
		\centering{}%
		\begin{tabular}{c|c|cc}
			\cline{1-1} 
			\multicolumn{1}{|c|}{$4$} & \multicolumn{1}{c}{} &  & \tabularnewline
			\cline{1-1} 
			\multicolumn{1}{|c|}{$_{-2}^{+1}$} & \multicolumn{1}{c}{} &  & \tabularnewline
			\cline{1-2} \cline{2-2} 
			& $8$ &  & \tabularnewline
			\cline{2-2} 
			& $_{-6}^{+5}$ &  & \tabularnewline
			\cline{2-4} \cline{3-4} \cline{4-4} 
			\multicolumn{1}{c}{} &  & \multicolumn{1}{c|}{$3$} & \multicolumn{1}{c|}{$7$}\tabularnewline
			\cline{3-4} \cline{4-4} 
			\multicolumn{1}{c}{} &  & \multicolumn{1}{c|}{$_{-}^{+12}$} & \multicolumn{1}{c|}{$_{-}^{+56}$}\tabularnewline
			\cline{3-4} \cline{4-4} 
		\end{tabular} & %
		\begin{minipage}[c]{0.04\columnwidth}%
			~%
		\end{minipage}\tabularnewline
		& \tabularnewline
	\end{tabular}%
\end{minipage}

The above BPT tells us that the Hilbert space decomposes into irreps
as
\begin{equation}
\mathcal{H}=\mathcal{H}_{A_{1}}\oplus\mathcal{H}_{A_{2}}\oplus\mathcal{H}_{A_{3}}\otimes\mathcal{H}_{B_{3}},\label{eq:toy model irrep dec}
\end{equation}
where $A_{q}$ are the subsystems associated with the multiplicity
of irreps and $B_{q}$ are the subsystems on which the algebra acts
irreducibly. In this case, blocks $1$ and $2$ (the single column
blocks) specify one-dimensional irreps, so the one-dimensional subsystems
$B_{q=1,2}$ are absorbed into the two-dimensional multiplicities
$A_{q=1,2}$. The last block specifies a two dimensional irrep $B_{3}$,
with a two-dimensional multiplicity $A_{3}$.

According to Theorem \ref{thm:Wedderburn decomposition}, with respect
to this irrep decomposition, all operators in the algebra are of the
form
\begin{equation}
M=c_{1}I_{A_{1}}+c_{2}I_{A_{2}}+I_{A_{3}}\otimes M_{B_{3}}
\end{equation}
for any scalars $c_{1}$, $c_{2}$ and $2\times2$ matrices $M_{B_{3}}$.
In particular, the generators $\left\{ Z,X\right\} $ can also be
presented in this form. To see this explicitly, we change the original basis
into the irrep basis given by the BPT (reading the BPT from left to
right, top to bottom)
\begin{equation}
\left\{ \ket 1,\ket 2,\ket 3,\ket 4,\ket 5,\ket 6,\ket 7,\ket 8\right\} \longmapsto\left\{ \ket 4,\ket{_{-2}^{+1}},\ket 8,\ket{_{-6}^{+5}},\ket 3,\ket 7,\ket{_{-}^{+12}},\ket{_{-}^{+56}}\right\} .
\end{equation}
Assuming $Z=a\Pi_{Z;1}+b\Pi_{Z;2}$ and $X=c\Pi_{X;1}+d\Pi_{X;2}$,
for some eigenvalues $a,b,c,d$, we re-express their matrices using
the irrep basis, thus simultaneously block-diagonalizing both $Z$
and $X$ 
\begin{align}
Z= & \begin{pmatrix}a\\
& a\\
&  & a\\
&  &  & a\\
&  &  &  & b\\
&  &  &  &  & b\\
&  &  &  &  &  & b\\
&  &  &  &  &  &  & b
\end{pmatrix} & \longmapsto & \begin{pmatrix}a\\
& a\\
&  & b\\
&  &  & b\\
&  &  &  & a\\
&  &  &  &  & b\\
&  &  &  &  &  & a\\
&  &  &  &  &  &  & b
\end{pmatrix}\\
X= & \left(\begin{array}{cccccccc}
\frac{c+3d}{4} & \frac{c-d}{4} &  &  & \frac{c-d}{4} & \frac{c-d}{4}\\
\frac{c-d}{4} & \frac{c+3d}{4} &  &  & \frac{c-d}{4} & \frac{c-d}{4}\\
&  & \frac{c+d}{2} &  &  &  & \frac{c-d}{2}\\
&  &  & d\\
\frac{c-d}{4} & \frac{c-d}{4} &  &  & \frac{c+3d}{4} & \frac{c-d}{4}\\
\frac{c-d}{4} & \frac{c-d}{4} &  &  & \frac{c-d}{4} & \frac{c+3d}{4}\\
&  & \frac{c-d}{2} &  &  &  & \frac{c+d}{2}\\
&  &  &  &  &  &  & d
\end{array}\right) & \longmapsto & \left(\begin{array}{cccccccc}
d\\
& d\\
&  & d\\
&  &  & d\\
&  &  &  & \frac{c+d}{2} & \frac{c-d}{2}\\
&  &  &  & \frac{c-d}{2} & \frac{c+d}{2}\\
&  &  &  &  &  & \frac{c+d}{2} & \frac{c-d}{2}\\
&  &  &  &  &  & \frac{c-d}{2} & \frac{c+d}{2}
\end{array}\right).
\end{align}

Now the irrep decomposition (\ref{eq:toy model irrep dec}) of the
algebra generated by $\left\{ Z,X\right\} $ can be easily observed
from the block-diagonal form of the generators.

\subsection{The Algorithm and Its Proof \label{subsec:The-Algorithm}}

Following the results of the previous subsections, we are now in the position
to formally present and prove the correctness of the irrep decomposition
algorithm. The input of this algorithm is a finite set of self-adjoint
matrices\footnote{It is not necessary to assume self-adjoint generators, but it is convenient.
	We can always express non-self-adjoint generators as sums of self-adjoint
	matrices in the algebra, so this assumption is not restrictive.} $\mathcal{M}\subseteq\mathcal{L}\left(\mathcal{H}\right)$ that generate
the algebra $\mathcal{A}:=\left\langle \mathcal{M}\right\rangle $.
The output is a set of basis elements $\left\{ \ket{e_{ik}^{q}}\right\} $
labeled by the irreps of $\mathcal{A}$ where $q$ specifies the distinct
irreps, $i$ specifies the multiple instances of identical irreps,
and $k$ specifies the distinct basis elements inside each irrep.
The irrep decomposition 
\begin{equation}
\mathcal{H}\cong\bigoplus_{q}\mathcal{H}_{A_{q}}\otimes\mathcal{H}_{B_{q}}
\end{equation}
is then given by reinterpreting the basis $\ket{e_{ik}^{q}}$ as the
product basis $\ket{a_{i}^{q}}\ket{b_{k}^{q}}$ of $\mathcal{H}_{A_{q}}\otimes\mathcal{H}_{B_{q}}$
(formally, we will define the isometry $V:=\sum_{q,i,k}\ket{a_{i}^{q}}\ket{b_{k}^{q}}\bra{e_{ik}^{q}}$
that maps $\mathcal{H}$ into $\bigoplus_{q}\mathcal{H}_{A_{q}}\otimes\mathcal{H}_{B_{q}}$
and thus specifies the decomposition).

With the procedures defined in the previous subsections, the top-level
procedure of the algorithm is as follows:

\noindent 
\begin{algorithm}[H]
	
	\begin{algorithmic}[1] 
		
		\Procedure{IrrepDecomposition}{$\mathcal{M}$}     
		
		\State $SpecProjs \gets$  \Call{GetAllSpectralProjections}{$\mathcal{M}$}
		
		\State $ReflectNet \gets$ \Call{ScatterAllProjections}{$SpecProjs$}
		
		\State $ReflectNet \gets$ \Call{EstablishMinimality}{$ReflectNet$}
		
		\State $ReflectNet \gets$ \Call{EstablishCompleteness}{$ReflectNet$}
		
		\State $BPT \gets$ \Call{ConstructIrrepBasis}{$ReflectNet$}
		
		\State \Return $BPT$
		\EndProcedure
		
	\end{algorithmic}
	
	\caption{\label{alg:Irrepd-decomposition}Irrep decomposition of matrix algebra}
\end{algorithm}

\noindent The algorithm returns a BPT since this is the natural data
structure to organize the irrep basis.

The correctness of this algorithm follows from the proof of the following
theorem:
\begin{thm}
	\label{thm:Algorithm correctness}Let $\mathcal{H}$ be a finite-dimensional
	Hilbert space, $\mathcal{M}\subseteq\mathcal{L}\left(\mathcal{H}\right)$
	a finite set of self-adjoint matrices, and $\mathcal{\mathcal{A}}:=\left\langle \mathcal{M}\right\rangle $
	the matrix algebra generated by $\mathcal{M}$. Then, Algorithm \ref{alg:Irrepd-decomposition}
	produces the basis $\left\{ \ket{e_{ik}^{q}}\right\} $ of $\mathcal{H}$
	such that the isometry $V:=\sum_{q,i,k}\ket{a_{i}^{q}}\ket{b_{k}^{q}}\bra{e_{ik}^{q}}$
	explicitly specifies the irrep decomposition
	\begin{equation}
	\mathcal{H}\cong\bigoplus_{q}\mathcal{H}_{A_{q}}\otimes\mathcal{H}_{B_{q}}\label{eq:Alg correctness Hilb space decomp}
	\end{equation}
	defined in Theorem \ref{thm:Wedderburn decomposition} for the algebra
	$\mathcal{\mathcal{A}}$. \label{full theorem}
\end{thm}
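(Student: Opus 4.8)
The plan is to follow Algorithm~\ref{alg:Irrepd-decomposition} stage by stage, checking that two invariants are maintained throughout — that the running set of projections still generates $\mathcal{A}$, and, from the scattering stage on, that it forms a proper reflection network — and then arguing that the bipartition table produced at the end encodes precisely the Wedderburn decomposition of Theorem~\ref{thm:Wedderburn decomposition}. Since the lemmas of this section already do the real work, the proof is essentially an exercise in chaining them together while verifying termination at each stage.

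For the input conversion: by the proposition on the algebra generated by a single self-adjoint operator, the spectral projections of each $M_j\in\mathcal{M}$ lie in $\langle M_j\rangle\subseteq\mathcal{A}$ and their union generates $\mathcal{A}$, so after \textproc{GetAllSpectralProjections} the generating invariant holds; these projections also resolve $I_{\mathcal{A}}$, a fact reused below. In the scattering stage, each step replaces a pair $\Pi_1,\Pi_2$ by the projections $\{\Pi_i^{(\lambda)}\}$ of Definition~\ref{def:Scattering }, which — null pieces included — are spectral projections of the self-adjoint element $\Pi_1\Pi_2\Pi_1\in\mathcal{A}$ and hence lie in $\mathcal{A}$; since $\Pi_i=\sum_\lambda\Pi_i^{(\lambda)}$, the generated algebra is unchanged. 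Lemma~\ref{lem: scattering of projections } guarantees the output pairs $\{\Pi_1^{(\lambda)},\Pi_2^{(\lambda)}\}$ are reflecting and mutually orthogonal, so the update rules of Figure~\ref{fig:scattering update rule} correctly record the new relations; and because any step that does not merely relabel an edge strictly lowers the rank of one of the projections, while rank-$1$ projections are automatically reflecting, the loop terminates with a proper reflection network generating $\mathcal{A}$.

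For minimality: Lemma~\ref{lem:Minimality of reflecting projections } says the vertices of a proper reflection network are all minimal in $\mathcal{A}$ iff all path isometries sharing endpoints are proportional. If this fails for paths $\boldsymbol v,\boldsymbol u$, Proposition~\ref{prop:reduction of reflection network} supplies spectral projections of $U:=S_{\boldsymbol v}S_{\boldsymbol u}^{\dagger}\in\mathcal{A}$ that are not reflecting with $\Pi_{v_1}$; scattering the connected component of $\Pi_{v_1}$ against them stays inside $\mathcal{A}$, strictly reduces ranks, and hence terminates, leaving a proper reflection network of \emph{minimal} projections still generating $\mathcal{A}$. For completeness: a maximal independent set $\{\Pi_{v_k}\}$ of the network is a family of pairwise-orthogonal minimal projections, and if $\sum_k\Pi_{v_k}$ does not act as the identity on some $\Pi_v$, Lemma~\ref{lem:Completness} produces a complementary minimal projection $\tilde{\Pi}_v\in\mathcal{A}$ orthogonal to that family — and since $\tilde{\Pi}_v$ is already minimal, adding it triggers no further scattering — so after finitely many additions we obtain an MSMP. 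Minimality and completeness then let Lemma~\ref{lem:BPT construction from refnet} build a bipartition table whose bipartition operators $\{S^q_{kl}\}$ are path isometries that span $\mathcal{A}$. Because the $S^q_{kl}$ are, by their very construction, an operator basis for $\{\bigoplus_q I_{A_q}\otimes M_{B_q}\}$, we get $\mathcal{A}=\mathsf{span}\{S^q_{kl}\}=\{\bigoplus_q I_{A_q}\otimes M_{B_q}\}$; hence the isometry $V=\sum_{q,i,k}\ket{a^q_i}\ket{b^q_k}\bra{e^q_{ik}}$ read off the table realizes exactly the decomposition~\eqref{eq:Alg correctness Hilb space decomp}, which is the assertion of the theorem.

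I expect the main obstacle to be the minimality stage. In general one cannot enumerate all paths in a reflection network, so \textproc{EstablishMinimality} is a ``procedure'' only in the informal sense, and — as the authors note — it is open whether scattering can ever produce a non-minimal network at all. Accordingly, the load-bearing points of the argument are (i) verifying that the Proposition~\ref{prop:reduction of reflection network} correction can always be applied when minimality fails and that the resulting nested re-scattering terminates, and (ii) checking that this re-scattering, together with the later insertion of complementary projections, never disturbs the ``generates $\mathcal{A}$'' and ``is a proper reflection network'' invariants. The remaining stages — the input conversion, the scattering loop, the completeness step, and the passage from the final table to $V$ — are routine given the lemmas above and the BPT/Wedderburn correspondence of Section~\ref{subsec:Matrix-Algebras-and their representation}.
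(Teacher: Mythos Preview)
Your proof is correct and follows essentially the same route as the paper's own argument: chain Proposition~\ref{prop:MA of single operator }, Lemma~\ref{lem: scattering of projections }, Lemma~\ref{lem:Minimality of reflecting projections }/Proposition~\ref{prop:reduction of reflection network}, Lemma~\ref{lem:Completness}, and Lemma~\ref{lem:BPT construction from refnet} while tracking that the generated algebra is preserved and the procedure terminates. One small slip: the null projections $\Pi_i^{(0)}$ are \emph{not} spectral projections of $\Pi_1\Pi_2\Pi_1$ (cf.\ the footnote to Definition~\ref{def:Scattering }) --- they lie in $\mathcal{A}$ by Eq.~\eqref{eq:def of null projections in scattering}, as a difference of elements already shown to be in $\mathcal{A}$, which is exactly how the paper handles them.
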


\begin{proof}
	From Proposition \ref{prop:MA of single operator } we know that for
	each $M\in\mathcal{M}$, the spectral projections of $M$ generate
	the algebra $\left\langle M\right\rangle $ so the set of projections
	produced by \textproc{GetAllSpectralProjections} generates $\mathcal{A}$.
	
	During the procedure \textproc{ScatterAllProjections}, we break down
	pairs of projections $\left\{ \Pi_{1},\Pi_{2}\right\} $ by the scattering
	operation as defined in Definition \ref{def:Scattering }. The resulting
	projections $\left\{ \Pi_{1}^{\left(\lambda\right)},\Pi_{2}^{\left(\lambda\right)}\right\} $
	are in the algebra generated by $\left\{ \Pi_{1},\Pi_{2}\right\} $
	because -- again using Proposition \ref{prop:MA of single operator }
	-- they are the spectral projections of the operators $\Pi_{i}\Pi_{j}\Pi_{i}$
	that are in the algebra generated by $\left\{ \Pi_{1},\Pi_{2}\right\} $
	(for null elements $\Pi_{i}^{\left(0\right)}$ it is true by the definition of
	Eq.\ (\ref{eq:def of null projections in scattering})). Conversely,
	the predecessor projections $\left\{ \Pi_{1},\Pi_{2}\right\} $ are
	in the algebra generated by $\left\{ \Pi_{1}^{\left(\lambda\right)},\Pi_{2}^{\left(\lambda\right)}\right\} $
	because each $\left\{ \Pi_{i}^{\left(\lambda\right)}\right\} $ sums
	to $\Pi_{i}$. Therefore, the elements $\left\{ \Pi_{1},\Pi_{2}\right\} $
	before, and the elements $\left\{ \Pi_{1}^{\left(\lambda\right)},\Pi_{2}^{\left(\lambda\right)}\right\} $
	after scattering, generate the same algebra. So, after every iteration
	of scattering, the resulting reflection network generates the same
	algebra $\mathcal{A}$ as before.
	
	The procedure \textproc{ScatterAllProjections} keeps track of the
	known and unknown reflection relations as the network evolves, so
	when it stops, all pairs of projections must be reflecting. This procedure
	is guaranteed to stop because, according to Lemma \ref{lem: scattering of projections },
	every scattering iteration either identifies a previously unknown
	reflecting pair, or, the pair scatters into a series of reflecting
	pairs of lower rank. Eventually, all projections will either be reflecting
	or they will be reduced to rank 1 and then they must again be reflecting.
	Therefore, the procedure \textproc{ScatterAllProjections} produces,
	in a finite number of iterations, a proper reflection network that
	consists of projections that generate $\mathcal{A}$.
	
	The procedure \textproc{EstablishMinimality} establishes that all
	elements of a proper reflection network are minimal projections of
	the algebra $\mathcal{A}$ by checking the condition of minimality
	given by Lemma \ref{lem:Minimality of reflecting projections }.
	
	The procedure \textproc{EstablishCompleteness} completes the maximal
	orthogonal set of projections in the reflection network to an MSMP
	as prescribed by Lemma \ref{lem:Completness}.
	
	At this point, we have a reflection network that generates $\mathcal{A}$
	and is known to be minimal and complete, so the conditions of Lemma \ref{lem:BPT construction from refnet}
	hold. The procedure \textproc{ConstructIrrepBasis} constructs the
	basis $\left\{ \ket{e_{ik}^{q}}\right\} $ according to the procedure
	described in the proof of Lemma \ref{lem:BPT construction from refnet},
	so the partial isometries 
	\begin{equation}
	S_{kl}^{q}:=\sum_{i}\ket{e_{ik}^{q}}\bra{e_{il}^{q}}
	\end{equation}
	are the path isometries given by that Lemma that span the algebra $\mathcal{A}$.
	
	With respect to the decomposition (\ref{eq:Alg correctness Hilb space decomp})
	specified by the isometry $V:=\sum_{q,i,k}\ket{a_{i}^{q}}\ket{b_{k}^{q}}\bra{e_{ik}^{q}}$,
	the operators $S_{kl}^{q}$ take the form
	\begin{equation}
	VS_{kl}^{q}V^{\dagger}=\sum_{i}\ket{a_{i}^{q}}\ket{b_{k}^{q}}\bra{a_{i}^{q}}\bra{b_{l}^{q}}=I_{A_{q}}\otimes\ket{b_{k}^{q}}\bra{b_{l}^{q}},
	\end{equation}
	so they span all matrices of the form 
	\begin{equation}
	M=\bigoplus_{q}I_{A_{q}}\otimes M_{B_{q}}.\label{eq:Alg correctness matrix decompos}
	\end{equation}
	Therefore, with respect to the decomposition (\ref{eq:Alg correctness Hilb space decomp}),
	the algebra 
	\begin{equation}
	\mathcal{A}=\mathsf{span}\left\{ S_{kl}^{q}\cong I_{A_{q}}\otimes\ket{b_{k}^{q}}\bra{b_{l}^{q}}\right\} 
	\end{equation}
	consists of all, and only, the matrices of the form (\ref{eq:Alg correctness matrix decompos}),
	as promised by Theorem \ref{thm:Wedderburn decomposition}.
\end{proof}

\section{Examples of State Reduction via Irrep Decomposition of Operational Constraints \label{sec:Examples-of-State}}

\subsection{Particle with Orbital and Spin Angular Momentum}

Here we consider a single particle with orbital angular momentum $l$ and
spin $1/2$. In light of our discussion in Section \ref{subsec: State Reduction from Operational Constraints},
we would like to know how the quantum state of the particle reduces
if operationally we cannot distinguish between spin and orbital angular momentum and are
constrained to measurements of total angular momentum. This question,
of course, can be addressed with the standard formalism of group representation
theory (or ``addition of angular momentum'' as it is called in physics
textbooks). From this formalism we know that the total angular momentum
operators are reducible and split the Hilbert space of a spin-orbit
particle into $l+1/2$ and $l-1/2$ sectors of total angular momentum,
which are captured by Clebsch-Gordan coefficients. We will now show
that the same conclusion can be reached, including the particular
Clebsch-Gordan coefficients, without relying on the formalism of angular
momentum addition but by instead using the scattering of projections as described
in Section \ref{sec:Irrep-Decompositions-of}.

The Hilbert space of a spin $1/2$ particle with orbital angular momentum $l$
is the tensor product $\mathcal{H}_{L}\otimes\mathcal{H}_{S}$ of
orbital and spin degrees of freedom of dimensions $2l+1$ and $2$
respectively. The total angular momentum component along the $r$
axis (where $r$ can stand for any direction) is given by the operator
\begin{equation}
J_{r}:=L_{r}\otimes I+I\otimes S_{r}\,,
\end{equation}
where $L_{r}$ and $S_{r}$ are the operators of orbital angular momentum
and spin along $r$. Given our operational constraints, we
should look for the irrep structure of the algebra $\left\langle \left\{ J_{r}\right\} \right\rangle $
where $r$ assumes all directions.

We will denote with $\ket{r;m_{L},m_{S}}$ the simultaneous eigenstates
of $L_{r}\otimes I$ and $I\otimes S_{r}$ with the eigenvalues $m_{L}=-l,...,l$
and $m_{S}=\pm\frac{1}{2}$ respectively; we will call these states
the spin-orbit basis. Then, since $J_{r}\ket{r;m_{L},m_{S}}=m_{J}\ket{r;m_{L},m_{S}}$
where $m_{J}=m_{L}+m_{S}$, the spectral decomposition of $J_{r}$
is given by
\begin{equation}
J_{r}=\sum_{m_{J}=-l-\frac{1}{2}}^{l+\frac{1}{2}}m_{J}\Pi_{r;m_{J}}
\end{equation}
and the spectral projections are 
\begin{align}
\Pi_{r;m_{J}} & :=\begin{cases}
\ket{r;\pm l,\pm\frac{1}{2}}\bra{r;\pm l,\pm\frac{1}{2}} & \left|m_{J}\right|=l+\frac{1}{2}\\
\sum_{m_{S}=\pm1/2}\ket{r;m_{J}-m_{S},m_{S}}\bra{r;m_{J}-m_{S},m_{S}} & \left|m_{J}\right|<l+\frac{1}{2}.
\end{cases}
\end{align}
Note that two of the spectral projections (with $\left|m_{J}\right|=l+\frac{1}{2}$)
are rank 1 and the rest are rank 2.

The algebra generated by $\left\{ J_{r}\right\} $, for all $r$,
is also generated by just the two operators $\left\{ J_{z},J_{x}\right\} $.
This is because the rotations $e^{-i\theta J_{x}}$ and $e^{-i\varphi J_{z}}$
are elements of the algebra $\left\langle J_{z},J_{x}\right\rangle $
and every $J_{r}$ can be produced by rotating $e^{-i\varphi J_{z}}e^{-i\theta J_{x}}J_{z}e^{i\theta J_{x}}e^{i\varphi J_{z}}$
with the appropriate angles $\theta,\varphi$. Therefore, in order
to find the irrep structure of the algebra $\left\langle \left\{ J_{r}\right\} \right\rangle $
it is sufficient to consider the algebra generated by the spectral
projections $\left\{ \Pi_{z;m_{J}},\Pi_{x;m_{J}}\right\} $ of $\left\{ J_{z},J_{x}\right\} $.
In the following we denote with $r$ the variable that takes
the values of the two axis $z,x$ and similarly for the capitalized
version $R=Z,X$, which we will later use to label states.

The initial improper reflection network consists of all projections
$\left\{ \Pi_{z;m_{J}},\Pi_{x;m_{J}}\right\} $ for $m_{J}=-l-\frac{1}{2},...,l+\frac{1}{2}$.
Our scattering strategy will be to take the rank 1 projection $\Pi_{x;l+\frac{1}{2}}$
and use it to break all the rank 2 projections $\Pi_{z;m_{J}}$, and
similarly, take the rank 1 projection $\Pi_{z;-l-\frac{1}{2}}$ and
use it to break all the rank 2 projections $\Pi_{x;m_{J}}$. \footnote{The strategy of choosing the scattering pairs is not important from
	the perspective of the raw algorithm we presented in the previous section, but it does make a difference
	in how hard it is to carry it out analytically, as we are doing here.} These scatterings will result in all projections reducing to rank
1, so the reflection network becomes proper and minimal. After that
we will only have to identify the connected components of this network.

The scattering of the pairs $\Pi_{z;m_{J}},\Pi_{x;l+\frac{1}{2}}$
and $\Pi_{x;m_{J}},\Pi_{z;-l-\frac{1}{2}}$ comes down to the spectral
decomposition of
\begin{align}
\Pi_{z;m_{J}}\Pi_{x;l+\frac{1}{2}}\Pi_{z;m_{J}} & =\Pi_{z;m_{J}}\ket{x;l,\frac{1}{2}}\bra{x;l,\frac{1}{2}}\Pi_{z;m_{J}}\\
\Pi_{x;m_{J}}\Pi_{z;-l-\frac{1}{2}}\Pi_{x;m_{J}} & =\Pi_{x;m_{J}}\ket{z;-l,-\frac{1}{2}}\bra{z;-l,-\frac{1}{2}}\Pi_{x;m_{J}}.
\end{align}
Since $\Pi_{r;\pm l\pm\frac{1}{2}}$ ($r=x,z$) are rank 1, they do
not break, so we only need to figure out how the remaining $\Pi_{r;m_{J}}$
break in these scatterings. For that purpose we define the following
states and the associated projections
\begin{align}
\ket{Z;l+\frac{1}{2},m_{J}} & :=\frac{1}{\sqrt{N_{m_{J}}}}\Pi_{z;m_{J}}\ket{x;l,\frac{1}{2}} &  &  & \Pi_{z;m_{J}}^{\left(l+\frac{1}{2}\right)}:=\ket{Z;l+\frac{1}{2},m_{J}}\bra{Z;l+\frac{1}{2},m_{J}}\\
\ket{X;l+\frac{1}{2},m_{J}} & :=\frac{1}{\sqrt{N_{m_{J}}}}\Pi_{x;m_{J}}\ket{z;-l,-\frac{1}{2}} &  &  & \Pi_{x;m_{J}}^{\left(l+\frac{1}{2}\right)}:=\ket{X;l+\frac{1}{2},m_{J}}\bra{X;l+\frac{1}{2},m_{J}}
\end{align}
with the (unimportant) normalization factor $\sqrt{N_{m_{J}}}$.
The capitalized labels $Z,X$ of the axis symbolize the fact that
these are the eigenstates of total angular momentum with eigenvalue
$l+\frac{1}{2}$, as we will see shortly. The scatterings can then
be expressed as
\begin{align}
\Pi_{z;m_{J}}\Pi_{x;l+\frac{1}{2}}\Pi_{z;m_{J}} & =N_{m_{J}}\Pi_{z;m_{J}}^{\left(l+\frac{1}{2}\right)}\\
\Pi_{x;m_{J}}\Pi_{z;-l-\frac{1}{2}}\Pi_{x;m_{J}} & =N_{m_{J}}\Pi_{x;m_{J}}^{\left(l+\frac{1}{2}\right)},
\end{align}
so the rank 1 projections $\Pi_{r;m_{J}}^{\left(l+\frac{1}{2}\right)}$
are one of the spectral projections that come out of scattering, corresponding
to the eigenvalue $N_{m_{J}}$. For $\left|m_{J}\right|=l+\frac{1}{2}$
there is no additional spectral projection since the projections $\Pi_{r;\pm l\pm\frac{1}{2}}$
are rank 1 and they do not break so $\Pi_{r;\pm l\pm\frac{1}{2}}=\Pi_{r;\pm l\pm\frac{1}{2}}^{\left(l+\frac{1}{2}\right)}$,
or equivalently, $\ket{R;l+\frac{1}{2},\pm\left(l+\frac{1}{2}\right)}=\ket{r;\pm l,\pm\frac{1}{2}}$.
For $\left|m_{J}\right|<l+\frac{1}{2}$, the second spectral projection
is given by $\Pi_{r;m_{J}}-\Pi_{r;m_{J}}^{\left(l+\frac{1}{2}\right)}$
and it corresponds to the eigenvalue $0$. Since this is just the
projection on the orthogonal complement of $\ket{R;l+\frac{1}{2},m_{J}}$
in the eigenspace of $\Pi_{r;m_{J}}$, we will have to identify the
orthogonal complements of the states $\ket{R;l+\frac{1}{2},m_{J}}$.

Using their definition above, the states $\ket{Z;l+\frac{1}{2},m_{J}}$, for $\left|m_{J}\right|<l+\frac{1}{2}$,
can be expressed (without worrying about the normalization) as
\begin{align}
\ket{Z;l+\frac{1}{2},m_{J}} & \propto\Pi_{z;m_{J}}\ket{x;l,\frac{1}{2}}\\
& =\sum_{m_{S}=\pm1/2}\ket{z;m_{J}-m_{S},m_{S}}\braket{z;m_{J}-m_{S},m_{S}}{x;l,\frac{1}{2}}\\
& \propto\ket{z;m_{J}-\frac{1}{2},\frac{1}{2}}d_{m_{J}-\frac{1}{2},l}^{l}\left(\frac{\pi}{2}\right)+\ket{z;m_{J}+\frac{1}{2},-\frac{1}{2}}d_{m_{J}+\frac{1}{2},l}^{l}\left(\frac{\pi}{2}\right).
\end{align}
Here we have used the Wigner's ``small'' d-matrix element \cite{rose1995elementary}  $d_{m_{J}-m_{S},l}^{l}\left(\frac{\pi}{2}\right)$
that is obtained from the orbital part of the inner product (the spin
part gives $1/\sqrt{2}$ which we disregard as a normalization factor).
In particular, the specific d-matrix elements we need are given by
\begin{equation}
d_{m_{J}\mp\frac{1}{2},l}^{l}\left(\frac{\pi}{2}\right)=\left(\frac{1}{\sqrt{2}}\right)^{2l}\sqrt{\binom{2l}{l-m_{J}\pm\frac{1}{2}}}
\end{equation}
so, using this expression and normalizing, we obtain the state 
\begin{align*}
\ket{Z;l+\frac{1}{2},m_{J}} & =\ket{z;m_{J}-\frac{1}{2},\frac{1}{2}}\sqrt{\frac{l-m_{J}-\frac{1}{2}}{2l+1}}+\ket{z;m_{J}+\frac{1}{2},-\frac{1}{2}}\sqrt{\frac{l-m_{J}+\frac{1}{2}}{2l+1}}\\
\\
& =\ket{z;m_{J}-\frac{1}{2},\frac{1}{2}}c_{-}^{l+1,m_{J}}+\ket{z;m_{J}+\frac{1}{2},-\frac{1}{2}}c_{+}^{l+1,m_{J}}.
\end{align*}
The coefficients $c_{\pm}^{l+1,m_{J}}:=\sqrt{\frac{l-m_{J}\pm\frac{1}{2}}{2l+1}}$
are the well known Clebsch-Gordan coefficients that arise in spin-orbit
coupling, so we know the states $\ket{Z;l+\frac{1}{2},m_{J}}$ are the states
of total angular momentum $l+1/2$ with $m_{J}$ component along the
$z$ axis. Its orthogonal complement in the two-dimensional subspace
spanned by $\left\{ \ket{z;m_{J}-\frac{1}{2},\frac{1}{2}},\ket{z;m_{J}+\frac{1}{2},-\frac{1}{2}}\right\} $
is just the antipodal point of $\ket{Z;l+\frac{1}{2},m_{J}}$ on the
Bloch sphere:
\begin{equation}
\ket{Z;l-\frac{1}{2},m_{J}}:=\ket{z;m_{J}-\frac{1}{2},\frac{1}{2}}c_{+}^{l+1,m_{J}}-\ket{z;m_{J}+\frac{1}{2},-\frac{1}{2}}c_{-}^{l+1,m_{J}}.
\end{equation}
With this arrangement of Clebsch-Gordan coefficients, these states
are the states of total angular momentum $l-1/2$ with $m_{J}$ component
along the $z$ axis.

The same characterization for the $\ket{X;l+\frac{1}{2},m_{J}}$ states
can be derived from the observation 
\begin{align}
\ket{X;l+\frac{1}{2},m_{J}} & =\frac{1}{\sqrt{N_{m_{J}}}}\Pi_{x;m_{J}}\ket{z;-l,-\frac{1}{2}}\\
& =\frac{1}{\sqrt{N_{m_{J}}}}e^{-i\frac{\pi}{2}J_{y}}\Pi_{z;m_{J}}e^{i\frac{\pi}{2}J_{y}}e^{-i\frac{\pi}{2}J_{y}}\ket{x;l,\frac{1}{2}}\\
& =e^{-i\frac{\pi}{2}J_{y}}\ket{Z;l+\frac{1}{2},m_{J}}
\end{align}
so their orthogonal complements are 
\begin{equation}
\ket{X;l-\frac{1}{2},m_{J}}:=e^{-i\frac{\pi}{2}J_{y}}\ket{Z;l-\frac{1}{2},m_{J}}.
\end{equation}

With the rank 1 projections $\Pi_{r;m_{J}}^{\left(l-\frac{1}{2}\right)}$
on the states $\ket{R;l-\frac{1}{2},m_{J}}$, we can finally conclude
that for $\left|m_{J}\right|<l+\frac{1}{2}$, the projections $\Pi_{z;m_{J}}$
break into $\Pi_{z;m_{J}}^{\left(l+\frac{1}{2}\right)}+\Pi_{z;m_{J}}^{\left(l-\frac{1}{2}\right)}$
and $\Pi_{x;m_{J}}$ break into $\Pi_{x;m_{J}}^{\left(l+\frac{1}{2}\right)}+\Pi_{x;m_{J}}^{\left(l-\frac{1}{2}\right)}$.
The resulting reflection network consists of the projections $\left\{ \Pi_{z;m_{J}}^{\left(l+\frac{1}{2}\right)},\Pi_{x;m_{J}}^{\left(l+\frac{1}{2}\right)},\Pi_{z;m_{J}}^{\left(l-\frac{1}{2}\right)},\Pi_{x;m_{J}}^{\left(l-\frac{1}{2}\right)}\right\} $
for $m_{J}=-l-\frac{1}{2},...,l+\frac{1}{2}$ where the four projections
$\Pi_{z;\pm l\pm\frac{1}{2}}^{\left(l+\frac{1}{2}\right)}$,$\Pi_{x;\pm l\pm\frac{1}{2}}^{\left(l+\frac{1}{2}\right)}$
(for $\left|m_{J}\right|=l+\frac{1}{2}$) are just relabeled $\Pi_{z;\pm l\pm\frac{1}{2}}$,
$\Pi_{x;\pm l\pm\frac{1}{2}}$ and the rest are the result of scatterings.
Since all projections are rank 1, this is a proper minimal reflection
network.

From the fact that $\ket{X;l\pm\frac{1}{2},m_{J}}=e^{-i\frac{\pi}{2}J_{y}}\ket{Z;l\pm\frac{1}{2},m_{J}}$,
and that states of different total angular momentum are orthogonal
to each other, it should be clear that all $\left\{ \Pi_{z;m_{J}}^{\left(l+\frac{1}{2}\right)},\Pi_{x;m_{J}}^{\left(l+\frac{1}{2}\right)}\right\} $
are orthogonal to all $\left\{ \Pi_{z;m_{J}}^{\left(l-\frac{1}{2}\right)},\Pi_{x;m_{J}}^{\left(l-\frac{1}{2}\right)}\right\} $.
At the same time, all $\left\{ \Pi_{z;m_{J}}^{\left(l+\frac{1}{2}\right)}\right\} $
are properly reflecting with all $\left\{ \Pi_{x;m_{J}}^{\left(l+\frac{1}{2}\right)}\right\} $
and similarly for $l-\frac{1}{2}$. Therefore, the reflection network
has two connected components for $l+\frac{1}{2}$ and $l-\frac{1}{2}$.
We choose the maximal independent sets in the connected components
to be $\left\{ \Pi_{z;m_{J}}^{\left(l+\frac{1}{2}\right)}\right\} $
and $\left\{ \Pi_{z;m_{J}}^{\left(l-\frac{1}{2}\right)}\right\} $.
Since all projections are rank 1, there is no freedom in the alignment
of columns in the BPT; it is just two blocks with a single row of
eigenbasis of $\left\{ \Pi_{z;m_{J}}^{\left(l+\frac{1}{2}\right)}\right\} $
and $\left\{ \Pi_{z;m_{J}}^{\left(l-\frac{1}{2}\right)}\right\} $:
\begin{center}
	\begin{tabular}{|c|c|c|c|c|c|}
		\cline{1-3} \cline{2-3} \cline{3-3} 
		$l+\frac{1}{2},l+\frac{1}{2}$ & $\cdots$ & $l+\frac{1}{2},-l-\frac{1}{2}$ & \multicolumn{1}{c}{$\begin{array}{c}
			\\
			\\
			\end{array}$} & \multicolumn{1}{c}{} & \multicolumn{1}{c}{}\tabularnewline
		\hline 
		\multicolumn{1}{c}{} & \multicolumn{1}{c}{} & $\begin{array}{c}
		\\
		\\
		\end{array}$ & $l-\frac{1}{2},l-\frac{1}{2}$ & $\cdots$ & $l-\frac{1}{2},-l+\frac{1}{2}$\tabularnewline
		\cline{4-6} \cline{5-6} \cline{6-6} 
	\end{tabular}
	\par\end{center}

\noindent Each cell corresponds to the state $\ket{Z;l\pm\frac{1}{2},m_{J}}$,
where we have suppressed the $Z$ axis label.

The resulting Hilbert space decomposition 
\begin{equation}
\mathcal{H}_{L}\otimes\mathcal{H}_{S}\cong\mathcal{H}^{\left(l+\frac{1}{2}\right)}\oplus\mathcal{H}^{\left(l-\frac{1}{2}\right)}
\end{equation}
indicates that the restriction to total angular momentum measurements
will result in a superselection between the two total angular momentum
sectors. The accessible state is therefore obtained, according to
Eq.\ (\ref{eq:state reduction for MAs}) of Section \ref{subsec: State Reduction from Operational Constraints},
from the state-reduction map
\begin{equation}
\rho\longmapsto\Pi^{\left(l+\frac{1}{2}\right)}\rho\Pi^{\left(l+\frac{1}{2}\right)}+\Pi^{\left(l-\frac{1}{2}\right)}\rho\Pi^{\left(l-\frac{1}{2}\right)},
\end{equation}
where $\Pi^{\left(l\pm\frac{1}{2}\right)}$ are projections on the
sectors $\mathcal{H}^{\left(l\pm\frac{1}{2}\right)}$. So the coherence terms between total angular momentum sectors are
unobservable if only total angular momentum measurements are allowed.

This conclusion is of course not surprising if we know the theory
of angular momentum addition. But, the fact that the same result,
including the explicit derivation of the total angular momentum states
$\ket{Z;l\pm\frac{1}{2},m_{J}}$ in the spin-orbit basis, can be obtained
by scattering of projections, is a strong confirmation of the viability
of this approach to derivation of irreps. In the next example we will
consider a case where the group representation theory is not as well
developed, yet the projection-scattering method yields the irrep decomposition
in a straight forward way.

\subsection{A Bound Pair of Particles on a Lattice}

In this example we consider a periodic one-dimensional lattice of length $D$ with
two identical particles on it. The two particles are assumed to be
bound in the sense that their relative position and relative momentum
cannot exceed 1 lattice site. This is a simple toy model for a bound
pair of particles on a lattice that oscillate around a common center
of mass with limited energy. The operational constraint that we will
consider is the inability to resolve the composite pair as two separate
particles, which is manifested by a restriction to the center of mass
measurements $\left\{ X_{cm},P_{cm}\right\} $ of both position and
momentum. Once again, as was discussed in Section \ref{subsec: State Reduction from Operational Constraints},
the main challenge is to find the irrep structure of the algebra $\left\langle X_{cm},P_{cm}\right\rangle $.

The $D^{2}$ dimensional Hilbert space $\mathcal{H}_{1}\otimes\mathcal{H}_{2}$
is spanned by the position basis $\ket{x;n_{1},n_{2}}$ for $n_{i}=0,...,D-1$.
The momentum basis states $\ket{p;m_{1},m_{2}}$ are related to the position
basis via the lattice Fourier transform 
\begin{equation}
\ket{p;m_{1},m_{2}}:=F\ket{x;m_{1},m_{2}}=\frac{1}{D}\sum_{n_{1},n_{2}=0}^{D-1}e^{i2\pi\left(m_{1}n_{1}+m_{2}n_{2}\right)/D}\ket{x;n_{1},n{}_{2}}.\label{eq:2 particle furrier trans}
\end{equation}
The center of mass operators are given by

\begin{align}
X_{cm} & :=\frac{1}{2}\left(X_{1}\otimes I_{2}+I_{1}\otimes X_{2}\right)\\
P_{cm} & :=\frac{1}{2}\left(P_{1}\otimes I_{2}+I_{1}\otimes P_{2}\right),
\end{align}
where $X_{i}$, $P_{i}$ are the position and momentum operators on
each particle. In general, $X_{cm}\ket{x;n_{1},n_{2}}=n_{cm}\ket{x;n_{1},n_{2}}$
where $n_{cm}=\left(n_{1}+n_{2}\right)/2$, but, assuming that the
particles cannot occupy the same lattice site simultaneously, \footnote{We refrain from calling the particles fermions because we have no reason to
	assume that their states must be anti-symmetric under particle exchange.} in the bound state we have $n_{2}=n_{1}\pm1$ so $n_{cm}=n_{1}\pm1/2$.
For a shorter notation, we will use the integer $n$ instead of the
half-integer $n_{cm}$ related by $n_{cm}=n+1/2$. Then, for each
possible eigenvalue $n_{cm}$ for bound particles, there are two possible
eigenstates $\ket{x;n,n+1}$ and $\ket{x;n+1,n}$. The same notation
applies to $P_{cm}$.

Therefore, the spectral projections of $X_{cm}$ and $P_{cm}$, when
considering bound particles that cannot occupy the same site, are
given by
\begin{align}
\Pi_{x;n} & :=\ket{x;n,n+1}\bra{x;n,n+1}+\ket{x;n+1,n}\bra{x;n+1,n}\\
\Pi_{p;m} & :=\ket{p;m,m+1}\bra{p;m,m+1}+\ket{p;m+1,m}\bra{p;m+1,m}
\end{align}
for $n,m=0,...,D-1$ and the summation is modulo $D$. The algebra
$\left\langle X_{cm},P_{cm}\right\rangle $ is then generated by the
improper reflection network of $\left\{ \Pi_{x;n},\Pi_{p;m}\right\}, $
which we will now reduce to a proper network by scattering of projections.

The result of scattering of any pair of projections $\{\Pi_{x;n},\Pi_{p;m}\}$
depends on the spectral decomposition of $\Pi^{n;x}\Pi^{m;p}\Pi^{n;x}$.
For this calculation we first define the states

\begin{align}
\ket{\chi_{n}\left(\varphi\right)} & :=\frac{1}{\sqrt{2}}\left(\ket{x;n,n+1}+e^{i\varphi}\ket{x;n+1,n}\right)\\
\ket{\psi_{m}\left(\varphi\right)} & :=\frac{1}{\sqrt{2}}\left(\ket{p;m,m+1}+e^{i\varphi}\ket{p;m+1,m}\right).
\end{align}
Then, using the Fourier transform of Eq.\ (\ref{eq:2 particle furrier trans}),
we derive 
\begin{align}
\Pi_{x;n}\Pi_{p;m}\Pi_{x;n} & =\Pi_{x;n}\ket{p;m,m+1}\bra{p;m,m+1}\Pi_{x;n}+\Pi_{x;n}\ket{p;m+1,m}\bra{p;m+1,m}\Pi_{x;n}\\
& =\frac{2}{D^{2}}\ket{\chi_{n}\left(-\frac{2\pi}{D}\right)}\bra{\chi_{n}\left(-\frac{2\pi}{D}\right)}+\frac{2}{D^{2}}\ket{\chi_{n}\left(\frac{2\pi}{D}\right)}\bra{\chi_{n}\left(\frac{2\pi}{D}\right)}.
\end{align}
The two states $\ket{\chi_{n}\left(\pm\frac{2\pi}{D}\right)}$ are
not orthogonal to each other, so this is not yet the spectral decomposition.
One can check that
the eigenstates of $\Pi_{x;n}\Pi_{p;m}\Pi_{x;n}$ are $\ket{\chi_{n}\left(0\right)}$
and $\ket{\chi_{n}\left(\pi\right)}$ with the distinct eigenvalues
$\frac{2}{D^{2}}\left(1\pm\cos\left(\frac{2\pi}{D}\right)\right)$.
(One can also visualize this fact using the representation of $\ket{\chi_{n}\left(\pm\frac{2\pi}{D}\right)}$ as two
vectors in the $x-y$ plane of the Bloch sphere of the qubit spanned
by $\ket{x;n,n+1}$ and $\ket{x;n+1,n}$.)
Thus, the projection $\Pi_{x;n}$ breaks into $\Pi_{x;n}^{\left(0\right)}+\Pi_{x;n}^{\left(\pi\right)}$,
where $\Pi_{x;n}^{\left(\varphi\right)}:=\ket{\chi_{n}\left(\varphi\right)}\bra{\chi_{n}\left(\varphi\right)}$,
and this result does not depend on the $m$ argument of $\Pi_{p;m}$. A similar
calculation for the scattering of $\Pi_{p;m}$ via the spectral decomposition
of $\Pi_{p;m}\Pi_{x;n}\Pi_{p;m}$, results in it breaking into $\Pi_{p;m}^{\left(0\right)}+\Pi_{p;m}^{\left(\pi\right)}$
where $\Pi_{p;m}^{\left(\varphi\right)}:=\ket{\psi_{m}\left(\varphi\right)}\bra{\psi_{m}\left(\varphi\right)}$.

Therefore, by scattering all (arbitrarily chosen) pairs $\Pi_{x;n},\Pi_{p;m}$,
the initial reflection network reduces to $\left\{ \Pi_{x;n}^{\left(0\right)},\Pi_{x;n}^{\left(\pi\right)},\Pi_{p;m}^{\left(0\right)},\Pi_{p;m}^{\left(\pi\right)}\right\} $
for $n,m=0,...,D-1$. Since now all projections are rank 1, the network
is again proper and minimal. In order to see how it decomposes into connected
components, we note that for $a,b=0,1$ we have (using $e^{-ia\pi}=e^{ia\pi}$)
\begin{equation}
\braket{\chi_{n}\left(a\pi\right)}{\psi_{m}\left(b\pi\right)}=\sqrt{2}\cos\left(\frac{\left(b+a\right)\pi}{2}\right)\left(e^{i2\pi/D}+e^{-ia\pi}\right)e^{i\left(b+a\right)\pi/2}e^{i2\pi\left(2nm+m+n\right)/D}.
\end{equation}
The cosine term tells us that these states are orthogonal for $a\neq b$
and are not orthogonal for $a=b$. Then the subsets $\left\{ \Pi_{x;n}^{\left(0\right)},\Pi_{p;m}^{\left(0\right)}\right\} $
and $\left\{ \Pi_{x;n}^{\left(\pi\right)},\Pi_{p;m}^{\left(\pi\right)}\right\} $
form two separate connected components in the network.

We choose the maximal independent sets in the connected components
to be $\left\{ \Pi_{x;n}^{\left(0\right)}\right\} $ and $\left\{ \Pi_{x;n}^{\left(\pi\right)}\right\} $. (This choice is arbitrary; we could just as well have chosen the momentum basis.)
Then, similarly to the example of a particle with orbital and spin angular momentum, the BPT is
just two blocks with single rows of eigenbasis of $\left\{ \Pi_{x;n}^{\left(0\right)}\right\} $
and $\left\{ \Pi_{x;n}^{\left(\pi\right)}\right\} $: 
\begin{center}
	\begin{tabular}{|c|c|c|c|c|c|c|c|}
		\cline{1-4} \cline{2-4} \cline{3-4} \cline{4-4} 
		$\chi_{0}\left(0\right)$ & $\chi_{1}\left(0\right)$ & $\cdots$ & $\chi_{D-1}\left(0\right)$ & \multicolumn{1}{c}{} & \multicolumn{1}{c}{} & \multicolumn{1}{c}{} & \multicolumn{1}{c}{}\tabularnewline
		\hline 
		\multicolumn{1}{c}{} & \multicolumn{1}{c}{} & \multicolumn{1}{c}{} &  & $\chi_{0}\left(\pi\right)$ & $\chi_{1}\left(\pi\right)$ & $\cdots$ & $\chi_{D-1}\left(\pi\right)$\tabularnewline
		\cline{5-8} \cline{6-8} \cline{7-8} \cline{8-8} 
	\end{tabular}
	\par\end{center}

\noindent \begin{flushleft}
	This BPT indicates the irrep decomposition into two sectors:
	\begin{equation}
	\mathcal{H}_{1}\otimes\mathcal{H}_{2}\cong\mathcal{H}^{\left(0\right)}\oplus\mathcal{H}^{\left(\pi\right)}.
	\end{equation}
	\par\end{flushleft}

\noindent \begin{flushleft}
	Thus we learn that, under restriction to the center of mass measurements,
	the Hilbert space splits into two superselection sectors with symmetric
	$\ket{\chi_{n}\left(0\right)}$ and anti-symmetric $\ket{\chi_{n}\left(\pi\right)}$
	configurations of the bound pair of particles. We can now see that
	this BPT specifies the commutant algebra of particle exchange symmetry,
	and indeed, $X_{cm}$ and $P_{cm}$ commute with exchange of particles
	so they belong to the commutant of this symmetry. This, however, does
	not mean that a priori it was obvious that $\left\{ X_{cm},P_{cm}\right\} $
	generate the whole commutant algebra of this symmetry; it is possible
	that they only generate a subalgebra of the commutant. Only by explicitly
	finding the irreps with the projection scattering method we can be
	certain that $\left\langle X_{cm},P_{cm}\right\rangle $ is the commutant
	algebra of particle exchange.
	\par\end{flushleft}

As discussed in Section \ref{subsec: State Reduction from Operational Constraints},
the bound pair's state reduces by enforcing the superselection with
projections on the superselection sectors $\mathcal{H}^{\left(0\right)}$,
$\mathcal{H}^{\left(\pi\right)}$:

\begin{equation}
\rho\longmapsto\Pi^{\left(0\right)}\rho\Pi^{\left(0\right)}+\Pi^{\left(\pi\right)}\rho\Pi^{\left(\pi\right)}.
\end{equation}
This state reduction accounts for the operational constraints of an
observer that cannot resolve the individual particles. From such an observer's
perspective, each sector $q\pi$ for $q=0,1$ is effectively a single
composite particle with position states $\ket{\chi_{n}\left(q\pi\right)}$
and momentum states $\ket{\psi_{n}\left(q\pi\right)}=F\ket{\chi_{n}\left(q\pi\right)}$.
The distinction between the two sectors is then associated with some
``charge'' $q=0,1$ of the composite particle. Whether this charge
is constant in time depends on the full dynamics of the system. If
the charge is not conserved, meaning the dynamics have tunneling terms
between the symmetric and anti-symmetric states of the pair, the constrained
observer can describe the charge variation as the result of interactions
with an ``environment''. The ``environment'' in this case is the
composite particle's intrinsic degrees of freedom, which are inaccessible
with $\left\{ X_{cm},P_{cm}\right\} $.


\section{Beyond Matrix Algebras: Partial Bipartitions}
\label{PartialBPT}
Thus far we have discussed the case of matrix algebras, where Hilbert space is decomposed into a collection of direct-sum sectors of tensor products. These generalized bipartitions, as described by Eq.\ (\ref{eq:def generalized bipartition}), are represented using their bipartition table (BPT) structure as block-diagonal arrangements of rectangular tables. We will now extend this construction of generalized bipartitions to include the case where some or all of the direct-sum sectors are represented by \emph{non-rectangular} tables.  We will refer to these non-rectangular cases as \emph{partial bipartitions}. The power of partial bipartitions will be relevant when, for example, the set of measurements that can be implemented by an observer in the laboratory does not form an algebra. 

As a motivating example, consider two spin-$\frac{1}{2}$ particles, spanned by the total spin basis labeled by $\{\ket{S_{z},\mu}\}$, where $S_z$ is the total spin-z of the two spins and $\mu$ labels the information about the multiplet nature of the state, with $\mu = s$ for singlet and $\mu=t$ for triplet. A relevant situation is when an experimenter in the lab only has access to measurements of the total spin of the two particles, and not the multiplet information of the quantum state. Written in terms of the computational tensor product basis $\{\ket{0},\ket{1}\}^{\otimes 2}$, we have
\begin{eqnarray}
\ket{1,t} = \ket{1,1} & \\
\ket{0,s} = \frac{  \ket{01} - \ket{10}}{\sqrt{2}} & \\
\ket{0,t} = \frac{  \ket{01} + \ket{10}}{\sqrt{2}} & \\
\ket{-1,t} = \ket{00} \: ,
\end{eqnarray}

This four-dimensional Hilbert space is \emph{not} factorizable into a tensor product structure where one factor describes the total spin-z degree of freedom and the other factor corresponding to the multiplet information. Partial bipartitions offer a natural construction to capture such splits of Hilbert space. Partial bipartitions were first introduced in \cite{2018PhRvA..97e2130K} in the context of quantum coarse-graining and some examples were discussed. In this paper, we will use the concept of partial bipartitions in Sections \ref{sec:variational} and \ref{sec:ising} below, where we will discuss decoherence and coarse-graining of Hilbert space using a variational approach based on an underlying Hamiltonian which governs evolution. Our exposition here of the concept and construction of partial bipartitions, in particular some of the notation, will be with an eye towards the variational approach. 

Let us first consider the case of a single direct-sum factor, so that the BPT is a single non-rectangular table describing a partial bipartition. By virtue of being non-rectangular, the split of the Hilbert space is no longer that of a tensor product structure between the row and column degrees of freedom of the BPT, as was the case for a rectangular BPT, but rather captures a more general partition of the space into two. 
Consider a finite-dimensional Hilbert space $\Hil$ of dimension $\Dim\Hil = d < \infty$ spanned by a choice of orthonormal basis,
\begin{equation}
\Hil \cong \mrm{span} \big\{ \ket{e_{i k}} \big\} \: .
\end{equation}
A partial bipartition of $\Hil$ is specified by an arrangement of the $d$ basis elements into a non-rectangular bipartition table, with $N_C$ columns and $N_{R}$ rows such that $d < N_{C} N_{R}$. As suggested by the notation, the basis element $ \ket{e_{i k}}$, is located in the BPT in the $i$-th row with $i = 1,2,\cdots,N_{R}$ and $k$-th column with $k = 1,2,\cdots, N_{C}$. The BPT is then specified by the heights $\{h_{k}\}$ for each of the $N_{C}$ columns which is the number of basis elements which go in the $k$-th column. In what follows, we will focus on \emph{compact}  non-rectangular BPTs which correspond to the following conditions on the BPT:
\begin{enumerate}
\item{The number of rows of the BPT is equal to the height of the largest column i.e. $\mrm{max}\{h_{k}\} = N_{R}$.}
\item{The $h_{k}$ basis elements which populate the $k$-th column are stacked together, starting from the first row without having any breaks in them.}
\end{enumerate}
A compact BPT minimizes loss of coherence under the action of the state-reduction map defined by the BPT. Such loss of coherence under state reduction is akin to superselection which is different than the dynamical decoherence induced by the Hamiltonian we will be interested in in the following sections. In Eq.\ (\ref{eq:partial_BPT}) below, we depict a generic compact non-rectangular BPT specifying a partial bipartition of $\Hil \cong \Hil_{A} \oslash \Hil_{B}$. Arrows point toward the associated states of partial subsystems.

\noindent %
\noindent\begin{minipage}[c]{1\columnwidth}%
\begin{tabular}{>{\centering}m{0.9\columnwidth}>{\raggedright}m{0.1\columnwidth}}
 & \tabularnewline
\centering{}%
\begin{tabular}{|c|c|c|c|c|ccc|c|}
\cline{1-7} \cline{9-9} 
$e_{1,1}$ & ... & $e_{1,k}$ & ... & $e_{1,w_{i}}$ & \multicolumn{1}{c|}{...} & \multicolumn{1}{c|}{$e_{1,N_{C}}$} & $\shortrightarrow$ & $\alpha_{1}$\tabularnewline
\cline{1-7} \cline{9-9} 
$\vdots$ & $\vdots$ & $\vdots$ & $\vdots$ & $\vdots$ & \multicolumn{1}{c|}{$\vdots$} & \multicolumn{1}{c|}{$\vdots$} &  & $\vdots$\tabularnewline
\cline{1-7} \cline{9-9} 
$e_{i,1}$ & ... & $e_{i,k}$ & ... & $e_{i,w_{i}}$ &  &  & $\shortrightarrow$ & $\alpha_{i}$\tabularnewline
\cline{1-5} \cline{9-9} 
$\vdots$ & $\vdots$ & $\vdots$ & $\vdots$ & $\vdots$ &  &  &  & $\vdots$\tabularnewline
\cline{1-5} \cline{9-9} 
$e_{h_{k},1}$ & ... & $e_{h_{k},k}$ & \multicolumn{1}{c}{} & \multicolumn{1}{c}{} &  &  & $\shortrightarrow$ & $\alpha_{h_{k}}$\tabularnewline
\cline{1-3} \cline{9-9} 
$\vdots$ & $\vdots$ & \multicolumn{1}{c}{} & \multicolumn{1}{c}{} & \multicolumn{1}{c}{} &  &  &  & $\vdots$\tabularnewline
\cline{1-2} \cline{9-9} 
$e_{N_{R},1}$ & ... & \multicolumn{1}{c}{} & \multicolumn{1}{c}{} & \multicolumn{1}{c}{} &  &  & $\shortrightarrow$ & $\alpha_{N_{R}}$\tabularnewline
\cline{1-2} \cline{9-9} 
\multicolumn{1}{c}{$\shortdownarrow$} & \multicolumn{1}{c}{} & \multicolumn{1}{c}{$\shortdownarrow$} & \multicolumn{1}{c}{} & \multicolumn{1}{c}{$\shortdownarrow$} &  & $\shortdownarrow$ & \multicolumn{1}{c}{} & \multicolumn{1}{c}{}\tabularnewline
\cline{1-7} 
$\beta_{1}$ & ... & $\beta_{k}$ & ... & $\beta_{w_{i}}$ & \multicolumn{1}{c|}{...} & \multicolumn{1}{c|}{$\beta_{N_{C}}$} & \multicolumn{1}{c}{} & \multicolumn{1}{c}{}\tabularnewline
\cline{1-7} 
\end{tabular}
\begin{minipage}[c]{0.04\columnwidth}%
\begin{center}
\begin{equation}
\label{eq:partial_BPT}
\end{equation}
\par\end{center}
~%
\end{minipage}\tabularnewline
 & \tabularnewline
\end{tabular}%
\end{minipage}

It should be noted that as long as the compact form condition is met, there is still some freedom, albeit inconsequential, in the locational arrangement of basis elements in the BPT which will have no consequence in the state-reduction map defined by the BPT. For example, in Eq.\ (\ref{eq:partial_BPT}), one can swap any two columns, which is equivalent to swapping the order of basis in the reduced state space, and that will still leave the partial bipartition encoded in the BPT.

As we discussed in Section \ref{sec:Prelim - Bipartition Tables}, since generalized bipartitions describe tensor-product splits of Hilbert space, and direct-sum sectors thereof; we can immediately infer that a partial bipartition describes splits of Hilbert space more general than tensor factorization. The span of the row (column) kets $\left\{ \left|\alpha_{i}\right\rangle \right\} _{i=1}^{N_{R}}$ ($\left\{ \left|\beta_{k}\right\rangle \right\} _{k=1}^{N_{C}}$) is defined to be the row (column) Hilbert space $\Hil_{A}$  ($\Hil_{B}$) as illustrated in Eq.\ (\ref{eq:partial_BPT}). These can be identified as \emph{partial subsystems} of the full underlying Hilbert space $\Hil$ and we represent this partial factorization as,
\begin{equation}
\Hil \cong \Hil_{A} \oslash \Hil_{B} \: .
\end{equation}
One can always isometrically embed a partial bipartition of a Hilbert space into a larger tensor product Hilbert space defined by $\Hil_{AB} \cong \Hil_
{A} \otimes \Hil_{B}$, such that for every $\ket{e_{i k}} \in \Hil$, there is a matching $\ket{\alpha_{i}}\ket{\beta_{k}} \in \Hil_{AB}$ but not vice-versa. The extra pairs in $\Hil_{AB}$ which do not have a match in $\Hil$ correspond to the missing elements of the BPT that would complete it to a rectangular, and hence, tensor product form.

Tensor product structures which correspond to generalized bipartitions are thus a special case of partial bipartitions which have rectangular BPTs, satisfying the condition $d = N_{C} N_{R}$. 

Once the partial subsystem $\Hil_{A}$ is identified, we can define a state-reduction map which will ``trace'' out $\Hil_A$, akin to a partial-trace map in the case of tensor products, but defined appropriately for partial subsystems. We denote this state-reduction map for the case of partial subsystems as $tr_{\left(A\right)}$ which maps the density matrices between the operator spaces as
\begin{equation}
tr_{\left(A\right)}:\mathcal{L}\left(\mathcal{H}\right)\longrightarrow\mathcal{L}\left(\mathcal{H}_{B}\right),
\end{equation}
so the reduced state-space is indeed described by the partial subsystem $\Hil_B$ as expected. We use a bracketed subscript $_{\left(A\right)}$ in $tr_{\left(A\right)}$ to denote the state-reduction map of a partial system, as opposed to the unbracketed one $_{A}$, which refers to the usual partial-trace map for tensor factors.

The action of $tr_{\left(A\right)}$ on the matrix elements in the bipartition basis $\left|e_{i,k}\right\rangle $,
\begin{align}
tr_{\left(A\right)}:\left|e_{i,k}\right\rangle \left\langle e_{j,l}\right| & \longmapsto\delta_{ij}\left|\beta_{k}\right\rangle \left\langle \beta_{l}\right|.\label{eq:Tr_(a) acting on matrix element}
\end{align}
thus traces over the row indices $i$, $j$ as if they label basis elements
of a proper tensor factor of Hilbert space.

Based on the BPT structure, the original Hilbert space $\Hil$ can be decomposed into a direct-sum sectors, each corresponding to the subspace spanned by basis elements of a single column,
\begin{equation}
\Hil \cong \bigoplus_{k = 1}^{N_{C}} \Hil_{k} \: .
\end{equation}
Similar to the case of generalized bipartitions, we can define bipartition operators (BPOs) for partial subsystems,
\begin{align}
S_{kl} & = \sum_{i=1}^{\mathsf{\mathsf{min}}\left(h_{k},h_{l}\right)}\left|e_{i,k}\right\rangle \left\langle e_{i,l}\right| \: , &  k,l = 1,2,3,\cdots,N_{C} \: ,
\label{eq:def of bipartition operators}
\end{align}
that map between columns of the bipartition table by preserving the
row index $i$ of each element (where it should be understood that the element is skipped in the sum if the row
is not present in the destination column). BPOs of the form $S_{kk}$ correspond to projectors on the column $\Hil_{k}$ subspace and the ones of the form $S_{kl}$ with $k \neq l$   implement partial isometries from (a subspace of)  $\Hil_{l}$ to (a subspace of)  $\Hil_{k}$ (depending on which dimension is lower). Written in terms of bipartition operators, the state-reduction map maps a density matrix $\rho \in \mathcal{L}(\Hil)$ to a reduced, traced out state $\rho_{B} \in \mathcal{L}(\Hil_{B})$ ,
\begin{align}
\rho_B = tr_{\left(A\right)}\left(\rho\right) & =\sum_{k,l=1}^{N_{C}}tr\left(S_{kl}\rho\right)\left|\beta_{l}\right\rangle \left\langle \beta_{k}\right|.\label{eq:tr_(A)  action with S_kl}
\end{align}

As an illustrative example, consider the 6 dimensional Hilbert space $\mathcal{H}$ spanned by the orthonormal basis $\{\ket s\}$ for $s=1,...,6$. A
partial bipartition of $\mathcal{H}$ is chosen such that in the basis
$\{\ket s\}$, it is specified by the bipartition table,
\begin{center}
\begin{tabular}{c|c|c}
\hline 
\multicolumn{1}{|c|}{$1$}& $2$ & \multicolumn{1}{c|}{$3$}\tabularnewline
\hline 
\multicolumn{1}{|c|}{$4$} & $5$ & \tabularnewline
\cline{1-2} 
& \multicolumn{1}{|c|}{$6$} &  \tabularnewline
\cline{2-2} 
\end{tabular}
\begin{equation}
\end{equation}
\par\end{center}
While one can identify a notational correspondence between states $\{\ket{s}\}$ and $\{\ket{e_{i,k}}\}$ using their row/column location in the BPT, we will stick with the $\ket{s}$ notation since it will allow ease of representation of matrix elements of operators in this basis, such as the density matrix. It should be noted that the above BPT is compact.
Now, for a given density matrix $\rho$ written in the bipartition basis $\{\ket{s}\}$ ordered by appearance in the bipartition
table (read from left to right and top to bottom), the action of the state-reduction map $tr_{\left(A\right)}$ to trace out the partial subsystem $\Hil_{A}$, is 
\begin{equation}
\begin{array}{c}
\begin{pmatrix}{\color{red}\rho_{11}} & {\color{red}\rho_{12}} & {\color{red}\rho_{13}} & \rho_{14} & \rho_{15} & \rho_{16}\\
{\color{red}\rho_{21}} & {\color{red}\rho_{22}} & {\color{red}\rho_{23}} & \rho_{24} & \rho_{25} & \rho_{26}\\
{\color{red}\rho_{31}} & {\color{red}\rho_{32}} & {\color{red}\rho_{33}} & \rho_{34} & \rho_{35} & \rho_{36}\\
\rho_{41} & \rho_{42} & \rho_{43} & {\color{green}\rho_{44}} & {\color{green}\rho_{45}} & \rho_{46}\\
\rho_{51} & \rho_{52} & \rho_{53} & {\color{green}\rho_{54}} & {\color{green}\rho_{55}} & \rho_{56}\\
\rho_{61} & \rho_{62} & \rho_{63} & \rho_{64} & \rho_{65} & {\color{blue}\rho_{66}}
\end{pmatrix}\\
\\
\downarrow tr_{\left(A\right)}\\
\\
\begin{pmatrix}{\color{red}\rho_{11}}+{\color{green}\rho_{44}} & {\color{red}\rho_{12}}+{\color{green}\rho_{45}} & {\color{red}\rho_{13}}\\
{\color{red}\rho_{21}}+{\color{green}\rho_{54}} & {\color{red}\rho_{22}}+{\color{green}\rho_{55}}+{\color{blue}\rho_{66}} & {\color{red}\rho_{23}}\\
{\color{red}\rho_{31}} & {\color{red}\rho_{32}} & {\color{red}\rho_{33}}
\end{pmatrix}
\end{array}
\end{equation}

From this we can understand the action of state-reduction map from the bipartition
table:
\begin{enumerate}
\item Coherences between basis elements $\left|e_{i,k}\right\rangle \left\langle e_{j,l}\right|$
in different rows ($i\neq j$) of the bipartition table are discarded. Coherences between basis elements in the same row of the BPT are preserved.
\item For each pair of columns $k,l$ (including $k=l$), the sum of coherences
between $\left|e_{i,k}\right\rangle \left\langle e_{i,l}\right|$
over all rows $i$ is the new coherence term for the reduced element
$\left|\beta_{k}\right\rangle \left\langle \beta_{l}\right|$.
\end{enumerate}   
The number of matrix elements of $\rho$ which also appear in $\rho_{B}$ after the state-reduction map is applied depends on the alignment structure of the cells in the BPT. In particular, some elements do not appear in the reduced density matrix. A natural question to ask is what information is preserved by the state-reduction map induced by the partial bipartition. It was shown in \cite{2018PhRvA..97e2130K} that the bipartition operators $S_{kl}$ span the operator subspace of all (and only) the observables whose information is preserved under state reduction. Then we can interpret the reduced state $\rho_{B}$ as the state that contains all (and only) the information that is accessible with the observables in the operator space $\textrm{\ensuremath{\mathsf{span}}}\left\{ S_{kl}\right\} $. This naturally reduces to the standard picture in the familiar case of a tensor-product bipartition $\mathcal{H} \cong\mathcal{H}_{A}\otimes\mathcal{H}_{B}$,
where the bipartition operators take the form 
\begin{equation}
S_{kl}=I_{A}\otimes\ket{\beta_{k}}\bra{\beta_{l}}.
\end{equation}
The restricted set of observables $\textrm{\ensuremath{\mathsf{span}}}\left\{ S_{kl}\right\} =I_{A}\otimes\mathcal{L}\left(\mathcal{H}_{B}\right)$
imply that the observer can only measure system $\Hil_{B}$. 

We can also generalize the partial bipartition structure to include direct-sum sectors thereof, which corresponds to the following decomposition of Hilbert space,
\begin{equation}
\Hil \cong \bigoplus_{q} \left( \Hil_{A_{q}} \oslash \Hil_{B_{q}} \right) \: ,
\end{equation}
where each sector $q$ is spanned by the basis elements $\ket{e_{ik}^{q}}$
of the block $q$ and each sector is further decomposed into a partial bipartition according to the arrangement of elements inside
the block.  Such a decomposition can be captured as a bipartition table with a block-diagonal arrangement of non-rectangular tables,
\begin{center}
\begin{tabular}{ccc}
\begin{tabular}{|c|c|c}
\hline 
$e_{1;1,1}$ & $e_{1;1,2}$ & \multicolumn{1}{c|}{$...$}\tabularnewline
\hline 
$e_{1;2,1}$ & $\ddots$ & \tabularnewline
\cline{1-2} 
$\vdots$ & \multicolumn{1}{c}{} & \tabularnewline
\cline{1-1} 
\end{tabular} &  & \tabularnewline
 & %
\begin{tabular}{|c|c|c}
\hline 
$e_{2;1,1}$ & $e_{2;1,2}$ & \multicolumn{1}{c|}{$...$}\tabularnewline
\hline 
$e_{2;2,1}$ & $\ddots$ & \tabularnewline
\cline{1-2} 
$\vdots$ & \multicolumn{1}{c}{} & \tabularnewline
\cline{1-1} 
\end{tabular} & \tabularnewline
 &  & $\ddots$\tabularnewline
\end{tabular}\be\label{eq:pb}\ee\
\par\end{center} 

 For each sector $q$, we can define a set of bipartition operators $\{S_{kl}^q \}$ using the basis elements in that sector. By construction, under the state-reduction map specified by such a BPT, coherences between different direct-sum sectors are lost, and the resultant density matrix will be block-diagonal corresponding to different blocks $q$.

\subsection*{Examples}
 
Let us return to the example of the two spin-$\frac{1}{2}$ particles we raised at the beginning of this section. Again, consider the total spin basis labeled by $\{\ket{S_{z},\mu}\}$, where $S_z$ is the total spin-z of the two spins and $\mu$ labels the information about the multiplet nature of the state (with $\mu = s$ for singlet and $\mu=t$ for triplet), which written in terms of the computational tensor product basis $\{\ket{0},\ket{1}\}^{\otimes 2}$,
\begin{eqnarray}
\ket{1,t} = \ket{1,1} & \\
\ket{0,s} = \frac{  \ket{01} - \ket{10}}{\sqrt{2}} & \\
\ket{0,t} = \frac{  \ket{01} + \ket{10}}{\sqrt{2}} & \\
\ket{-1,t} = \ket{00} \: ,
\end{eqnarray}
and a partial bipartition of this Hilbert space,

\begin{center}
\begin{tabular}{|c|c|c| c |c|}
\cline{1-3}\cline{5-5}
$1,t$ & $0,t$ & $-1,t$ & $\shortrightarrow$ & $\alpha_{t}$\tabularnewline
\cline{1-3}\cline{5-5}
\multicolumn{1}{c|}{} & $0,s$ & \multicolumn{1}{c}{} & $\shortrightarrow$ & $\alpha_{s}$\tabularnewline
\cline{2-2} \cline{5-5}
\multicolumn{1}{c}{$\shortdownarrow$} & \multicolumn{1}{c}{$\shortdownarrow$} & \multicolumn{1}{c}{$\shortdownarrow$} & \multicolumn{1}{c}{} &  \multicolumn{1}{c}{} \tabularnewline
\cline{1-3}
\multicolumn{1}{|c}{$\beta_{-1}$} & \multicolumn{1}{|c}{$\beta_0$} & \multicolumn{1}{|c}{$\beta_1$} & \multicolumn{1}{|c}{} &  \multicolumn{1}{c}{} \tabularnewline
\cline{1-3}
\end{tabular}
\begin{equation}
\end{equation}
\par\end{center}

The degree of freedom fixed by each column is the total
spin-$z$ and what varies within the columns is the multiplet (singlet-triplet)
label. With the non-rectangular nature of the BPT, the column space $\Hil_{B}$ (spanned by $\{\ket{\beta_{-1}} , \ket{\beta_{0}}, \ket{\beta_1}\}$) forms a partial subsystem which encodes variation of the total spin-z and
the row space $\Hil_{A}$ (spanned by $\{\ket{\alpha_t} , \ket{\alpha_{s}}\}$) forms a partial subsystem which encodes variation of multiplet information. We can now define the BPOs labeled by the value of total spin-$z$ from the BPT column structure, 
\begin{center}
\begin{tabular}{lllllllll}
$S_{+1,+1}=\ket{1,t}\bra{1,t}$ &  &  &  & $S_{+1,0}=\ket{1,t}\bra{0,t}$ &  &  &  & $S_{+1,-1}=\ket{1,t}\bra{-1,t}$\tabularnewline
 &  &  &  &  &  &  &  & \tabularnewline
 &  &  &  & $S_{0,0}=\ket{0,t}\bra{0,t}+\ket{0,s}\bra{0,s}$ &  &  &  & $S_{0,-1}=\ket{0,t}\bra{-1,t}$\tabularnewline
 &  &  &  &  &  &  &  & \tabularnewline
 &  &  &  &  &  &  &  & $S_{-1,-1}=\ket{-1,t}\bra{-1,t}$\tabularnewline
\end{tabular}
\begin{equation}
\end{equation}
\par\end{center}

The other three BPOs can simply be obtained from $S_{kl}=S_{lk}^{\dagger}$.
The state-reduction map induced by this BPT,
\begin{equation}
\rho\longmapsto\sum_{k,l=+1,0,-1}tr\left(S_{kl}\rho\right)\left|l\right\rangle \left\langle k\right|
\end{equation}
can be interpreted as tracing out the multiplet degree of
freedom. The resulting state has the degrees of freedom associated
with the total spin-$z$ of the original system: that is has the Hilbert space of a single composite particle with this spin. The total spin operators
$S_{x}^{tot}$, $S_{y}^{tot}$, $S_{z}^{tot}$ on the two particles
are in the span of $\left\{ S_{kl}\right\} $ 
\begin{align*}
S_{z}^{tot} & =S_{+1,+1}-S_{-1,-1}\\
S_{+}^{tot} & =S_{+1,0}+S_{0,-1}\\
S_{-}^{tot} & =S_{0,+1}+S_{-1,0} \:,
\end{align*}
where $S_{x}^{tot}$ and $S_{y}^{tot}$ can be constructed from the ladder operators
$S_{\pm}^{tot}$. Therefore, the reduced state preserves information
about total spin operators $S_{x}^{tot}$, $S_{y}^{tot}$, $S_{z}^{tot}$. It should be noted that for such partial bipartitions,
the span of $\left\{ S_{kl}\right\} $ is not necessarily an algebra
(it may not be closed under products) so even if we know
that the reduced state preserves information about $S_{x}^{tot}$,
$S_{y}^{tot}$, $S_{z}^{tot}$, it may not retain information about
their products which we usually taken for granted. Thus, we see a more general picture emerging where we can define partitions of Hilbert space based on a restricted set of observables which need not generate an algebra. 
Partial bipartitions offer a construction to account for such cases.

This construction easily extends, for example, to an arbitrary number $N$ of spin
$\frac{1}{2}$ participles. For even $N$, the BPT takes the form,
\begin{center}
{\small{}}%
\begin{tabular}{ccc|c|c|c|ccc}
\hline 
\multicolumn{1}{|c|}{{\small{}\strut $\frac{N}{2},+\frac{N}{2}$}} & \multicolumn{1}{c|}{{\small{}$\cdots$}} & {\small{}$\frac{N}{2},+2$} & {\small{}$\frac{N}{2},+1$} & {\small{}$\frac{N}{2},0$} & {\small{}$\frac{N}{2},-1$} & \multicolumn{1}{c|}{{\small{}$\frac{N}{2},-2$}} & \multicolumn{1}{c|}{{\small{}$\cdots$}} & \multicolumn{1}{c|}{{\small{}$\frac{N}{2},-\frac{N}{2}$}}\tabularnewline
\hline
{\small{}\strut } & \multicolumn{1}{c|}{} & {\small{}$\vdots$} & {\small{}$\vdots$} & {\small{}$\vdots$} & {\small{}$\vdots$} & \multicolumn{1}{c|}{{\small{}$\vdots$}} &  & \tabularnewline
\cline{3-7} \cline{4-7} \cline{5-7} \cline{6-7} \cline{7-7} 
{\small{}\strut } & \multicolumn{1}{c|}{} & {\small{}$2,+2$} & {\small{}$2,+1$} & {\small{}$2,0$} & {\small{}$2,-1$} & \multicolumn{1}{c|}{{\small{}$2,-2$}} &  & \tabularnewline
\cline{3-7} \cline{4-7} \cline{5-7} \cline{6-7} \cline{7-7} 
{\small{}\strut } &  &  & {\small{}$1,+1$} & {\small{}$1,0$} & {\small{}$1,-1$} &  &  & \tabularnewline
\cline{4-6} \cline{5-6} \cline{6-6} 
{\small{}\strut } &  &  & {\small{}$\vdots$} & {\small{}$\vdots$} & {\small{}$\vdots$} &  &  & \tabularnewline
\cline{4-6} \cline{5-6} \cline{6-6} 
{\small{}\strut } &  &  & {\small{}$1,+1$} & {\small{}$1,0$} & {\small{}$1,-1$} &  &  & \tabularnewline
\cline{4-6} \cline{5-6} \cline{6-6} 
{\small{}\strut } &  & \multicolumn{1}{c}{} &  & {\small{}$0,0$} & \multicolumn{1}{c}{} &  &  & \tabularnewline
\cline{5-5} 
{\small{}\strut } &  & \multicolumn{1}{c}{} &  & {\small{}$\vdots$} & \multicolumn{1}{c}{} &  &  & \tabularnewline
\cline{5-5} 
{\small{}\strut } &  & \multicolumn{1}{c}{} &  & {\small{}$0,0$} & \multicolumn{1}{c}{} &  &  & \tabularnewline
\cline{5-5} 
\end{tabular}{\small\par}
\begin{equation}
\end{equation}
\par\end{center}

\noindent where for each value of the total spin $j$ there are multiple
equivalent representations that we have stacked on top of each other (suppressing
the label for identical representations with the same $j$). Each
row is associated with a specific ``copy'' $\nu$ of the total spin
$j$ representation and the pair $j,\nu$ specifies one multiplet (note the different notation used in this table in contrast to the $N=2$ spin case above for clarity of exposition of the idea).
As before we use the columns to define the BPO $\left\{ S_{kl}\right\} $
and then the map that traces over the multiplets.The resulting state
is associated with a single spin $\frac{N}{2}$ particle that encodes
information about the state of the total spin of this system.

We will now use these partial bipartitions in the following Sections \ref{sec:variational} and \ref{sec:ising} where we will construct a paradigm to find quasi-classical coarse-grainings of Hilbert space, based on a collective feature of the system compatible with Hamiltonian evolution. Such coarse-grainings will typically not correspond to tensor factorizations of Hilbert space, and hence using this technology of partial bipartitions, we will be able to capture more general partitions suited for the purpose.


\section{Classicality from Coarse-Grainings using Partial Bipartitions: Variational Approach} \label{sec:variational}
As discussed in Section \ref{PartialBPT}, partial bipartitions offer a more general way than a standard tensor product structure to decompose Hilbert space into two parts. One particular application of partial bipartitions is to coarse-grain Hilbert space, since, in many situations, the relevant information preserved by the coarse-graining will not correspond to a tensor factor of Hilbert space. In this section, we outline a paradigm to find quasi-classical coarse-grainings of Hilbert space based on Hamiltonian evolution of the system. We call this the \emph{variational} approach since we will iterate/vary over all possible BPTs (in some restricted set) to find the one(s) which demonstrate quasi-classical behavior.

A natural feature of many coarse-grainings is that they focus on collective or average properties of the underlying degrees of freedom and ignore its internal structure; for example, one can, under appropriate circumstances, coarse-grain a rigid-body system of $N$-particles into its center of mass coordinate, which is a collective feature, while discarding information about the relative locations of the particles, then study how the coarse-grained variable evolves and what characteristics the coarse-graining preserves. We will focus on such coarse-grainings based on a collective property of the system, and their compatibility with dynamics which demonstrate quasi-classical behavior.

\subsection{Microfactorizations and Compatible Collective Observables}

Consider a finite-dimensional Hilbert space of a collection of $N$ underlying degrees of freedom (dofs) specified by a tensor-product structure,
\begin{equation}
\label{variational_microdecomposition}
\Hil \cong \bigotimes_{\mu = 1}^{N} \Hil_{\mu} \: ,
\end{equation}
 which evolve under Hamiltonian evolution given by $\Ham$.  We consider these $N$ degrees of freedom to be fixed, specified by the physical system under consideration, e.g. a collection of $N$ particles, etc. It is assumed that the Hamiltonian in general admits interactions between all $N$ dofs, and in case there exist any subsets of these dofs which are decoupled under the action of the Hamiltonian, we consider each such decoupled subspace individually in this prescription. Our goal is to develop a coarse-graining algorithm informed by the Hamiltonian $\Ham$ which chooses a partial bipartition, $ \Hil \cong \mathcal{S}_{\text{collective}} \oslash \mathcal{S}_{\text{internal}}$. The partial subsystem $\mathcal{S}_{\text{collective}}$ is the coarse-grained version of $\Hil$ we wish to preserve under the coarse-graining/state-reduction map based on a characteristic collective feature of the system (involving all $N$ degrees of freedom) compatible with the Hamilltonian which behaves classically (in a sense we define below) by tracing over the space of internal features $\mathcal{S}_{\text{internal}}$. This will correspond to a BPT of a partial bipartition where the columns will define the coarse-grained subspace $\mathcal{S}_{\mrm{collective}}$ and the rows will define the $\mathcal{S}_{\mrm{internal}}$ subspace that will be traced over. 
 
 Let us define the set of \emph{collective} observables of the full Hilbert space as those that can be written as
\begin{equation}
\label{Mc_definition}
M_{c} = \sum_{\mu = 1}^{N} M_{\mu} \: ,
\end{equation}
with,
\begin{equation}
\label{M_mu_operator}
M_{\mu} = \eye_{1} \otimes \eye_{2} \otimes \ldots \otimes \eye_{\mu - 1} \otimes m_{\mu} \otimes \eye_{\mu + 1}\otimes \ldots \otimes \eye_{N} \: ,
\end{equation}
where each $M_{\mu}$ acts \emph{non-trivially only} on $\Hil_{\mu}$ whose dimension we take to be $\Dim\Hil_{\mu} = d_{\mu}$. The operator $m_{\mu}$ can be parameterized by
\begin{equation}
\label{Mmu_parametrization}
m_{\mu} = \left( \sum_{a = 1}^{d^{2}_{\mu} - 1} c^{(\mu)}_{a} {\Lambda}^{(\mu)}_{a}  \right),
\end{equation}
where ${\Lambda}^{(\mu)}_{a}$ are the $d^{2}_{\mu} - 1$ Generalized Gell-Mann generators (of $SU(d_{\mu})$) which form a complete basis of non-trivial (i.e. without the identity $\eye_{\mu}$) operators acting on $\Hil_{\mu}$. To ensure that these operators have a non-trivial action on the degree of freedom they act on, we impose the restriction that at least one of the $c^{(\mu)}_{a} \neq 0$, for each $\mu$. In addition, we will mostly work with normalized operators on $\mathcal{L}(\Hil)$, the space of linear operators on $\Hil$, to be able to focus on features true to the structure of different operators, and not explicitly due to difference in overall multiplicative factors. For concreteness, we choose to use the Frobenius norm\footnote{The Frobenius norm of a linear operator ${A} \in \mathcal{L}(\Hil)$, also referred to as the Hilbert-Schmidt norm, is defined as,
\begin{equation}
||{A}||_{\mrm{f}} = \sqrt{\Tr\left({A}^{\dag} {A}\right)} \: .
\end{equation}
}
 in this paper, under which the collective observable will be normalized, i.e., $||M_{c} ||_{\mrm{f}} = 1$. 

Our coarse-graining prescription aims for a collective observable as one of the defining properties of $\mathcal{S}_{\text{collective}}$ which is \emph{most compatible} with---that is, \emph{stationary} with respect to---the Hamiltonian $\Ham$ (we will use a normalized version of the Hamiltonian under the Frobenius norm too).  Thus, one can pick out the most compatible collective observable relevant to the coarse-graining by minimizing the norm of the commutator $\lcb \Ham, M_{c} \rcb$ over all choices of collective observables $M_{c}$,
\begin{equation}
\label{Mc_compatibility}
M_{c} \: \: : \: \:  \mrm{min}_{\{ c^{(\mu)}_{a}\}} \Bigl| \Bigl| \lcb \Ham, M_{c} \rcb \Bigr| \Bigr|_{f} \: .
\end{equation}
This is in close parallel with the ideas of the \emph{predictability sieve} \cite{Zurek:1994zq} used in the decoherence literature, where one sifts through different states in Hilbert space to determine the set which is most compatible with the Hamiltonian and is used to define pointer states of the system which are classical. One defining feature of classical dynamics is \emph{robustness} of a set of states (the classical ones) reflected in their effective deterministic classical character. Said differently, the pointer states are special low-entropy states which under evolution stay robust to entanglement production: a given classical degree of freedom in the system does not arbitrarily entangle with all other degrees of freedom  at short time scales. This is intimately linked to the form of and constraints on the Hamiltonian, such as locality \cite{mereology}. Beginning with low entropy states is natural given the second law of thermodynamics, and classicality constrains the rate of entanglement growth for classical states.  In this sense, eigenstates of the collective observable chosen by the compatibility condition of Eq.\ (\ref{Mc_compatibility}) are classical, macroscopic pointer states which capture an average, collective property of the underlying dofs which is as robust under evolution as possible. Take note that the collective observable \emph{cannot} be the identity operator (which would trivially commute with $\Ham$) since we are only considering non-trivial observables which have no support on the identity operator. The total freedom in the choice of $M_{c}$ are the $\sum_{\mu = 1}^{N} \left(d^{2}_{\mu} - 1 \right)$ number of parameters $\{c^{(\mu)}_{a}\}$. Further restrictions on the set $\{c^{(\mu)}_{a}\}$ can be imposed by looking at the symmetry structure of the Hamiltonian and the Hilbert space, if any. For example, if $\Hil$ contains a collection of identical but distinguishable particles on which the Hamiltonian acts symmetrically, then this can be used to constrain the form of $M_{\mu}$ to be the same for this collection of particles. 

Now, as expected, due to the collective observable containing a slew of identity operators in each term in the sum, $M_{c}$ will have a high degeneracy in its eigenspectrum; therefore, the distinct eigenvalues of $M_{c}$ will be used to label distinct columns of the BPT which will define the coarse-graining $\mathcal{S}_{\text{collective}}$. The compatibility condition Eq.\ (\ref{Mc_compatibility}) of the collective observable $M_{c}$ with the Hamiltonian of Eq.\ (\ref{Mc_compatibility}) will ensure that transition of an eigenstate of $M_{c}$ (which corresponds to a deterministic value of the collective variable) into other eigenstates will be minimized under time evolution, and hence that the columns of the BPT correspond to robust collective macrostates. Once the collective observable has been selected, it will give us a total of $N_{C} \leq \Dim\Hil$ distinct eigenvalues and corresponding $N_{C}$ subspaces $\Hil^{(c)}_{k}$ with dimension $\Dim\Hil^{(c)}_{k} = h_{k}$ for $k = 1,2,3,\cdots,N_{C}$. Each such subspace labels a distinct value of the collective observable, specifying a macroscopic pointer state of the coarse-graining. In addition to the subspace structure determined by specification of the compatible $M_{c}$, one can use the tensor product decomposition of Hilbert space into the underlying $N$ dofs of Eq.\ (\ref{variational_microdecomposition}) to resolve each of these subspaces $\Hil^{(c)}_k$ by spanning them with the tensor product eigenbasis of $M_{c}$  which have the same eigenvalue labeled by $k$.

Thus, we now have a direct-sum structure to $\Hil$,
\begin{equation}
\Hil \cong \bigoplus_{k=1}^{N_C} \Hil^{(c)}_{k} \: ,
\end{equation}
where each direct-sum subspace is specified by the span of the \emph{tensor product eigenbasis} of $M_c$ with a given distinct eigenvalue, and this direct-sum structure satisfies $\sum_{k=1}^{N_{c}} h_{k} = \Dim\Hil$. 

Now that we have identified the column structure based on the degeneracy structure of the compatible collective observable $M_{c}$ and the basis elements which enter the BPT, we are left with the the task of assigning the row structure which will fix the BPT. Here, we have a discrete set of combinatoric choices of row alignments we can do, given the column structure and the specification of the basis elements of the compatible collective observable. For this purpose, we now turn to understanding the conditions under which the coarse-graining is quasi-classical.

\subsection{Superselection Sectors and Emergent Quasi-Classicality}

We can use the transition structure of the Hamiltonian in the tensor product eigenbasis of $M_c$ to further split our partial bipartition into direct-sum sectors which will act as superselection sectors in our coarse-graining scheme. Based on the Hamiltonian expressed in this tensor product $M_c$ basis, one can identify unions of column subspaces, $\tilde{\Hil}_{q} \cong \bigoplus_{k_{q}} \Hil^{(c)}_{k_{q}}$ for some $q = 1,2,\cdots,N_{\mrm{sectors}}$, for which the Hamiltonian has no tunneling terms connecting these subspace unions, such that for all states $\ket{\psi_{q}} \in \tilde{\Hil}_{q}$ and $\ket{\psi_{q'}} \in \tilde{\Hil}_{q'}$,
\begin{equation}
\label{eq:variational_superselection}
\langle \psi_{q} | \Ham | \psi_{q'} \rangle \: = \: 0,  \: \: \: \:  \mrm{if} \: \: q \neq q' \: .
\end{equation}
Each such union of sectors represents a superselection sector for our coarse-graining since these different unions do not interact. Each direct-sum sector will be arranged as a distinct block in a BPT of a partial bipartition in the compact form with the row structure yet to be identified based on a criterion of emergent quasi-classicality. Compact form within each sector will allow minimal loss of coherence under the state-reduction map induced by the BPT, so the coarse-graining we find will indeed reflect emergent quasi-classicality from dynamical decoherence and not the mere discarding of quantum coherences by misalignments between basis states in the structure of the BPT.

To fix the row structure within each direct-sum superselection sector, we now turn back to the question of emergent quasi-classicality. A feature of our coarse-graining will be that dynamics in the reduced space following the BPT state-reduction map will reflect features of classicality. We have already identified the column structure of our BPT which labels our macroscopic pointer states, based on compatibility of a collective observable with the Hamiltonian. The compact form of our BPTs ensure that minimal coherence between basis states is lost due to the action of the state-reduction map itself so we can now focus on the action of the Hamiltonian to induce \emph{dynamical} decoherence. In quasi-classical coarse-grainings, we expect the row alignment of the BPT to allow Hamiltonian evolution to decohere superpositions of our macroscopic pointer states by ``interaction" with $\mathcal{S}_{\mrm{internal}}$. For such quasi-classical BPTs, we can demand the rate at which this dynamical decoherence happens to be fastest and hence, most effective.

We will, for concreteness, focus on small time evolution since classical states, as opposed to non-classical ones, will exhibit decoherence starting at short time scales, and are expected to stay decohered as time progresses. We thus will quantify entanglement growth rate using the linear entanglement entropy\footnote{One could equally well use von~Neumann entanglement entropy too, of which the linear entropy forms the leading order contribution.}. Consider a pure state of the full Hilbert space $\rho(t) = \ket{\psi(t)}\bra{\psi(t)} \in \mathcal{L}(\Hil)$ evolving under evolution by the Hamiltonian $\Ham$, and a BPT which induces a state-reduction map $\Tr_{(R)}$ by tracing out the partial subsystem defined by its row subspace. Under this state-reduction map, the pure state $\rho$ gets mapped to
\begin{equation}
\rho_c(t) \: \equiv \: \Tr_{(R)}\rho(t) \: ,
\end{equation}
whose entanglement can be quantified by the linear entropy,
\begin{equation}
S_{\mrm{lin}}(t) = 1 - \Tr(\rho^{2}_{c}(t)) \: .
\end{equation}
It can be shown, as was done in \cite{mereology}, that for initially pure, \emph{unentangled} states, the linear entanglement entropy grows at $\mathcal{O}(t^2)$ to leading order and hence one can quantify the growth rate of entanglement entropy as,
\begin{equation}
\label{eq:Slindotdot}
\ddot{S}_{\mrm{lin}}(0) = -\Tr \left( \rho_{c}(0)\ddot{\rho}_{c}(0) + \ddot{\rho}_{c}(0) \rho_{c}(0) + 2\dot{\rho}_{c}(0) \right) \: ,
\end{equation}
where a dot over a quantity represents its time derivative and we have,
\begin{equation}
\dot{\rho}_{c}(0) = \Tr_{(R)}\left( \dot{\rho}(0)\right) =  \Tr_{(R)}\left( -i \left[\Ham,\rho(0) \right]\right) \: ,
\end{equation}
\begin{equation}
\ddot{\rho}_{c}(0) = \Tr_{(R)}\left( \ddot{\rho}(0)\right) =  \Tr_{(R)}\left( - \left[\Ham, \left[\Ham,\rho(0) \right]\right]\right) \: .
\end{equation}
We now proceed to use this quantifier $\ddot{S}_{\mrm{lin}}(0)$ of entanglement growth rate to quantify the classicality of a given BPT. The most natural initial states suited for a decoherence analysis offered by a partial bipartition are the ones supported on basis states of a single row in the BPT. 
In the familiar case of a tensor product structure corresponding to a rectangular BPT, such a state would correspond to an unentangled state -- a tensor product of a superposition for the state of the column subsystem $\Hil_{A}$ with a single basis state of the row system $\Hil_{B}$. For example, in the decoherence literature, considering a system and environment split $\Hil \cong \Hil_{S}\otimes\Hil_{E}$, one considers initial states of the form $\ket{\psi(0)} = \left(\sum_{s}c_{s}\ket{s}\right)\otimes\ket{E_{\mrm{ready}}}$ for some ready state on the environment. Thus states supported on a single row of a BPT of a partial bipartition are natural extensions of such initial pure states which are unentangled and therefore are good candidates to measure the dynamical decoherence of. Borrowing intuition and language from the decoherence paradigm, the state first branches, i.e.\ the environment (the row variable in our BPT) states evolve conditionally depending on the pointer state $\ket{s}$ of the system (corresponding to the columns of the BPT), following which there is dynamic decoherence where these conditional states of environment become orthogonal in time and stay so. The branching of such initial states happens at $\mathcal{O}(t)$ following which we expect these conditionally evolved states to decohere, which we can capture by the entanglement growth rate via Eq.\ (\ref{eq:Slindotdot}) which grows $\mathcal{O}(t^{2})$. Using this understanding, we propose a metric to quantify this dynamic decoherence as a probe of emergent quasi-classicality of a given BPT: One can construct, for each row $i = 1,2,\cdots,N_{R}$ of the BPT, a \emph{uniform} superposition state over all basis states in that row (each labelling a different macroscopic pointer state),
\begin{equation}
\ket{\phi_{i}} = \frac{1}{\sqrt{w_{i}}}\sum_{k \in \{1...w_i\}} \ket{e_{i,k}}  \: ,
\end{equation}
where $w_{i}$ is the number of basis elements in the $i$-th row of the BPT and $k$ iterates over all such basis elements. For each such uniform superposition state, defined on each row, we can compute the entanglement growth rate $\ddot{S}_{\mrm{lin, i}}(0)$ as a measure of dynamical decoherence and then quantify the emergent classicality of the BPT, $Q_{\mrm{BPT}}$ as the average of these entanglement growth rates over all rows of the BPT,
\begin{equation}
Q_\mrm{{BPT}} = \frac{1}{N_{R}}\sum_{i = 1}^{N_{R}} \ddot{S}_{\mrm{lin, i}}(0) \: .
\label{eq:Q_BPT}
\end{equation}
The average over all rows can be interpreted as a statistical mixture over different basis states of the partial system (label by the rows), representing a probabilistic treatment of not knowing the state of the partial system which will be traced over. This is one such metric which captures the idea of emergent classicality using a notion of effective dynamical decoherence. We adopt this as a demonstration of principle, but emphasize that, depending on the context of the coarse-graining being constructed, one can come up with more amenable definitions of quantities which capture the emergent quasi-classical nature of the BPT. Given this metric, one can now vary over all possible BPT row arrangements, which are discrete and finite choices of basis element arrangement within each column, and choose the BPT which maximizes $Q_{\mrm{BPT}}$ representing most effective dynamical decoherence and hence, is the most quasi-classical. It should be noted that this prescription may not always yield a unique preferred BPT reflecting a preferred underlying classical partial bipartition $\Hil \sim \mathcal{S}_{\mrm{collective}}\oslash \mathcal{S}_{\mrm{internal}}$, but rather will often select a class of BPTs which have the same classicality quantification based on the metric above. One can interpret this residual freedom as gauge choices of coarse-grainings, which, even though they induce different state-reduction maps based on the BPT, have the same measure of emergent classicality in the reduced subspace of the macroscopic variable based on the chosen metric. Often, this will be intimately tied with some symmetry structure in the Hamiltonian which does not distinguish between the underlying different degrees of freedom and hence leads to a class of BPTs with the same emergent quasi-classical behavior. 

The above algorithm just described for obtaining such collective, quasi-classical coarse-grainings can now be summarized as follows:
\begin{enumerate}
\item{Based on the given microfactorization of Hilbert space into degrees of freedom, find a collective observable $M_{c}$ which is most compatible with the Hamiltonian as given by Eq.\ (\ref{Mc_compatibility}).}
\item{Eigenspaces of $M_{c}$ corresponding to distinct eigenvalues will label different column subspaces of the BPT as macroscopic, collective pointer states robust under Hamitonian evolution. These will make up the partial subsystem $\mathcal{S}_{\mrm{collective}}$ which will be the coarse-graining of $\Hil$.}
\item{Eigenspaces of $M_{c}$ with distinct eigenvalues, along with the microfactorization, furnish an orthonormal basis for Hilbert space and resolve the column subspaces with tensor product basis elements with distinct eigenvalues.}
\item{Once the column structure of the BPT is fixed, use the transition structure of the Hamiltonian in this tensor product basis of $M_{c}$ to identify superselection sectors as done in Eq.\ (\ref{eq:variational_superselection}), each of which will form a disjoint block of the BPT. Each block will be arranged in the compact form to minimize loss of coherence due to the action of the state-reduction map induced by the BPT.}
\item{Now consider dynamical decoherence to fix the remaining freedom in each such block in the compact form to identify the alignment of the rows in the BPT. Iterate over the finite, discrete permutations of row arrangements and select (the class of) BPT(s) which maximize entanglement production as a measure of effective dynamical decoherence as done in Eq.\ (\ref{eq:Q_BPT}).}

\end{enumerate}
In the next section, using a concrete example of the Ising model, we will demonstrate this algorithm for constructing a quasi-classical coarse-graining based on a collective variable compatible with the Hamiltonian. 

\section{Example of the Variational Approach: Coarse-Graining the Ising Model as a Partial Bipartition}
\label{sec:ising}

Let us now consider a concrete example where we can apply the coarse-graining scheme developed above. We will focus on the Ising model in 1-D and see how we can capture collective features of the model which are effectively classical. Consider $N \geq 2$ spin-$\frac{1}{2}$ particles described by a tensor-product\footnote{For completeness, we mention that while one can study several dual pairs of lattice theories such as the Ising model \cite{Radicevic:2016tlt,Cotler:2017abq}, which differ by global decomposition changes of Hilbert space, in this paper we focus on a fixed micro-decomposition of the underlying degrees of freedom.} Hilbert space $\Hil \cong \bigotimes_{\mu=1}^{N}\Hil_{\mu}$ on a 1-D lattice evolving under the Ising Hamiltonian,
\begin{equation}
\label{isingHam}
\Ham  \sim - \sum_{\mu = 1}^{N-1} \left( {\sigma}^{(\mu)}_{z} \otimes {\sigma}^{(\mu + 1)}_{z} \right) - g\sum_{\mu = 1}^{N} {\sigma}^{(\mu)}_{x} \: ,
\end{equation}
where $g > 0$ characterizes the strength of the external magnetic field in the $x$-direction and the symbol $\sim$ in the definition of an operator implies that we will normalize it under the Frobenius norm. As usual, ${\sigma}^{(\mu)}_{z}$ is the Pauli z-operator on the $\mu$-th spin on the lattice and ${\sigma}^{(\mu)}_{x}$ is the Pauli x-operator. Note that our Ising Hamiltonian does \emph{not} have periodic boundary conditions and corresponds to an open chain with $N$ sites. We choose this specific boundary condition since the results in this case are more compact to describe and therefore help in the exposition of the idea. The same analysis could also be implemented for different boundary conditions and the results could be interpreted along similar lines. Our goal is now to look for the collective observable of the $N$ spins most compatible with the Hamiltonian. We model the operator $M_{\mu}$ in Eqs. (\ref{Mc_definition}) and (\ref{M_mu_operator}) as a unit-normed operator under the Frobenius norm,
\begin{equation}
\label{M_mu_ising}
M_{\mu} = \sqrt{\frac{2}{1 + \alpha^2} }\left( {\sigma}^{(\mu)}_{z} + \alpha  {\sigma}^{(\mu)}_{x} \right) \: \: ,\: \: \forall \: \: \mu \: \: ,
\end{equation}
characterized by the parameter $\alpha \geq 0$ quantifying the mix between Pauli x and z operators. As a simplifying assumption, since the Hamiltonian only contains Pauli-x and Pauli-z operators, we do not take support on $ {\sigma}^{(\mu)}_{y}$ in Eq. (\ref{M_mu_ising}) and only consider mixing between x and z to determine the most compatible collective observable. This assumption could be relaxed to perform a more complete analysis.
It should be noted that we take the operator $M_{\mu}$ characterized by the same parameter $\alpha$ for each spin $\mu$. Under the Hamiltonian, all but the edge spins are treated on an equal footing and are indistinguishable from the point of view of dynamics. Hence one should expect a similar parametrization, because in this variational approach, we only have access to the Hamiltonian, and any structure that emerges should respect the underlying symmetry of the dynamics. The edge spins, represented by $M_{1}$ and $M_{N}$, should in general be treated differently due to non-periodic boundary conditions and while this analysis can be carried out in a straightforward way, we choose to parametrize their contribution to $M_c$ by the same value of $\alpha$. This can be justified on two grounds. First, there are only 2 edge spins compared to $(N-2)$ bulk ones and for moderately sized chains and larger, any difference due to edge spins will be sub-dominant. Second, choosing the same parametrization for each spin will allow a more elegant understanding of the collective observable $M_{c}$ as an \emph{average} quantity over the spin chain and a clean interpretation of the coarse-graining scheme where the macroscopic variable will be labeled by distinct values of this average quantity. 

We can now compute the Frobenius norm of the commutator of $M_c$ and the Ising Hamiltonian $\Ham$ parametrized by $\alpha$, which gives us
\begin{equation}
\label{ising_commutator_norm}
\Bigl| \Bigl| \lcb \Ham, M_{c} \rcb \Bigr| \Bigr|_{f} = \frac{(N-1)\alpha^2 + 2Ng^2}{2^{N-3} \: N \: (4Ng^2 + N - 1) (1 + \alpha^2)} \: .
\end{equation}
Minimization of this norm above with respect to the parameter $\alpha$ will give us a collective observable most compatible with the Hamiltonian. To minimize this norm, we write it in a more suggestive way,
\begin{equation}
\Bigl| \Bigl| \lcb \Ham, M_{c} \rcb \Bigr| \Bigr|_{f} = \frac{1}{1 + (1/\alpha^2)} \left[\left(\frac{1}{2^{N-3} N}\right) \frac{(N-1)}{4Ng^2 + N - 1)} \right] +\frac{1}{1 + \alpha^2} \left[\left(\frac{1}{2^{N-3} N}\right) \frac{2Ng^2}{4Ng^2 + N - 1)} \right]  \: ,
\end{equation} 
where we have factored out the $g$ dependence in each term in the sum above, which we identify as,
\begin{equation}
\Bigl| \Bigl| \lcb \Ham, M_{c} \rcb \Bigr| \Bigr|_{f} = \left(\frac{1}{1 + (1/\alpha^2)}\right) T_{1}(g) + \left(\frac{1}{1 + \alpha^2}\right) T_{2}(g) \: .
\end{equation}
We notice a turning point in the $g$ dependence of $T_{1}(g)$ and $T_{2}(g)$. We find that this norm is minimized for the following condition depending on the value of $g$ which controls the relative importance of the two different terms in the normalized Ising Hamiltonian. For $g^{2} < (N-1)/2N$, we see that $T_{1}(g) > T_{2}(g)$ and hence to minimize the norm in Eq.\ (\ref{ising_commutator_norm}), the $\alpha$ dependent prefactor of $T_{1}(g)$ should be minimized which implies $\alpha = 0$. On the other hand, when  $g^{2} > (N-1)/2N$, we see that $T_{2}(g) > T_{1}(g)$ and hence $\alpha = \infty$ ensures minimization of the dominant term and hence the norm itself. One can confirm these results by formally differentiating, and checking for minima conditions in the relations above. Thus, we find that depending on the value of $g$ in the normalized Hamiltonian, the most compatible collective observable corresponds to,
\begin{equation}
\label{ising_min_condition}
\alpha(g) = \begin{cases}
0 \:, &    \: \: \: g < g_{\mrm{crit}}\\
\infty \:, &    \: \: \: g > g_{\mrm{crit}} \: \: ,\\
\end{cases}
\end{equation}
where $g_{\mrm{crit}}$ is the critical value\footnote{We have derived the value of $g_{\mrm{crit}}$ from the compatibility condition of Eq.\ (\ref{Mc_compatibility}); it should not be confused with the, in general different, value of $g$ where the phase transition in the Ising model takes place.} of $g$ given by
\begin{equation}
\label{gcrit}
g_{\mrm{crit}} \: = \: \sqrt{\frac{N-1}{2N}} \: ,
\end{equation}
such that for $g < g_{\mrm{crit}}$ when the Pauli z-z interaction term dominates, the most compatible collective observable $M_c$ is the average spin-z of the Ising chain (which corresponds to $\alpha =0$),
\begin{equation}
\label{Mc_Zcollective}
M_{c} \sim \sum_{\mu=1}^{N} {\sigma}^{(\mu)}_{z} \: \: , \: \: \: g < g_{\mrm{crit}} \: ,
\end{equation}
and for $g > g_{\mrm{crit}}$, when the external magnetic field in the $x$-direction dominates, the collective observable which is most compatible with the Hamiltonian is the average spin-x of the chain (which corresponds to $\alpha  = \infty$),
\begin{equation}
\label{Mc_Xcollective}
M_{c} \sim \sum_{\mu=1}^{N} {\sigma}^{(\mu)}_{x} \: \: , \: \: \: g > g_{\mrm{crit}} \: .
\end{equation}
Thus we see a phenomenon akin to a phase transition where depending on the dominant term in the Hamiltonian the most compatible $M_c$ is the one which is commuting with the dominant term. 

Once we have obtained the most compatible collective observable $M_c$, we can immediately use its distinct eigenvalues $m_k$ to label the macroscopic states of our coarse-graining by
\begin{equation}
m_{k} \in \bigg\{ -\frac{N}{2}, -\frac{N-2}{2} , \ldots ,\frac{N-2}{2}, \frac{N}{2} \bigg\} \: ,
\end{equation}
and these will be used to label distinct columns of the BPT which will specify the coarse-graining. Since the $N+1$ distinct values of $M_c$ will serve as labels of our macrostates under the coarse-graining prescription, our coarse-grained space will have a dimension $\Dim\mathcal{S}_{\mrm{collective}} = N_C$. Already we see a major benefit of our scheme in dimension reduction: our collective scheme will map our original Hilbert space of $\Dim\Hil = 2^N$ to a reduced, coarse-grained space with $\Dim\mathcal{S}_{\mrm{collective}} = (N+1)$. The dimension of the $k$-th collective (macroscopic) subspace corresponding to the $M_{c}$ eigenvalue $\bigl| m_{k} \bigr| = (N - 2(k-1))/2$ for $k = 1,2,\cdots,\left\lceil  \frac{N_C}{2} \right\rceil$ is then the binomial coefficient,
\begin{equation}
\Dim\Hil^{(c)}_{k} = {N\choose{k-1}} \: .
\end{equation}
Given that we are working with $N$ underlying spins specified by the tensor decomposition of Eq.\ (\ref{variational_microdecomposition}), we can use the natural tensor-product basis of $M_{c}$ to specify the orthonormal basis which we will be working with to fill the cells of our BPT. When $g < g_{\mrm{crit}}$, corresponding to $M_{c}$ being the average spin-z of the lattice, we use the $\{\ket{0},\ket{1} \}^{\otimes N}$ basis where $\{\ket{0},\ket{1} \}$ are the eigenstates of ${\sigma}_{z}$ and in the other case when $g > g_{\mrm{crit}}$ so that $M_c$ is the average spin-x of the chain, we can use the $\{\ket{+},\ket{-} \}^{\otimes N}$ where $\{\ket{+},\ket{-} \}$ are the eigenstates of ${\sigma}_{x}$. Thus we now have a \emph{fixed} orthornormal basis we will use to construct a BPT for the partial bipartition for the Ising model and a specification of different columns of the BPT labeled by distinct eigenvalues of the compatible collective observable $M_{c}$. The only freedom we now have is the choice of row alignments in our BPT for which we will turn to effective dynamical decoherence as a quantifier of quasi-classical behavior of the coarse-graining. We will take each disjoint block in our BPT corresponding to superselection sectors in the partial bipartition to be in compact form to make sure that minimal coherence is lost due to the action of the state reduction itself and any decoherence will be due to Hamiltonian evolution. As we will see, the two cases of $g > g_{\mrm{crit}}$ and $g < g_{\mrm{crit}}$ will have very different superselection properties based on the Hamiltonian, hence we will deal with them separately and describe the results for each case in detail.

\subsection{$g < g_{\mrm{crit}}$: Average spin-z as Collective Observable}

Let us first focus on the case when $g < g_{\mrm{crit}}$, so that the $z-z$ interaction term in the Ising Hamiltonian dominates which sets the most compatible collective observable $M_{c} \sim \sum_{\mu=1}^{N} {\sigma}^{(\mu)}_{z}$, the average spin-z of the Ising chain. As discussed before, this choice of $M_{c}$ offers us an orthonormal basis of $\{\ket{0},\ket{1} \}^{\otimes N}$ to work with in the BPT. The action of the Hamiltonian on these tensor product basis furnished from eigenstates of $M_{c}$ is to \emph{flip single bits} in the $\{\ket{0},\ket{1} \}$ basis due to the presence of the external magnetic field in the $x$-direction (in the case when $g > 0$). Due to this transition structure of the Hamiltonian in the $M_c$ basis, bit flips can successively connect each of the $N_C$ column subspaces and hence there is no superselection sector structure in this case. 

We illustrate the results for the case of $N = 3$ spins since there the results are tractable and easy to follow to demonstrate the physics behind them\footnote{While for such small number of spins, one might want to treat the edge spins on a different footing than the bulk ones since the edge contribution may not be sub-dominant, we take the same parametrization for each spin as in Eq. (\ref{M_mu_ising}) as a demonstration of principle with the $M_c$ being an average quantity over the entire chain.}. For $N = 3$, the compatible collective observable $M_{c}$ has $N_{c} = 4$ distinct eigenvalues which will label different columns of our BPT. In Table \ref{tab:N3_Z_eigen}, we list out these eigenstates of $M_{c}$ arranged in columns by their distinct eigenvalues. This Table \ref{tab:N3_Z_eigen} is not yet a BPT since we haven't yet considered row alignments, just a listing of eigenstates arranged by columns labeled by distinct eigenvalues of the compatible $M_{c}$. 
\begin{table}[h]
\begin{center}
\begin{tabular}{l*{4}{c}}
$M_{c}$ value:              & $-\frac{3}{2}$ & $-\frac{1}{2}$ & $\frac{1}{2}$ & $\frac{3}{2}$  \\  [0.2cm]
\hline
\hline
 & $\ket{000}$ & $\ket{100}$ & $\ket{011}$ & $\ket{111}$  \\ 
 &  & $\ket{010}$ & $\ket{101}$ &   \\
 &  & $\ket{001}$ & $\ket{110}$ &   \\
\end{tabular}
\caption{Tensor Product Eigenstates for $M_{c} = \sum_{\mu=1}^{3} {\sigma}^{(\mu)}_{z}$ for $N=3$ spins, arranged in columns labeled by distinct eigenvalues. Note that this is \emph{not} a BPT, just an enumeration of the eigenstates arranged by the column structure governed by the compatible $M_{c}$.}
\label{tab:N3_Z_eigen}
\end{center}
\end{table}
We will now consider different BPTs in compact form by iterating over different row arrangements of the eigenstates within each of the columns fixed by the collective observable $M_{c}$. By this token, there will be a total of $\prod_{k =1}^{N_{c}} \left(h_{k}! \right)$ number of permutations of row arrangements which will be the set of BPTs we will consider. There will, of course, be many redundancies in this way of enumerating different BPTs (such as inconsequential rearrangements differing by row swaps) compatible with the $M_{c}$ column structure but we iterate over them anyway to keep the permutations easy to track. In the case of $N=3$ spins, we will have a total of $36$ BPTs to iterate over and for each such BPT, we compute $Q_{\mrm{BPT}}$, the average entanglement growth rate over pure, uniform states defined on each row as defined in Eq.\ (\ref{eq:Q_BPT}),  and choose the class of BPTs which maximize this quantifier, representing effective dynamical decoherence as the most classical and compatible coarse-graining given the Hamiltonian.
\begin{figure}[h]
\includegraphics[width = 1\textwidth]{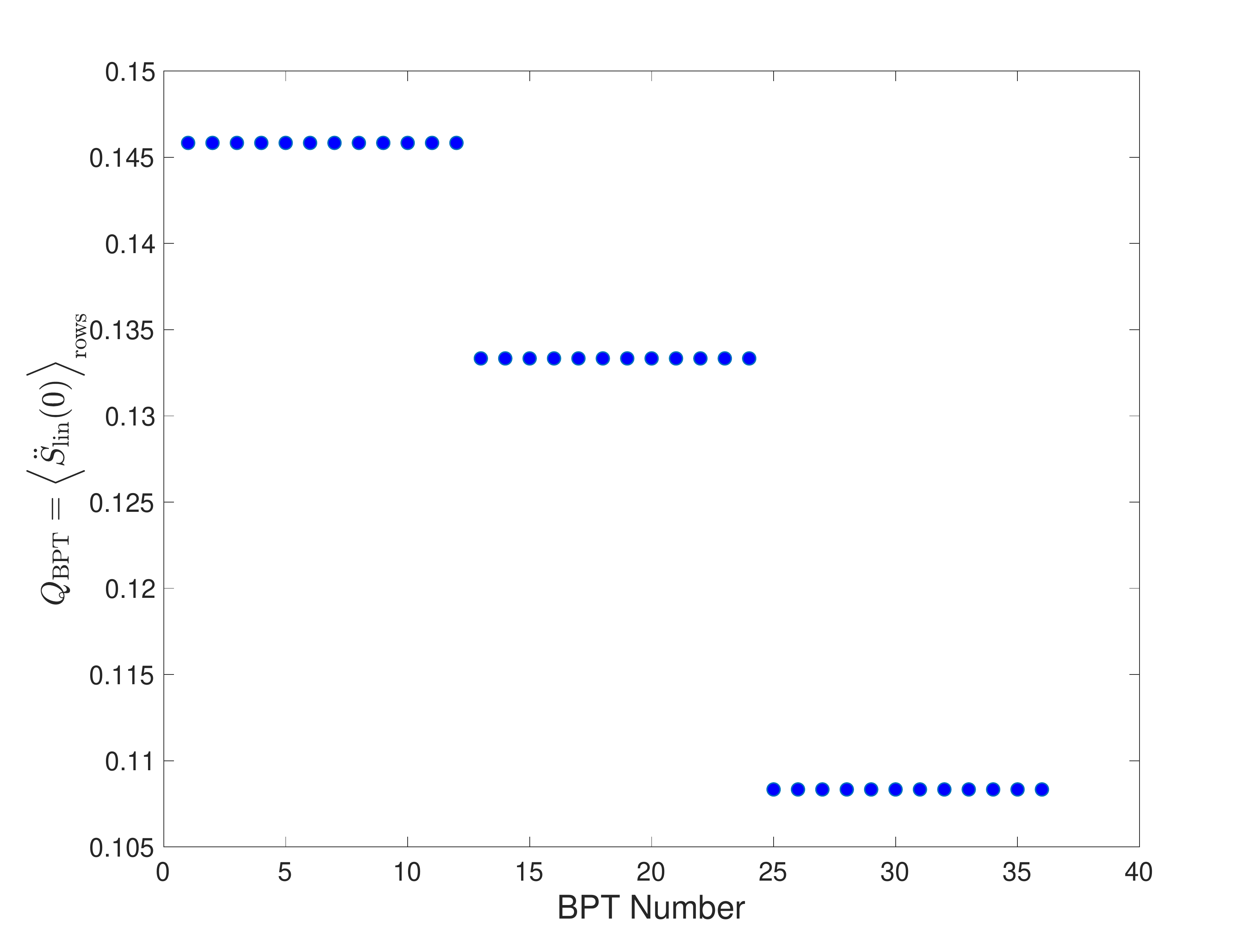}
\caption{Plot of average entanglement growth rate $Q_{\mrm{BPT}}$ over different BPTs (different row arrangements) for $N=3$ spins with the compatible collective observable $M_{c} = \sum_{\mu=1}^{3}{\sigma}^{(\mu)}_{z}$ corresponding to a value of $g = 0.5 < g_{\mrm{crit}}$.}
\label{fig:N3_Z_entropy}
\end{figure}
In Figure \ref{fig:N3_Z_entropy}, we plot the average entanglement growth rate $Q_{\mrm{BPT}}$ for these $36$ BPTs for the case of $N=3$ spins with the collective observable being $M_{c} \sim \sum_{\mu} {\sigma}^{(\mu)}_{z}$. We notice that these BPTs come in three distinct classes differentiated by entanglement growth rates. The class of BPTs with the maximum entanglement growth rate is selected as the most quasi-classical one and we find there are six such distinct BPTs belonging to this selected quasi-classical equivalence class. (While the plot in figure \ref{fig:N3_Z_entropy} shows $12$ such BPTs with the largest value of $Q_{\mrm{BPT}}$, as mentioned there are redundancies in our enumeration and only $6$ of them are distinct from the perspective of the state-reduction map they induce.)  In Table \ref{tab:N3_Z_selected}, we display these $6$ selected, quasi-classical BPTs.

\begin{figure}[h]
	\includegraphics[width = 1\textwidth]{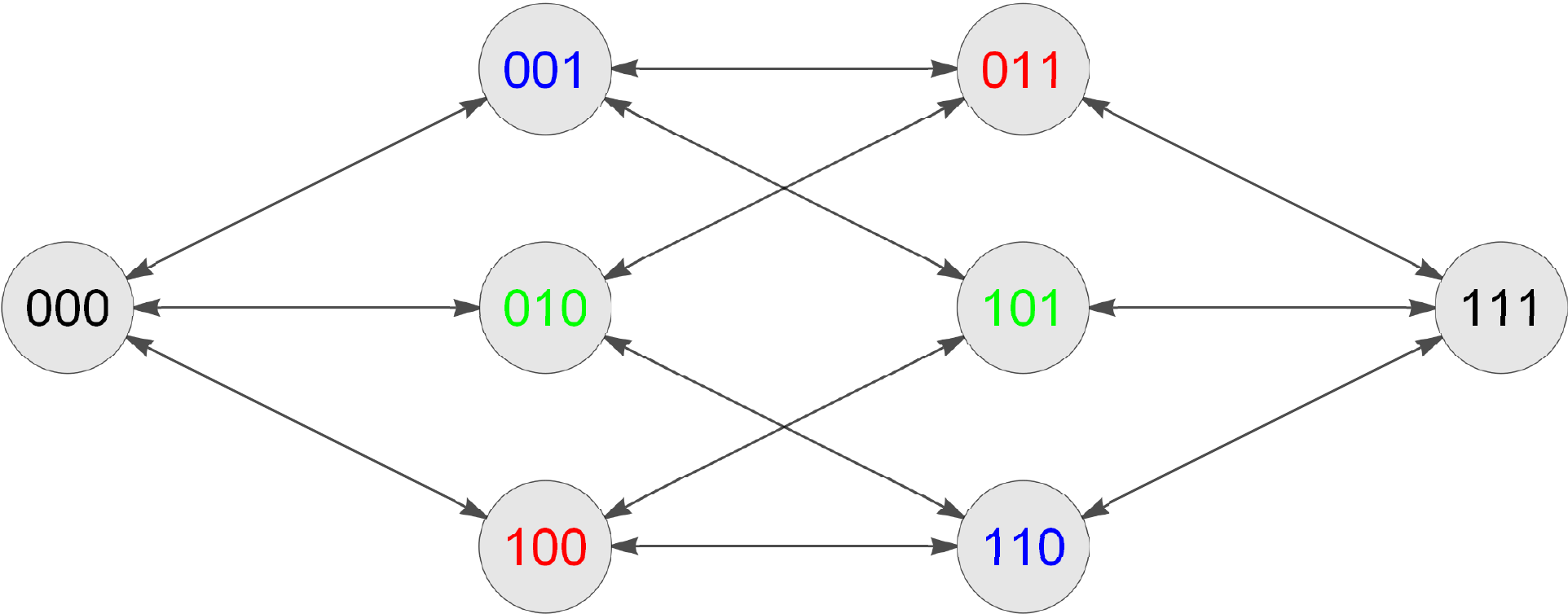}
	\caption{Transition structure of the Hamiltonian in the tensor product basis of $M_{c} = \sum_{\mu=1}^{3} {\sigma}^{(\mu)}_{z}$ for $N=3$ spins. It should be noted that this is not a BPT representation but only illustrates the transition structure of the Hamiltonian in the chosen basis.}
	\label{fig:N3_Z_graph}
\end{figure}


\begin{table}[h]
\begin{center}
\begin{tabular}{l c r}

\begin{tabular}{c|c|c|c}
\hline 
\multicolumn{1}{|c|}{$000$}& \textcolor{blue}{$001$} & \multicolumn{1}{c|}{\textcolor{red}{$\mathbf{011}$}}& \multicolumn{1}{c|}{$111$}\tabularnewline
\hline 
 & \textcolor{green}{$010$} & \multicolumn{1}{|c|}{\textcolor{blue}{$\mathbf{110}$}} \tabularnewline
\cline{2-3} 
& \multicolumn{1}{|c|}{\textcolor{red}{$100$}} & \multicolumn{1}{|c|}{\textcolor{green}{$\mathbf{101}$}}  \tabularnewline
\cline{2-3} 
\end{tabular} 
\hspace{1cm}

\begin{tabular}{c|c|c|c}
\hline 
\multicolumn{1}{|c|}{$000$}& \textcolor{green}{$010$} & \multicolumn{1}{c|}{\textcolor{red}{$\mathbf{011}$}}& \multicolumn{1}{c|}{$111$}\tabularnewline
\hline 
 & \textcolor{blue}{$001$} & \multicolumn{1}{|c|}{\textcolor{green}{$\mathbf{101}$}} \tabularnewline
\cline{2-3} 
& \multicolumn{1}{|c|}{\textcolor{red}{$100$}} & \multicolumn{1}{|c|}{\textcolor{blue}{$\mathbf{110}$}}  \tabularnewline
\cline{2-3} 
\end{tabular}  
\hspace{1cm}

\begin{tabular}{c|c|c|c}
\hline 
\multicolumn{1}{|c|}{$000$}& \textcolor{red}{$100$} & \multicolumn{1}{c|}{\textcolor{green}{$\mathbf{101}$}}& \multicolumn{1}{c|}{$111$}\tabularnewline
\hline 
 & \textcolor{blue}{$001$} & \multicolumn{1}{|c|}{\textcolor{red}{$\mathbf{011}$}} \tabularnewline
\cline{2-3} 
& \multicolumn{1}{|c|}{\textcolor{green}{$010$}} & \multicolumn{1}{|c|}{\textcolor{blue}{$\mathbf{110}$}}  \tabularnewline
\cline{2-3} 
\end{tabular}

 \tabularnewline

\\
\begin{tabular}{c|c|c|c}
\hline 
\multicolumn{1}{|c|}{$000$}& \textcolor{blue}{$001$} & \multicolumn{1}{c|}{\textcolor{green}{$\mathbf{101}$}}& \multicolumn{1}{c|}{$111$}\tabularnewline
\hline 
 & \textcolor{green}{$010$} & \multicolumn{1}{|c|}{\textcolor{red}{$\mathbf{011}$}} \tabularnewline
\cline{2-3} 
& \multicolumn{1}{|c|}{\textcolor{red}{$100$}} & \multicolumn{1}{|c|}{\textcolor{blue}{$\mathbf{110}$}}  \tabularnewline
\cline{2-3} 
\end{tabular} 
 \hspace{1cm}
 
\begin{tabular}{c|c|c|c}
\hline 
\multicolumn{1}{|c|}{$000$}& \textcolor{green}{$010$} & \multicolumn{1}{c|}{\textcolor{blue}{$\mathbf{110}$}}& \multicolumn{1}{c|}{$111$}\tabularnewline
\hline 
 & \textcolor{blue}{$001$} & \multicolumn{1}{|c|}{\textcolor{red}{$\mathbf{011}$}} \tabularnewline
\cline{2-3} 
& \multicolumn{1}{|c|}{\textcolor{red}{$100$}} & \multicolumn{1}{|c|}{\textcolor{green}{$\mathbf{101}$}}  \tabularnewline
\cline{2-3} 
\end{tabular} 

\hspace{1cm}
\begin{tabular}{c|c|c|c}
\hline 
\multicolumn{1}{|c|}{$000$}& \textcolor{red}{$100$} & \multicolumn{1}{c|}{\textcolor{blue}{$\mathbf{110}$}}& \multicolumn{1}{c|}{$111$}\tabularnewline
\hline 
 & \textcolor{blue}{$001$} & \multicolumn{1}{|c|}{\textcolor{green}{$\mathbf{101}$}} \tabularnewline
\cline{2-3} 
& \multicolumn{1}{|c|}{\textcolor{green}{$010$}} & \multicolumn{1}{|c|}{\textcolor{red}{$\mathbf{011}$}}  \tabularnewline
\cline{2-3} 
\end{tabular} 
\end{tabular}
\par\end{center} 
\caption{(color online) The $6$ selected, quasi-classical BPTs which maximize $Q_{\mrm{BPT}}$ as a measure of dynamical coherence for $N=3$ spins corresponding to the compatible collective observable $M_{c} = \sum_{\mu=1}^{3} {\sigma}^{(\mu)}_{z}$. Allowed transitions by the Hamiltonian flip single bits in the $\{\ket{0},\ket{1} \}$ basis. States in the middle two columns not connected by Hamiltonian transitions are shown by the same color.}
\label{tab:N3_Z_selected}
\end{table}

These selected BPTs have a common transition structure given by the Hamiltonian, which we portray in Fig. \ref{fig:N3_Z_graph} to better understand these results. The arrows depict transitions between different basis states given by the Hamiltonian. It can be seen that the selected quasi-classical BPTs are ones which induce transitions under the Hamiltonian by spreading maximally across rows of the BPT. This way, maximum coherence is lost for pure states supported on one row, leading to decoherence of different macroscopic pointer states.

Similar to the $N=3$ case, one can run an analysis on $N = 4$ spins in which case there will be $N_{C} = 5$ columns labeled by distinct eigenvalues of $M_{c}$. The results we find are very similar to the $N=3$ spin case. The selected BPTs have row alignments for which the Hamiltonian transitions maximize dynamical decoherence between different macroscopic pointer states under the state-reduction map induced by the BPT. In table \ref{tab:N4_Z_selected_instance}, we show one instance of the class of selected BPTs with the largest $Q_{\mrm{BPT}}$. Given that the Hamiltonian again induces single bit flips, we  see that this BPT has a transition structure to maximize dynamical loss of coherence. In figure \ref{fig:N4_Z_entropy}, we plot the average entanglement growth rate $Q_{\mrm{BPT}}$ for all the different row alignments possible given the column structure fixed by the collective $M_{c}$. As with the $N=3$ case, we see different classes of BPTs emerge which correspond to different entanglement growths. It is interesting to note how distinct the first few classes with the largest entanglement growth rates are which correspond to quasi-classical behavior (as shown in the inset in figure \ref{fig:N4_Z_entropy}), in contrast with generic permutations where the entanglement growth varies in a more smooth fashion, representing the generic nature of typical BPTs being away from quasi-classicality. 
\begin{table}[h!]
\begin{center}
\begin{tabular}{c c|c|cc}
\hline 
\multicolumn{1}{|c|}{$0000$}&   \multicolumn{1}{c|}{$1000$} & \multicolumn{1}{c|}{$1010$}& \multicolumn{1}{c|}{$1101$} & \multicolumn{1}{c|}{$1111$} \tabularnewline
\hline 
&   \multicolumn{1}{|c|}{$0100$} & \multicolumn{1}{c|}{$0101$}& \multicolumn{1}{|c|}{$1110$} &  \tabularnewline
\cline{2-4} 
&   \multicolumn{1}{|c|}{$0010$} & \multicolumn{1}{c|}{$0110$}& \multicolumn{1}{|c|}{$1011$} &  \tabularnewline
\cline{2-4} 
&   \multicolumn{1}{|c|}{$0001$} & \multicolumn{1}{c|}{$1001$}& \multicolumn{1}{|c|}{$0111$} &  \tabularnewline
\cline{2-4} 
&    & \multicolumn{1}{c|}{$0011$}&  &  \tabularnewline
\cline{3-3} 
&    & \multicolumn{1}{c|}{$1100$}&  &  \tabularnewline
\cline{3-3} 
\end{tabular} 
\caption{One instance of the class of selected quasi-classical BPTs for $N = 4$ spins corresponding to the compatible collective observable $M_{c} = \sum_{\mu=1}^{3} {\sigma}^{(\mu)}_{z}$. Allowed transitions by the Hamiltonian flip single bits in the $\{\ket{0},\ket{1} \}$ basis.}
\label{tab:N4_Z_selected_instance}
\end{center}

\end{table}

\begin{figure}[h!]
\includegraphics[width = 1\textwidth]{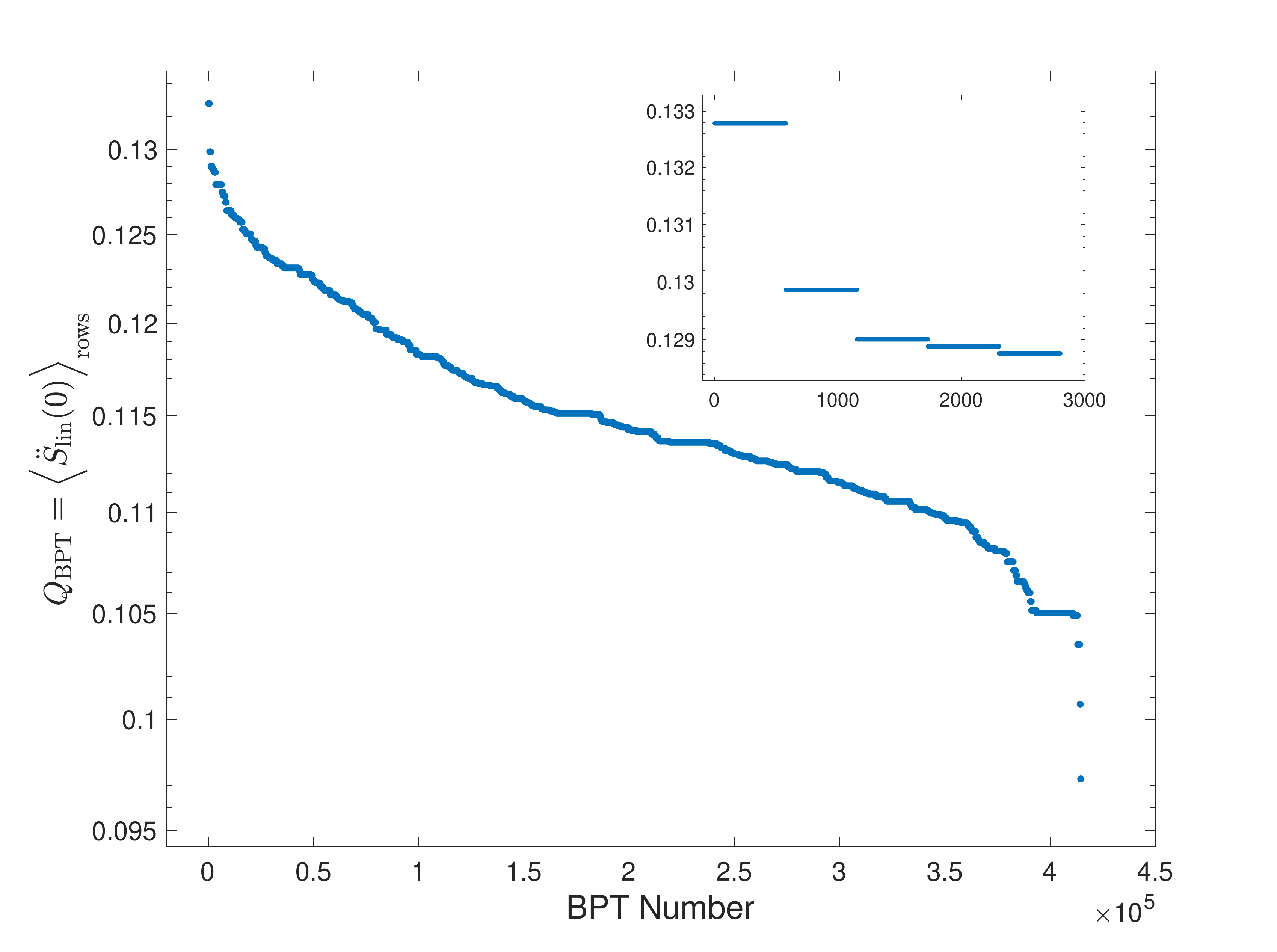}
\caption{Plot of average entanglement growth rate $Q_{\mrm{BPT}}$ over different BPTs (different row arrangements) for $N=4$ spins with the compatible collective observable $M_{c} = \sum_{\mu=1}^{3}{\sigma}^{(\mu)}_{z}$ corresponding to a value of $ g = 0.6 < g_{\mrm{crit}}$. The inset shows the first few classes of BPTs with lowest values of $Q_{\mrm{BPT}}$.}
\label{fig:N4_Z_entropy}
\end{figure}

\subsection{$g > g_{\mrm{crit}}$: Average spin-x as Collective Observable}

In the other case when $g > g_{\mrm{crit}}$, the external magnetic field term along the $x$-direction in the Ising Hamiltonian dominatesm which sets the most compatible collective observable to be $M_{c} \sim \sum_{\mu=1}^{N} {\sigma}^{(\mu)}_{x}$, the average spin-x of the Ising chain. As discussed before, this choice of $M_{c}$ offers us an orthonormal basis of $\{\ket{+},\ket{-} \}^{\otimes N}$ to work with in the BPT. The action of the Hamiltonian on these tensor product basis furnished from eigenstates of $M_{c}$ is to \emph{flip two adjacent bits} in the $\{\ket{+},\ket{-} \}$ basis due to the presence of the $z-z$ interaction term. Due to this transition structure of the Hamiltonian in the $M_c$ basis, bit flips of two adjacent spins \emph{cannot} successively connect each of the $N_C$ column subspaces and hence there will be superselection sectors in this case. Based on this transition structure of the Hamiltonian, we can split the $M_{c} \sim \sum_{\mu} {\sigma}^{(\mu)}_{x}$ basis into superselection sectors and iterate over row arrangements in each sector to maximize dynamical decoherence (by maximizing the average entanglement growth rate $Q_{\mrm{BPT}}$) to find the most compatible, quasi-classical coarse-graining. In Table \ref{tab:N3_X_selected}, we show the unique selected quasi-classical BPT. The selected BPTs again have the same feature that the transitions by the Hamiltonian are such that there is maximum dynamical decoherence under the state-reduction map induced by the BPT. A detailed analysis of this case of $g > g_{\mrm{crit}}$ can be done as was done for the $g < g_{\mrm{crit}}$ case by studying the variation of $Q_{\mrm{BPT}}$ for these BPTs and the Hamiltonian transition structure, but we keep the discussion here brief since the results follow the same physics as described in the previous subsection.  We see that depending on the nature of the Hamiltonian, different coarse-grained features can emerge as the ones which qualify as classical. Underlying symmetries of the Hamiltonian are reflected in the class of coarse-grainings which get picked out and reinforce the role played by dynamics in determining the set of quasi-classical variables of a system.

\begin{table}
\begin{center}
\begin{tabular}{lr}

\begin{tabular}{p{0.4\linewidth}}
\begin{tabular}{c|c|}
\hline 
\multicolumn{1}{|c|}{$---$}& $+-+$ \tabularnewline
\hline 
 & $++-$ \tabularnewline
\cline{2-2} 
& \multicolumn{1}{c|}{$-++$} \tabularnewline
\cline{2-2} 
\end{tabular}  \tabularnewline
 \hspace{82pt} %
\begin{tabular}{c|c}
\hline 
\multicolumn{1}{|c|}{$-+-$}&  \multicolumn{1}{c|}{$+++$} \tabularnewline
\hline 
 \multicolumn{1}{|c|}{$+--$} & \tabularnewline
\cline{1-1} 
 \multicolumn{1}{|c|}{$--+$} & \tabularnewline
\cline{1-1} 
\end{tabular} 
\end{tabular}

\end{tabular}

\caption{The selected, unique quasi-classical BPTs with minimum entanglement growth rate for $N=3$ spins corresponding to the compatible collective observable $M_{c} = \sum_{\mu=1}^{3} {\sigma}^{(\mu)}_{x}$. Allowed transitions by the Hamiltonian flip two adjacent bits in the $\{\ket{+},\ket{-} \}$ basis which induce the superselection sectors.}
\label{tab:N3_X_selected}
\end{center}
\end{table}
\vspace{0.7cm}

We would emphasize that a number of assumptions, albeit physically motivated, went into the formulation of this algorithm and for completeness, we enumerate them here to remind the reader of the context we are focusing on. First, we are working with a fixed microfactorization of Hilbert space into a collection of degrees of freedom which we wish to coarse-grain. We further take this access to the microfactorization to furnish us a tensor product basis for Hilbert space, in particular for the collective observable $M_{c}$. Once we have picked out $M_c$ by the compatibility condition with the Hamiltonian of Eq.\ (\ref{Mc_compatibility}), we focus only on compact BPTs since they minimize loss of coherence due to the action of the state-reduction map itself so we can study the dynamical decoherence which leads to classicality. One could work with more general bipartitions by allowing unitary change of basis which mixes between degrees of freedom and it would be interesting to develop an algorithm, akin to the one of Section \ref{sec:Irrep-Decompositions-of}, to construct partial BPTs based on access to a restricted set of observables which do not span an algebra. Such an algorithm would generalize considerations to non-compact form as well allow for superselection sectors governed by the specifying set of measurements.

To measure the dynamical decoherence induced by the internal subspace $\mathcal{S}_{\mrm{internal}}$, we used linear entanglement entropy for small times to measure how fast decoherence happens since non-classical states are expected to not decohere as fast on short timescales.  Our choice of initial states were uniform superposition states supported on a single row of the BPT, which offered a natural generalization of initial, unentangled states between the system and a ready state for the environment. One can imagine relaxing these assumptions to develop a more generic framework by studying a broader class of initial states, which would reflect more freedom in the ready state of $\mathcal{S}_{\mrm{internal}}$, or the type of superpositions in $\mathcal{S}_{\mrm{collective}}$ best suited to physical situations where decoherence is expected to be important. While our choice of linear entanglement entropy was for ease of mathematical manipulation, different measures of decoherence and entanglement such as von Neumann entropy could also be used. One can study the long time behavior where it is expected dynamical decoherence will have picked out the classical pointer basis where the reduced density matrix becomes diagonal and stays so. While more detailed, we expect the basic underlying physics to still be similar to the results described in this paper. 

 We would also briefly recall some features of the numerics which have gone into figures \ref{fig:N3_Z_entropy} and \ref{fig:N4_Z_entropy}. Recall that we are working with a normalized Hamiltonian so as to be able to tune the value of the interaction parameter $g$ which sets the strength of the external magnetic field in the $x$-direction to be able to toggle between a Hamiltonian with only the $z-z$ interaction ($g=0$) between neighboring spins to a Hamiltonian with only the external field ($g = \infty$). While this normalization is for us to get a better handle between the interplay of $\Ham$ and $M_{c}$, it affects the rate of entropy production and hence our results here are a proof of principle. Our measure of quasi-classicality of maximizing $Q_{\mrm{BPT}}$ is a suggestive quantifier that captures the qualitative idea that superpositions of classical macroscopic pointer states decohere effectively under evolution. We have focused on small time evolution for concreteness (by probing $\ddot{S}_{\mrm{lin}}(0)$) since we expect classical states under the quasi-classical BPT will start decohering rather quickly, unlike non-classical ones. The first few classes of BPTs with largest $Q_{\mrm{BPT}}$ are the most quasi-classical compared to generic BPTs. We emphasize that our measure of Eq.\ (\ref{eq:Q_BPT}) is a zeroth-order attempt to capture the broad idea of quasi-classicality and depending on the exact application one wishes to have, this quantifier can be more suitably chosen to yield more precise and richer quasi-classical coarse-grainings. The examples illustrated here were for a small number $N$ of spins for ease of tractability of results and we expect these results will become sharper as one goes to higher dimensions, since decoherence is typically aided by having large dimensions of the internal ``environment'' being traced over.
\section{Applications and Future Work}\label{sec:discussion}

In this section we discuss some of the potential applications of generalized and partial bipartitions to extant problems in the literature.

\subsection{Quantum Information Encoding}

It has long been recognized that the irreducible representation (irrep) structure of an operator algebra plays an important role in quantum information. In particular, in the theory of quantum error correction, the generalized bipartition structure is recognized as the fundamental structure behind all quantum error correcting codes \cite{blume2008characterizing,kribs2005unified,knill2000theory}. Noiseless subsystems, for example, are identified by the generalized bipartition  associated with the commutant algebra generated by the errors \cite{lidar2014review}. Subsystem codes \cite{kribs2005unified,kribs2005operator,bacon2006operator}, which generalize the idea of noiseless subsystems, are identified by a generalized bipartition usually associated with a non-abelian group (which also generalizes the construction of stabilizer codes that are associated with abelian groups \cite{poulin2005stabilizer,johnston2009operator}). Similarly, the idea of quantum state compression with respect to a preferred set of observables \cite{2018AnHP...19.1891B} relies on the generalized bipartition associated with the algebra of preferred observables; it is conceptually equivalent to the notion of quantum state reductions from a restricted algebra of observables that we discussed in Section \ref{subsec: State Reduction from Operational Constraints}. In such applications, the problem of identifying the generalized bipartition associated with the relevant algebra is fundamental. In the cases where the relevant algebra is given by a group with a well understood irrep structure, the generalized bipartition is clear. In all other cases, however, the algorithm presented in Section \ref{sec:Irrep-Decompositions-of} can be used as an analytical tool to identify the generalized bipartition.

We may also consider the more general problem of characterizing how the evolution given by a Hamiltonian or a channel acting on the physical system affects the logical degrees of freedom. Such problems are traditionally addressed by looking for symmetries of dynamics that identify the generalized bipartition (i.e. the irrep decomposition) with respect to which the dynamics are restricted to distinct irrep sectors \cite{2018PhRvA..97e2130K}. The main difficulty with this approach is of course in identifying ``useful'' symmetries. An alternative approach would be to identify an algebra that contains the operator(s) of dynamics directly, without appealing to symmetries. The action of quantum channels, for example, can be restricted to the irrep sectors of the algebra generated by their Kraus operators \cite{zanardi2000stabilizing}. Similarly, when dealing with Hamiltonians, even if we cannot find the irrep structure of the algebra generated by the Hamiltonian itself (a task that is equivalent to diagonalizing it), we can consider an irrep structure of some larger algebra that contains the Hamiltonian. This is, in fact, what we achieve by identifying a symmetry: the commutant algebra of the symmetry group is an algebra that contains the Hamiltonian which allows us to restrict its action to the irreps of the group. There are other ways, however, besides symmetries, to identify an algebra that contains the Hamiltonian. For example, if the Hamiltonian is a sum of multiple terms then it belongs to the algebra generated by those terms.  In particular, given a parameterized Hamiltonians as a sum of ``tunable'' terms whose strength is set by some natural or experimental constraints, the dynamics can be restricted to the irreps of the algebra generated by the tunable terms, independent of the parameters. A prime example of such scenario is the tunable exchange interaction in the Heisenberg spin-$\frac{1}{2}$ chain that implements qubit operations \cite{divincenzo2000universal}. It would be interesting to see if the algorithm of Section \ref{sec:Irrep-Decompositions-of} can address such problems, especially when the standard symmetry considerations fall short. 
\subsection{Bulk Reconstruction}

The AdS/CFT correspondence \cite{Maldacena:1997re,Gubser:1998bc,Witten:1998qj,Aharony:1999ti} equates the partition function, and thus the Hilbert space, of string theory or M-theory on negatively curved backgrounds and superconformal field theories. 
In the large $N$ limit, the relation describes a duality between classical (super)gravity in $D+1$ dimensions with fixed small and negative cosmological constant and a particular sector of (super)conformal field theories in $D$ dimensions with fixed large central charge.
So, in this limit, the correspondence becomes a holographic one in which we can use computations in a CFT living on the boundary of an appropriate spacetime to tell us about gravitational quantities in the bulk of the spacetime, and vice versa.
In many cases we would prefer to treat the bulk as a fixed solution to Einstein's equations sourced by quantum fields---that is, to consider only energy regimes and sets of observables which do not probe stringy or quantum-gravitational degrees of freedom in the bulk.
In the language we have used throughout the paper, it is thus natural to think of the classical states as living in a coarse-grained Hilbert space obtained by tracking only a restricted set of observables, namely (low-point) correlation functions of light bulk fields.
We can then apply the holographic duality and ask what the coarse-grained Hilbert space looks like from the perspective of the CFT.
In particular, we can ask what the holographic duals of classical bulk observables are, or how classical information about the bulk can be ``reconstructed'' from the CFT state.

In recent years a holographic error-correcting code approach to bulk reconstruction has been developed along these lines \cite{Almheiri:2014lwa,Harlow:2016vwg,Dong:2016eik,Cotler:2017erl}.
When the bulk dual of a CFT state is captured by a single bulk (Lorentzian) geometry, causality dictates that we should be able to recover all of the information inside a region by considering only its past domain of dependence. 
Hence we don't require knowledge of the entire coarse-grained CFT boundary state to reconstruct a local correlation function at a particular point in the bulk, but only some smaller region of the boundary at an earlier time.
(We can't directly associate a state to this region of the boundary, since the CFT does not factorize spatially, but we can instead consider the subalgebra of observables supported in the region.)
Because multiple possible boundary subregions can be used to redundantly reconstruct the same point in the bulk, the appropriate quantum-mechanical description of the bulk information contained in a given holographic CFT state is a \emph{complementary} error-correcting code, which can be divided into small code subspaces each of which can be used to reconstruct the appropriate bulk observables.

The methods of this paper apply directly to bulk reconstruction, at least when an appropriate UV cutoff or latticization is provided to render the system finite-dimensional.
It would be very interesting to directly construct the generalized bipartition for the classical observables in an explicit tensor network model (see e.g. \cite{Bao:2018pvs}). 
To probe the complementary nature of the resulting reduced state, we could consider, for example, first restricting to all classical observables, then further reducing to the state given only by the observables supported inside a particular lightcone. 
It would also be interesting to use our state-reduction methods to explicitly construct the set of holographic states by considering both a restricted set of classical bulk observables and a restricted set of boundary observables, which in general we expect to yield two different state-reduction maps, and enumerating the set of states for which their action is identical.

\subsection{Edge Modes and Gauge Symmetries}

In perturbative quantum field theory\footnote{In this motivational description of quantum field theory we are ignoring many subtleties such as normalization, renormalization, unitary inequivalence, convergence of the perturbative expansion, well-definedness of the theory, loop corrections, IR issues, etc., etc. We invite the reader to consult their favorite QFT textbook and/or keep in mind a lattice regularization which explicitly fixes the Hilbert space of the theory.}, we start from the free-field Hilbert space, which is constructed via a mode expansion in which the degrees of freedom are oscillators with given frequencies.
One basis for the Hilbert space is the field-value basis, in which each mode has a definite occupation number.
However, this picture runs into difficulties when the theory has (gauge or global) symmetries--that is, constraints, for example a Gauss law, on the allowed set of states in the ``gauged'' or ``physical'' Hilbert space. 
On the level of the mode expansion, these constraints prevent us from treating each mode as independent, meaning that the physical Hilbert space may not factorize into modes at all, and in particular that we might not be able to construct a reduced state by tracing out degrees of freedom in a gauge-invariant way.

As a toy model, for example, we can consider a lattice of 3 qubits with a $\mathbb{Z}_2$ symmetry, in which we identify a given state with the reversed state created by flipping the spin of each qubit simultaneously across some axis of the Bloch sphere. 
Without this global symmetry, the Hilbert space is isomorphic to $(\mathbb{C}^2)^{\otimes 3}$, an 8-dimensional Hilbert space which manifestly factorizes into three pieces.
However, imposing the symmetry reduces the Hilbert space to a 4-dimensional one in which we can no longer precisely identify individual qubits.
On the level of the abstract Hilbert space, to be sure, there was no need to talk about the larger 8-dimensional space at all---we could just have started directly with the 4-dimensional physical Hilbert space.

Although it is not justified from the physical Hilbert space alone, we nevertheless often have in mind a particular ``ungauged'' Hilbert space that does have nice factorization properties. 
Then we would like to be able to sensibly construct a reduced state even when the theory has an obstructing symmetry, such as the state of a gauge theory or conformal field theory on an interval, or the state of a spatial subregion in a diffeomorphism-invariant theory like general relativity.
Such a construction is provided by the edge modes program \cite{Donnelly:2014gva,Donnelly:2014fua,Donnelly:2015hta,Lin:2018bud} (see also \cite{Casini:2019kex}). 
On the quantum-mechanical level one looks for an embedding of the physical Hilbert space, which need not factorize, into a larger Hilbert space with some desired factorization properties, such as the existence of spatial intervals. 
Given this embedding we can map the original state to a state in the larger Hilbert space and then reduce in the usual way. 
The choice of embedding is not unique, but the edge modes program provides a particularly symmetric choice of embedding which corresponds to summing over all possible representations of matter charged under the symmetry, the eponymous edge modes.

From this description it should be clear that our approach is complementary.
The edge modes approach starts with a ``small,'' physical Hilbert space, chooses a ``large,'' auxilliary Hilbert space to embed into, and then constructs the reduced states by applying the appropriate partial-trace map on this large Hilbert space.
The generalized bipartition approach starts directly with a choice of operators specifying the allowed subregions, and provides a state-reduction map, not necessarily the partial-trace map, which produces the reduced states.
If we take the approach of Footnote \ref{fn:embedding} above and think of the generalized bipartition as a diagonal embedding into a larger bipartite Hilbert space, our approach naturally produces the desired auxilliary, ungauged Hilbert space as well.
It would be very interesting to directly compare the state-reduction maps from generalized bipartitions to the edge-modes description in discrete systems such as $\mathbb{Z}_n$ lattice gauge theories.
In the context of holography, we might, for example, compute the entropy of a subinterval of a CFT, and compare to the Cardy formula, the replica prescription, the edge modes prescription, and the Ryu-Takayanagi formula, some of which give definite answers and some of which should depend on the particular choice of embedding.

\subsection{Quantum Gravity}

Any realistic theory of quantum gravity must contain states, like our world, which look at low energies and large distances like field-theoretic excitations on top of a fixed spatial background.
That is, there should exist some sectors of the quantum gravity Hilbert space that look like QFTs on curved spacetime.
If quantum gravity is a \emph{bona fide} quantum-mechanical theory that describes more than a single fixed metric, it should contain many more states which look nothing like field theories on fixed backgrounds.
In ascending order of speculation, the theory should certainly include superpositions of geometries (which can be straightforwardly produced experimentally by placing test masses in superpositions, e.g.\ \cite{Page:1981aj}), if its UV completion still has a good notion of spatial backgrounds it should contain heavy or stringy states, and it might contain ``spacetime foam''-like states in which the notion of spacetime breaks down entirely.
Hence we should most likely not expect states with good spacetime descriptions to be simple factors of the full QG Hilbert space \cite{Bao:2017who}, especially if the UV description of gravity is holographic in the manner of AdS/CFT or de Sitter complementarity \cite{Bekenstein:1980jp,Susskind:1993if,Srednicki:1993im,Bao:2017rnv}.

A ``space from Hilbert space'' picture \cite{Cao:2016mst,Bao:2017rnv,Carroll:2018rhc,Pollack:2018yum} in which local spatial degrees of freedom are emergent rather than fundamental would require a detailed picture of exactly how these geometric and field-theoretic degrees of freedom in fact emerge.
In this paper we have attacked precisely this problem in a quantum-mechanical context.
Generalized bipartitions and partial bipartitions are tools for producing reduced states which provide information about degrees of freedom that are not manifest in the full Hilbert space (c.f.\ \cite{Ghosh:2017gtw}, which points out that the set of approximately-localized operators in a subregion of a gravitational theory may not comprise an algebra). 
Interactions between these degrees of freedom and the rest of the theory drive dynamics which may pick out a certain subset as classical observables along the lines of the decoherence program. 
Because quantum cosmology lacks a fixed separation between system and environment, a variational approach is required to find the ``most classical'' bipartitions, or to understand what dynamics lead these preferred observables to look like spacetime variables.

\begin{center} 
 {\bf Acknowledgments}
 \end{center}
 \noindent 
We thank Sean Carroll, Olivia Di Matteo, Ben Michel, Eric Minton, Philippe Sabella-Garnier, Mark Van Raamsdonk, Robert Raussendorf, and Benson Way for useful discussions. We would also like to acknowledge the help of an anonymous user on the \textit{Mathematics Stackexchange} website with the proof of Proposition \ref{prop:MA of single operator }. 
J.P. would like to thank the organizers of the workshop ``Gravity - New perspectives from strings and higher dimensions'' held at the Centro de Ciencias de Benasque Pedro Pascual, where part of this work was completed. A.S. would like to thank the High Energy Physics group at University of British Columbia at Vancouver for hosting and supporting his visit where part of this work was carried out. 
O.K. is supported by the Natural Sciences and Engineering Research Council of Canada.
J.P. is supported in part by the Simons Foundation and in part by the Natural Sciences and Engineering Research Council of Canada. 
A.S. is funded in part by the Walter Burke Institute for Theoretical Physics at Caltech, by DOE grant DE-SC0011632, and by the Foundational Questions Institute.

\bibliographystyle{utphys-modified}
\bibliography{gbp}

\providecommand{\href}[2]{#2}\begingroup\raggedright\begin{thebibliography}{10}

\bibitem{Jarzynski:1996oqb}
C.~Jarzynski, ``{Nonequilibrium Equality for Free Energy Differences},''
  \href{http://dx.doi.org/10.1103/PhysRevLett.78.2690}{{\em Phys. Rev. Lett.}
  {\bfseries 78} no.~14, (1997) 2690--2693},
\href{http://arxiv.org/abs/cond-mat/9610209}{{\ttfamily arXiv:cond-mat/9610209
  [cond-mat.stat-mech]}}.

\bibitem{1998JSP....90.1481C}
G.~E. {Crooks}, ``{Nonequilibrium Measurements of Free Energy Differences for
  Microscopically Reversible Markovian Systems},''
  \href{http://dx.doi.org/10.1023/A:1023208217925}{{\em Journal of Statistical
  Physics} {\bfseries 90} no.~5-6, (Mar, 1998) 1481--1487}.

\bibitem{Bartolotta:2015toa}
A.~Bartolotta, S.~M. Carroll, S.~Leichenauer, and J.~Pollack, ``{Bayesian
  second law of thermodynamics},''
  \href{http://dx.doi.org/10.1103/PhysRevE.94.022102}{{\em Phys. Rev.}
  {\bfseries E94} no.~2, (2016) 022102},
\href{http://arxiv.org/abs/1508.02421}{{\ttfamily arXiv:1508.02421
  [cond-mat.stat-mech]}}.

\bibitem{Zeh:1970fop}
H.~D. {Zeh}, ``{On the interpretation of measurement in quantum theory},''
  \href{http://dx.doi.org/10.1007/BF00708656}{{\em Found. Phys.} {\bfseries 1}
  (1970) 69}.

\bibitem{Zurek:1981xq}
W.~H. Zurek, ``Pointer basis of quantum apparatus: into what mixture does the
  wave packet collapse?,''
  \href{http://dx.doi.org/10.1103/PhysRevD.24.1516}{{\em Phys. Rev. D}
  {\bfseries 24} (1981) 1516}.

\bibitem{Griffiths:1984rx}
R.~B. Griffiths, ``{Consistent histories and the interpretation of quantum
  mechanics},'' \href{http://dx.doi.org/10.1007/BF01015734}{{\em J. Statist.
  Phys.} {\bfseries 36} (1984) 219}.

\bibitem{Joos:1984uk}
E.~Joos and H.~D. Zeh, ``{The emergence of classical properties through
  interaction with the environment},''
  \href{http://dx.doi.org/10.1007/BF01725541}{{\em Z. Phys. B} {\bfseries 59}
  (1985) 223}.

\bibitem{Schlosshauer:2003zy}
M.~Schlosshauer, ``{Decoherence, the measurement problem, and interpretations
  of quantum mechanics},''
  \href{http://dx.doi.org/10.1103/RevModPhys.76.1267}{{\em Rev. Mod. Phys.}
  {\bfseries 76} (2004) 1267},
  \href{http://arxiv.org/abs/quant-ph/0312059}{{\ttfamily
  arXiv:quant-ph/0312059}}.

\bibitem{Reeh1961}
H.~Reeh and S.~Schlieder, ``Bemerkungen zur unit{\"a}r{\"a}quivalenz von
  lorentzinvarianten feldern,''
  \href{http://dx.doi.org/10.1007/BF02787889}{{\em Il Nuovo Cimento
  (1955-1965)} {\bfseries 22} no.~5, (Dec, 1961) 1051--1068}.
  \url{https://doi.org/10.1007/BF02787889}.

\bibitem{Witten:2018lha}
E.~Witten, ``{APS Medal for Exceptional Achievement in Research: Invited
  article on entanglement properties of quantum field theory},''
  \href{http://dx.doi.org/10.1103/RevModPhys.90.045003}{{\em Rev. Mod. Phys.}
  {\bfseries 90} no.~4, (2018) 045003},
\href{http://arxiv.org/abs/1803.04993}{{\ttfamily arXiv:1803.04993 [hep-th]}}.

\bibitem{Araki:1976zv}
H.~Araki, ``{Relative Entropy of States of Von Neumann Algebras},''
{\em Publ. Res. Inst. Math. Sci. Kyoto} {\bfseries 1976} (1976) 809--833.

\bibitem{Haag:1992hx}
R.~Haag, {\em {Local quantum physics: Fields, particles, algebras}}.
\newblock
1992.
\newblock

\bibitem{Lashkari:2018nsl}
N.~Lashkari, ``{Constraining Quantum Fields using Modular Theory},''
  \href{http://dx.doi.org/10.1007/JHEP01(2019)059}{{\em JHEP} {\bfseries 01}
  (2019) 059},
\href{http://arxiv.org/abs/1810.09306}{{\ttfamily arXiv:1810.09306 [hep-th]}}.

\bibitem{Donnelly:2014gva}
W.~Donnelly, ``{Entanglement entropy and nonabelian gauge symmetry},''
  \href{http://dx.doi.org/10.1088/0264-9381/31/21/214003}{{\em Class. Quant.
  Grav.} {\bfseries 31} no.~21, (2014) 214003},
\href{http://arxiv.org/abs/1406.7304}{{\ttfamily arXiv:1406.7304 [hep-th]}}.

\bibitem{Donnelly:2014fua}
W.~Donnelly and A.~C. Wall, ``{Entanglement entropy of electromagnetic edge
  modes},'' \href{http://dx.doi.org/10.1103/PhysRevLett.114.111603}{{\em Phys.
  Rev. Lett.} {\bfseries 114} no.~11, (2015) 111603},
\href{http://arxiv.org/abs/1412.1895}{{\ttfamily arXiv:1412.1895 [hep-th]}}.

\bibitem{Donnelly:2015hta}
W.~Donnelly and S.~B. Giddings, ``{Diffeomorphism-invariant observables and
  their nonlocal algebra},''
  \href{http://dx.doi.org/10.1103/PhysRevD.94.029903,
  10.1103/PhysRevD.93.024030}{{\em Phys. Rev.} {\bfseries D93} no.~2, (2016)
  024030}, \href{http://arxiv.org/abs/1507.07921}{{\ttfamily arXiv:1507.07921
  [hep-th]}}.
[Erratum: Phys. Rev.D94,no.2,029903(2016)].

\bibitem{Lin:2018bud}
J.~Lin and D.~Radicevic, ``{Comments on Defining Entanglement Entropy},''
\href{http://arxiv.org/abs/1808.05939}{{\ttfamily arXiv:1808.05939 [hep-th]}}.

\bibitem{Pollack:2018yum}
J.~Pollack and A.~Singh, ``{Towards space from Hilbert space: finding lattice
  structure in finite-dimensional quantum systems},''
  \href{http://dx.doi.org/10.1007/s40509-018-0176-8}{{\em Quant. Stud. Math.
  Found.} {\bfseries 6} no.~2, (2019) 181--200},
\href{http://arxiv.org/abs/1801.10168}{{\ttfamily arXiv:1801.10168
  [quant-ph]}}.

\bibitem{mereology}
S.~M. Carroll and A.~Singh, ``{Quantum Mereology: Factorizing Hilbert Space
  into Sub-Systems with Quasi-Classical Dynamics},''
  \href{http://arxiv.org/abs/in preparation}{{\ttfamily in preparation}}.

\bibitem{Zurek:1994zq}
W.~H. Zurek, ``{Preferred observables, predictability, classicality, and the
  environment induced decoherence},''
\href{http://arxiv.org/abs/gr-qc/9402011}{{\ttfamily arXiv:gr-qc/9402011
  [gr-qc]}}.

\bibitem{Murota2010}
K.~Murota, Y.~Kanno, M.~Kojima, and S.~Kojima, ``A numerical algorithm for
  block-diagonal decomposition of matrix {\$}{\$}{\{}*{\}}{\$}{\$}-algebras
  with application to semidefinite programming,''
  \href{http://dx.doi.org/10.1007/s13160-010-0006-9}{{\em Japan Journal of
  Industrial and Applied Mathematics} {\bfseries 27} no.~1, (Jun, 2010)
  125--160}. \url{https://doi.org/10.1007/s13160-010-0006-9}.

\bibitem{wang2013numerical}
X.~Wang, M.~Byrd, and K.~Jacobs, ``Numerical method for finding
  decoherence-free subspaces and its applications,'' {\em Physical Review A}
  {\bfseries 87} no.~1, (2013) 012338.

\bibitem{holbrook2003noiseless}
J.~A. Holbrook, D.~W. Kribs, and R.~Laflamme, ``Noiseless subsystems and the
  structure of the commutant in quantum error correction,'' {\em Quantum
  Information Processing} {\bfseries 2} no.~5, (2003) 381--419.

\bibitem{Zanardi:2001zz}
P.~Zanardi, ``{Virtual quantum subsystems},''
  \href{http://dx.doi.org/10.1103/PhysRevLett.87.077901}{{\em Phys. Rev. Lett.}
  {\bfseries 87} (2001) 077901},
\href{http://arxiv.org/abs/quant-ph/0103030}{{\ttfamily arXiv:quant-ph/0103030
  [quant-ph]}}.

\bibitem{Zanardi:2004zz}
P.~Zanardi, D.~A. Lidar, and S.~Lloyd, ``{Quantum tensor product structures are
  observable induced},''
  \href{http://dx.doi.org/10.1103/PhysRevLett.92.060402}{{\em Phys. Rev. Lett.}
  {\bfseries 92} (2004) 060402},
\href{http://arxiv.org/abs/quant-ph/0308043}{{\ttfamily arXiv:quant-ph/0308043
  [quant-ph]}}.

\bibitem{Cotler:2017abq}
J.~S. Cotler, G.~R. Penington, and D.~H. Ranard, ``{Locality from the
  Spectrum},'' \href{http://dx.doi.org/10.1007/s00220-019-03376-w}{{\em Commun.
  Math. Phys.} {\bfseries 368} no.~3, (2019) 1267--1296},
\href{http://arxiv.org/abs/1702.06142}{{\ttfamily arXiv:1702.06142
  [quant-ph]}}.

\bibitem{2007PhRvL..99r0403K}
J.~{Kofler} and {\v{C}}.~{Brukner}, ``{Classical World Arising out of Quantum
  Physics under the Restriction of Coarse-Grained Measurements},''
  \href{http://dx.doi.org/10.1103/PhysRevLett.99.180403}{{\em \prl} {\bfseries
  99} no.~18, (Nov, 2007) 180403},
  \href{http://arxiv.org/abs/quant-ph/0609079}{{\ttfamily
  arXiv:quant-ph/0609079 [quant-ph]}}.

\bibitem{duarte2017emerging}
C.~Duarte, G.~D. Carvalho, N.~K. Bernardes, and F.~de~Melo, ``Emerging dynamics
  arising from coarse-grained quantum systems,'' {\em Physical Review A}
  {\bfseries 96} no.~3, (2017) 032113.

\bibitem{2004SHPMP..35...73C}
M.~{Castagnino} and O.~{Lombardi}, ``{Self-induced decoherence: a new
  approach},'' \href{http://dx.doi.org/10.1016/j.shpsb.2003.03.001}{{\em
  Studies in the History and Philosophy of Modern Physics} {\bfseries 35}
  no.~1, (Jan, 2004) 73--107}.

\bibitem{Castagnino:2008zz}
M.~Castagnino, S.~Fortin, O.~Lombardi, and R.~Laura, ``{A general theoretical
  framework for decoherence in open and closed systems},''
  \href{http://dx.doi.org/10.1088/0264-9381/25/15/154002}{{\em Class. Quant.
  Grav.} {\bfseries 25} (2008) 154002},
\href{http://arxiv.org/abs/0907.1337}{{\ttfamily arXiv:0907.1337 [quant-ph]}}.

\bibitem{10.1007/978-94-007-2404-4_15}
O.~Lombardi, S.~Fortin, and M.~Castagnino, ``The problem of identifying the
  system and the environment in the phenomenon of decoherence,'' in {\em EPSA
  Philosophy of Science: Amsterdam 2009}, H.~W. de~Regt, S.~Hartmann, and
  S.~Okasha, eds., pp.~161--174.
\newblock Springer Netherlands, Dordrecht, 2012.

\bibitem{2014BrJPh..44..138F}
S.~{Fortin}, O.~{Lombardi}, and M.~{Castagnino}, ``{Decoherence: A
  Closed-System Approach},''
  \href{http://dx.doi.org/10.1007/s13538-013-0151-0}{{\em Brazilian Journal of
  Physics} {\bfseries 44} no.~1, (Feb, 2014) 138--153},
  \href{http://arxiv.org/abs/1402.3525}{{\ttfamily arXiv:1402.3525
  [quant-ph]}}.

\bibitem{2005PhRvA..72a2109S}
M.~{Schlosshauer}, ``{Self-induced decoherence approach: Strong limitations on
  its validity in a simple spin bath model and on its general physical
  relevance},'' \href{http://dx.doi.org/10.1103/PhysRevA.72.012109}{{\em \pra}
  {\bfseries 72} no.~1, (Jul, 2005) 012109},
  \href{http://arxiv.org/abs/quant-ph/0501138}{{\ttfamily
  arXiv:quant-ph/0501138 [quant-ph]}}.

\bibitem{2018PhRvA..97e2130K}
O.~{Kabernik}, ``{Quantum coarse graining, symmetries, and reducibility of
  dynamics},'' \href{http://dx.doi.org/10.1103/PhysRevA.97.052130}{{\em \pra}
  {\bfseries 97} no.~5, (May, 2018) 052130},
  \href{http://arxiv.org/abs/1801.09770}{{\ttfamily arXiv:1801.09770
  [quant-ph]}}.

\bibitem{lidar2014review}
D.~A. Lidar, ``Review of decoherence free subspaces, noiseless subsystems, and
  dynamical decoupling,'' {\em Adv. Chem. Phys} {\bfseries 154} (2014)
  295--354.

\bibitem{beny2015algebraic}
C.~B{\'e}ny and F.~Richter, ``Algebraic approach to quantum theory: a
  finite-dimensional guide,'' {\em arXiv preprint arXiv:1505.03106} (2015) .

\bibitem{farenick2012algebras}
D.~R. Farenick, {\em Algebras of linear transformations}.
\newblock Springer Science \& Business Media, 2012.

\bibitem{wedderburn1934lectures}
J.~H.~M. Wedderburn, {\em Lectures on matrices}, vol.~17.
\newblock American Mathematical Soc., 1934.

\bibitem{Harlow:2016vwg}
D.~Harlow, ``{The Ryu–Takayanagi Formula from Quantum Error Correction},''
  \href{http://dx.doi.org/10.1007/s00220-017-2904-z}{{\em Commun. Math. Phys.}
  {\bfseries 354} no.~3, (2017) 865--912},
\href{http://arxiv.org/abs/1607.03901}{{\ttfamily arXiv:1607.03901 [hep-th]}}.

\bibitem{2007dqct.book.....S}
M.~{Schlosshauer}, \href{http://dx.doi.org/10.1007/978-3-540-35775-9}{{\em
  {Decoherence and the Quantum-to-Classical Transition}}}.
\newblock 2007.

\bibitem{Elby:1994}
A.~{Elby} and J.~{Bub}, ``{Triorthogonal uniqueness theorem and its relevance
  to the interpretation of quantum mechanics},''
  \href{http://dx.doi.org/10.1103/PhysRevA.49.4213}{{\em Phys. Rev.} {\bfseries
  A49} (1994) 4213}.

\bibitem{Boddy:2014eba}
K.~K. Boddy, S.~M. Carroll, and J.~Pollack, ``{De Sitter Space Without
  Dynamical Quantum Fluctuations},''
  \href{http://dx.doi.org/10.1007/s10701-016-9996-8}{{\em Found. Phys.}
  {\bfseries 46} no.~6, (2016) 702--735},
\href{http://arxiv.org/abs/1405.0298}{{\ttfamily arXiv:1405.0298 [hep-th]}}.

\bibitem{Halliwell:1999xh}
J.~J. Halliwell, ``{Somewhere in the universe: Where is the information stored
  when histories decohere?},''
  \href{http://dx.doi.org/10.1103/PhysRevD.60.105031}{{\em Phys. Rev.}
  {\bfseries D60} (1999) 105031},
\href{http://arxiv.org/abs/quant-ph/9902008}{{\ttfamily arXiv:quant-ph/9902008
  [quant-ph]}}.

\bibitem{2012NJPh...14h3010J}
C.~{Jess Riedel}, W.~H. {Zurek}, and M.~{Zwolak}, ``{The rise and fall of
  redundancy in decoherence and quantum Darwinism},''
  \href{http://dx.doi.org/10.1088/1367-2630/14/8/083010}{{\em New Journal of
  Physics} {\bfseries 14} no.~8, (Aug, 2012) 083010},
  \href{http://arxiv.org/abs/1205.3197}{{\ttfamily arXiv:1205.3197
  [quant-ph]}}.

\bibitem{zurek2003decoherence}
W.~H. Zurek, ``Decoherence, einselection, and the quantum origins of the
  classical,'' {\em Reviews of modern physics} {\bfseries 75} no.~3, (2003)
  715.

\bibitem{rose1995elementary}
M.~E. Rose, {\em Elementary theory of angular momentum}.
\newblock Courier Corporation, 1995.

\bibitem{Radicevic:2016tlt}
D.~Radicevic, ``{Entanglement Entropy and Duality},''
  \href{http://dx.doi.org/10.1007/JHEP11(2016)130}{{\em JHEP} {\bfseries 11}
  (2016) 130},
\href{http://arxiv.org/abs/1605.09396}{{\ttfamily arXiv:1605.09396 [hep-th]}}.

\bibitem{blume2008characterizing}
R.~Blume-Kohout, H.~K. Ng, D.~Poulin, and L.~Viola, ``Characterizing the
  structure of preserved information in quantum processes,'' {\em Physical
  review letters} {\bfseries 100} no.~3, (2008) 030501.

\bibitem{kribs2005unified}
D.~Kribs, R.~Laflamme, and D.~Poulin, ``Unified and generalized approach to
  quantum error correction,'' {\em Physical review letters} {\bfseries 94}
  no.~18, (2005) 180501.

\bibitem{knill2000theory}
E.~Knill, R.~Laflamme, and L.~Viola, ``Theory of quantum error correction for
  general noise,'' {\em Physical Review Letters} {\bfseries 84} no.~11, (2000)
  2525.

\bibitem{kribs2005operator}
D.~W. Kribs, R.~Laflamme, D.~Poulin, and M.~Lesosky, ``Operator quantum error
  correction,'' {\em arXiv preprint quant-ph/0504189} (2005) .

\bibitem{bacon2006operator}
D.~Bacon, ``Operator quantum error-correcting subsystems for self-correcting
  quantum memories,'' {\em Physical Review A} {\bfseries 73} no.~1, (2006)
  012340.

\bibitem{poulin2005stabilizer}
D.~Poulin, ``Stabilizer formalism for operator quantum error correction,'' {\em
  Physical review letters} {\bfseries 95} no.~23, (2005) 230504.

\bibitem{johnston2009operator}
N.~Johnston, D.~W. Kribs, and C.-W. Teng, ``Operator algebraic formulation of
  the stabilizer formalism for quantum error correction,'' {\em Acta
  applicandae mathematicae} {\bfseries 108} no.~3, (2009) 687.

\bibitem{2018AnHP...19.1891B}
A.~{Bluhm}, L.~{Rauber}, and M.~M. {Wolf}, ``{Quantum Compression Relative to a
  Set of Measurements},''
  \href{http://dx.doi.org/10.1007/s00023-018-0660-z}{{\em Annales Henri
  Poincar\&eacute;} {\bfseries 19} no.~6, (Jun, 2018) 1891--1937},
  \href{http://arxiv.org/abs/1708.04898}{{\ttfamily arXiv:1708.04898
  [quant-ph]}}.

\bibitem{zanardi2000stabilizing}
P.~Zanardi, ``Stabilizing quantum information,'' {\em Physical Review A}
  {\bfseries 63} no.~1, (2000) 012301.

\bibitem{divincenzo2000universal}
D.~P. DiVincenzo, D.~Bacon, J.~Kempe, G.~Burkard, and K.~B. Whaley, ``Universal
  quantum computation with the exchange interaction,'' {\em Nature} {\bfseries
  408} no.~6810, (2000) 339.

\bibitem{Maldacena:1997re}
J.~M. Maldacena, ``{The large-$N$ limit of superconformal field theories and
  supergravity},'' \href{http://dx.doi.org/10.1023/A:1026654312961}{{\em Int.
  J. Theor. Phys.} {\bfseries 38} (1999) 1113},
\href{http://arxiv.org/abs/hep-th/9711200}{{\ttfamily arXiv:hep-th/9711200
  [hep-th]}}.

\bibitem{Gubser:1998bc}
S.~S. Gubser, I.~R. Klebanov, and A.~M. Polyakov, ``{Gauge theory correlators
  from noncritical string theory},''
  \href{http://dx.doi.org/10.1016/S0370-2693(98)00377-3}{{\em Phys. Lett.}
  {\bfseries B428} (1998) 105},
\href{http://arxiv.org/abs/hep-th/9802109}{{\ttfamily arXiv:hep-th/9802109
  [hep-th]}}.

\bibitem{Witten:1998qj}
E.~Witten, ``{Anti-de Sitter space and holography},''
  \href{http://dx.doi.org/10.4310/ATMP.1998.v2.n2.a2}{{\em Adv. Theor. Math.
  Phys.} {\bfseries 2} (1998) 253},
\href{http://arxiv.org/abs/hep-th/9802150}{{\ttfamily arXiv:hep-th/9802150
  [hep-th]}}.

\bibitem{Aharony:1999ti}
O.~Aharony, S.~S. Gubser, J.~Maldacena, H.~Ooguri, and Y.~Oz, ``{Large $N$
  field theories, string theory and gravity},''
  \href{http://dx.doi.org/10.1016/S0370-1573(99)00083-6}{{\em Phys. Rept.}
  {\bfseries 323} (2000) 183},
\href{http://arxiv.org/abs/hep-th/9905111}{{\ttfamily arXiv:hep-th/9905111
  [hep-th]}}.

\bibitem{Almheiri:2014lwa}
A.~Almheiri, X.~Dong, and D.~Harlow, ``{Bulk Locality and Quantum Error
  Correction in AdS/CFT},''
  \href{http://dx.doi.org/10.1007/JHEP04(2015)163}{{\em JHEP} {\bfseries 04}
  (2015) 163},
\href{http://arxiv.org/abs/1411.7041}{{\ttfamily arXiv:1411.7041 [hep-th]}}.

\bibitem{Dong:2016eik}
X.~Dong, D.~Harlow, and A.~C. Wall, ``{Reconstruction of Bulk Operators within
  the Entanglement Wedge in Gauge-Gravity Duality},''
  \href{http://dx.doi.org/10.1103/PhysRevLett.117.021601}{{\em Phys. Rev.
  Lett.} {\bfseries 117} no.~2, (2016) 021601},
\href{http://arxiv.org/abs/1601.05416}{{\ttfamily arXiv:1601.05416 [hep-th]}}.

\bibitem{Cotler:2017erl}
J.~Cotler, P.~Hayden, G.~Penington, G.~Salton, B.~Swingle, and M.~Walter,
  ``{Entanglement Wedge Reconstruction via Universal Recovery Channels},''
  \href{http://dx.doi.org/10.1103/PhysRevX.9.031011}{{\em Phys. Rev.}
  {\bfseries X9} no.~3, (2019) 031011},
\href{http://arxiv.org/abs/1704.05839}{{\ttfamily arXiv:1704.05839 [hep-th]}}.

\bibitem{Bao:2018pvs}
N.~Bao, G.~Penington, J.~Sorce, and A.~C. Wall, ``{Beyond Toy Models:
  Distilling Tensor Networks in Full AdS/CFT},''
\href{http://arxiv.org/abs/1812.01171}{{\ttfamily arXiv:1812.01171 [hep-th]}}.

\bibitem{Casini:2019kex}
H.~Casini, M.~Huerta, J.~M. Magán, and D.~Pontello, ``{Entanglement entropy
  and superselection sectors I. Global symmetries},''
\href{http://arxiv.org/abs/1905.10487}{{\ttfamily arXiv:1905.10487 [hep-th]}}.

\bibitem{Page:1981aj}
D.~N. Page and C.~D. Geilker, ``{Indirect Evidence for Quantum Gravity},''
\href{http://dx.doi.org/10.1103/PhysRevLett.47.979}{{\em Phys. Rev. Lett.}
  {\bfseries 47} (1981) 979--982}.

\bibitem{Bao:2017who}
N.~Bao, S.~M. Carroll, A.~Chatwin-Davies, J.~Pollack, and G.~N. Remmen,
  ``{Branches of the Black Hole Wave Function Need Not Contain Firewalls},''
  \href{http://dx.doi.org/10.1103/PhysRevD.97.126014}{{\em Phys. Rev.}
  {\bfseries D97} no.~12, (2018) 126014},
\href{http://arxiv.org/abs/1712.04955}{{\ttfamily arXiv:1712.04955 [hep-th]}}.

\bibitem{Bekenstein:1980jp}
J.~D. Bekenstein, ``{A Universal Upper Bound on the Entropy to Energy Ratio for
  Bounded Systems},''
{\em Phys. Rev.} {\bfseries D23} (1981) 287.

\bibitem{Susskind:1993if}
L.~Susskind, L.~Thorlacius, and J.~Uglum, ``{The Stretched horizon and black
  hole complementarity},''
  \href{http://dx.doi.org/10.1103/PhysRevD.48.3743}{{\em Phys. Rev.} {\bfseries
  D48} (1993) 3743--3761},
\href{http://arxiv.org/abs/hep-th/9306069}{{\ttfamily arXiv:hep-th/9306069
  [hep-th]}}.

\bibitem{Srednicki:1993im}
M.~Srednicki, ``{Entropy and area},''
  \href{http://dx.doi.org/10.1103/PhysRevLett.71.666}{{\em Phys. Rev. Lett.}
  {\bfseries 71} (1993) 666--669},
\href{http://arxiv.org/abs/hep-th/9303048}{{\ttfamily arXiv:hep-th/9303048
  [hep-th]}}.

\bibitem{Bao:2017rnv}
N.~Bao, S.~M. Carroll, and A.~Singh, ``{The Hilbert Space of Quantum Gravity Is
  Locally Finite-Dimensional},''
  \href{http://dx.doi.org/10.1142/S0218271817430131}{{\em Int. J. Mod. Phys.}
  {\bfseries D26} no.~12, (2017) 1743013},
\href{http://arxiv.org/abs/1704.00066}{{\ttfamily arXiv:1704.00066}}.

\bibitem{Cao:2016mst}
C.~Cao, S.~M. Carroll, and S.~Michalakis, ``{Space from Hilbert Space:
  Recovering Geometry from Bulk Entanglement},''
  \href{http://dx.doi.org/10.1103/PhysRevD.95.024031}{{\em Phys. Rev.}
  {\bfseries D95} no.~2, (2017) 024031},
\href{http://arxiv.org/abs/1606.08444}{{\ttfamily arXiv:1606.08444 [hep-th]}}.

\bibitem{Carroll:2018rhc}
S.~M. Carroll and A.~Singh, ``{Mad-Dog Everettianism: Quantum Mechanics at Its
  Most Minimal},''
\href{http://arxiv.org/abs/1801.08132}{{\ttfamily arXiv:1801.08132
  [quant-ph]}}.

\bibitem{Ghosh:2017gtw}
S.~Ghosh and S.~Raju, ``{Quantum information measures for restricted sets of
  observables},'' \href{http://dx.doi.org/10.1103/PhysRevD.98.046005}{{\em
  Phys. Rev.} {\bfseries D98} no.~4, (2018) 046005},
\href{http://arxiv.org/abs/1712.09365}{{\ttfamily arXiv:1712.09365 [hep-th]}}.

\end{thebibliography}\endgroup

\end{document}